\documentclass[reqno]{amsart}
\usepackage[
backend=biber,
style=numeric,
maxcitenames=5,
maxbibnames=9]{biblatex}

\addbibresource{ref.bib}
\renewbibmacro{in:}{}
\usepackage{a4,amsmath,amssymb,amsthm,slashed,thmtools}
\usepackage[top=1in, bottom=1in, left=1.25in, right=1.25in]{geometry}
\usepackage{xfrac}
\usepackage{latexsym}
\usepackage{mathrsfs}
\pagestyle{plain}
\usepackage{enumerate}   

\usepackage{leftidx}

\usepackage{scrextend}
\usepackage{pgfplots}
\usepackage{tikz}
\usetikzlibrary{patterns,patterns.meta}
\usepackage{tkz-euclide}
\usetikzlibrary{positioning}
\usetikzlibrary{calc,fadings,decorations.pathreplacing}
\usepackage{calc}
\usepackage{enumitem}

\usepackage{amsaddr}

\usepackage[T1]{fontenc}

\usepackage{leftidx}
\usepackage{tikz}

\usepackage{hyperref}

\usepackage{color}

\usepackage{url}

\numberwithin{equation}{section}

\usepackage[mathscr]{eucal}

\usepackage{graphicx}
\graphicspath{pic/}

\usepackage{hyperref}
 \hypersetup{
     colorlinks=true,
     linkcolor=red,
     filecolor=blue,
     citecolor = red,      
     urlcolor=cyan,
     }

\newcommand{\teukhat}{\widehat{\teuk}}
\newcommand{\intS}{\int_{S(u,\ubar)}}
\newcommand{\modnab}{|\partial_\theta\Psi|^2+|\mathcal{U}\Psi|^2}

\newcommand{\quatre}{\mathbf{IV}}

\newcommand{\Real}{\mathfrak{R}}

\newcommand{\mch}{\mathcal{H}}
\newcommand{\mcm}{\mathcal{M}}
\newcommand{\mcq}{\mathcal{Q}}
\newcommand{\mcu}{\mathcal{U}}
\newcommand{\app}{‘}

\newcommand{\mcd}{\mathcal{D}}
\newcommand{\mcr}{\mathcal{R}}

\newcommand{\wbar}{\underline{w}}

\newcommand{\ubar}{{\underline{u}}}

\newcommand{\errp}{\mathrm{Err}[\Psip]}
\newcommand{\errm}{\mathrm{Err}[\Psim]}
\newcommand{\carterp}{\mcq_{+2}}
\newcommand{\carterm}{\mcq_{-2}}

\newcommand{\nablabar}{\overline{\nabla}}

\newcommand{\R}{\mathbb R}

\newcommand{\dee}{\mathrm{d}}

\newcommand{\ch}{\mathcal{CH_+}}

\newcommand{\ener}{\mathbf{e}}

\newcommand{\drond}{\mathring{\eth}}
\newcommand{\teuk}{\mathbf{T}}
\newcommand{\Psihat}{\widehat{\psi}}
\newcommand{\psihat}{\widehat{\psi}}

\newcommand{\ethat}{\widehat{e_3}}
\newcommand{\eqhat}{\widehat{e_4}}
\newcommand{\un}{\mathbf{I}}
\newcommand{\deux}{\mathbf{II}}
\newcommand{\trois}{\mathbf{III}}

\newcommand{\rb}{r_\mathfrak{b}}

\newcommand{\Psip}{\psi_{+2}}
\newcommand{\psip}{\psi_{+2}}

\newcommand{\Psim}{\psi_{-2}}
\newcommand{\psim}{\psi_{-2}}

\newcommand{\Psihatp}{\Psihat_{+2}}
\newcommand{\psihatp}{\psihat_{+2}}

\newcommand{\psihatm}{\Psihat_{-2}}

\newcommand{\sfrak}{2}
\newcommand{\fraks}{2}
\newcommand{\drin}{\partial_{r_{in}}}
\newcommand{\ansatzm}{\frac{1}{\ubar^7}\sum_{|m|\leq2}Q_{m,\sfrak}Y_{m,\sfrak}^{-\sfrak}(\cos\theta)e^{im\phi_{+}}}
\newcommand{\ansatzp}{\frac{1}{\ubar^{7}}\sum_{|m|\leq2}A_m(r)Q_{m,\sfrak}Y_{m,\sfrak}^{+\sfrak}(\cos\theta)e^{im\phi_{+}}}

\newcommand{\sph}{\mathbb S}

\newcommand{\exptruc}{e^{im\phi_{+}}}

\theoremstyle{plain}
\declaretheorem[name=Theorem, numberwithin=section]{thm}
\declaretheorem[name=Proposition,sibling=thm]{prop}
\declaretheorem[name=Corollary,sibling=thm]{cor}
\declaretheorem[name=Lemma,sibling=thm]{lem}
\declaretheorem[name=Conjecture,sibling=thm]{conj}
\theoremstyle{definition}
\declaretheorem[name=Definition,sibling=thm]{defi}
\theoremstyle{remark}
\declaretheorem[name=Remark,sibling=thm]{rem}

\numberwithin{equation}{section}


\title{Precise asymptotics of the spin $+2$ Teukolsky field in the Kerr black hole interior}

\author[Sebastian Gurriaran]{Sebastian Gurriaran}
\email{sebastian.gurriaran@sorbonne-universite.fr}
\address{Laboratoire Jacques-Louis Lions,
Sorbonne Universit\'{e},
4 place Jussieu 75005 Paris, France}

\setcounter{tocdepth}{2}

\begin{document}

\begin{abstract}

    Using a purely physical-space analysis, we prove the precise oscillatory blow-up asymptotics of the spin $+2$ Teukolsky field in the interior of a subextremal Kerr black hole. In particular, this work gives a new proof of the blueshift instability of the Kerr Cauchy horizon against linearized gravitational perturbations that was first shown by Sbierski \cite{sbierski}. In that sense, this work supports the Strong Cosmic Censorship conjecture in Kerr spacetimes. The proof is an extension to the Teukolsky equation of the work \cite{scalarMZ} by Ma and Zhang that treats the scalar wave equation in the interior of Kerr. The analysis relies on the generic polynomial decay on the event horizon of solutions of the Teukolsky equation that arise from compactly supported initial data, as recently proved by Ma and Zhang \cite{pricelaw} and Millet \cite{millet} in subextremal Kerr.

\end{abstract}

\maketitle

\tableofcontents


\section{Introduction}
\subsection{The Kerr black hole interior}
We begin with a review of the main features of the interior of Kerr black holes. The Kerr metric describes the spacetime around and inside a rotating non-charged black hole. It is a two-parameter family of stationary and axisymmetric solutions of the Einstein {vacuum equations,
\begin{align}\label{eqn:eve}
    \mathrm{Ric}[\mathbf{g}]=0,
\end{align}
where $\mathrm{Ric}[\mathbf{g}]$ is the Ricci tensor for a Lorentzian metric $\mathbf{g}$.} The two parameters are the mass $M$ and the angular momentum per unit mass $a$ of the black hole. In this work, we only consider subextremal Kerr with non-zero angular momentum, i.e. such that $0<|a|<M$. The Kerr metric is given in Boyer-Lindquist coordinates $(t,r,\theta,\phi)\in\mathbb{R}^2\times\mathbb{S}^2$ by 
$$\mathbf{g}_{a,M}=-\frac{(\Delta-a^2\sin^2\theta)}{\Sigma}\dee t^2-\frac{4aMr}{\Sigma}\sin^2\theta \dee t \dee \phi+\frac{\Sigma}{\Delta}\dee r^2+\Sigma\dee \theta^2+\frac{{(r^2+a^2)^2-a^2\sin^2\theta\Delta}}{\Sigma}\sin^2\theta \dee \phi^2\:,$$
where $\Delta:=r^2-2rM+a^2$, $\Sigma:=r^2+a^2\cos^2\theta$. We define
$$\mu:=\frac{\Delta}{r^2+a^2},\quad r_{\pm}:=M\pm\sqrt{M^2-a^2},$$
where $r_{\pm}$ are the roots of $\Delta$. {The level sets $\{r=r_\pm\}$ are Boyer-Lindquist coordinate degeneracies that vanish} when considering Eddington-Finkelstein coordinates
$$u:=r^*-t,\quad\ubar:=r^*+t,\quad\phi_{\pm}:=\phi\pm r_{mod}\:\:\text{mod}\:2\pi,$$
where $\dee r^*/\dee r=\mu^{-1}$, $\dee r_{mod}/\dee r=a/\Delta$, see Section \ref{section:geometry}. The event horizon $\{r=r_+\}$ and the Cauchy horizon $\{r=r_-\}$ can then be properly attached to the Lorentzian manifold $(r_-,r_+)\times\R\times \mathbb{S}^2$ equipped with the metric $\mathbf{g}_{a,M}$, see \cite[Chapter 2]{oneill} for more details. We will mainly be interested in a region containing the right event horizon and the right Cauchy horizon that are respectively defined by
$$\mch_+:=\{r=r_+\}\cap\{u=-\infty\},\quad\ch:=\{r=r_-\}\cap\{\ubar=+\infty\}.$$
Note that a result similar to our main theorem can be deduced in a region containing the left event and Cauchy horizons, that are defined by
$$\mch_+':=\{r=r_+\}\cap\{\ubar=-\infty\},\quad\mathcal{CH}'_+:=\{r=r_-\}\cap\{u=+\infty\}.$$
We denote $\sph^2_{\mch}:=\mch'_+\cap\mch_+$ and $\sph^2_{\mathcal{CH}}:=\mathcal{CH}'_+\cap\ch$ the bifurcations spheres, and $i_+$, $\mathcal{I}_+$ (resp. $i_+'$, $\mathcal{I}_+'$) the right (resp. left) timelike and null infinities.

In this work, by ‘Kerr interior' we mean the resulting Lorentzian manifold $(\mcm,\mathbf{g}_{a,M})$ which is $((r_-,r_+)\times\R\times \mathbb{S}^2,\mathbf{g}_{a,M})$ to which we attach its boundaries, the event and Cauchy horizons. We will be interested in the asymptotics at $\ch$ of solutions to the Teukolsky equation, that arise from compactly supported initial data on a spacelike hypersurface $\Sigma_0$. See Figure \ref{fig:sigma0} for the Penrose diagram of Kerr interior and an illustration of the hypersurface $\Sigma_0$.
\begin{figure}[h!]
    \centering
    \includegraphics[scale=0.4]{sigma0.pdf}
    \caption{Kerr interior $\mcm$ in grey, Kerr exterior in white, and $\Sigma_0$.}
    \label{fig:sigma0}
\end{figure}
\subsection{Teukolsky equations}\label{section:teukolskyeq}
Teukolsky \cite{teukolsky} found that, when linearizing a gravitational perturbation of a Kerr black hole in the Newman-Penrose formalism, two curvature components decouple from the linearized gravity system and satisfy wave equations: the \textit{Teukolsky equations}. {These curvature components, which govern the linearized dynamics, are called the \textit{Teukolsky scalars}. Before stating the Teukolsky equations, we first define the following pair of null vector fields:}
{$$n:=\frac{r^2+a^2}{2\Sigma}\partial_t-\frac{\Delta}{2\Sigma}\partial_r+\frac{a}{2\Sigma}\partial_\phi,\quad l:=\frac{r^2+a^2}{\Delta}\partial_t+\partial_r+\frac{a}{\Delta}\partial_\phi,$$}
which are aligned with the principal null directions of Kerr spacetime, and the complex vector field
$$m:=\frac{1}{\sqrt{2}(r+ia\cos\theta)}\left(ia\sin\theta\partial_t+\partial_\theta+\frac{i}{\sin\theta}\partial_\phi\right).$$
Then, denoting $\dot{\mathbf{R}}$ the linearized curvature tensor, the scalars
$$\psihatp:=\dot{\mathbf{R}}_{lmlm},\quad\psihatm:=(r-ia\cos\theta)^4\dot{\mathbf{R}}_{n\overline{m}n\overline{m}},$$
are called respectively the spin $+2$ and spin $-2$ Teukolsky scalars. They satisfy the Teukolsky equations that write, for $s=\pm 2$,
\begin{align}
    & -\left[\frac{\left(r^2+a^2\right)^2}{\Delta}-a^2 \sin ^2 \theta\right] \partial_t^2\psihat_s -\frac{4 M a r}{\Delta}\partial_t\partial_\phi\psihat_s  -\left[\frac{a^2}{\Delta}-\frac{1}{\sin ^2 \theta}\right] \partial_\phi^2\psihat_s \nonumber \\
    & +\Delta^{-s} \partial_r\left(\Delta^{s+1} \partial_r \psihat_s\right)+\frac{1}{\sin \theta} \partial_\theta\left(\sin \theta \partial_\theta \psihat_s\right)+2 s\left[\frac{a(r-M)}{\Delta}+\frac{i \cos \theta}{\sin ^2 \theta}\right] \partial_\phi\psihat_s  \nonumber\\
    & +2 s\left[\frac{M\left(r^2-a^2\right)}{\Delta}-r-i a \cos \theta\right]\partial_t\psihat_s -\left[\frac{s^2 \cos ^2 \theta}{\sin ^2 \theta}-s\right]\psihat_s=0.\label{eqn:premteuk}
\end{align}

The rescaled scalars 
$$\psip:=\Delta^2\psihatp,\quad\psim:=\Delta^{-2}\psihatm,$$
satisfy a rescaled version of the Teukolsky equation, see Section \ref{subsection:teukoperators} for the precise definition of the Teukolsky wave operators. Notice that $\psip$ and $\psim$ are projections of the linearized curvature on a frame that is regular on $\mch_+$, while $\psihatp$, $\psihatm$ are projections of the linearized curvature on a frame that is regular on $\ch$. Thus the main result of this work, namely the blow-up asymptotics of $\psihatp$ at $\ch$, is a linear curvature instability statement for the Kerr Cauchy horizon.

{The Teukolsky equations were originally introduced to study the stability of the exterior of black holes, for example in \cite{schlinDHR} for the linear stability of the exterior of Schwarzschild black holes, and in \cite{mamorawetz, mablueandersson} for the linearized stability of Kerr black holes. In the non-linear setting, they were used in \cite{schpolarized, schNLDHRT} to prove the non-linear stability of the exterior of Schwarzschild black holes. }For a full review of the literature concerning the Teukolsky equations, see the introduction of \cite{KSwaveeq}, where the decay estimates for the nonlinear analog of the Teukolsky equations derived in \cite{KSwaveeq} are used to prove the nonlinear stability of the exterior of slowly rotating Kerr black holes in \cite{KS21}.

\subsection{Blueshift effect and Strong Cosmic Censorship conjecture}
\subsubsection{Blueshift effect}

The Kerr spacetime is globally hyperbolic only up to the Cauchy horizon $\ch\cup\mathcal{CH}'_+$. {Indeed, there is an infinite number of smooth extension of  Kerr spacetime across the Cauchy horizon as a regular solution to the Einstein vacuum equations \eqref{eqn:eve}, which give rise to a failure of determinism in exact Kerr.}

However, it is expected that this feature is just an {artifact} of exact Kerr. Indeed, realistic astrophysical black holes are perturbations of Kerr, i.e. they are the maximal globally hyperbolic development of initial data close to the one of Kerr. It is expected that these perturbations kill the non-physical features of exact Kerr. In the linear setting, this is the so-called blueshift effect, introduced by Simpson and Penrose in 1972 \cite{blueshift}. It is a heuristic argument according to which the geometry of Kerr interior (they initially wrote the argument for Reissner–Nordström) 
forces propagating waves to blow-up in some way at the Cauchy horizon. This effect is illustrated on Figure \ref{fig:blueshift}, and is linked to the Strong Cosmic Censorship conjecture.

\begin{figure}[h!]
    \centering

    \definecolor{sqsqbv}{rgb}{0.12549019607843137,0.12549019607843137,0.7098039215686275}
\definecolor{qqqqfq}{rgb}{0,0,0.9411764705882353}
\definecolor{rurued}{rgb}{0.0784313725490196,0.0784313725490196,0.9294117647058824}
\definecolor{ttttff}{rgb}{0.2,0.2,1}
\definecolor{xdxdff}{rgb}{0.49019607843137253,0.49019607843137253,1}
\definecolor{ttqqqq}{rgb}{0.2,0,0}
\begin{tikzpicture}[line cap=round,line join=round,>=triangle 45,x=1cm,y=1cm,scale=0.5]

\draw [line width=1.2pt] (10,14)-- (16,8);
\draw [line width=1.2pt] (16,8)-- (10,2);
\draw [line width=1.2pt,dash pattern=on 3pt off 3pt] (16,8)-- (22,2);
\draw [shift={(-15.17521638908835,6.5899794261234685)},line width=0.8pt]  plot[domain=-0.17012330039041945:0.045198091451180325,variable=\t]({1*31.20708693430449*cos(\t r)+0*31.20708693430449*sin(\t r)},{0*31.20708693430449*cos(\t r)+1*31.20708693430449*sin(\t r)});
\draw [shift={(85.60970106349461,-0.2777922147044219)},line width=0.8pt]  plot[domain=2.979076821423759:3.120574085214437,variable=\t]({1*74.33973037640283*cos(\t r)+0*74.33973037640283*sin(\t r)},{0*74.33973037640283*cos(\t r)+1*74.33973037640283*sin(\t r)});
\draw [shift={(51.27611476877602,4.421579471289647)},line width=0.8pt,dash pattern=on 2pt off 3pt]  plot[domain=2.898559294906488:2.9559621822695226,variable=\t]({1*39.70878857340338*cos(\t r)+0*39.70878857340338*sin(\t r)},{0*39.70878857340338*cos(\t r)+1*39.70878857340338*sin(\t r)});
\draw [line width=0.8pt,color=xdxdff] (16.010006974706656,7.757934337689908)-- (12.217070368589418,11.550870943807144);
\draw [line width=0.8pt,color=ttttff] (16.01830144749907,7.510153270427146)-- (12.184086351333018,11.344368366593194);
\draw [line width=0.8pt,color=rurued] (16.02739763139825,7.1183289446915055)-- (12.132444961856306,11.01328161423345);
\draw [line width=0.8pt,color=qqqqfq] (16.031868367684552,6.601637420395699)-- (12.065413093752838,10.568092694327415);
\draw [line width=0.8pt,color=sqsqbv] (16.020627531690522,5.7523655558693685)-- (11.958250000892743,9.81474308666715);
\draw [line width=0.8pt,color=black] (15.957865081531198,4.442070426886367)-- (11.801756702006124,8.598178806411443);
\draw [fill=ttqqqq] (16,8) circle (1pt);
\draw[color=black] (16.42589592752342,8.542088988265245) node {$i_+$};
\draw[color=black] (14.877091998601705,10.310240335110661) node {$\mathcal{CH}_+$};
\draw[color=black] (13.6005027135667949,4.489481624514186) node {$\mathcal{H}_+$};
\draw[color=black] (19.97090022232352,5.066981377468012) node {$\mathcal{I}_+$};
\draw[color=black] (16.283287575587655,3.1236123385663825) node {$B$};
\draw[color=black] (11.00426743688675,6.4212201964097915) node {$A$};
\end{tikzpicture}
    
    \caption{Illustration of the blueshift effect heuristic. Observer $B$ lives outside of the black hole and reaches timelike infinity in infinite proper time, while sending signals to observer $A$, that crosses the event horizon and then the Cauchy horizon in finite proper time. The waves are sent periodically by $B$, but as $A$ approaches $\ch$, the frequency of the waves becomes infinite.}
    \label{fig:blueshift}
\end{figure}
\subsubsection{Strong Cosmic Censorship conjecture}\label{subsection:SCC}
The Strong Cosmic Censorship (SCC) conjecture was formulated by Penrose in \cite{sccpenrose}, and, in its rough version, states the following :
\begin{conj}[Strong Cosmic Censorship conjecture, rough version]
    The maximal globally hyperbolic development (MGHD) of generic initial data for the Einstein equations is inextendible.
\end{conj}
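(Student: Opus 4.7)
Writing a full proof of this conjecture remains beyond current technology, so I will only sketch the strategy into which the present linear result is meant to feed. First I would fix an inextendibility criterion weak enough to plausibly hold for perturbations of subextremal Kerr but strong enough to rule out the pathological non-uniqueness scenarios SCC forbids; a natural choice is inextendibility of the MGHD as a Lorentzian manifold whose Christoffel symbols lie in $L^2_{\mathrm{loc}}$, following Christodoulou's reformulation, so that the Einstein equations \eqref{eqn:eve} still make sense weakly across the putative boundary.

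Second, I would restrict attention to initial data close to a fixed subextremal Kerr datum with $0<|a|<M$. Using a long-time existence result for the interior analogous to \cite{KS21} in the exterior, I would construct the MGHD up to a candidate Cauchy horizon and attempt to extract precise asymptotics for the full nonlinear curvature on a null frame regular at $\ch$. The Teukolsky scalar $\psihatp$ is the natural object here, and the main theorem of this paper, together with its counterparts \cite{pricelaw, millet} on $\mch_+$, produces an explicit leading-order profile whose coefficient is non-zero generically. Transferring this linear blow-up into a nonlinear curvature blow-up would rely on a bootstrap in which the linear asymptotics provides the principal term and nonlinear errors are shown to be strictly subleading.

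Third, I would establish genericity. Parametrising perturbations in a Banach space of initial data, one would show that the locus on which the leading coefficient of $\psihatp$ at $\ch$ vanishes is contained in a closed subset of positive codimension; off this locus, the curvature blow-up at $\ch$ forces inextendibility under the chosen criterion. The precise asymptotics proved in this paper are what make this step possible, since they reduce genericity to the non-vanishing of a single, explicit coefficient in the leading order of $\psihatp$.

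The hard part is unambiguously the second step: upgrading the linear blueshift to a nonlinear one. The difficulty is that $\psihatp$ is a projection of curvature onto a frame that itself depends on the dynamical metric, so the analysis must simultaneously control the geometry of the approach to $\ch$, the evolution of the null frame, and the curvature component, all under quasilinear dynamics that are already known to produce weak null singularities. The present paper removes a key linear-level uncertainty by delivering the precise blow-up rate with an explicit leading coefficient, but substantial new ideas are needed to render the argument genuinely nonlinear, and further new ideas still to make it generic in the sense required by the conjecture.
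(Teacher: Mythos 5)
There is nothing to compare your proposal against: the statement is a \emph{conjecture} (Penrose's Strong Cosmic Censorship), and the paper gives no proof of it --- nor does one exist. The paper's actual theorem is a linear statement (the precise oscillatory blow-up of $\psihatp$ at $\ch$, Theorem \ref{thm:main}), and the paper is explicit that this only \emph{supports} SCC in the linearized setting, the full nonlinear problem in Kerr being open. Your proposal is candid about this, and as a research outline it is broadly sensible: your choice of Christodoulou's inextendibility criterion (Christoffel symbols in $L^2_{\mathrm{loc}}$) is the right regularity class, since \cite{stabC0} shows the metric remains continuously extendible across the perturbed Cauchy horizon, so any inextendibility statement must live below $C^0$ --- compare \cite{christo, sbierskiholo}.

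Judged as a proof, however, the gap is total rather than local: your steps two and three \emph{are} the open problem. Upgrading the linear blueshift to nonlinear curvature blow-up is not a bootstrap refinement of the present analysis --- the interior dynamics are quasilinear, the null frame in which $\psihatp$ is defined is itself dynamical, and the expected endstate is a weak null singularity; even in spherical symmetry the analogous nonlinear statements required the separate machinery of \cite{dafscc, RNNLI, RNNLII}, with no Kerr analogue available. Your genericity step also overstates what this paper delivers: the non-vanishing of the leading coefficient is inherited from the generic non-vanishing of the constants $Q_{m,2}$ established in \cite{pricelaw} (an assumption of Theorem \ref{thm:main}, via \eqref{eqn:hypm}--\eqref{eqn:hypp}), and blow-up of the single spin $+2$ component $\psihatp$, while a genuine linear curvature instability, does not by itself yield inextendibility of a metric, even linearized. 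So the correct conclusion is the one the paper itself draws: the result is evidence for the conjecture, not a step in an existing proof of it.
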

In other words, this conjecture states {that the failure of determinism in} spacetimes with non-empty Cauchy horizons (for example Kerr and Reissner-Nordström spacetimes) is non-generic, and vanishes upon small perturbations.  See \cite{christo, chruscc} for more modern versions of the SCC conjecture. 

A fundamental question in SCC is the regularity for which the MGHD of initial data should be inextendible. The $C^0$ formulation of SCC was disproved in Kerr by Dafermos and Luk \cite{stabC0}. They showed that generic perturbations of the interior of Kerr still present a Cauchy horizon across which the metric is continuously extendible. They also argued that the perturbed Cauchy horizon may be a so-called weak null singularity, which is a singularity weaker than a spacelike curvature singularity as in Schwarzschild. For references on weak null singularities, see \cite{weakluk}, \cite{weakvdm}. See \cite{sbierskiholo} for a link between weak null singularities and the $C^{0,1}_{loc}$ formulation of SCC. 

In spherical symmetry, the $C^2$ instability of the Cauchy horizon for the model of the Einstein-Maxwell-scalar field system was proven in \cite{RNNLI} and \cite{RNNLII}, extending the results of \cite{dafscc}. See also \cite{vdmmmm,vdminsta} for analog results for the Einstein-Maxwell-Klein-Gordon equations. In Kerr, which is only axisymmetric, the full nonlinear problem for the Einstein equations is still open, and we focus in this paper on the model of linearized gravity, where the Teukolsky scalar $\psihatp$ represents a specific component of the perturbed curvature tensor. Our main result, namely the blow-up of $\Psihatp$ on $\ch$, thus supports the SCC conjecture in the linearized setting.

\subsection{Black hole interior perturbations}\label{section:litterature}
\subsubsection{Results related to Price's law}
The starting point to prove the instability of solutions of the Teukolsky equation in Kerr interior is Price's law for Teukolsky, i.e. the polynomial lower bound on the event horizon for solutions of the Teukolsky equation arising from compactly supported initial data, see \cite{price1, price2, price3, price4} for the original works on Price's law. A version of Price's law for the Teukolsky equations in Kerr was heuristically found by Barack and Ori in \cite{barackori}. 

In this paper we use the precise Price's law asymptotics on $\mch_+$ given by Ma and Zhang in \cite{pricelaw} for the Teukolsky equations\footnote{The Price's law in \cite{pricelaw} holds for $|a|\ll M$, and for $|a|<M$ conditionally on an energy-Morawetz bound. This energy-Morawetz estimate has since been proved for $|a|<M$ by Teixeira da Costa and Shlapentokh-Rothman in \cite{TDCSR2, TDCSR1}, so that the Price's law in \cite{pricelaw} holds for the full range $|a|<M$.}. For another proof of the polynomial lower bound in the full subextremal range $|a|<M$ for solutions of the Teukolsky equation, see the work of Millet \cite{millet}, that uses spectral methods. For a complete account of results related to Price's law, see \cite{pricelaw}.

\subsubsection{Previous results on black hole interior perturbations}

The first works on the linear instability of the Cauchy horizons in Kerr and Reissner-Nordström black holes consisted in finding explicit solutions that become unbounded in some way at the Cauchy horizon, see for example \cite{mac}. In \cite{oriscalar}, a heuristic power tail asymptotic for scalar waves in the interior of Kerr black holes was obtained. Regarding the Teukolsky equations, the oscillatory blow-up asymptotic of our main result in the interior of Kerr black holes, see \eqref{eqn:asympt}, was first predicted heuristically by Ori \cite{ori}, writing the azimuthal $m$-mode of the solution as a late-time expansion ansatz of the form
$$\sum_k\psi_k(r,\theta)t^{-k}.$$
The asymptotic behavior \eqref{eqn:asympt} was also confirmed in a numerical simulation \cite{simunum}.

A rigorous boundedness statement for solutions of the scalar wave equation inside the spherically symmetric Reissner-Nordström spacetime was proven in \cite{franzen}. Still for the scalar wave equation in the interior of Reissner-Nordström black holes, the blow-up of the energy of generic scalar waves was obtained in \cite{RNscalar}.  A scattering approach to Cauchy horizon instability in Reissner-Nordström, as well as an application to mass inflation, was presented in \cite{lukk}, on top of the non-linear instability results \cite{RNNLI,RNNLII,dafscc,vdmmmm,vdminsta} already mentionned in Section \ref{subsection:SCC}.

In Kerr, for the scalar wave equation, a generic blow-up result for the energy of solutions on the Cauchy horizon was obtained in \cite{kerrwave}, while the boundedness of solutions at the Cauchy horizon was proven in \cite{hintzkerrwave} in the slowly rotating case. The boundedness result was then extended to the full subextremal range in \cite{franzen2}. A construction of solutions that remain bounded but have infinite energy at the Cauchy horizon was presented in \cite{daf3}. Finally, the precise asymptotics of the scalar field in the interior of a Kerr black hole was proven in \cite{scalarMZ} using a purely physical-space analysis.

Concerning the Teukolsky equations in Kerr interior, the method of proof of \cite{kerrwave} was extended to the spin $+2$ Teukolsky equation in the work \cite{sbierski}, that proved the blow-up of a weighted $L^2$ norm on a hypersurface transverse to the Cauchy horizon, relying on frequency analysis. 
 
The goal of the present paper is to rigorously prove the oscillatory blow-up asymptotics of the spin $+2$ Teukolsky field in the Kerr black hole interior, by extending the physical-space approach of \cite{scalarMZ} to Teukolsky equations, thus providing a new proof of the blow-up results of \cite{sbierski}.

We discussed here the references on black hole interior perturbations that are the most relevant to this work. For a more complete account of the results related to black hole interior perturbations, for example in Schwarzschild interior or in the cosmological setting, see the introduction in \cite{sbierski}.

\subsection{Rough version of the main theorem}\label{section:mainthm}

This paper rigorously proves the  blueshift instability on the Kerr Cauchy horizon for solutions of the spin $+2$ Teukolsky equations, by finding the precise oscillatory blow-up asymptotics of the spin $+2$ Teukolsky scalar. The rough version of the main result of this paper is the following, see Theorem \ref{thm:main} for the precise formulation.
{
\begin{thm}[Main theorem, rough version] Let $\psihatp$ be the spin $+2$ Teukolsky scalar obtained in a principal null frame regular on the Kerr Cauchy horizon, which satisfies the spin $+2$ Teukolsky equation in subextremal Kerr spacetime with $a\neq 0$, and which arises from smooth and compactly supported initial data. Then $\psihatp$ blows up at the Cauchy horizon, exponentially in the Eddington-Finkelstein coordinate $\ubar$, and oscillates at a frequency that blows up at the Cauchy horizon. More precisely, we have the following precise asymptotic behavior near $\ch$:
\begin{align}\label{eqn:asympt}
    \psihatp\sim\frac{\Delta^{-2}(u,\ubar)}{\ubar^7}\sum_{|m|\leq 2}A_m(r_-)Q_{m,2}e^{2imr_{mod}(u,\ubar)}Y_{m,2}^{+2}(\cos\theta)e^{im\phi_{-}},
\end{align}
        where the constants $Q_{m,2}$ depend on the initial data and are generically non-zero, the constants $A_m(r_-)$ are non-zero for $m\neq 0$ (see \eqref{eq:am(r-)}), $\phi_{-}$ is an angular coordinate that is regular on $\ch$, the functions $Y_{m,2}^{+2}(\cos\theta)$ are the spin $+2$ spherical harmonics, and $r_{mod}\sim a\log(r-r_-)$ near $\ch$ (see Sections \ref{section:geometry} and \ref{section:spinscalars} for more details).
\end{thm}}
\begin{rem} We remark the following :
\begin{itemize}
        \item {Note that on every slice $\{u=cst\}$, we have the exponential blow-up
        $$\frac{\Delta^{-2}(u,\ubar)}{\ubar^7}\sim\frac{e^{\frac{r_+-r_-}{2Mr_-}\ubar}}{\ubar^7}\underset{\ubar\to+\infty}{\longrightarrow}+\infty.$$}
        \item The blueshift instability at $\ch$ for $\psihatp$ was first proven recently by Sbierski \cite{sbierski} who showed the blow-up of a weighted $L^2$ norm along a hypersurface transverse to $\ch$. This result suggests a blow-up that is exponential in the Eddington-Finkelstein coordinate $\ubar$, the Cauchy horizon corresponding to $\ubar=+\infty$. We prove in this paper a pointwise exponential blow-up, along with an oscillatory behavior, which were both heuristically predicted by Ori \cite{ori}.
        \item We will use the version of Price's law proven in \cite{pricelaw}. It holds true in subextremal Kerr conditionally on the existence of an energy and Morawetz estimate for the Teukolsky equation in the whole subextremal range $|a|<M$, which was recently proven in \cite{TDCSR2, TDCSR1}.
\end{itemize}

\end{rem}

\subsection{Structure of the proof}
Although our main result will be about the precise asymptotics of $\psihatp$ at $\ch$, we will actually also obtain precise estimates for $\psim$ near $\mch_+$, and we will use the Teukolsky-Starobinsky identities (see Section \ref{section:tsi}) to link $\psip$ and $\psim$ there. 

For $s=\pm 2$, we denote $\teukhat_s$ the differential operator on the left-hand side of \eqref{eqn:premteuk}, called the Teukolsky operator, and $\teuk_s=\Delta^{s}\teukhat_s(\Delta^{-s}\:\cdot\:)$ the rescaled Teukolsky operator, such that $$\teukhat_s\psihat_s=0,\quad\teuk_s\psi_s=0.$$
The analysis is done entirely in physical space, using energy estimates to prove upper bounds on \app error' quantities, in the hope that this method of proof is robust enough to be applied in the non-linear setting. The first step is to notice that the energy estimates to get polynomial upper bounds done in \cite{scalarMZ} for the scalar wave equation $\Box_g\psi=0$ in the Kerr interior can be extended to the Teukolsky equations $\teuk_s\psi_s=0$, but only for negative spin close to $\mch_+$, and for positive spin close to $\ch$. This is because we need a fixed sign for the scalar $s(r-M)$ that appears at crucial places in the energy estimates, where $s=\pm 2$ is the spin, and because $r_+-M>0$ while $r_--M<0$.

We denote $\un$ the region containing $\mch_+$, $\trois$ and $\quatre$ the regions close to $\ch$ where $\Delta$ has exponential decay in $\ubar$, and $\deux$ the intermediate region between $\un$ and $\trois\cup\quatre$. See Figure \ref{fig:regions2}, and Section \ref{section:regions} for the precise definitions of the regions. {The motivation for dividing the Kerr black hole interior in this manner will be clarified in the main steps of the proof below. Roughly stated, region $\un$ is the redshift region where we use $r-M>0$ near $\mch_+$, and $-\Delta\sim 1$ in the rest of region $\un$ to obtain a redshift energy estimate for $\Psim$. Region $\deux$ is still relatively far from $\ch$ so there, relying on $r-M<0$, we are able to use an effective blueshift energy estimate to control the degenerate Teukolsky field $\Psip$. Regions $\trois$ and $\quatre$ contain $\ch$, and their main feature is that $\Delta$ decays exponentially (respectively in $\ubar$ and $u,\ubar$) towards $\ch$ in $\trois$ and $\quatre$. Using this exponential decay, we obtain \eqref{eqn:asympt} in $\trois$ by integrating an algebraic identity extracted from the Teukolsky equation, see item \eqref{item:1+1} below. Then some technical work is required to propagate the asymptotic to region $\quatre$.}

The assumptions on $\mch_+$ that we will use are the ones given by \cite{pricelaw}, i.e. that the error quantities
\begin{align}
    \errm&:=\psim-\ansatzm,\\
    \errp&:=\psip-\ansatzp,\label{eqn:deferrp}
\end{align}
are bounded by $\ubar^{-7-\delta}$ on $\mch_+\cap\{\ubar\geq 1\}$ where $\delta>0$, see Sections \ref{section:geometry} and \ref{section:assumptions} for the definitions of $Q_{m,2}$, $A_m(r)$, $Y_{m,2}^{\pm 2}(\cos\theta)$, $\phi_+$. {These polynomial bounds on $\mch_+$ for $\mathrm{Err}[\psi_{\pm 2}]$ is the statement of Price's law in that case.} Defining the rescaled null pair 
$$e_3=-\frac{2\Sigma}{\Delta}n,\quad e_4=\frac{\mu}{2}l,$$
the main steps of the proof then go as follows :
\begin{enumerate}    \item First, we propagate Price's law lower bound for $\psim$ in $\un$. To do this, we propagate the $O(\ubar^{-7-\delta})$ upper bound of $\errm$ from $\mch_+$ to $\un$ using a redshift energy estimate. This is only possible for the spin $-2$ Teukolsky field, in region $\un$ that contains the event horizon. {The argument also relies on the following key computational fact: 
$$\teuk_{-2}\errm=O(\ubar^{-8}),$$
which is obtained using the Teukolsky equation $\teuk_{-2}\psim=0$, the identity $e_3(\psim-\errm)=\drond\drond'(\psim-\errm)=0$ where $\drond\drond'$ is the spin-weighted Laplacian, and the fact that the remaining terms in $\teuk_{-2}(\psim-\errm)$ all involve at least one $\partial_t$ derivative.} 
    \item We then use the Teukolsky-Starobinsky identity \eqref{eqn:tsi2} to propagate the $O(\ubar^{-7-\delta})$ Price's law upper bound on $\mch_+$ for $\errp$ into $\un$.
    \item Next, similarly as in $\un$ we propagate the lower bound for $\psip$ in $\deux$ using an effective blueshift energy estimate for $\errp$, which is only possible for $r$ close to $r_-$, and for the spin $+2$ field. {Similarly as for the spin $-2$ field in $\un$, we rely on the following key computational fact: 
    $$\teuk_{+2}\errp=O(\ubar^{-8}),$$
    see Lemma \ref{prop:teukansatz}.} This step in region $\deux$ is necessary to propagate the lower bound up to $\trois$ where the analysis becomes more delicate\footnote{{In region $\trois$, the geometry of the Kerr black hole interior near the Cauchy horizon is such that we cannot prove sharp decay of the energy using the same energy method as in regions $\un$ and $\deux$.}}.
    \item We then
    obtain a non-sharp $L^\infty$ bound for $\psip$ in $\trois$ using an energy estimate.
    \item\label{item:1+1} To get the blow-up asymptotics in $\trois$, we rewrite the Teukolsky equation $\teukhat_{+2}\psihatp=0$ as a $1+1$ wave equation in $(u,\ubar,\theta,\phi_-)$ coordinates, 
    \begin{align}\label{eqn:reflol}
        \left(\partial_u-\frac{a}{r^2+a^2}\partial_{\phi_-}\right)\left((r^2+a^2)\Delta^2\partial_\ubar\Psihatp\right)=O(\Delta),
    \end{align}
        where we use the previous $L^\infty$ bound to control the right-hand-side. Integrating \eqref{eqn:reflol} from $\Gamma$ (see Figure \ref{fig:regions2}) directly gives the blow-up asymptotics for $\psihatp$ stated in \eqref{eqn:asympt}. This is where we use the definition of region $\trois$ : $\Delta$ decays exponentially in $\ubar$ in $\trois$, which easily bounds the error terms. 
        \item We cannot integrate the $1+1$ wave equation in $\quatre$ because the slices $\{u=cst\}$ starting inside $\quatre$ do not cross $\Gamma$. Instead, commuting $e_3$ and the Teukolsky operator, we estimate $e_3\psip$ in $\quatre$, which allows us to propagate the blow-up of $\psihatp$ from $\trois$ to $\quatre$ by integration.
\end{enumerate}
\begin{rem}Here are some further remarks on the analysis :

\begin{itemize}
    \item To control the derivatives of the error terms, we commute the Teukolsky equation with operators that have good commutations properties, namely $\partial_t$, $\partial_\phi$, $e_3$, and the Carter operator.
    \item Note that, at any point in the analysis, in any region $\un$, $\deux$, $\trois$, $\quatre$, any estimate that we get can be turned into an estimate in the symmetric region $\{u\geq 1\}$ by replacing $\ubar$ with $u$, $\psi_s$ with $\psihat_{-s}$, $e_3$ with $\Delta^{-1}e_4$ and $e_4$ with $\Delta e_3$. As in \cite{scalarMZ}, this will be useful in region $\quatre$. 
\end{itemize}
Here are some differences from the previous works in the method of proof :

\begin{itemize}
    \item Our physical-space proof differs substantially from \cite{sbierski} that relies in a crucial way on frequency analysis.
    \item In \cite{scalarMZ} for the scalar wave equation, the proof is made by decomposing the scalar field $\psi$ onto its $\ell=0$ and $\ell\geq 1$ modes : $\psi=\psi_{\ell=0}+\psi_{\ell\geq 1}$. Using the decay of both quantities on the event horizon, it is shown, commuting the wave equation with the projection on the modes, that the $\ubar^{-3}$ lower bound for $\psi_{\ell=0}$ and the $O(\ubar^{-4-\delta})$ better decay of $\psi_{\ell\geq 1}$ propagate. This is different from our proof, where there is no projection on the (spin-weighted) spherical harmonics. We do not need to assume that the modes $\ell\geq 3$ of $\psi_{\pm 2}$ decay better than $\ubar^{-7}$ on $\mch_+$. Moreover, in the blueshift region (i.e. for $r$ close to $r_-$) we will use easier estimates than \cite{scalarMZ} that introduces $\log$ multipliers, taking advantage of the fact that we deal with a non-zero spin.
\end{itemize}

\end{rem}
\subsection{Overview of the paper}
In Section \ref{section:preliminaries}, we introduce the geometric background, the operators, the coordinates that we will use in the analysis, as well as the system of equations that we consider. In Section \ref{section:eq+assumptions}, we recall the decay assumptions on the event horizon, the so-called Price's law, and we write the precise version of the main theorem. In Section \ref{section:redshift}, we obtain the redshift energy estimates in $\un$ for $\psim$ and use the Teukolsky-Starobinsky identity to get the lower bound for $\psip$ in $\un$. Section \ref{section:blueshiftII} deals with the effective blueshift energy estimates to propagate the lower bound for $\psip$ in $\deux$. In Section \ref{section:blueshiftIII}, we get the non-sharp $L^\infty$ bound for $\psip$ in $\trois$ and we compute and integrate the $1+1$ wave equation that will eventually give the precise oscillatory blow-up asymptotics for $\psihatp$ in $\trois$, before propagating this blow-up to region $\quatre$.

\subsection{Acknowledgments}
The author would like to thank Jérémie Szeftel for his support and many helpful discussions, and Siyuan Ma for helpful suggestions. This work was partially supported by ERC-2023 AdG 101141855 BlaHSt.

\section{Preliminaries}\label{section:preliminaries}
We first introduce some notations. By \app LHS' and \app RHS' we mean respectively \app left-hand side' and \app right-hand side'. If $P$ is an operator acting on a (spin-weighted) scalar $\psi$, for an integer $k\geq 1$ and for any norm $\|\cdot\|$, we use the notation 
$$\|P^{\leq k}\psi\|:=\sum_{j=0}^k\|P^j\psi\|.$$
If $f,g$ are two non-negative scalars, we write $f\lesssim g$ whenever there is a constant $C>0$ that depends only on the black hole parameters $a,M$, on the initial data, and on the smallness constants $\rb$, $\gamma$, such that $f\leq Cg$ on the region considered. We write $f=O(g)$ when $|f|\lesssim |g|$. We write $f\sim g$ when $f\lesssim g$ and $g\lesssim f$.

\subsection{Geometry of the interior of subextremal Kerr spacetimes}\label{section:geometry}
First, we fix a notation for the Boyer-Lindquist (B-L) coordinate Killing vector fields : 
 $$T:=\partial_t,\quad\Phi:=\partial_\phi.$$
Note that when we write $\partial_r$, $\partial_\theta$, we mean the B-L coordinate vector fields. We recall the following null pair, which is a more convenient rescaled version of $(n,l)$:
\begin{align}
    e_3:=\frac{1}{2}\left(-\partial_r+\frac{r^2+a^2}{\Delta}T+\frac{a}{\Delta}\Phi\right),\quad\quad e_4:=\frac{1}{2}\left(\frac{\Delta}{r^2+a^2}\partial_r+T+\frac{a}{r^2+a^2}\Phi\right).
\end{align}
This pair satisfies
$$\mathbf{g}_{a,M}(e_3,e_3)=\mathbf{g}_{a,M}(e_4,e_4)=0,\quad \mathbf{g}_{a,M}(e_3,e_4)=-\frac{\Sigma}{2(r^2+a^2)},$$
and is regular on $\mch_+$, as can be seen by expressing $e_3$ and $e_4$ with the ingoing Eddington-Finkelstein coordinate vector fields, see below. We also define the rescaled null pair
\begin{align}
    \ethat:=(-\mu)e_3,\quad \eqhat:=(-\mu)^{-1}e_4,
\end{align}
that is regular on $\ch$.

As recalled in the introduction, the B-L coordinates do not cover the event and Cauchy horizons. In this paper, we will use both the Eddington-Finkelstein coordinates and double null-like coordinates (we borrow the terminology from \cite{scalarMZ}) introduced in Section \ref{section:doublenulllike}. Notice that the scalar function 
$$\mu=\frac{\Delta}{r^2+a^2}$$
vanishes only on the horizons, and that unlike in the Kerr black hole exterior region, $\mu\leq 0$ on $\mcm$. Define the tortoise coordinate $r^*$ by 
$$\frac{\dee r^*}{\dee r}=\mu^{-1},\quad r^*(M)=0.$$
Notice that $r^*\to -\infty$ as $r\to r_+$ and $r^*\to +\infty$ as $r\to r_-$. We denote by $\kappa_+$ and $\kappa_-$ the surface gravities of the event and Cauchy horizon, which are defined by
$$\kappa_+:=\frac{r_+-r_-}{4Mr_+}>0,\quad\kappa_-:=\frac{r_--r_+}{4Mr_-}<0.$$
The tortoise coordinate $r^*$ can be expressed as
\begin{align}\label{eqn:rstar}
    r^*=r+\frac{1}{2\kappa_+}\log|r-r_+|+\frac{1}{2\kappa_-}\log|r-r_-|-M-\frac{1}{2\kappa_+}\log|M-r_+|-\frac{1}{2\kappa_-}\log|M-r_-|.
\end{align}
Notice that \eqref{eqn:rstar} implies that, for $r$ close to $r_+$, 
$$-\Delta\sim\exp(2\kappa_+r^*),$$
while for $r$ close to $r_-$, 
$$-\Delta\sim\exp(-2|\kappa_-|r^*).$$
Next, we define the coordinates
$$\ubar:=r^*+t,\quad u:=r^*-t.$$
The range of the coordinates $u,\ubar,r^*$ is indicated on Figure \ref{fig:range}. Recall that the right event horizon corresponds to $\{u=-\infty\}$, while the right Cauchy horizon corresponds to $\{\ubar=+\infty\}$.
\begin{figure}[h!]
    \centering
    \includegraphics[scale=0.5]{ange_coords.pdf}
    \caption{The range of the coordinates $u,\ubar,r^*$ in the Kerr black hole interior.}
    \label{fig:range}
\end{figure}

As in \cite{scalarMZ}, define also the function $r_{mod}(r)$ by
$$\frac{\dee r_{mod}}{\dee r}=\frac{a}{\Delta},\quad r_{mod}(M)=0.$$
Then define the {ingoing} and {outgoing} angular coordinates
$$\phi_{+}:=\phi+r_{mod}\:\:\text{mod}\:2\pi,\quad \phi_{-}:=\phi-r_{mod}\:\:\text{mod}\:2\pi.$$
The coordinates $\ubar$ and $\phi_+$ are regular on $\mch_+=\{u=-\infty\}$ and $\mathcal{CH}'_+=\{u=+\infty\}$, while $u$ and $\phi_-$ are regular on $\mch'_+=\{\ubar=-\infty\}$ and $\ch=\{\ubar=+\infty\}$.
\subsubsection{Ingoing Eddington-Finkelstein coordinates}
The ingoing Eddington-Finkelstein coordinates are $(\ubar_{in}=\ubar, r_{in}=r, \theta_{in}=\theta, \phi_{+})$. It is a set of coordinates that is regular on $\mch_+$ and $\mathcal{CH}'_+$. The coordinate vector fields are 
\begin{align}\label{eqn:coordEFin}
    \partial_{\ubar_\mathrm{in}}=T,\quad \partial_{r_\mathrm{in}}=-2e_3,\quad \partial_{\theta_{in}}=\partial_\theta,\quad \partial_{\phi_{+}}=\Phi.
\end{align}
\subsubsection{Outgoing Eddington-Finkelstein coordinates}
The outgoing Eddington-Finkelstein coordinates are $(u_{out}=u, r_{out}=r, \theta_{out}=\theta, \phi_{-})$. It is a set of coordinates that is regular on $\mch'_+$ and $\ch$. The coordinate vector fields are 
\begin{align}\label{eqn:coordEFout}
    \partial_{u_\mathrm{out}}=T,\quad \partial_{r_\mathrm{out}}=-2\eqhat,\quad \partial_{\theta_{out}}=\partial_\theta,\quad \partial_{\phi_{-}}=\Phi.
\end{align}
\subsubsection{Double null-like coordinate systems}\label{section:doublenulllike}
As in \cite{scalarMZ}, we also use the \emph{ingoing} double null-like coordinates $(\ubar,u,\theta,\phi_{+})$ that are regular at $\mch_+$ (where $u=-\infty$), with coordinate vector fields
\begin{align}\label{eqn:coordDNLin}
    \partial_\ubar=e_4-\frac{a}{r^2+a^2}\Phi,\quad\partial_u=-\mu e_3, \quad\partial_\theta=\Theta,\quad\partial_{\phi_{+}}=\Phi,
\end{align}
where we denote $\Theta$ the B-L coordinate vector field $\partial_\theta$ to avoid confusion. The equivalent \emph{outgoing} double null-like coordinates $(\ubar,u,\theta,\phi_{-})$ are regular at $\ch$ (where $\ubar=+\infty$), with coordinate vector fields
\begin{align}\label{eqn:coordDNLout}
    \partial_\ubar=e_4,\quad\partial_u=-\mu e_3+\frac{a}{r^2+a^2}\Phi, \quad\partial_\theta=\Theta,\quad\partial_{\phi_{-}}=\Phi.
\end{align}
Note that we will only use the ingoing double null-like coordinate system in the redshift region $\un$, and the outgoing double null-like coordinate system in the blueshift region $\deux\cup\trois\cup\quatre$ so we use the same notations for the ingoing and outgoing double null-like coordinate vector fields $\partial_u$, $\partial_\ubar$, as there is no danger of confusion.
\subsubsection{Constant $\wbar$ and $w$ spacelike hypersurfaces}
We need a family of spacelike hypersurfaces, for which we will apply the energy estimates and get decay in $\ubar$. 
\begin{figure}[h!]
    \centering
    \includegraphics[scale=0.45]{hypersurfaces_cst.pdf}
    \caption{Examples of constant $u,\ubar,w,\wbar$ hypersurfaces and their causal nature.}
    \label{fig:range2}
\end{figure}
Note that the constant $\ubar, u$ hypersurfaces are not spacelike. Indeed we have $\mathbf{g}_{a,M}(\nabla\ubar,\nabla\ubar)=\mathbf{g}_{a,M}(\nabla u,\nabla u)={a^2\sin^2\theta}/{\Sigma}$ so that the constant $\ubar, u$ hypersurfaces are null at the poles and timelike away from the poles. We define, as in \cite{scalarMZ}, 
$$\wbar:=\ubar-r+r_+,\quad w:=u-r+r_-,$$
such that the constant $\wbar$ and $w$ hypersurfaces are spacelike. Indeed, we have \cite{scalarMZ} 
$$\mathbf{g}_{a,M}(\nabla\wbar,\nabla\wbar)=\mathbf{g}_{a,M}(\nabla w,\nabla w)=-\frac{r^2+2Mr+a^2\cos^2\theta}{\Sigma}<-1.$$

\subsection{Spin-weighted scalars}\label{section:spinscalars}
\subsubsection{Spin-weighted scalars and spin-weighted spherical operators}
In this section, we consider the round sphere $\sph^2$ equipped with its volume element, written in coordinates $(\theta,\phi)\in(0,\pi)\times[0,2\pi)$ :  
$$\dee\nu := \sin\theta\dee\theta\dee\phi.$$

Let $s$ be an integer. Note that in this work we only consider $s=\pm 2$. A spin $s$ scalar is a scalar function that has zero boost weight and proper spin weight, as defined by Geroch, Held and Penrose \cite{spin}. {Roughly speaking, a spin $+2$ (respectively spin $-2$) scalar is a the contraction $\alpha({\eta},{\eta})$ (respectively $\alpha(\overline{\eta},\overline{\eta})$) of a smooth 
symetric tensor $\alpha$ on $\mathbb{S}^2$, where $\eta=\partial_\theta+\frac{i}{\sin\theta}\partial_\phi$.} See \cite{sbierski} for a rigorous presentation of the spaces of spin-weighted scalars on the sphere $\sph^2$ and on the Kerr interior, and a proof that the Teukosky scalar obtained in the linearisation of a gravitational perturbation in the Newman-Penrose formalism is a spin-weighted scalar on spacetime. See also \cite{millet2} for a precise review of the geometric background of the Teukolsky equation. By 'spin-weighted operator' we mean an operator that takes a spin-weighted scalar into a spin-weighted scalar. Note that a spin $0$ scalar is a scalar function.

In Kerr spacetime, the volume element induced on the topological spheres 
$$S(u,\ubar):=\{r=r(u,\ubar)\}\cap\{t=t(u,\ubar)\}$$
by the metric $\mathbf{g}_{a,M}$ is 
$$\mathrm{vol}_{S(u,\ubar)}=\sigma\sin\theta\dee\theta\dee\phi$$
where {$\sigma:=\sqrt{(r^2+a^2)^2-a^2\sin^2\theta\Delta}\sim 1$} in the Kerr interior. Thus, although they are not round, we still rely on the round volume element $\dee\nu$ on the Kerr spheres $S(u,\ubar)$ to define $L^2(S(u,\ubar))$ norms. 
\begin{defi}
    For $\psi$ a spin-weighted scalar on $\mcm$, we define
$$\|\psi\|_{L^2(S(u,\ubar))}:=\left(\int_{S(u,\ubar)}|\psi^2|\dee\nu\right)^{1/2}.$$
\end{defi}
Notice that on a region $\{r_0\leq r\leq r_+\}$ with $r_0\in(r_-,r_+)$, in the ingoing Eddington-Finkelstein coordinates, as $\phi_+-\phi=r_{mod}$ is constant on $S(u,\ubar)$, the definition of the $L^2(S(u,\ubar))$ norm gives 
$$\|\psi\|_{L^2(S(u,\ubar))}^2=\int_0^\pi\int_0^{2\pi}|\psi|^2(u,\ubar,\theta,\phi_+)\sin\theta\dee\theta\dee\phi_+,$$
which is a regular norm up to the event horizon $\{u=-\infty\}\cup\{\ubar=-\infty\}$. For the same reason, on a region $\{r_-\leq r\leq r_0\}$ with $r_0\in(r_-,r_+)$, we have 
$$\|\psi\|_{L^2(S(u,\ubar))}^2=\int_0^\pi\int_0^{2\pi}|\psi|^2(u,\ubar,\theta,\phi_-)\sin\theta\dee\theta\dee\phi_-,$$
which is a regular norm up to the Cauchy horizon $\{\ubar=+\infty\}\cup\{u=+\infty\}$.
\begin{defi}
    We recall the definition of the following standard spin-weighted differential operators, called the spherical eth operators :
    \begin{align}
        \drond_s&:=\partial_\theta+\frac{i}{\cos\theta}\partial_\phi-s\cot\theta,\\
        \drond'_s&:=\partial_\theta-\frac{i}{\cos\theta}\partial_\phi+s\cot\theta.
    \end{align}
\end{defi}
\begin{rem}
    The spherical eth operators modify the spin when applied to a spin-weighted scalar. More precisely, $\drond$ increases the spin by $1$ while $\drond'$ decreases the spin by $1$. See in Section \ref{section:spinweightedspherical} their effect on spin-weighted spherical harmonics. Note that the spin of the scalar to which we apply the eth operators will be clear in the context, so we drop the subscript $s$ in what follows. 
\end{rem}
\begin{defi}
    We define the spin-weighted Laplacian as
    \begin{align}
        \drond\drond'=\frac{1}{\sin\theta}\partial_\theta(\sin\theta\partial_\theta)+\frac{1}{\sin^2\theta}\partial_\phi^2+\frac{2is\cos\theta}{\sin^2\theta}\partial_\phi-(s^2\cot^2\theta+s).
    \end{align}
\end{defi}
\begin{rem}
    We also have the expression
        \begin{align}
        \drond'\drond=\frac{1}{\sin\theta}\partial_\theta(\sin\theta\partial_\theta)+\frac{1}{\sin^2\theta}\partial_\phi^2+\frac{2is\cos\theta}{\sin^2\theta}\partial_\phi-(s^2\cot^2\theta-s)=\drond\drond'+2s.
    \end{align}
\end{rem}
\subsubsection{Spin-weighted spherical harmonics}\label{section:spinweightedspherical}
Let $s$ be a fixed spin. The spin-weighted spherical harmonics are the eigenfunctions of the spin-weighted Laplacian, that is self-adjoint on $L^2(\sph^2)$. They are given by the following family
$$Y_{m,\ell}^s(\cos\theta)e^{im\phi},\quad\ell\geq |s|, -\ell\leq m\leq \ell$$
of spin $s$ scalars on the sphere $\sph^2$. They form a complete orthonormal basis of the space of spin $s$ scalars on $\sph^2$, for the $L^2(\sph^2)$ scalar product. When considering the spheres of Kerr interior, as the B-L coordinate $\phi$ is singular at the horizons, we need a slightly modified family. Let $r_0\in(r_-,r_+)$. Then for $(u,\ubar)$ such that $r(u,\ubar)\in [r_0,r_+]$, the spin $s$ scalars 
$$Y_{m,\ell}^s(\cos\theta)e^{im\phi_+},\quad\ell\geq |s|, -\ell\leq m\leq \ell$$
form a complete orthonormal basis of the space of spin $s$ scalars on $S(u,\ubar)$, for the $L^2(S(u,\ubar))$ scalar product. Similarly, if $r(u,\ubar)\in [r_-,r_0]$, the spin $s$ scalars 
$$Y_{m,\ell}^s(\cos\theta)e^{im\phi_-},\quad\ell\geq |s|, -\ell\leq m\leq \ell$$
form a complete orthonormal basis of the space of spin $s$ scalars on $S(u,\ubar)$, for the $L^2(S(u,\ubar))$ scalar product.

\begin{rem}
    In this work, we will never project equations or spin-weighted scalars on some $\ell$ or $\geq\ell$ modes, unlike in \cite{scalarMZ} and \cite{pricelaw}. Instead, we will simply propagate the leading-order term term\footnote{Namely, the second term in \eqref{eqn:deferrp}.} of $\psip$ on $\mch_+$ up to $\ch$. This dominant term is a linear combination of spin-weighted spherical harmonics, and satisfies key algebraic properties (see Lemma \ref{prop:teukansatz}).
\end{rem}
The following facts are standards. We have :
\begin{align}
    \drond'\drond(Y_{m,\ell}^s(\cos\theta)e^{im\phi_\pm})&=-(\ell-s)(\ell+s+1)Y_{m,\ell}^s(\cos\theta)e^{im\phi_\pm},\label{eq:vpspin}\\
    \drond\drond'(Y_{m,\ell}^s(\cos\theta)e^{im\phi_\pm})&=-(\ell+s)(\ell-s+1)Y_{m,\ell}^s(\cos\theta)e^{im\phi_\pm},\label{eqn:annulemode}\\
    \drond(Y_{m,\ell}^{s}(\cos\theta)e^{im\phi_\pm})&=-\sqrt{(\ell-s)(\ell+s+1)}Y_{m,\ell}^{s+1}(\cos\theta)e^{im\phi_\pm},\label{eqn:drooo}\\
    \drond'(Y_{m,\ell}^s(\cos\theta)e^{im\phi_\pm})&=\sqrt{(\ell+s)(\ell-s+1)}Y_{m,\ell}^{s-1}(\cos\theta)e^{im\phi_\pm}.
\end{align}

\subsection{Functional inequalities}

We start this section by recalling \cite[Lemma 3.4]{scalarMZ} with $\alpha=0$, which will be used to deduce decay of the energy from an energy estimate.
\begin{lem}\label{lem:decay}
    Let $p>1$ and let $f:[1,+\infty) \rightarrow[0,+\infty)$ be a continuous function. Assume that there are constants $C_0>0, C_1>0, C_2 \geq 0$, $C_3 \geq 0$ such that for $1 \leq x_1<x_2$,
$$
f(x_2)+C_1 \int_{x_1}^{x_2}  f(x) \mathrm{d} x \leq C_0 f(x_1)+C_2 \int_{x_1}^{x_2} x^{-p} \mathrm{~d} x+C_3 x_1^{-p} .
$$

Then for any $x_1 \geq 1$,
$$
f(x_1) \leq C x_1^{-p}
$$
where $C$ is a constant that depends only on $f(1), C_0, C_1, C_2, C_3$, and $p$.
\end{lem}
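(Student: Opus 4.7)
The plan is to prove the decay by a bootstrap that converts the integrated term $C_1\int f$ on the LHS into a pointwise contraction via the mean value theorem. First, applying the hypothesis with $x_1 = 1$ and $x_2 = x$ free, and dropping the nonnegative integral from the LHS, gives the a priori uniform bound
\[
f(x) \le C_0 f(1) + \tfrac{C_2}{p-1} + C_3 =: M \quad \text{for all } x \ge 1,
\]
so $f$ is bounded on $[1,\infty)$.

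The main step is the following contraction estimate. Set $L := 2C_0/C_1$ and apply the hypothesis with $x_1 = y$, $x_2 = y+L$ for $y \ge 1$. Using $\int_y^{y+L} z^{-p} dz \le L y^{-p}$ and the mean value theorem for integrals (valid by continuity of $f$), there exists $y^* \in [y, y+L]$ with $\int_y^{y+L} f(z)\,dz = L f(y^*)$; dropping the nonnegative $f(y+L)$ from the LHS then gives
\[
f(y^*) \le \frac{C_0}{C_1 L}\, f(y) + \frac{C_2 L + C_3}{C_1 L}\, y^{-p} = \tfrac12 f(y) + C_*\, y^{-p},
\]
with $C_*$ depending only on $C_0, C_1, C_2, C_3$. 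Iterate along $x_0 := 1$, $x_{n+1} := x_n^*$, so $x_n \in [1, 1+nL]$. Unrolling the recursion,
\[
f(x_n) \le 2^{-n} M + C_* \sum_{k=0}^{n-1} 2^{-(n-1-k)}\, x_k^{-p}.
\]
The exponential $2^{-n}$ decays faster than any polynomial, and the geometric sum concentrates near its last index $k = n-1$, where $x_k \sim x_n$; a direct estimate bounds the sum by a uniform constant times $x_n^{-p}$. Hence $f(x_n) \le C x_n^{-p}$ with $C$ depending only on $f(1), C_0, C_1, C_2, C_3, p$. Finally, for arbitrary $x \in [x_n, x_{n+1}]$, applying the hypothesis with $x_1 = x_n$, $x_2 = x$ and inserting the bound at $x_n$ gives $f(x) \le C_0 f(x_n) + (C_2 L + C_3) x_n^{-p} \le C'\, x^{-p}$, which completes the proof.

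The main obstacle is that $C_0$ may exceed $1$, in which case naive pointwise propagation $f(x_2) \le C_0 f(x_1) + \text{source}$ carries no contraction and cannot by itself produce decay. The mean value trick above sidesteps this by paying a fixed increment $L$ in the argument in order to extract a $\tfrac12$-contraction from the integrated $C_1\int f$ term on the LHS. A secondary delicacy is to resist bounding the geometric sum $\sum 2^{-(n-1-k)} x_k^{-p}$ by its initial value $x_0^{-p}$ (which would yield decay only at the starting scale): it is precisely the concentration of the geometric weight near the final index that produces decay at the current scale $x_n^{-p}$.
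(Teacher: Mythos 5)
The paper itself does not prove this lemma; it defers to \cite[Lemma 3.4]{scalarMZ}, whose argument is a dyadic pigeonhole bootstrap (one gains a power of decay per pass until reaching $p$). Your fixed-step half-contraction plus discrete Duhamel sum is a legitimate variant built on the same engine — extracting a good point from the $C_1\int f$ term — and most of your steps check out: the uniform bound $M$, the choice $L=2C_0/C_1$ giving the factor $\tfrac{C_0}{C_1L}=\tfrac12$, the unrolled recursion, and the final propagation $f(x)\le C_0f(x_n)+(C_2L+C_3)x_n^{-p}$ with $x_n^{-p}\le(1+L)^p x^{-p}$. But there is a genuine gap: the mean value theorem only places $y^*$ somewhere in $[y,y+L]$, so the increments $x_{n+1}-x_n$ of your recursively defined sequence have no lower bound; the $x_n$ may stall or converge to a finite limit $\bar x$. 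Two steps then fail as written. First, the concluding coverage step (``for arbitrary $x\in[x_n,x_{n+1}]$'') never reaches any $x>\bar x$, so the pointwise bound is not established on all of $[1,\infty)$. Second, your estimate $\sum_{k=0}^{n-1}2^{-(n-1-k)}x_k^{-p}\lesssim x_n^{-p}$ with a constant depending only on the admissible data implicitly requires the growth lower bound $x_n\gtrsim 1+n$; the containment $x_n\in[1,1+nL]$ you record is only an upper bound, and without linear growth the concentration-near-the-last-index argument does not close with a uniform constant.

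The repair is one line: average over the right half-interval only. Since
\begin{align*}
\int_y^{y+L} f(z)\,\mathrm{d}z \;\ge\; \int_{y+L/2}^{y+L} f(z)\,\mathrm{d}z \;=\; \frac{L}{2}\, f(y^*)\quad\text{for some } y^*\in[y+L/2,\,y+L],
\end{align*}
taking $L=4C_0/C_1$ restores $f(y^*)\le\tfrac12 f(y)+C_* y^{-p}$ while forcing $x_{n+1}\ge x_n+L/2$, hence $1+nL/2\le x_n\le 1+nL$. With $x_n\asymp 1+n$ your sum estimate goes through by splitting at $k=n/2$ (the head is $O(2^{-n/2})$, which beats any polynomial in $x_n$; in the tail $x_k\ge x_n/4$), and the intervals $[x_n,x_{n+1}]$ now exhaust $[1,\infty)$, so the interpolation step covers every $x\ge1$. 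With that modification your proof is complete and correct, and arguably more streamlined than the power-by-power bootstrap in the cited reference.
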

\begin{proof}
    See \cite[Lemma 3.4]{scalarMZ}.
\end{proof}
\subsubsection{$\mcq_s$ and $\mcu$ spin-weighted operators in Kerr spacetime}
\begin{defi}
We define the spin-weighted Carter operator 
\begin{align}\label{eqn:carter}
    \mcq_s:=a^2\sin^2\theta T^2-2ias\cos\theta T +\drond'\drond.
\end{align}
\end{defi}
The main property of the spin-weighted Carter operator is that it commutes with the Teukolsky operator (see \eqref{eqn:opteuk}, \eqref{eqn:opteukhat}). We will use it to bound the angular derivates of solutions to the Teukolsky equation. The following operator is (up to a bounded potential) the spin-weighted scalar equivalent of the tensor covariant derivative $\nabla_2$, where 
$$e_2:=\frac{1}{\sin\theta}\Phi+a\sin\theta T,$$
and will be useful to get a Poincaré-type inequality to absorb the zero order terms of the Teukolsky operator when doing energy estimates.
\begin{defi}We define the spin-weighted operator
\begin{align}\label{eqn:defU}
    \mcu:=\frac{1}{\sin\theta}\Phi+a\sin\theta T+is\cot\theta\frac{\Sigma}{r^2+a^2}.
\end{align}
\end{defi}
Note that $\mcu$ is not regular at the poles, but for a spin-weighted scalar $\psi$ on spacetime we still have $\mcu\psi\in L^\infty(S(u,\ubar))$, see for example \cite[Eq. (2.34)]{sbierski}. We will need the following integration by parts lemma for $\mcu$.
\begin{lem}\label{lem:intpartU}
    Let $\psi$, $\varphi$ be spin $s$ scalars on $\mcm$. Then 
    $$\int_{S(u,\ubar)}\overline{\varphi}\:\mcu\psi\:\dee\nu=T\left(\int_{S(u,\ubar)}a\sin\theta\overline{\varphi}\psi\:\dee\nu\right)-\int_{S(u,\ubar)}\overline{\mcu\varphi}\psi\:\dee\nu.$$
\end{lem}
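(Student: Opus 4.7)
The plan is to prove the identity by direct computation: expand $\mcu = \tfrac{1}{\sin\theta}\Phi + a\sin\theta\,T + is\cot\theta\,\tfrac{q^2}{r^2+a^2}$ and treat the three resulting summands in $\int_{S(u,\ubar)}\overline{\varphi}\,\mcu\psi\,\dee\nu$ separately. In ingoing (or outgoing) Eddington--Finkelstein coordinates the sphere $S(u,\ubar)$ is parametrized by $(\theta,\phi_\pm)$ with volume form $\sin\theta\,\dee\theta\,\dee\phi_\pm$; since $\Phi=\partial_{\phi_\pm}$ and the only $\phi_\pm$-dependence of $\overline{\varphi}\psi$ comes from the scalars themselves (the metric coefficients in the coordinates are axisymmetric), integration in $\phi_\pm$ is just integration on the circle.

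For the $\Phi$-term, the factor $\tfrac{1}{\sin\theta}$ cancels $\sin\theta$ in $\dee\nu$, and integration by parts in $\phi_\pm$ on a period gives
\[
\int_{S(u,\ubar)}\overline{\varphi}\,\tfrac{1}{\sin\theta}\Phi\psi\,\dee\nu
=-\int_{S(u,\ubar)}\overline{\tfrac{1}{\sin\theta}\Phi\varphi}\,\psi\,\dee\nu,
\]
using $\overline{\Phi\varphi}=\Phi\overline{\varphi}$ since $\Phi$ is real and $2\pi$-periodicity (spin-weighted scalars are smooth in $\phi_\pm$ with the appropriate twisted periodicity, which cancels in the pair $\overline{\varphi}\psi$ of same spin). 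For the $is\cot\theta\,\tfrac{q^2}{r^2+a^2}$-term, the multiplier is purely imaginary, so moving it into the conjugate produces a sign flip
\[
\int_{S(u,\ubar)}\overline{\varphi}\,is\cot\theta\tfrac{q^2}{r^2+a^2}\psi\,\dee\nu
=-\int_{S(u,\ubar)}\overline{is\cot\theta\tfrac{q^2}{r^2+a^2}\varphi}\,\psi\,\dee\nu,
\]
where the integrability of these pole-singular integrands is precisely the observation, recorded right after \eqref{eqn:defU}, that $\mcu\psi\in L^\infty(S(u,\ubar))$ for spin-weighted $\psi$.

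For the $T$-term, the key observation is that $T$ commutes with integration on the spheres $S(u,\ubar)$: in Boyer--Lindquist coordinates $T=\partial_t$, the sphere $\{t,r \text{ const}\}$ is parametrized by $(\theta,\phi)$ with volume $\sin\theta\,\dee\theta\,\dee\phi$ independent of $t$, and $a\sin\theta$ is $T$-invariant, so
\[
T\!\left(\int_{S(u,\ubar)}a\sin\theta\,\overline{\varphi}\,\psi\,\dee\nu\right)
=\int_{S(u,\ubar)}a\sin\theta\,(T\overline{\varphi})\,\psi\,\dee\nu
+\int_{S(u,\ubar)}a\sin\theta\,\overline{\varphi}\,T\psi\,\dee\nu,
\]
and using $T\overline{\varphi}=\overline{T\varphi}$ (since $T$ is real) rearranges this to
\[
\int_{S(u,\ubar)}\overline{\varphi}\,a\sin\theta\,T\psi\,\dee\nu
= T\!\left(\int_{S(u,\ubar)}a\sin\theta\,\overline{\varphi}\,\psi\,\dee\nu\right)
-\int_{S(u,\ubar)}\overline{a\sin\theta\,T\varphi}\,\psi\,\dee\nu.
\]

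Summing the three contributions and regrouping the three terms without the $T$-boundary term into $\overline{\mcu\varphi}$ yields the claimed identity. There is no serious obstacle: the only mild subtlety is justifying the $\phi_\pm$-integration by parts and the differentiation-under-the-integral in $T$ for spin-weighted scalars near the poles, both of which are standard given the $L^\infty(S(u,\ubar))$ regularity of $\mcu\psi$ recalled above.
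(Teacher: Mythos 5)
Your proof is correct and follows essentially the same route as the paper's: decompose $\mcu$ into the $\Phi$, $T$, and purely imaginary zeroth-order pieces, integrate by parts in $\phi$, differentiate under the integral for $T$, and use the sign flip from conjugating the imaginary multiplier. The extra justifications you supply (twisted periodicity cancelling in the spin-$0$ product $\overline{\varphi}\psi$, and regularity at the poles via $\mcu\psi\in L^\infty(S(u,\ubar))$) are exactly the points the paper leaves implicit.
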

\begin{proof}
Define the real valued function
$$f(r,\theta):=\cot\theta\frac{\Sigma}{r^2+a^2}.$$
We have
\begin{align*}
    \int_{S(u,\ubar)}\overline{\varphi}\:\mcu\psi\:\dee\nu&=\int_{S(u,\ubar)}\overline{\varphi}\left(\frac{1}{\sin\theta}\Phi+a\sin\theta T\right)\psi\:\dee\nu+\int_{S(u,\ubar)}\overline{\varphi}isf(r,\theta)\psi\:\dee\nu\\
    &=T\left(\int_{S(u,\ubar)}a\sin\theta\overline{\varphi}\psi\:\dee\nu\right)-\int_{S(u,\ubar)}\overline{\left(\frac{1}{\sin\theta}\Phi+a\sin\theta T\right)\varphi}\psi\:\dee\nu\\
    &\quad\quad\quad-\int_{S(u,\ubar)}\overline{isf(r,\theta)\varphi}\psi\:\dee\nu\\
    &=T\left(\int_{S(u,\ubar)}a\sin\theta\overline{\varphi}\psi\:\dee\nu\right)-\int_{S(u,\ubar)}\overline{\mcu\varphi}\psi\:\dee\nu,
\end{align*}
as stated.
\end{proof}
\begin{rem}
    Notice that we have the expression
    \begin{align}\label{eqn:expreT}
        T&=\frac{r^2+a^2}{\Sigma}(\mu e_3+e_4)-\frac{a\sin\theta}{\Sigma}\mcu+\frac{ias\cos\theta}{r^2+a^2},\\        \Phi&=\frac{\sin\theta(r^2+a^2)}{\Sigma}(\mcu-a\sin\theta e_4-a\sin\theta\mu e_3)-is\cos\theta.\label{eqn:exprephi}
    \end{align}
    In particular, $T=O(\mu)e_3+O(1)e_4+O(1)\mcu+O(1)$, and $\Phi=O(\mu)e_3+O(1)e_4+O(1)\mcu+O(1)$.
\end{rem}
\subsubsection{Poincaré and Sobolev inequalities for spin-weighted scalars}
We first recall the standard Poincaré inequality for spin-weighted scalars, see for example \cite[Eq. (34)]{dhrteuk}.
\begin{prop}
    Let $\psi$ be a spin $\pm 2$ scalar on $\sph^2$. Then 
    \begin{align}\label{eqn:dhrpoinc}
    2\int_{\sph^2}|\psi|^2\dee\nu&\leq\int_{\sph^2}\left(|\partial_\theta\psi|^2+\left|\frac{1}{\sin\theta}\partial_\phi\psi+is\cot\theta\psi\right|^2\right)\dee\nu.
\end{align}
\end{prop}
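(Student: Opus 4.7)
The plan is to reduce the desired inequality to a statement about the spectrum of the spin-weighted Laplacian by exploiting the algebraic identity between the edth operators $\drond$, $\drond'$ and the two terms appearing on the right-hand side.

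First, I would observe the simple algebraic identity
\begin{align*}
    \drond\psi&=\partial_\theta\psi+i\Bigl(\tfrac{1}{\sin\theta}\partial_\phi\psi+is\cot\theta\,\psi\Bigr),\\
    \drond'\psi&=\partial_\theta\psi-i\Bigl(\tfrac{1}{\sin\theta}\partial_\phi\psi+is\cot\theta\,\psi\Bigr).
\end{align*}
Setting $a:=\partial_\theta\psi$ and $b:=\tfrac{1}{\sin\theta}\partial_\phi\psi+is\cot\theta\,\psi$, these read $\drond\psi=a+ib$ and $\drond'\psi=a-ib$, and the parallelogram identity in $\mathbb{C}$ gives $|a+ib|^2+|a-ib|^2=2|a|^2+2|b|^2$. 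Integrating over $\sph^2$,
$$\int_{\sph^2}\bigl(|\drond\psi|^2+|\drond'\psi|^2\bigr)\dee\nu=2\int_{\sph^2}\Bigl(|\partial_\theta\psi|^2+\bigl|\tfrac{1}{\sin\theta}\partial_\phi\psi+is\cot\theta\,\psi\bigr|^2\Bigr)\dee\nu.$$
So the right-hand side of the desired inequality is exactly $\tfrac{1}{2}\int_{\sph^2}(|\drond\psi|^2+|\drond'\psi|^2)\dee\nu$.

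Next I would invoke the spin-weighted spherical harmonic expansion. Writing $\psi=\sum_{\ell\geq|s|,\:|m|\leq\ell}c_{m,\ell}Y^s_{m,\ell}(\cos\theta)e^{im\phi}$ and applying the formulas \eqref{eqn:drooo} and its $\drond'$ counterpart together with Parseval's identity on $L^2(\sph^2)$,
$$\int_{\sph^2}|\drond\psi|^2\dee\nu=\sum|c_{m,\ell}|^2(\ell-s)(\ell+s+1),\qquad \int_{\sph^2}|\drond'\psi|^2\dee\nu=\sum|c_{m,\ell}|^2(\ell+s)(\ell-s+1).$$
Adding and expanding,
$$(\ell-s)(\ell+s+1)+(\ell+s)(\ell-s+1)=2\bigl[\ell(\ell+1)-s^2\bigr].$$
For $s=\pm2$, the sum runs over $\ell\geq 2$, so $\ell(\ell+1)-s^2\geq 6-4=2$ with equality at $\ell=2$. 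Hence
$$\int_{\sph^2}\bigl(|\drond\psi|^2+|\drond'\psi|^2\bigr)\dee\nu\geq 4\sum|c_{m,\ell}|^2=4\int_{\sph^2}|\psi|^2\dee\nu,$$
and combining with the identity above yields the inequality \eqref{eqn:dhrpoinc}.

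There is no real obstacle here: the only point to check carefully is that the algebraic decomposition $\drond\psi=a+ib$, $\drond'\psi=a-ib$ is exact (which follows from $i\cdot is=-s$ and $-i\cdot is=s$), and that the spectrum of $-\drond'\drond$ (respectively $-\drond\drond'$) on spin $\pm 2$ scalars starts at $(\ell-s)(\ell+s+1)$ (respectively $(\ell+s)(\ell-s+1)$) for $\ell\geq|s|=2$, which is recorded in Section \ref{section:spinweightedspherical}. The sharp constant $2$ comes precisely from the lowest $\ell=2$ mode.
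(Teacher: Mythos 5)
Your proof is correct, and it is essentially the same argument as the one behind the paper's own treatment: the paper does not prove this proposition but simply cites \cite[Eq.\ (34)]{dhrteuk}, where the inequality is obtained by exactly this route — the pointwise identity $|\drond\psi|^2+|\drond'\psi|^2=2|\partial_\theta\psi|^2+2\left|\tfrac{1}{\sin\theta}\partial_\phi\psi+is\cot\theta\,\psi\right|^2$ combined with the expansion in spin-weighted harmonics, where the lowest mode $\ell=|s|=2$ gives $\ell(\ell+1)-s^2=2$ and hence the sharp constant. (One small point in your favor: you correctly read the eth operators with $\tfrac{i}{\sin\theta}\partial_\phi$, consistent with the paper's displayed spin-weighted Laplacian, even though the paper's definition contains a $\cos\theta$ typo in that slot.)
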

Next, we define the energy for wich we will show decay for solutions of the Teukolsky equation. 
\begin{defi}
    Let $\psi$ be a spin-weighted scalar on $\mcm$. We define its energy density and degenerate energy density as the scalars 
\begin{align}
    \mathbf{e}[\psi]&:=|e_3\psi|^2+|e_4\psi|^2+|\mcu\psi|^2+|\partial_\theta\psi|^2,\label{eqn:defener}\\
    \mathbf{e}_{deg}[\psi]&:=\mu^2|e_3\psi|^2+|e_4\psi|^2+|\mcu\psi|^2+|\partial_\theta\psi|^2.\label{eqn:defenerdeg}
\end{align}
\end{defi}
The following result is a re-writing of the standard Poincaré inequality \eqref{eqn:dhrpoinc}, that will be useful in the energy estimates.
\begin{prop}
Let $\psi$ be a spin $\pm2$ scalar on $\mcm$. For any $(u,\ubar)$ we have the Poincaré inequality
\begin{align}\label{eqn:poincare}
    \|\psi\|^2_{L^2(S(u,\ubar))}{\lesssim}\int_{S(u,\ubar)}\mathbf{e}_{deg}[\psi]\dee\nu.
\end{align}
\end{prop}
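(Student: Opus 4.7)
The plan is to apply the standard spin-weighted Poincaré inequality \eqref{eqn:dhrpoinc} pointwise on each sphere $S(u,\ubar)$, viewed as a round $\sph^2$ (the volume element $\sin\theta\dee\theta\dee\phi$ on $S(u,\ubar)$ is precisely the round one, and the Killing vector field $\Phi$ restricted to $S(u,\ubar)$ coincides with $\partial_\phi$, $\partial_{\phi_+}$ or $\partial_{\phi_-}$), and then to rewrite the resulting angular quantity $\bigl|\frac{1}{\sin\theta}\Phi\psi + is\cot\theta\psi\bigr|^2$ in terms of the quantities appearing in $\mathbf{e}_{deg}[\psi]$. The task is therefore purely algebraic: we need to express $X\psi := \frac{1}{\sin\theta}\Phi\psi + is\cot\theta\psi$ in terms of $\mcu\psi$, $e_3\psi$ and $e_4\psi$ only, with bounded coefficients.

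The computation proceeds in two steps. First, directly from the definition \eqref{eqn:defU} of $\mcu$ and the elementary identity $1-q^2/(r^2+a^2) = a^2\sin^2\theta/(r^2+a^2)$, one obtains
\begin{align*}
X\psi = \mcu\psi - a\sin\theta\,T\psi + \frac{i a^2 s\sin\theta\cos\theta}{r^2+a^2}\psi.
\end{align*}
Second, substituting the expression \eqref{eqn:expreT} of $T$ in the frame $(\mcu, e_3, e_4)$, the zeroth-order $\psi$ term above cancels exactly with the contribution arising from the $\frac{ias\cos\theta}{r^2+a^2}$ piece of $T$, while the $\mcu$ coefficients combine through $q^2 + a^2\sin^2\theta = r^2+a^2$ to yield the clean identity
\begin{align*}
X\psi = \frac{r^2+a^2}{q^2}\mcu\psi - \frac{a\sin\theta\,(r^2+a^2)}{q^2}\bigl(\mu\,e_3\psi + e_4\psi\bigr).
\end{align*}
Since $(r^2+a^2)/q^2$ and $a\sin\theta$ are uniformly bounded on $\mcm$ (using $q^2 \geq r_-^2$), Cauchy-Schwarz then gives the pointwise bound $|X\psi|^2 \lesssim |\mcu\psi|^2 + \mu^2|e_3\psi|^2 + |e_4\psi|^2$.

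Integrating this bound over $S(u,\ubar)$ against $\dee\nu$ and inserting into \eqref{eqn:dhrpoinc} yields
\begin{align*}
2\|\psi\|^2_{L^2(S(u,\ubar))} \leq \int_{S(u,\ubar)} \bigl(|\partial_\theta\psi|^2 + |X\psi|^2\bigr)\,\dee\nu \lesssim \int_{S(u,\ubar)} \mathbf{e}_{deg}[\psi]\,\dee\nu,
\end{align*}
which is the claimed inequality. The only genuinely non-trivial point is the cancellation of the $\psi$ terms when expanding $X\psi$ in the $(\mcu, e_3, e_4)$ frame; once this is observed, the remainder of the argument is pure bookkeeping, and no estimate or integration by parts is required.
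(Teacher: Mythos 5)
Your proof is correct and takes essentially the same route as the paper: both apply the standard spin-weighted Poincaré inequality \eqref{eqn:dhrpoinc} on each sphere and then rewrite $\frac{1}{\sin\theta}\Phi\psi+is\cot\theta\,\psi$ as $\frac{r^2+a^2}{q^2}\left(\mcu\psi-a\sin\theta\, e_4\psi-a\sin\theta\,\mu e_3\psi\right)$, the paper quoting this identity directly from \eqref{eqn:exprephi} while you re-derive it from the definition \eqref{eqn:defU} of $\mcu$ together with \eqref{eqn:expreT}. The cancellation of the zeroth-order terms and the boundedness of $(r^2+a^2)/q^2$ that you verify are exactly what underlie the paper's one-line substitution, so the two arguments coincide in substance.
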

\begin{proof}
By the standard Poincaré inequality \eqref{eqn:dhrpoinc}, we have 
\begin{align}\label{eqn:zzmm}
    \|\psi\|^2_{L^2(S(u,\ubar))}\lesssim\intS\left(|\partial_\theta\psi|^2+\left|\frac{1}{\sin\theta}\partial_\phi\psi+is\cot\theta\psi\right|^2\right)\dee\nu.
\end{align}
Moreover, we have in view of \eqref{eqn:exprephi},
$$\frac{1}{\sin\theta}\partial_\phi+is\cot\theta=\frac{r^2+a^2}{\Sigma}\left(\mcu-a\sin\theta e_4-a\sin\theta\mu e_3\right),$$
and reinjecting in \eqref{eqn:zzmm} concludes the proof.
\end{proof}
We now recall the standard Sobolev embedding for spin-weighted scalars, see for example Lemma 4.27 and Lemma 4.24 of \cite{mablueandersson}.
\begin{prop}
    Let $\psi$ be a spin $\pm 2$ scalar on $\sph^2$. We have 
\begin{align}\label{eqn:sobolevbase}
    \|\psi\|^2_{L^\infty(\sph^2)}\lesssim\int_{\sph^2}\left(|\psi|^2+|\drond'\drond\psi|^2\right)\dee\nu.
\end{align}
\end{prop}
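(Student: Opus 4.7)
The plan is to reduce the inequality to a coefficient estimate via the spin-weighted spherical harmonic decomposition recalled in Section \ref{section:spinweightedspherical}. For $s = \pm 2$, I would expand
$$\psi(\theta,\phi) = \sum_{\ell\geq 2}\sum_{|m|\leq \ell} c_{m,\ell}\, Y_{m,\ell}^{s}(\cos\theta)\, e^{im\phi}.$$
By the orthonormality of the basis and the eigenvalue relation for $\drond'\drond$ recalled just before \eqref{eqn:annulemode}, Parseval's identity yields
$$\|\psi\|_{L^2(\sph^2)}^2 + \|\drond'\drond\psi\|_{L^2(\sph^2)}^2 = \sum_{\ell,m} w_{m,\ell}^2\, |c_{m,\ell}|^2,$$
where $w_{m,\ell}^2 := 1 + (\ell-s)^2(\ell+s+1)^2$. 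Crucially $w_{m,\ell}^2 \geq 1$ for all $\ell$, and $w_{m,\ell}^2 \gtrsim \ell^4$ for $\ell \geq |s|+1$; the constant lower bound handles the finitely many low modes where $\drond'\drond$ is small or vanishes (notably the $\ell=2$ modes when $s=+2$, where $-(\ell-s)(\ell+s+1)=0$).

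For the pointwise control I would invoke the classical bound
$$\|Y_{m,\ell}^{s}(\cos\theta)\, e^{im\phi}\|_{L^\infty(\sph^2)} \lesssim (1+\ell)^{1/2},$$
which follows from the identification of spin-weighted spherical harmonics with appropriately normalized Wigner $d$-functions. Cauchy-Schwarz then yields, for any $(\theta,\phi) \in \sph^2$,
$$|\psi(\theta,\phi)|^2 \leq \bigg(\sum_{\ell,m} w_{m,\ell}^2\, |c_{m,\ell}|^2\bigg)\bigg(\sum_{\ell\geq 2}\sum_{|m|\leq \ell} \frac{\|Y_{m,\ell}^{s}e^{im\phi}\|_{L^\infty(\sph^2)}^2}{w_{m,\ell}^2}\bigg).$$
The first factor is exactly the right-hand side of \eqref{eqn:sobolevbase}. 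The second factor is a convergent numerical series: the sum over $|m|\leq\ell$ contributes a factor $2\ell+1$, and each summand for $\ell \geq |s|+2$ is $\lesssim (2\ell+1)(1+\ell)/\ell^4 \sim \ell^{-2}$, summable in $\ell$, while the finitely many remaining low-$\ell$ terms are harmless.

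The only non-routine ingredient is the pointwise $L^\infty$ estimate on spin-weighted spherical harmonics, which I would simply quote from the reference \cite{mablueandersson} already cited for the statement. If one wishes to avoid this, an alternative route treats $|\psi|^2$ as an honest scalar function on $\sph^2$, expresses $\Delta_{\sph^2}|\psi|^2$ in terms of $|\drond\psi|^2$, $|\drond'\psi|^2$ and zero-order contributions, then applies the scalar Sobolev embedding $H^2(\sph^2) \hookrightarrow L^\infty(\sph^2)$ combined with an elliptic estimate relating $\drond\psi$, $\drond'\psi$ to $\drond'\drond\psi$ and $\psi$ (using $\drond\drond' = \drond'\drond + 2s$ to close the first-order estimate).
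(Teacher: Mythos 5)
Your mode-decomposition argument is correct, but note that the paper does not actually prove this proposition: it quotes it from Lemmas 4.24 and 4.27 of \cite{mablueandersson}, so you have supplied a self-contained proof where the paper offers a citation. The two key points of your argument check out. First, the degeneracy of $\drond'\drond$ is exactly as you say: for $s=+2$ the eigenvalue $-(\ell-s)(\ell+s+1)$ vanishes at $\ell=2$ (while for $s=-2$ it is $-(\ell+2)(\ell-1)\neq 0$ for all $\ell\geq 2$), and the $|\psi|^2$ term on the right-hand side is what rescues these low modes, so the weight $w_{m,\ell}^2=1+(\ell-s)^2(\ell+s+1)^2\geq 1$ with $w_{m,\ell}^2\gtrsim 1+\ell^4$ is the right bookkeeping. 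Second, the pointwise bound $\|Y^s_{m,\ell}\|_{L^\infty(\sph^2)}\lesssim(1+\ell)^{1/2}$ is a standard consequence of the identification with Wigner $d$-functions; in fact you can streamline this step by using the sum rule $\sum_{|m|\leq\ell}|Y^s_{m,\ell}(\cos\theta)|^2=\frac{2\ell+1}{4\pi}$ directly inside the Cauchy--Schwarz step, which replaces your factor $(2\ell+1)(1+\ell)w_{m,\ell}^{-2}$ by $(2\ell+1)w_{m,\ell}^{-2}\lesssim \ell^{-3}$ and avoids invoking the individual $L^\infty$ bound altogether. Parseval and termwise application of $\drond'\drond$ are justified by the completeness statement recalled in Section \ref{section:spinweightedspherical} (and one may assume the right-hand side finite, else there is nothing to prove).

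One caution about your closing alternative: applying the scalar embedding $H^2(\sph^2)\hookrightarrow L^\infty(\sph^2)$ to the function $|\psi|^2$ does not close as stated, because $\Delta_{\sph^2}|\psi|^2$ contains quartic expressions of the type $|\drond\psi|^2$, whose $L^2$ norms (i.e. $\|\drond\psi\|_{L^4}^2$) are not controlled by the quadratic right-hand side of \eqref{eqn:sobolevbase} without an additional Gagliardo--Nirenberg interpolation or an absorption trick using $\|f\|_{L^\infty}\lesssim\|f\|_{W^{1,p}}$, $p>2$. The clean elliptic route estimates $\psi$ itself: one controls the spin-weighted $H^2$ norm of $\psi$ by $\|\drond'\drond\psi\|_{L^2}+\|\psi\|_{L^2}$ (using $\drond\drond'=\drond'\drond+2s$ as you indicate) and then embeds, rather than passing through the scalar $|\psi|^2$. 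Since this is only a side remark in your proposal, it does not affect the validity of your main proof.
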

By replacing the spin-weighted Laplacian $\drond'\drond$ with the Carter operator in equation \eqref{eqn:carter}, one obtains the following reformulation of the standard spherical Sobolev embedding \eqref{eqn:sobolevbase}:
\begin{prop}\label{prop:sobolev}
    Let $\psi$ be a spin $s=\pm 2$ scalar on $\mcm$. For any $(u,\ubar)$ we have
\begin{align}\label{eqn:sobolev}
    \|\psi\|^2_{L^\infty(S(u,\ubar))}{\lesssim}\int_{S(u,\ubar)}\left(|T^{\leq 2}\psi|^2+|\mcq_s\psi|^2\right)\:\dee\nu.
\end{align}
\end{prop}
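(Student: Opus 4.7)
The plan is to reduce the statement to the standard spin-weighted Sobolev embedding \eqref{eqn:sobolevbase} on the round sphere by restricting $\psi$ to the spacetime sphere $S(u,\ubar)$ and then trading the purely angular operator $\drond'\drond$ for the Carter operator $\mcq_s$ at the cost of $T$-derivatives, which is exactly what the definition \eqref{eqn:carter} of $\mcq_s$ permits.

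First I would fix $(u,\ubar)$ and view $\psi|_{S(u,\ubar)}$ as a spin $s$ scalar on $\sph^2$. Since $r_{mod}$ is constant along $S(u,\ubar)$, we have $\partial_{\phi_\pm}=\Phi=\partial_\phi$ along the sphere, so the spherical eth operators act on the restriction in exactly the same way as on $\sph^2$, and the measure $\dee\nu$ appearing on the RHS of \eqref{eqn:sobolev} coincides with the round volume element under the parametrization $(\theta,\phi_+)$ (if $r(u,\ubar)\in[r_0,r_+]$) or $(\theta,\phi_-)$ (if $r(u,\ubar)\in[r_-,r_0]$). Applying \eqref{eqn:sobolevbase} to this restriction gives
\begin{align*}
\|\psi\|^2_{L^\infty(S(u,\ubar))}\lesssim \int_{S(u,\ubar)}\left(|\psi|^2+|\drond'\drond\psi|^2\right)\dee\nu.
\end{align*}

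Next, from the definition \eqref{eqn:carter} of the Carter operator I would rewrite
\begin{align*}
\drond'\drond\psi = \mcq_s\psi - a^2\sin^2\theta\,T^2\psi + 2ias\cos\theta\,T\psi,
\end{align*}
where the spin subscripts on the eth operators are suppressed per the convention introduced after \eqref{eqn:drooo}. Since the coefficients $a^2\sin^2\theta$ and $2ias\cos\theta$ are bounded uniformly in $(\theta,r)$ by a constant depending only on $a,M,s$, the pointwise triangle inequality yields $|\drond'\drond\psi|^2\lesssim |\mcq_s\psi|^2+|T\psi|^2+|T^2\psi|^2$. Inserting this into the display above and noting that $|\psi|^2+|T\psi|^2+|T^2\psi|^2\sim|T^{\leq 2}\psi|^2$ by definition of the $P^{\leq k}$ notation introduced at the beginning of Section \ref{section:preliminaries} gives the claim.

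There is essentially no obstacle: the only point requiring a line of justification is that the pullback to $S(u,\ubar)$ of a spacetime spin-weighted scalar is a spin-weighted scalar on $\sph^2$ in the sense under which \eqref{eqn:sobolevbase} is stated, so that the standard embedding applies as a black box. This is automatic from the observation above that on $S(u,\ubar)$ both $t$ and $\phi_\pm-\phi=\pm r_{mod}$ are constant, so the angular structure of $S(u,\ubar)$ matches that of the round $\sph^2$ exactly and the eth operators on spacetime restrict to the eth operators on $\sph^2$.
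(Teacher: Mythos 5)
Your proof is correct and follows essentially the same route as the paper: apply the standard spin-weighted Sobolev embedding \eqref{eqn:sobolevbase} on $S(u,\ubar)$ and then use the definition \eqref{eqn:carter} of the Carter operator to trade $\drond'\drond$ for $\mcq_s$ plus bounded multiples of $T$ and $T^2$. The extra care you take in justifying that the restriction to $S(u,\ubar)$ is a spin-weighted scalar on the round sphere (via constancy of $r_{mod}$ along the sphere) is a point the paper leaves implicit, but it is consistent with the discussion in Section \ref{section:spinscalars}.
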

\subsection{The Teukolsky equation and the Teukolsky-Starobinsky identities}\label{section:eqs}
\subsubsection{Different expressions of the Teukolsky operators}\label{subsection:teukoperators}
Recall from \eqref{eqn:premteuk} the expression of the Teukolsky operator obtained in a frame regular at $\ch$, originally found by Teukolsky \cite{teukolsky} :
$$\begin{aligned}
\teukhat_s:= & -\left[\frac{\left(r^2+a^2\right)^2}{\Delta}-a^2 \sin ^2 \theta\right] T^2 -\frac{4 M a r}{\Delta} T \Phi -\left[\frac{a^2}{\Delta}-\frac{1}{\sin ^2 \theta}\right] \Phi^2  \\
& +\Delta^{-s} \partial_r\left(\Delta^{s+1} \partial_r \right)+\frac{1}{\sin \theta} \partial_\theta\left(\sin \theta \partial_\theta \right)+2 s\left[\frac{a(r-M)}{\Delta}+\frac{i \cos \theta}{\sin ^2 \theta}\right] \Phi  \\
& +2 s\left[\frac{M\left(r^2-a^2\right)}{\Delta}-r-i a \cos \theta\right] T -\left[\frac{s^2 \cos ^2 \theta}{\sin ^2 \theta}-s\right],
\end{aligned}$$
such that for $s=\pm 2$, the spin $s$ Teukolsky equation writes
$$\teukhat_s\Psihat_s=0.$$
The rescaled Teukolsky operator, obtained in the rescaled frame that is regular on $\mch_+$ is $$\teuk_s:=\Delta^s\teukhat_s(\Delta^{-s}\:\cdot\:),$$
such that for $s=\pm 2$, recalling $\psi_s=\Delta^s\Psihat_s$,
$$\teuk_s\psi_s=0.$$
The expression of $\teuk_s$ in B-L coordinates is 
$$\begin{aligned}
\teuk_s= & -\left[\frac{\left(r^2+a^2\right)^2}{\Delta}-a^2 \sin ^2 \theta\right] T^2 -\frac{4 M a r}{\Delta} T \Phi -\left[\frac{a^2}{\Delta}-\frac{1}{\sin ^2 \theta}\right] \Phi^2  \\
& +\Delta^{-s} \partial_r\left(\Delta^{s+1} \partial_r \right)+\frac{1}{\sin \theta} \partial_\theta\left(\sin \theta \partial_\theta \right)+2 s\left[\frac{a(r-M)}{\Delta}+\frac{i \cos \theta}{\sin ^2 \theta}\right] \Phi  \\
& +2 s\left[\frac{M\left(r^2-a^2\right)}{\Delta}-r-i a \cos \theta\right] T -\left[\frac{s^2 \cos ^2 \theta}{\sin ^2 \theta}+s\right] -4 s(r-M) \partial_r.
\end{aligned}$$
Notice that 
\begin{align}\label{eqn:dereqnn}
    \teukhat_s=\teuk_s+2s+4s(r-M)\partial_r=\teuk_s+2s+4s(r-M)(\mu^{-1}e_4-e_3).
\end{align}
To do the energy estimates, it is convenient to write the Teukolsky operators in terms of $e_3$, $e_4$, and $\mcu$.
\begin{prop}\label{prop:}
We have :
    \begin{align}\label{eqn:teuke3e4}
    \teuk_s=&-4(r^2+a^2)e_3e_4+ \mcu^2+\dfrac{1}{\sin\theta}\partial_\theta(\sin\theta\partial_\theta)-4 ias\cos\theta T+2is\dfrac{a^2\sin\theta\cos\theta}{(r^2+a^2)}\mcu\nonumber\\
    &+2 r (e_4-\mu e_3)+4 s[(r-M)e_3- rT]+\dfrac{2ar }{r^2+a^2}\Phi- s-s^2\dfrac{a^4\sin^2\theta\cos^2\theta}{(r^2+a^2)^2},
\end{align}
    \begin{align}\label{teukhat}
    \teukhat_s=&-4(r^2+a^2)e_3e_4+ \mcu^2+\frac{1}{\sin\theta}\partial_\theta(\sin\theta\partial_\theta)-4 ias\cos\theta T+2is\frac{a^2\sin\theta\cos\theta}{(r^2+a^2)}\mcu\nonumber\\
    &+2 r (e_4-\mu e_3)+4 s[(r-M)\mu^{-1}e_4- rT]+\frac{2ar }{r^2+a^2}\Phi+ s-s^2\frac{a^4\sin^2\theta\cos^2\theta}{(r^2+a^2)^2}.
\end{align}
\end{prop}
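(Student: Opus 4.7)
Both identities reduce to direct algebraic computations starting from the Boyer--Lindquist expressions of $\teuk_s$ and $\teukhat_s$ recalled at the beginning of Section \ref{subsection:teukoperators}. The plan is to systematically convert every term to the $(e_3, e_4, \mcu)$ basis.

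The substitution rules are immediate from the definitions: $\partial_r = \mu^{-1} e_4 - e_3$ and $T + a(r^2+a^2)^{-1}\Phi = \mu e_3 + e_4$. Together with the definition \eqref{eqn:defU} of $\mcu$, solving for $T$ and $\Phi$ produces the identities \eqref{eqn:expreT}--\eqref{eqn:exprephi}. The main computation then splits into three steps. \textbf{Step 1 (principal part):} I would expand $-4(r^2+a^2)\,e_3 e_4$ directly from the definitions of $e_3$ and $e_4$, being attentive to commutators of $\partial_r$ with $r$-dependent scalars. The two cross $\partial_r$ pieces cancel (using $\frac{r^2+a^2}{\Delta}\cdot\frac{\Delta}{r^2+a^2}=1$ and the analogous identity for $a$), leaving the full $\{\partial_r^2, T^2, T\Phi, \Phi^2\}$ principal symbol of the B--L expression modulo first-order remainders that supply the $2r(e_4 - \mu e_3) + \frac{2ar}{r^2+a^2}\Phi$ contributions appearing in \eqref{eqn:teuke3e4}. \textbf{Step 2 (angular/frequency terms):} I would expand $\mcu^2$ using \eqref{eqn:defU}; the resulting angular pieces $\frac{1}{\sin^2\theta}\Phi^2$, $a^2\sin^2\theta\, T^2$, $2aT\Phi$, together with the cross $T$- and $\Phi$-terms and the $-\frac{s^2\cos^2\theta}{\sin^2\theta}$ potential, combine with the $-\frac{2a(r^2+a^2)}{\Delta}T\Phi$ piece from the previous step via the identity $\Delta = r^2-2Mr+a^2$ to reproduce the B--L coefficients exactly. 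The residual corrections coming from the $is\cot\theta\,q^2/(r^2+a^2)$ piece of $\mcu$ are then collected, using $q^2/(r^2+a^2) = 1 - a^2\sin^2\theta/(r^2+a^2)$, into the stated $2is\frac{a^2\sin\theta\cos\theta}{r^2+a^2}\mcu$ and $-s^2\frac{a^4\sin^2\theta\cos^2\theta}{(r^2+a^2)^2}$ terms (and, combined with the $-2ias\cos\theta\,T$ piece of B--L, into the $-4ias\cos\theta\,T$ contribution). \textbf{Step 3 (lower order):} I would expand $\Delta^{-s}\partial_r(\Delta^{s+1}\partial_r) = \Delta\partial_r^2 + 2(s+1)(r-M)\partial_r$, add the $-4s(r-M)\partial_r$ correction specific to $\teuk_s$, substitute $\partial_r = \mu^{-1}e_4 - e_3$, and handle the B--L $s$-dependent $T$- and $\Phi$-coefficients $2s[M(r^2-a^2)/\Delta - r - ia\cos\theta]T$ and $2s[a(r-M)/\Delta + i\cos\theta/\sin^2\theta]\Phi$ by substituting \eqref{eqn:expreT}--\eqref{eqn:exprephi}; all $1/\Delta$-singularities cancel, and the regular remainder combines into the $4s[(r-M)e_3 - rT]$ term and the constant potential $-s$ stated in \eqref{eqn:teuke3e4}.

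Finally, the identity \eqref{teukhat} follows from \eqref{eqn:teuke3e4} and the relation \eqref{eqn:dereqnn} by adding $2s + 4s(r-M)(\mu^{-1}e_4 - e_3)$: this replaces $4s(r-M)e_3$ by $4s(r-M)\mu^{-1}e_4$ and flips the sign of the constant potential, leaving all other terms unchanged. The main obstacle is purely one of bookkeeping: several intermediate expressions are singular at $\Delta = 0$, and the subtractions must be organised carefully to see that all these $1/\Delta$-singularities cancel against the commutators of $\partial_r$ with $r$-dependent coefficients --- as they must, since $(e_3, e_4, \mcu)$ provide a regular operator frame on a neighbourhood of either horizon.
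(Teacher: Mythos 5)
Your proposal is correct and takes essentially the same route as the paper: a direct computation converting the Boyer--Lindquist expression into the $(e_3,e_4,\mcu)$ frame, followed by deducing \eqref{teukhat} from \eqref{eqn:teuke3e4} via \eqref{eqn:dereqnn}, which is exactly the paper's second step. The only cosmetic difference is that the paper organizes your Step 2 by observing that
$\mcu^2+2is\frac{a^2\sin\theta\cos\theta}{(r^2+a^2)}\mcu-s^2\frac{a^4\sin^2\theta\cos^2\theta}{(r^2+a^2)^2}=\widetilde{\mcu}^2$ with $\widetilde{\mcu}:=\frac{1}{\sin\theta}\Phi+a\sin\theta\,T+is\cot\theta$, i.e.\ it recognizes as a perfect square the same correction terms you collect by expanding the $is\cot\theta\,q^2/(r^2+a^2)$ piece of $\mcu$.
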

\begin{proof}
    The computation is done in the easiest way by rewriting
    $$\mcu^2+2is\frac{a^2\cos\theta\sin\theta}{(r^2+a^2)}\mcu-s^2\frac{a^4\sin^2\theta\cos^2\theta}{(r^2+a^2)^2}=\widetilde{\mcu}^2,\quad\widetilde{\mcu}:=\frac{1}{\sin\theta}\Phi+a\sin\theta T+is\cot\theta,$$
    and by using \eqref{eqn:dereqnn} to deduce \eqref{teukhat} from \eqref{eqn:teuke3e4}.
\end{proof}
The Teukolsky operators can also be written \cite[(3.3)]{pricelaw} as
\begin{align}
    &\teuk_s=-\frac{(r^2+a^2)^2-a^2\Delta\sin^2\theta}{\Delta}T^2+\partial_r(\Delta\partial_r)-\frac{4aMr}{\Delta}T\Phi-\frac{a^2}{\Delta}\Phi^2\nonumber\\
    &\quad\quad\quad+\drond_{s-1}\drond'_s-2ias\cos\theta T+4s[(r-M)e_3-rT],\label{eqn:opteuk}\\
    &\teukhat_s=-\frac{(r^2+a^2)^2-a^2\Delta\sin^2\theta}{\Delta}T^2+\partial_r(\Delta\partial_r)-\frac{4aMr}{\Delta}T\Phi-\frac{a^2}{\Delta}\Phi^2\nonumber\\
    &\quad\quad\quad+\drond'_{s+1}\drond_s-2ias\cos\theta T+4s[(r-M)\mu^{-1}e_4-rT].\label{eqn:opteukhat}
\end{align}
Note that these expressions show the crucial fact that $T$, $\Phi$ and the Carter operator commute with the Teukolsky equation :
\begin{align}\label{eqn:commutateurs}
    [\teuk_s,T]=[\teuk_s,\mcq_s]=[\teuk_s,\Phi]=[\teukhat_s,T]=[\teukhat_s,\mcq_s]=[\teukhat_s,\Phi]=0.
\end{align}
To compute the commutator of the Teukolsky operator with $e_3$, we will rewrite $\teuk_s$ in terms of $e_3$, $T$, and the angular operators.
\begin{prop}\label{prop:teukavece3}
    We have 
\begin{align*}
    \mathbf{T}_s=4\Delta e_3^2+&a^2\sin^2\theta T^2-4a\Phi e_3-4(r^2+a^2)Te_3+2aT\Phi \\
    &+\drond_{s-1}\drond'_{s}+4(r-M)(s-1)e_3+(2r(1-2s)-2ais\cos\theta)T.
\end{align*}
\end{prop}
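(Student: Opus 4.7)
The plan is to start from the expression \eqref{eqn:opteuk} of $\teuk_s$ and rewrite the radial second-derivative term $\partial_r(\Delta\partial_r)$ in terms of $e_3$, using the defining formula for $e_3$. From $e_3=\tfrac{1}{2}\bigl(-\partial_r+\tfrac{r^2+a^2}{\Delta}T+\tfrac{a}{\Delta}\Phi\bigr)$, I would first extract
\begin{align*}
    \partial_r=-2e_3+\frac{r^2+a^2}{\Delta}T+\frac{a}{\Delta}\Phi,\qquad \Delta\partial_r=-2\Delta e_3+(r^2+a^2)T+a\Phi.
\end{align*}
Since $T,\Phi$ commute with all the coefficients (which depend only on $r,\theta$) and with $e_3$, while $\partial_r$ acting on a coefficient $f(r)$ produces $f'(r)$, applying $\partial_r$ to the right-hand side of the second line gives
\begin{align*}
    \partial_r(\Delta\partial_r)
    =-2\Delta'e_3-2\Delta\partial_r e_3+2rT+(r^2+a^2)T\partial_r+a\Phi\partial_r,
\end{align*}
with $\Delta'=2(r-M)$.

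The next step is to substitute $\partial_r=-2e_3+\tfrac{r^2+a^2}{\Delta}T+\tfrac{a}{\Delta}\Phi$ into each of the three surviving $\partial_r$'s on the right. The term $-2\Delta\partial_r e_3$ yields $4\Delta e_3^2-2(r^2+a^2)Te_3-2a\Phi e_3$; the term $(r^2+a^2)T\partial_r$ yields $-2(r^2+a^2)Te_3+\tfrac{(r^2+a^2)^2}{\Delta}T^2+\tfrac{a(r^2+a^2)}{\Delta}T\Phi$; and $a\Phi\partial_r$ yields $-2a\Phi e_3+\tfrac{a(r^2+a^2)}{\Delta}T\Phi+\tfrac{a^2}{\Delta}\Phi^2$. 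Collecting these and adding the remaining terms of $\teuk_s$ from \eqref{eqn:opteuk}, one sees that the singular $\tfrac{(r^2+a^2)^2}{\Delta}T^2$ cancels with the corresponding term in \eqref{eqn:opteuk}, that the two $\tfrac{a^2}{\Delta}\Phi^2$ contributions cancel, and that the $T\Phi$ coefficient simplifies via the crucial identity
\begin{align*}
    2a(r^2+a^2)-4aMr=2a\Delta,
\end{align*}
so that the $\Delta$-singular $T\Phi$ piece reduces to the regular $2aT\Phi$.

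Finally, the remaining first-order terms regroup: the $e_3$ coefficient is $-2\Delta'+4s(r-M)=4(r-M)(s-1)$, the $T$ coefficient combines $2r$, $-4sr$, and $-2ias\cos\theta$ into $(2r(1-2s)-2ias\cos\theta)$, while the $\drond_{s-1}\drond'_s$ term carries over untouched. Assembling everything gives exactly the formula claimed. The main (only) potential obstacle is bookkeeping: one must keep track of the non-commutation of $e_3$ with functions of $r$ (which is responsible for the correct $(r-M)e_3$ term), and of the order in which operators are written, but once the substitution rule for $\partial_r$ is applied systematically the identities above follow from elementary algebra.
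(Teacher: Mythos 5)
Your proposal is correct and follows essentially the same route as the paper: starting from \eqref{eqn:opteuk}, substituting $\partial_r=-2e_3+\frac{r^2+a^2}{\Delta}T+\frac{a}{\Delta}\Phi$ into $\partial_r(\Delta\partial_r)$, tracking the coefficient derivatives (yielding the $-4(r-M)e_3$ term), and simplifying via the cancellation of the $\Delta$-singular $T^2$, $\Phi^2$ pieces and the identity $2a(r^2+a^2)-4aMr=2a\Delta$. The only cosmetic difference is that you apply Leibniz on the outer $\partial_r$ before substituting, whereas the paper substitutes both factors at once and expands the product directly.
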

\begin{proof}
    
Using \eqref{eqn:opteuk} and
$$\partial_r=-2e_3+\frac{a}{\Delta}\Phi+\frac{r^2+a^2}{\Delta}T,$$
we get 
\begin{align*}
    \teuk_s&=-\frac{(r^2+a^2)^2-a^2\Delta\sin^2\theta}{\Delta}T^2+\left(-2e_3+\frac{a}{\Delta}\Phi+\frac{r^2+a^2}{\Delta}T\right)(-2\Delta e_3+a\Phi+(r^2+a^2)T)\\
    &\quad\quad\quad-\frac{4aMr}{\Delta}T\Phi-\frac{a^2}{\Delta}\Phi^2+\drond_{s-1}\drond'_s-2ias\cos\theta T+4s[(r-M)e_3-rT]\\
    &=4\Delta e_3^2+a^2\sin^2\theta T^2-4(r-M)e_3-4a\Phi e_3-4(r^2+a^2)Te_3+\frac{a^2}{\Delta}\Phi^2+\frac{2a(r^2+a^2)}{\Delta}T\Phi+2r T\\
    &\quad\quad\quad-\frac{4aMr}{\Delta}T\Phi-\frac{a^2}{\Delta}\Phi^2+\drond_{s-1}\drond'_s-2ias\cos\theta T+4s[(r-M)e_3-rT],
\end{align*}
simplifying the coefficients gives the result.
\end{proof}
\begin{prop}\label{prop:comme3}
    We have 
    $$[\teuk_s,e_3]=4(r-M)e_3^2-4rTe_3+2(s-1)e_3+(1-2s)T.$$
\end{prop}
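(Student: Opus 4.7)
The plan is to use the expression for $\teuk_s$ derived in Proposition \ref{prop:teukavece3}, which is tailor-made for this computation because every term is already expressed as a product of $r$-dependent coefficients with operators that are either $e_3$, $T$, $\Phi$, the spherical angular operators $\drond, \drond'$, or $\cos\theta$.

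The key observation is that because $e_3 = \tfrac{1}{2}(-\partial_r + \tfrac{r^2+a^2}{\Delta}T + \tfrac{a}{\Delta}\Phi)$ has coefficients depending only on $r$, the vector field $e_3$ commutes with $T$, $\Phi$, with any function of $\theta$, and with the spherical operators $\drond_{s-1}, \drond'_s$. Consequently, for any smooth function $f(r)$ and any operator $P$ built from these commuting pieces, one has the Leibniz-type identity
\begin{align*}
    [f(r)\,P,\,e_3] \;=\; [f(r),e_3]\,P \;=\; -e_3(f(r))\,P \;=\; \tfrac{1}{2} f'(r)\, P.
\end{align*}
Only the five terms of $\teuk_s$ with non-constant $r$-coefficients can therefore contribute.

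I then compute each contribution individually. From $4\Delta e_3^2$ one gets $[4\Delta,e_3]e_3^2 = \tfrac{1}{2}\cdot 8(r-M)\,e_3^2 = 4(r-M)e_3^2$. The term $-4(r^2+a^2)T e_3$ gives $[-4(r^2+a^2),e_3]\,Te_3 = -4r\,T e_3$. The term $4(r-M)(s-1)e_3$ contributes $\tfrac{1}{2}\cdot 4(s-1)\,e_3 = 2(s-1)e_3$. The $r$-dependent piece of the $T$-coefficient, namely $2r(1-2s)T$, yields $(1-2s)T$. All remaining terms ($a^2\sin^2\theta\,T^2$, $-4a\Phi e_3$, $2aT\Phi$, $\drond_{s-1}\drond'_s$, and $-2ias\cos\theta\,T$) have coefficients independent of $r$ and thus commute with $e_3$. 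Summing the four contributions gives exactly the claimed
\begin{align*}
    [\teuk_s,\,e_3] \;=\; 4(r-M)e_3^2 \;-\; 4r\,T e_3 \;+\; 2(s-1)e_3 \;+\; (1-2s)T.
\end{align*}

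There is no genuine obstacle here: the only subtlety is sign-bookkeeping in $[f(r),e_3]=\tfrac{1}{2}f'(r)$ and in splitting each product via $[AB,C]=A[B,C]+[A,C]B$. The whole point of recasting $\teuk_s$ in the form of Proposition \ref{prop:teukavece3}, with $e_3$ placed to the right in each term, is precisely to turn this commutator into a mechanical calculation.
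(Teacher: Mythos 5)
Your proof is correct and follows exactly the paper's route: the paper likewise proves Proposition \ref{prop:comme3} by taking the expression of $\teuk_s$ from Proposition \ref{prop:teukavece3} and using $[\drond_{s-1},e_3]=[\drond'_s,e_3]=0$ (together with $[T,e_3]=[\Phi,e_3]=0$), so that only the $r$-dependent coefficients contribute via $[f(r),e_3]=\tfrac{1}{2}f'(r)$. Your write-up simply makes explicit the term-by-term bookkeeping that the paper leaves to the reader.
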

\begin{proof}
    We simply use the above expression for $\teuk_s$ and the fact that $[\drond_{s-1},e_3]=[\drond'_s,e_3]=0$.
\end{proof}
\subsubsection{Teukolsky-Starobinsky identities}\label{section:tsi}
On top of the Teukolsky equations, we also assume that the Teukolsky-Starobinsky identities (TSI) hold. The Teukolsky-Starobinsky identities are a PDE system relating the 4th order angular and radial derivatives of $\psi_{\pm 2}$. Like the Teukolsky equations, they are obtained from the linearisation of a gravitational perturbation of the Einstein vacuum equations around Kerr spacetime. Their differential form was first derived in \cite{tsi1,tsi2} in frequency space, while their covariant form is derived in \cite{tsicovariant}. Recalling from \eqref{eqn:coordEFin}, \eqref{eqn:coordEFout} the coordinate vector fields 
$$\partial_{r_\mathrm{out}}=\partial_r+\frac{r^2+a^2}{\Delta}T+\frac{a}{\Delta}\Phi=-2\eqhat,\quad \partial_{r_\mathrm{in}}=\partial_r-\frac{r^2+a^2}{\Delta}T-\frac{a}{\Delta}\Phi=-2e_3,$$
the TSI for the spin $\pm 2$ write \cite[Lemma 3.21]{pricelaw}, in Kerr spacetime,
\begin{align}
    (\drond'-ia\sin\theta T)^4\psip-12M\overline{T\psip}&=\Delta^2\partial_{r_\mathrm{out}}^4(\Delta^2\psim),\label{eqn:tsi1}\\
    (\drond+ia\sin\theta T)^4\psim+12M\overline{T\psim}&=\partial_{r_\mathrm{in}}^4(\psip).\label{eqn:tsi2}
\end{align}
As mentioned in \cite{pricelaw}, \eqref{eqn:tsi1} and \eqref{eqn:tsi2} are physical space versions of the frequency space TSI's obtained in \cite{tsi1,tsi2}, and it is also possible to obtain \eqref{eqn:tsi1} and \eqref{eqn:tsi2} from the covariant TSI in \cite{tsicovariant}. We will actually only use \eqref{eqn:tsi2} in this work, close to $\mch_+$, and not \eqref{eqn:tsi1}.
\section{Statement of the main theorem}
\label{section:eq+assumptions}
Recall that we denote by $\psim$, $\psip$ the spin $-2$ and $+2$ scalars that are solutions of the spin $\pm2$ Teukolsky equations :
\begin{align}
    \teuk_{+2}\psip=0,\quad\teuk_{-2}\psim=0.
\end{align}
As before, we denote $\psihatp=\Delta^{-2}\psip$ that satisfies $\teukhat_{+2}\psihatp=0$. In this section, we state our main result on the precise asymptotics of $\psihatp$ at $\ch$. To this end, we first introduce the different subregions of the Kerr interior that we will consider.

\subsection{The different regions of the Kerr black hole interior}\label{section:regions}

Fix $r_\mathfrak{b} \in (r_-,r_+)$ close to $r_-$, $\gamma>0$ small, that will both be chosen later in the energy estimates. More precisely, $\rb$ is chosen in Appendix \ref{appendix:bulkII} and $\gamma$ is chosen in Section \ref{subsection:energyII}. We define the following subregions of the Kerr black hole interior, see Figure \ref{fig:regions2} :
\begin{align*}
    &\mathbf{I}:=\{ \rb\leq r\leq r_+\}\cap\{\ubar\geq 1\},\\
    &\mathbf{II}:=\{r_-\leq r\leq \rb\}\cap\{2r^*\leq\ubar^\gamma\}\cap\{w\leq w_{\rb,\gamma}\},\\
    &\mathbf{III}:=\{r_-\leq r\leq \rb\}\cap\{2r^*\geq\ubar^\gamma\}\cap\{w\leq w_{\rb,\gamma}\},\\
    &\quatre:=\{\ubar\geq\ubar_{\rb,\gamma}\}\cap\{w\geq w_{\rb,\gamma}\},
\end{align*}
where $w_{\rb,\gamma}:=2\rb^*-(2\rb^*)^{1/\gamma}-\rb+r_-$ and $\ubar_{\rb,\gamma}:=(2\rb^*)^{1/\gamma}$ are such that $\{w=w_{\rb,\gamma}\}$ and $\{\ubar=\ubar_{\rb,\gamma}\}$ intersect $\Gamma$ and $\{r=\rb\}$, where the hypersurface $\Gamma$ is defined by $$\Gamma:=\{2r^*=\ubar^\gamma\}.$$ 

  \begin{figure}[h!]
        \centering
        \includegraphics[scale=0.47]{regions_causal.pdf}
        \caption{The different subregions of the Kerr black hole interior.}
        \label{fig:regions2}
    \end{figure}
    
Region $\un$ is the redshift region that contains $\mch_+$, region $\trois\cup\quatre$ is the blueshift region very close to $\ch$ where the scalar $\Delta$ decays exponentially\footnote{More precisely, $\Delta$ decays exponentially in $\ubar^\gamma$ in $\trois$, and $\Delta$ decays exponentially both in $u$ and $\ubar$ in $\quatre$.} towards the Cauchy horizon, and region $\deux$ is an intermediate region, where the blueshift effect is already effective.

In region $\un$ we will obtain redshift energy estimates for the Teukolsky equation $\teuk_s\psi_s=0$ for $s=-2$, while in $\deux\cup\trois\cup\quatre$ we will derive effective blueshift estimates for $s=+2$. For the energy estimates, we use the coordinate system $(u,\ubar,\theta,\phi_{+})$ in $\mathbf{I}$, and the coordinate system $(u,\ubar,\theta,\phi_{-})$ in $\deux\cup\trois\cup\quatre$. 

In the next section, we provide the assumptions on $\mch_+$ on which all the analysis is based. The goal of the analysis will be to successively propagate the polynomial bounds on $\mch_+$ to regions $\un$, $\deux$, $\trois$ and $\quatre$. 
\subsection{Main assumptions on the event horizon}\label{section:assumptions}
We first define our initial spacelike hypersurface $\Sigma_{_0}$ as the union of three spacelike hypersurfaces :
$$\Sigma_0:=\Sigma_{\tau_0}\cup\Sigma_{int}\cup\Sigma_{\tau'_0},$$
similarly as in \cite{scalarMZ}. More precisely, we 
define $\Sigma_{\tau_0}$ as the hypersurface defined in \cite{pricelaw}, i.e. a constant $\tau$ hypersurface, where $(\tau,r,\theta,\phi_+)$ is a hyperboloidal coordinate system on the right part of the exterior of Kerr spacetime. Similarly, $\Sigma_{\tau'_0}$ is a constant $\tau'$ hypersurface, where $(\tau',r,\theta,\phi_-)$ is a hyperboloidal coordinate system on the left part of the exterior of Kerr spacetime. We chose $\Sigma_{int}$ as any spacelike hypersurface inside the Kerr interior that joins $\Sigma_{\tau_0}$ and $\Sigma_{\tau'_0}$  such that the union of the three hypersurfaces is spacelike, see Figure \ref{fig:sigma_0union}. The Cauchy problem for the Teukolsky equation with initial data on $\Sigma_0$ is well-posed on the future maximal Cauchy development $\mcd^+(\Sigma_0)$ of $\Sigma_0$. We will prove the precise asymptotics of the Teukosky field on $\mcm\cap\mcd^+(\Sigma_0)$, that is the part of $\mcm$ (thus in the grey shaded area) that is located above $\Sigma_{int}$ in Figure \ref{fig:sigma_0union}. Without loss of generality we can assume that at the intersection of $\Sigma_0$ and $\mch_+$, we have $\ubar\leq 1$, and symmetrically that $u\leq 1$ at $\Sigma_0\cap\mch'_+$.
\begin{figure}[h!]
    \centering
    \includegraphics[scale=0.45]{sigma0_union.pdf}
    \caption{The initial hypersurface $\Sigma_0=\Sigma_{\tau_0}\cup\Sigma_{int}\cup\Sigma_{\tau'_0}$.}
    \label{fig:sigma_0union}
\end{figure}

The conclusion of the main theorem will be applicable to solutions of the Teukolsky equations that arise from compactly supported initial data on $\Sigma_{0}$, but we will actually only rely on the Price's law results of \cite{pricelaw}, that we write down now.

We denote by $\nablabar$ the set of tangential derivatives on the event horizon. More precisely, we use
$$\overline{\nabla}:=\{T,\Phi, \drond, \drond'\},$$
and for $k=(k_1,k_2,k_3,k_4)\in\mathbb{N}^4$, we define $|k|=k_1+k_2+k_3+k_4$ and $\nablabar^k=T^{k_1}\Phi^{k_2}\drond^{k_3}(\drond')^{k_4}$.

Let $N_j^\pm, N_k^\pm, (N_j^+)', (N_k^+)'\geq 1$ be sufficiently large integers that we will choose later, see Remark \ref{rem:apresthm} for their precise values. 

We consider $\psi_{\pm 2}$ such that there is $(Q_{m,2})_{|m|\leq 2}\in\mathbb{C}^5$, and $\delta>0$ such that for $j\leq N_j^-$, $|k|\leq N_k^-$, 
    \begin{align}
        \left|T^j\nablabar^k\errm\right|\lesssim\ubar^{-7-j-\delta}\text{   on   }\mch_+\cap\{\ubar\geq 1\},\label{eqn:hypm}
    \end{align}
    and such that for $j\leq N_j^+$, $|k|\leq N_k^+$, and $l\leq 3$, 
    \begin{align}   \left|T^j\nablabar^ke_3^l\errp\right|\lesssim\ubar^{-7-j-\delta}\text{   on   }\mch_+\cap\{\ubar\geq 1\}\label{eqn:hypp},
    \end{align}
where the \app error' quantities $\mathrm{Err}[\psi_{\pm 2}]$ are defined by
\begin{align*}
    \errm&:=\psim-\ansatzm,\\
    \errp&:=\psip-\ansatzp.
\end{align*}
We also assume the less precise assumption on $\mch'_+$ : for $j\leq (N_j^+)'$, $|k|\leq (N_k^+)'$, and $l\leq 1$,
\begin{align}\label{eqn:hypp'}
    |T^j\nablabar^k\eqhat^{l}\psihatp|\lesssim u^{-7-j}\text{   on   }\mch'_+\cap\{u\geq 1\}.
\end{align}
\begin{rem}
Assumptions \eqref{eqn:hypm}, \eqref{eqn:hypp}, \eqref{eqn:hypp'} with $|k|=l=0$ correspond to the so-called Price's law, and were recently shown to hold true in subextremal Kerr by Ma and Zhang \cite{pricelaw}. More precisely, they have shown that this holds for initially smooth and compactly supported solutions on $\Sigma_0$ of the Teukolsky equations $\teuk_{\pm 2}\psi_{\pm 2}=0$, where :
\begin{itemize}
    \item  The constants $Q_{m,2}$ are defined as $2^7\mathbb{Q}_{m,2}/5$ where the constants $\mathbb{Q}_{m,2}$ are  defined in \cite[Lemma 5.7]{pricelaw}, depend on the values of $\psi_{\pm 2}$ on the initial hypersurface $\Sigma_{0}$, and are generically non-zero.
    \item The functions $A_m(r)$ are defined as $(r^2+a^2)^2\mathfrak{f}_{+2,m}(r)$ where $\mathfrak{f}_{+2,m}(r)$ is precisely defined in \cite[Eq. (5.82c)]{pricelaw}. The explicit computation of $A_m(r)$ gives
    \begin{align*}
    A_m(r)=\frac{1}{3}\Big[3\Delta^2&+(r-M)(4(a^2-M^2)+6\Delta)iam-(2\Delta+6(a^2-M^2)+4(r-M)^2)a^2m^2\nonumber\\
    &-4(r-M)ia^3m^3+2a^4m^4\Big],
\end{align*} 
see Appendix \ref{appendix:am(r)}.
\end{itemize}
    Statements \eqref{eqn:hypm}, \eqref{eqn:hypp}, \eqref{eqn:hypp'} with non-zero $|k|,l$ can be deduced\footnote{The statements about the tangential derivatives on the event horizon can be obtained easily using the fact that $\Phi$, $T$ and the Carter operator commute with the Teukolsky equation and directly applying the main theorem of \cite{pricelaw}. Statement \eqref{eqn:hypp} for the $e_3^{\leq 3}$ derivatives can be deduced as follows. Define $\errp$ as in \eqref{eqn:deferrp} and let $r_0>r_+$. Differentiating \cite[Eq. (5.87)]{pricelaw} by $e_3^n$ for $n=1,2,3$, restricting on $\{r=r_0\}$, and using the bounds \cite[Eq. (4.101), (4.99)]{pricelaw} as well as $e_3\sim \mu(r_0)\partial_\rho+O(1)T$ on $\{r=r_0\}$ gives $|e_3^{n}\errp|\lesssim\ubar^{-7-\delta}$ on $\{r=r_0\}$. Then the TSI \eqref{eqn:ineq3} gives $|e_3^4\errp|\lesssim\ubar^{-7-\delta}$ on $\{r_0\leq r\leq r_+\}$, and all is left to do is integrate this bound from $r=r_0$ to $r=r_+$.} directly from the results in \cite{pricelaw}, for solutions of the Teukolsky equations arising from smooth and compactly supported initial data on $\Sigma_0$.
\end{rem}

\subsection{Precise version of the main theorem}
We state the main result of this paper.
\begin{thm}[Main theorem, precise version]\label{thm:main}
    Let $\psi_{\pm 2}$ be solutions of the Teukolsky equations $\teuk_{\pm 2}\psi_{\pm2}=0$ on the interior of Kerr spacetime with $0<|a|<M$. Assume that the Teukolsky-Starobinsky identity \eqref{eqn:tsi2} holds in $\un$, as well as the assumptions \eqref{eqn:hypm}, \eqref{eqn:hypp}, \eqref{eqn:hypp'} on the event horizon $\mch_+\cup\mch_+'$. Then, denoting $\psihatp=\Delta^{-2}\psip$, in region $\trois\cup\quatre$ close to $\ch$ we have the following asymptotic behaviour :
\begin{align}\label{eqn:holdssss}
    \psihatp(u,\ubar,\theta,\phi_{-})=\frac{\Delta^{-2}(u,\ubar)}{\ubar^7}\sum_{|m|\leq 2}A_m(r_-)Q_{m,2}e^{2imr_{mod}(u,\ubar)}Y_{m,2}^{+2}(\cos\theta)e^{im\phi_{-}}+O(\Delta^{-2}\ubar^{-7-\delta}).
\end{align} 
\end{thm}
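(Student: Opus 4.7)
\medskip

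\noindent\textbf{Proof proposal.} My plan follows the seven-step outline in Section 1.7 of the excerpt, exploiting the sign asymmetry of $s(r-M)$ to work with $\psim$ near $\mch_+$ (where redshift estimates are available) and with $\psip$ near $\ch$ (where blueshift estimates are available), and linking the two via the Teukolsky--Starobinsky identity \eqref{eqn:tsi2}. First, in the redshift region $\un$, I would set $\errm:=\psim-\ansatzm$ and, using the fact that the ansatz is itself an approximate solution in a suitable sense, run a redshift energy estimate for $\teuk_{-2}\errm$ with a multiplier of the form $\alpha e_3+\beta e_4$ with carefully chosen weights, combined with the Poincaré inequality \eqref{eqn:poincare} to absorb the zero-order terms and the Sobolev estimate \eqref{eqn:sobolev} to upgrade to $L^\infty$. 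Since $\teuk_{-2}$ commutes with $T,\Phi,\mcq_{-2}$ (see \eqref{eqn:commutateurs}), the same scheme applied to commuted equations yields the decay $|T^j\nablabar^k\errm|\lesssim\ubar^{-7-\delta}$ throughout $\un$, starting from \eqref{eqn:hypm}. Feeding this into the Teukolsky--Starobinsky identity \eqref{eqn:tsi2}, which expresses $\partial_{r_{\mathrm{in}}}^4\psip$ in terms of $(\drond+ia\sin\theta T)^4\psim$ and $T\psim$, and integrating four times from $\mch_+$ in the $\partial_{r_{\mathrm{in}}}=-2e_3$ direction, I transfer the bound to $\errp=\psip-\ansatzp$ in $\un$, matching the data \eqref{eqn:hypp}.

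The second phase is the propagation in $\deux$. Here the $+2$ spin makes the natural energy current effectively blueshift-positive provided $r$ is close enough to $r_-$; this is the reason for introducing $\rb$. I would use the outgoing double-null-like coordinates \eqref{eqn:coordDNLout}, a multiplier adapted to the spacelike foliation by constant-$w$ hypersurfaces, and Lemma \ref{lem:decay} to convert the estimate into the polynomial decay $\|\errp\|\lesssim\ubar^{-7-\delta}$ on each such slice (and similarly for $T,\Phi,\mcq_{+2}$ and $e_3$-commuted quantities, using Proposition \ref{prop:comme3} to control the extra terms). The boundary on $\Gamma=\{2r^*=\ubar^\gamma\}$ is fine because $\Delta$ remains only polynomially small there. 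A Sobolev embedding then upgrades the estimate to a pointwise bound on the required derivatives of $\errp$ on $\Gamma$.

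For region $\trois$, one runs a similar but coarser blueshift estimate that only gives a non-sharp $L^\infty$ bound on $\psip$, because inside $\trois$ the energy decay one can extract from an estimate is weaker. With this bound in hand, I would turn to the key geometric observation: rewriting the Teukolsky equation $\teukhat_{+2}\psihatp=0$ in the form \eqref{teukhat}, one sees that, because $\ethat=-\mu e_3$ is regular on $\ch$ and $\mu^{-1}e_4$ appears multiplied by $4s(r-M)$, the equation becomes
\begin{equation*}
\ethat\bigl((r^2+a^2)\Delta^2 e_4\psihatp\bigr)=\Delta\cdot F,
\end{equation*}
where $F$ is a combination of $\psihatp$ and first-order derivatives, pointwise bounded by the already-established non-sharp estimate. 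Since $\Delta$ decays exponentially in $\ubar$ inside $\trois$, the right-hand side is negligible. Integrating this $1+1$ transport equation along the integral curves of $\ethat$ (a constant-$u$ line) starting from $\Gamma$, and then integrating once more in the $e_4$ direction from $\Gamma$ using the sharp data coming from step $\deux$, I recover $\psihatp$ and, after reorganizing $e^{2imr_{mod}}$ as the phase generated by the conjugation between $\phi_+$ and $\phi_-$ together with the $\Delta^{-2}$ gauge factor, I obtain \eqref{eqn:holdssss} in $\trois$ with an $O(\Delta^{-2}\ubar^{-7-\delta})$ error.

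The last step, propagation into $\quatre$, is the main obstacle because the constant-$u$ curves originating inside $\quatre$ no longer meet $\Gamma$, so I cannot simply integrate the transport equation from $\Gamma$ as above. The workaround, borrowed from \cite{scalarMZ}, is to use the symmetric statement on the $u$-side: the analogue of the whole chain of estimates with $(\ubar,\psi_s,e_3,e_4)$ replaced by $(u,\psihat_{-s},\eqhat,\ethat)$ gives, starting from \eqref{eqn:hypp'} on $\mch'_+$, decay for $\eqhat^{\leq 1}\psihatp$ on a constant-$u$ slice intersecting $\quatre$. Combining this with the $e_3$-commuted Teukolsky equation of Proposition \ref{prop:comme3}, I estimate $e_3\psihatp$ on $\quatre$, then integrate $e_3\psihatp$ from the $\trois$/$\quatre$ boundary (where the asymptotics are already established) to propagate the precise expansion \eqref{eqn:holdssss} throughout $\quatre$. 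The hardest quantitative point will be checking that the non-sharp $L^\infty$ bound in step four is strong enough that the $\Delta\cdot F$ source in the $1+1$ transport equation really is integrable against the exponential decay of $\Delta$, and that the $e_3$-commuted estimates in $\quatre$ do not lose more than $\ubar^{-\delta}$ of decay with respect to the leading-order term.
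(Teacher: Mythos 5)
Your proposal is essentially the paper's own proof, step for step: redshift energy estimates for $\errm$ in $\un$ with the multiplier $X=f(r)\partial_u+g(r)e_4$, transfer to $\errp$ via the commuted TSI \eqref{eqn:tsi2} integrated four times from $\mch_+$, effective blueshift estimates in $\deux$ (where the positivity of the bulk comes from $s(r-M)<0$ with $s=+2$), a non-sharp $L^\infty$ bound in $\trois$, integration of the $1+1$ equation \eqref{eqn:1+1} first along $\ethat$ then along $e_4$ from $\Gamma$, and finally the $e_3$-commuted estimate in $\quatre$ seeded by the symmetric energy bound on $\{r=\rb\}\cap\{u\geq 1\}$ coming from \eqref{eqn:hypp'}. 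Three small corrections to your write-up: the integral curves of $\ethat=\partial_u$ are constant-$\ubar$ (not constant-$u$) lines; on $\Gamma$ the factor $-\Delta\sim e^{-|\kappa_-|\ubar^\gamma}$ is super-polynomially (not polynomially) small, the $\Gamma$-boundary term in the $\deux$ estimate being handled instead by its favorable sign $\mathcal{T}_\Gamma[\psi]\geq 0$ for $\gamma$ small; and the delicate point you gloss over in $\trois$ is not the integrability of the $\Delta\cdot F$ source but showing that the boundary contribution $\Delta^{-2}(u,\ubar_\Gamma(u))\,\ubar_\Gamma(u)^{-7-\delta}$ from $\Gamma$ is uniformly $O(\Delta^{-2}(u,\ubar)\,\ubar^{-7-\delta})$, which the paper obtains by splitting $\trois$ along an auxiliary hypersurface $\{2r^*=\ubar^{\gamma'}\}$ with $\gamma<\gamma'<1$.
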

\begin{rem}\label{rem:apresthm}
    A few remarks are in order.
    \begin{enumerate}
        \item As discussed in Section \ref{section:teukolskyeq}, $\psihatp$ is the Teukolsky scalar obtained from the linearisation of a gravitational perturbation obtained in the Newman-Penrose tetrad that is regular at $\ch$. Thus Theorem \ref{thm:main} should be interpreted as a linear curvature instability statement for the Kerr Cauchy horizon, as $\Delta^{-2}/\ubar^7$ blows up exponentially in $\ubar$ in $\trois\cup\quatre$ towards $\ch$. Moreover, since the function $r_{mod}$ blows up on $\ch$ as $r_{mod}\sim a\log(r-r_-)$, \eqref{eqn:holdssss} implies large oscillations of $\psihatp$ close to $\ch$, as predicted by Ori \cite{ori}.
        \item Notice that the expression of $A_m(r)$ gives
        \begin{align}\label{eq:am(r-)}
            A_m(r_-)=2am\left(am+2i\sqrt{M^2-a^2}\right)\left(M^2+a^2(m^2-1)\right).
        \end{align}
        Thus for any subextremal Kerr black hole parameters $(a,M)$, for $m=\pm 1$ and $m=\pm 2$, we have $A_m(r_-)\neq 0$. Note that the identity $A_0(r_-)=0$ confirms the heuristics arguments of Ori \cite{ori} according to which the $m=0$ azimuthal mode of $\Psihatp$ decays better than $\Delta^{-2}\ubar^{-7}$. 
        \item We also prove, see Proposition \ref{prop:ansatzpII}, in  intermediate region $\deux$,
\begin{align}\label{eqn:holdssssII}
    \psihatp(u,\ubar,\theta,\phi_{-})=\frac{\Delta^{-2}(u,\ubar)}{\ubar^7}\sum_{|m|\leq 2}A_m(r)Q_{m,2}e^{2imr_{mod}(u,\ubar)}Y_{m,2}^{+2}(\cos\theta)e^{im\phi_{-}}+O(\Delta^{-2}\ubar^{-7-\delta}).
\end{align} 
        \item In view of the asymptotic behaviors \eqref{eqn:holdssss} and  \eqref{eqn:holdssssII}, there exists $C>1$ large enough such that the following uniform lower bound
$$\|\psihatp\|_{L^2(S(u,\ubar))}\gtrsim \frac{\Delta^{-2}(u,\ubar)}{\ubar^7},\text{   in   }\{\ubar\geq C\}\cap \{r_-< r\leq \rb\},$$
        holds for generic initial data, in accordance with the results of \cite{sbierski}.
        \item Inspecting the proof, we find that the minimal values of $N_j^\pm$, $N_k^\pm$ and $(N_j^+)'$, $(N_k^+)'$ for which we prove \eqref{eqn:holdssss} are $N_j^+=5$, $N_k^+=5$, $N_j^-=15$, $N_k^-=14$, and $(N_j^+)'=2$, $(N_k^+)'=3$. We did not try to optimize this loss of derivatives. We mainly lose derivatives when we apply the Sobolev embedding on the spheres (which loses each time two $T$ and angular derivatives) and when we integrate the Teukolsky-Starobinsky identity from $\mch_+$. Assuming that higher order 
        derivatives decay on the event horizon leads to an asymptotic \eqref{eqn:holdssss} that holds for higher order $T$, $\carterp$, and $\Phi$ derivatives. 
        \item By assuming the equivalent statement of assumptions \eqref{eqn:hypp}, \eqref{eqn:hypm} for $\psihat_{\pm 2}$ on $\mch'_+$, we can get an equivalent statement in the symmetric region $\{u\geq 1\}$ at any point in the analysis, by replacing $\ubar$ with $u$, $\psi_s$ with $\psihat_{-s}$, $e_3$ with $\eqhat$ and $e_4$ with $\ethat$. As in \cite{scalarMZ} for the scalar wave equation, this will not be useful until we try to get the asymptotics on the upper part of $\ch$, i.e. in region $\quatre$. There, we will need the boundedness of the energy density $\ener[e_3^{\leq 1}\psip]$ on $\{u\geq 1\}\cap\{r=\rb\}$. This is why we also assume \eqref{eqn:hypp'}, which ensures this energy density bound, and which holds for physical initial data compactly supported near the right part of the Kerr exterior. We note that to prove the final result in $\deux\cup\trois$, that contains the lower part of $\ch$, the analysis does not require any assumptions on $\mathcal{H}'_+$.
        \item We also prove a result on the precise asymptotics of $\psi_{- 2}$ in redshift region $\un$, see Proposition \ref{prop:ansatzI}, and Theorem \ref{thm:psim} for a more general result on the spin $-2$ Teukolsky equation in $\un$. See Proposition \ref{prop:ansatzIp} for the precise asymptotics of $\Psip$ in $\un$.

    \end{enumerate}
\end{rem}
The rest of the paper is devoted to the proof of Theorem \ref{thm:main}. In Section \ref{section:redshift} we prove the precise asymptotics of $\psi_{\pm 2}$ in region $\un$, see Propositions \ref{prop:ansatzI} and \ref{prop:ansatzIp}. In Section \ref{section:blueshiftII}, we prove the precise asymptotics of $\Psip$ in region $\deux$, see Proposition \ref{prop:ansatzpII}. Finally, in Section \ref{section:blueshiftIII}, we prove the precise asymptotic behavior \eqref{eqn:holdssss} of $\Psihatp$ in regions $\trois$ and $\quatre$, see Theorems \ref{thm:psip} and \ref{thm:psipIV}, hence concluding the proof of Theorem \ref{thm:main}.

\section{Precise asymptotics in redshift region $\mathbf{I}$ near $\mch_+$}\label{section:redshift}
We begin the proof of Theorem \ref{thm:main}, with the description of the precise asymptotics in region $\un$. In Section \ref{subsection:energyI}, we show a redshift energy estimate, that we will eventually apply to $\errm$ in Section \ref{subsection:applyI} to propagate the ansatz for $\psim$ from $\mch_+$ to $\un$. Finally, in Section \ref{subsection:usetsi}, using the Teukolsky-Starobinsky identity \eqref{eqn:tsi2} we propagate the ansatz for $\psip$ from $\mch_+$ to $\un$.

\subsection{Energy method for the spin $-2$ Teukolsky equation in $\un$}
\label{subsection:energyI}
We begin this section with the following definition.
\begin{defi}
    Let $V$ and $c$ be real numbers. We define the following spin-weighted operator :
    \begin{align}\label{eqn:teukmodifI}
    \teuk_s^{(c,V)}:=&-4(r^2+a^2)e_3e_4+ \mcu^2+\dfrac{1}{\sin\theta}\partial_\theta(\sin\theta\partial_\theta)-4 ias\cos\theta T+2is\dfrac{a^2\sin\theta\cos\theta}{(r^2+a^2)}\mcu\nonumber\\
    &+2 r (e_4-\mu e_3)+4cs[(r-M)e_3- rT]+\dfrac{2ar }{r^2+a^2}\Phi- s-s^2\dfrac{a^4\sin^2\theta\cos^2\theta}{(r^2+a^2)^2}+V.
\end{align}
\end{defi}
\begin{rem}We remark the following :
\begin{itemize}
    \item We introduce the modified Teukolsky operator $\teuk_s^{(c,V)}$ in order to do energy estimates for a more general class of operators, which will be useful after commuting the Teukolsky operator with $e_3$.
    \item We will use the estimates of this section for a finite number of explicit constants $(c,V)$ thus we still write the bounds that depend on $(c,V)$ with notations $O$, $\lesssim$.
    \item Using \eqref{eqn:teuke3e4} we get $\teuk_s=\teuk_s^{(1,0)}$. More generally, 
    \begin{align}\label{eqn:recupc}
        \teuk_s^{(c,V)}=\teuk_s+4(c-1)s[(r-M)e_3-rT]+V,
    \end{align}
    which proves, using \eqref{eqn:commutateurs},
    $$[\teuk_s^{(c,V)},T]=[\teuk_s^{(c,V)},\carterm]=0.$$
\end{itemize}

\end{rem}
In all this section, we denote $\psi$ a spin $-2$ scalar such that there are constants 
$$V\in\mathbb{R},\quad c>0,\quad\beta>1,$$ 
such that for $0\leq j\leq N_j^-$, $0\leq |k|\leq N_k^-$, and $0\leq 2 k_1+k_2\leq N_k^-$,
\begin{align}
    \bullet\:&T^j\nablabar^k\psi=O(\ubar^{-\beta-j})\text{   on   }\mch_+\cap\{\ubar\geq 1\},\label{eqn:hyppsiborne}\\
    \bullet\:&\teuk_{-2}^{(c,V)}T^j\mcq_{-2}^{k_1}\Phi^{k_2}\psi=O(\ubar^{-\beta-j})\text{   in   }\un.\label{eqn:hyppsiteuk}
\end{align}
{\begin{rem}Note that the motivation for studying the inhomogeneous Teukolsky equation \eqref{eqn:hyppsiteuk} is that $\errm$ satisfies \eqref{eqn:hyppsiteuk} with $\beta=8$, see \eqref{eq:premiereacit} and \eqref{eq:deuxiemeacit}.
\end{rem}
}
The goal of this section is to propagate the upper bound \eqref{eqn:hyppsiborne} for $\psi$ on the event horizon to region $\un$ using a redshift energy estimate, see Proposition \ref{prop:propagI}. Recall the definition \eqref{eqn:defener} of the energy density $\ener[\psi]=|e_3\psi|^2+|e_4\psi|^2+|\mathcal{U}\psi|^2+|\partial_\theta\psi|^2$.
\begin{prop}\label{prop:ener}
    Assume that $\psi$ is a spin $-2$ scalar that satisfies \eqref{eqn:hyppsiborne}, \eqref{eqn:hyppsiteuk} with $c>0$, $\beta>1$. Then for $0\leq j\leq N_j^-$, $0\leq 2 k_1+k_2\leq N_k^--1$, and for $\wbar_1\geq 1$,
    $$\iint_{\left\{\underline{w}=\underline{w}_1\right\} \cap \mathbf{I}} \mathbf{e}[T^j\mcq_{-2}^{k_1}\Phi^{k_2}\psi](-\mu) \dee\nu\:\dee u\lesssim\wbar_1^{-2\beta-2j}.$$
\end{prop}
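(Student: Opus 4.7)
The plan is to first commute the equation through the available symmetries $T$, $\Phi$, $\carterm$ and then run a redshift-type energy estimate on the commuted equation, closing the argument via Lemma \ref{lem:decay}. Since $T$, $\Phi$, $\carterm$ all commute with $\teuk_{-2}^{(c,V)}$ (remark after \eqref{eqn:recupc}), setting $\varphi := T^j\carterm^{k_1}\Phi^{k_2}\psi$ yields $\teuk_{-2}^{(c,V)}\varphi = F$ in $\un$ with $|F|\lesssim\ubar^{-\beta-j}$ by \eqref{eqn:hyppsiteuk}, while \eqref{eqn:hyppsiborne} controls all tangential derivatives of $\varphi$ on $\mch_+\cap\{\ubar\geq 1\}$ by $\ubar^{-\beta-j}$. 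The single extra angular differentiation needed to form the energy density $\ener[\varphi]$ is accommodated by the $-1$ in the assumption $2k_1+k_2\leq N_k^--1$.

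For the energy step, I would multiply $\teuk_{-2}^{(c,V)}\varphi = F$ by $\overline{X\varphi}$ with a redshift multiplier $X = e_4 + C(-\mu)e_3$ for $C>0$ large, take the real part, and integrate over $\{\wbar_1\leq\wbar\leq\wbar_2\}\cap\un$ against the spacetime volume form. Using the double null-like coordinates \eqref{eqn:coordDNLin} (so that $\partial_u=-\mu e_3$ and $\partial_{\ubar}=e_4-\frac{a}{r^2+a^2}\Phi$), the spacetime measure carries a natural $(-\mu)$ factor; integration by parts of the principal piece $-4(r^2+a^2)e_3e_4\varphi$ then produces (i) boundary terms on $\{\wbar=\mathrm{cst}\}\cap\un$ pointwise equivalent to $\ener[\varphi](-\mu)\dee\nu\,\dee u$ (the weight vanishing at $\mch_+$, consistent with the fact that only tangential data are assumed there), (ii) a tangential flux on $\mch_+$ involving $|e_4\varphi|^2+|\mcu\varphi|^2+|\partial_\theta\varphi|^2$ that is controlled by $\int_1^{\ubar_2}\ubar^{-2\beta-2j}\dee\ubar$ via \eqref{eqn:hyppsiborne}, and (iii) a flux on the timelike piece $\{r=\rb\}$ handled by Gronwall in $\wbar$. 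The crucial algebraic fact is that for $s=-2$ the spin contribution $4cs(r-M)e_3\varphi$, paired through $X$, combines with the principal part to yield the classical Teukolsky redshift near $\mch_+$; away from $\mch_+$ the factor $(-\mu)$ is bounded from zero so the principal part alone generates a non-degenerate $\ener[\varphi]$-bulk, and all zero-order couplings in \eqref{eqn:teukmodifI} (the potential $V$, the $\mcu$-cross terms, the spin potential) are absorbed using the Poincar\'e inequality \eqref{eqn:poincare}.

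The source is handled by Cauchy--Schwarz: $\iint|F\cdot\overline{X\varphi}|\,\dee\vol\leq\epsilon\iint\ener[\varphi]\,\dee\vol + C\int_{\wbar_1}^{\wbar_2}\wbar^{-2\beta-2j}\dee\wbar$, the $\epsilon$-piece being absorbed back into the bulk. Writing $E(\wbar):=\iint_{\{\wbar\}\cap\un}\ener[\varphi](-\mu)\,\dee\nu\,\dee u$, the resulting inequality
\begin{equation*}
E(\wbar_2)+\int_{\wbar_1}^{\wbar_2}E(\wbar)\,\dee\wbar\;\leq\;CE(\wbar_1)+C\int_{\wbar_1}^{\wbar_2}\wbar^{-2\beta-2j}\dee\wbar,
\end{equation*}
together with finiteness of $E(1)$ (from \eqref{eqn:hyppsiborne} and the spacelike initial data on $\Sigma_0$), fits exactly the hypotheses of Lemma \ref{lem:decay} with $p=2\beta+2j>1$, producing $E(\wbar_1)\lesssim\wbar_1^{-2\beta-2j}$.

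The main obstacle is the careful choice of $X$ and of the constant $C$: the region $\un$ straddles $\{r=M\}$ since $\rb<M<r_+$, so the sign of $(r-M)$ flips within $\un$. The multiplier must therefore simultaneously produce the Teukolsky redshift near $\mch_+$ (where $r-M>0$ but $-\mu$ is small, so the spin term is the dominant source of positivity) and, in the inner portion $\{\rb\leq r<M\}$ (where $-\mu$ is bounded from zero but the spin coupling has unfavorable sign), combine the non-degenerate principal bulk with Poincar\'e to absorb the potential $V$ and the remaining zero-order contributions without losing uniform coercivity across $\un$.
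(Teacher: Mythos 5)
Your high-level architecture is the paper's: commute with $T$, $\Phi$, $\carterm$, run a multiplier energy estimate, absorb the source by Cauchy--Schwarz, and close with Lemma \ref{lem:decay}; your reading of the sign dichotomy of $s(r-M)$ across $r=M$ also matches Appendix \ref{appendix:bulkI}. However, your specific multiplier creates a genuine gap. Since $\partial_u=-\mu e_3$ in the ingoing double null-like coordinates, your $X=e_4+C(-\mu)e_3$ is, up to bounded angular corrections, $\partial_\ubar+C\partial_u$: its transversal component \emph{degenerates} at $\mch_+$. Pairing the principal term $-4(r^2+a^2)e_3e_4$ with $\overline{X\varphi}$ against the spacetime measure $\sim(-\mu)\dee\nu\,\dee u\,\dee\ubar$, the flux you generate on $\{\wbar=\mathrm{cst}\}$ is of size $|e_4\varphi|^2+\mu^2|e_3\varphi|^2+(-\mu)\left(|\partial_\theta\varphi|^2+|\mcu\varphi|^2\right)$, and the favorable spin bulk is $-4cs(r-M)C\mu^2|e_3\varphi|^2$: every $e_3$-component carries the weight $\mu^2$ rather than $(-\mu)$. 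Your claim (i) is therefore false for this $X$; what your argument closes is a $(-\mu)\,\ener_{deg}$-type estimate (cf. \eqref{eqn:defenerdeg}), which misses the $(-\mu)|e_3\varphi|^2$ portion of the asserted flux by a full factor of $(-\mu)$. The loss is not cosmetic: the downstream application through Lemma \ref{lem:integ} uses $\iint\ener[\varphi](-\mu)\dee\nu\,\dee u\sim\iint\ener[\varphi]\dee\nu\,\dee r$, and a $\mu^2$-weighted control of $e_3\varphi$ cannot be integrated up to $r_+$ because $\int^{r_+}(-\mu)^{-1}\dee r'$ diverges. The paper's choice is $X(\psi)=f(r)\partial_u\psi+g(r)e_4\psi$ with $f=(-\mu)^{-1}(r^2+a^2)^p$, i.e.\ $f\partial_u=(r^2+a^2)^p e_3$, a non-degenerate $e_3$-coefficient; the concern motivating your degeneration (only tangential data on $\mch_+$) is moot, since even with the non-degenerate multiplier the horizon flux involves only tangential derivatives: $\mu f$ stays bounded and $\partial_u\psi=-\mu e_3\psi$ vanishes on $\mch_+$, so \eqref{eqn:hyppsiborne} suffices there.

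Two further points would need repair even after fixing $X$. First, your assertion that away from $\mch_+$ ``the principal part alone generates a non-degenerate $\ener[\varphi]$-bulk'' is unjustified: a multiplier with essentially constant coefficients paired with $e_3e_4$ produces no coercive bulk at all; in the paper the positivity of Lemma \ref{lem:bulkpos} is manufactured by the monotone weight $(r^2+a^2)^p$ with $p=p(a,M,c,V)$ large, whose radial derivatives generate the $p$-proportional terms that absorb $V$, the $\mcu$-cross terms and, on $\{\rb\leq r\leq r_+-\varepsilon_0\}$ where the spin term has the unfavorable sign but $-\mu\gtrsim 1$, the spin contribution itself; nothing in your $X$ produces the $C_1\int f$ bulk term required by Lemma \ref{lem:decay}. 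Second, $\{r=\rb\}$ is spacelike, not timelike (constant-$r$ hypersurfaces are spacelike in the interior), and it is an outflow boundary of $\un$: its flux $\mathcal{T}_r\sim\ener\geq 0$ appears with the good sign and is simply dropped (indeed it is retained to yield \eqref{eqn:enerprecise}); no Gronwall argument in $\wbar$ is needed, and one whose constant grows with $\wbar_2-\wbar_1$ would in any case destroy the polynomial decay rate.
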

\begin{rem}
    The choice of the negative $-2$ spin is the right one in the redshift region $\mathbf{I}$ to be able to obtain a positive bulk term in the energy estimate. Actually, we will see that the sign that matters is the one of $s(r-M)$ where $s$ is the spin, thus we already anticipate that we will only be able to control solutions with positive $+2$ spin in the blueshift region $\deux\cup\trois\cup\quatre$.
\end{rem}
\begin{proof}[Proof of Proposition \ref{prop:ener}]
In what follows, we denote $s=-2$. The computations are done for a general spin $s$, but the bulk term will only be positive for $s< 0$. In view of the assumptions \eqref{eqn:hyppsiborne}, \eqref{eqn:hyppsiteuk}, it suffices to treat the case $j=k_1=k_2=0$. Recall that in $\mathbf{I}$, in coordinates $(u,\ubar,\theta,\phi_{+})$, we have $\partial_u=-\mu e_3$ and $\partial_\ubar=e_4-a/(r^2+a^2)\Phi$. Thus multiplying \eqref{eqn:teukmodifI} by $\mu$ we get the following expressions :
\begin{align}
    \mu\teuk_s^{(c,V)}&=4(r^2+a^2)\partial_ue_4+\mu \mathcal{U}^2+\dfrac{\mu}{\sin\theta}\partial_\theta(\sin\theta\partial_\theta)-4\mu ias\cos\theta T+2is\mu\dfrac{a^2\sin\theta\cos\theta}{(r^2+a^2)}\mathcal{U}\nonumber\\
    &+2 r\mu (e_4+\partial_u)-4cs[(r-M)\partial_u+\mu rT]+\dfrac{2ar\mu }{r^2+a^2}\Phi-\mu s-s^2\mu\frac{a^4\sin^2\theta\cos^2\theta}{(r^2+a^2)^2}+\mu V\label{eqn:muteuk1}\\
    &=4(r^2+a^2)e_4\partial_u+\mu \mathcal{U}^2+\dfrac{\mu}{\sin\theta}\partial_\theta(\sin\theta\partial_\theta)-4\mu ias\cos\theta T+2is\mu\dfrac{a^2\sin\theta\cos\theta}{(r^2+a^2)}\mathcal{U}\nonumber\\
    &\quad+2 r\mu (e_4+\partial_u)-4cs[(r-M)\partial_u+\mu rT]-\dfrac{2ar\mu }{r^2+a^2}\Phi-\mu s-s^2\mu\frac{a^4\sin^2\theta\cos^2\theta}{(r^2+a^2)^2}+\mu V\label{eqn:muteuk2},
\end{align}
where we used \eqref{eqn:coordDNLin} to get 
$$4(r^2+a^2)[\partial_u,e_4]=-\dfrac{4ar\mu }{r^2+a^2}\Phi.$$
As in \cite{scalarMZ} we now multiply \eqref{eqn:hyppsiteuk} by $\mu$ and by the complex conjugate of $$X(\psi):=f(r)\partial_u\psi+g(r)e_4\psi,$$
where we choose 
$$f(r)=(-\mu)^{-1}(r^2+a^2)^p,\quad g(r)=(r^2+a^2)^p$$
with a real number $p=p(a,M,c,V)\gg1$ large enough that will be chosen in Appendix \ref{appendix:bulkI}. We then integrate over $S(u,\ubar)$ against $\dee\nu$, and take the real part, to get
\begin{align}\label{eqn:teukmult}
    \int_{S(u,\ubar)}\Real(f(r)\overline{\partial_u\psi}\mu\teuk_s^{(c,V)}{\psi})+\Real(g(r)\overline{e_4\psi}\mu\teuk_s^{(c,V)}\psi)\dee\nu=\int_{S(u,\ubar)}\mu\Real(\overline{X(\psi)}O(\ubar^{-\beta}))\dee\nu.
\end{align}
By Lemma \ref{lem:integS} in Appendix \ref{appendix:computationsI}, this implies
\begin{align}\label{eqn:conservationener2}
    \partial_\ubar\left(\int_{S(u,\ubar)}\mathbf{F}_\ubar[\psi]\dee\nu\right)+\partial_u\left(\int_{S(u,\ubar)}\mathbf{F}_u[\psi]\dee\nu\right)+\int_{S(u,\ubar)}\mathbf{B}[\psi]\dee\nu=\int_{S(u,\ubar)}\mu\Real(\overline{X(\psi)}O(\ubar^{-\beta}))\dee\nu,
\end{align}
where $\mathbf{F}_\ubar[\psi]$, $\mathbf{F}_u[\psi]$, and $\mathbf{B}[\psi]$ are defined in Lemma \ref{lem:integS}. 
\begin{figure}[h!]
    \centering
    \includegraphics[scale=0.5]{wbarImieux.pdf}
    \caption{The redshift energy method. The grey region is $\un\cap\{\wbar_1\leq\wbar\leq\wbar_2\}$, where we integrate \eqref{eqn:conservationener2}.}
    \label{fig:I}
\end{figure}
We now follow \cite{scalarMZ} and integrate \eqref{eqn:conservationener2} in $\mathbf{I}\cap\{\wbar_1\leq\wbar\leq\wbar_2\}$ (see Figure \ref{fig:I}) with respect to $\dee u\dee\ubar$ to get 
\begin{align}\label{eqn:combi}
& \iint_{\left\{\underline{w}=\underline{w}_2\right\} \cap \mathbf{I}} \mathcal{T}_{\underline{w}}[\psi] \dee\nu \mathrm{d} u+\iint_{\left\{r=r_{\mathfrak{b}}\right\} \cap\left\{\underline{w}_1 \leq \underline{w} \leq \underline{w}_2\right\}} \mathcal{T}_r[\psi] \dee\nu \mathrm{d} \underline{u}+\iiint_{\left\{\underline{w}_1 \leq \underline{w} \leq \underline{w}_2\right\} \cap \mathbf{I}} \mathbf{B}[\psi] \dee\nu \mathrm{d} u \mathrm{d} \underline{u} \nonumber\\
= & \iint_{\left\{\underline{w}=\underline{w}_1\right\} \cap \mathbf{I}}\hspace{-0.2cm}\mathcal{T}_{\underline{w}}[\psi] \dee\nu \mathrm{d} u+\iint_{\mathcal{H}_{+} \cap\left\{\underline{w}_1 \leq \underline{w} \leq \underline{w}_2\right\}} \hspace{-0.3cm}\mathcal{T}_u[\psi] \dee\nu \mathrm{d} \underline{u}+\iiint_{\left\{\underline{w}_1 \leq \underline{w} \leq \underline{w}_2\right\} \cap \mathbf{I}} \hspace{-1cm}\mu\Real(\overline{X(\psi)}O(\ubar^{-\beta})) \dee\nu \mathrm{d} u \mathrm{d} \underline{u},
\end{align}
where $$\mathcal{T}_{\underline{w}}[\psi]=\left(1-\mu/2\right)\mathbf{F}_{\underline{u}}[\psi]-\dfrac{1}{2} \mu \mathbf{F}_u[\psi],\quad\mathcal{T}_r[\psi]=-\dfrac{1}{2} \mu\left(\mathbf{F}_{\underline{u}}[\psi]+\mathbf{F}_u[\psi]\right), \text{  and  }\mathcal{T}_u[\psi]=\mathbf{F}_u[\psi].$$
We now estimate the different quantities using the choice of $f$ and $g$.

\noindent\textbf{Control of the bulk terms.}
First, using Lemma \ref{lem:bulkpos} which is a manifestation of the redshift effect, we get that for $p=p(a,M,c,V)$ chosen large enough and for $s=-2$, we have in $\un$
    \begin{align}\label{eq:bulkposI2}
    \int_{S(u,\ubar)}\mathbf{B}[\psi]\dee\nu\gtrsim (-\mu)\int_{S(u,\ubar)}\ener[\psi]\dee\nu.
\end{align}
To control the other bulk term on the RHS of \eqref{eqn:combi}, we write, for $\varepsilon>0$,
\begin{align}
    \Bigg|\iiint&_{\left\{\underline{w}_1 \leq \underline{w} \leq \underline{w}_2\right\} \cap \mathbf{I}} \mu\Real(\overline{X(\psi)}O(\ubar^{-\beta})) \dee\nu \mathrm{d} u \mathrm{d} \underline{u}\Bigg|\lesssim \nonumber\\
    &\varepsilon\iiint_{\left\{\underline{w}_1 \leq \underline{w} \leq \underline{w}_2\right\} \cap \mathbf{I}} |X(\psi)|^2(-\mu)\dee\nu \mathrm{d} u \mathrm{d} \underline{u}+\varepsilon^{-1}\iiint_{\left\{\underline{w}_1 \leq \underline{w} \leq \underline{w}_2\right\} \cap \mathbf{I}} \ubar^{-2\beta} (-\mu)\dee\nu \mathrm{d} u \mathrm{d} \underline{u}.\label{eqn:abso}
\end{align}
Moreover, changing variables\footnote{Note that on $\wbar=cst$, we have $\mu\dee u=(2-\mu)\dee r$.\label{footnote}} from $u$ to $r$, and the fact that $r$ is bounded, we get
$$\iiint_{\left\{\underline{w}_1 \leq \underline{w} \leq \underline{w}_2\right\} \cap \mathbf{I}} \ubar^{-2\beta} (-\mu)\dee\nu \mathrm{d} u \mathrm{d} \underline{u}\lesssim\iint_{\left\{\underline{w}_1 \leq \underline{w} \leq \underline{w}_2\right\} \cap \mathbf{I}} \ubar^{-2\beta} \mathrm{d} r \mathrm{d} \underline{u}\lesssim\int_{\wbar_1}^{\wbar_2}\wbar^{-2\beta}\dee\wbar.$$
We also have $|X(§\psi)|^2\lesssim |e_3\psi|^2+|e_4\psi|^2\leq\ener[\psi]$. Thus, choosing $\varepsilon>0$ small enough such that the first term on the RHS of \eqref{eqn:abso} is absorbed in the LHS of \eqref{eqn:combi}, we get 
\begin{align}\label{eqn:combi2}
& \iint_{\left\{\underline{w}=\underline{w}_2\right\} \cap \mathbf{I}} \mathcal{T}_{\underline{w}}[\psi] \dee\nu \mathrm{d} u+\iint_{\left\{r=r_{\mathfrak{b}}\right\} \cap\left\{\underline{w}_1 \leq \underline{w} \leq \underline{w}_2\right\}} \mathcal{T}_r[\psi] \dee\nu \mathrm{d} \underline{u}+\iiint_{\left\{\underline{w}_1 \leq \underline{w} \leq \underline{w}_2\right\} \cap \mathbf{I}} \mathbf{e}[\psi] (-\mu)\dee\nu \mathrm{d} u \mathrm{d} \underline{u} \nonumber\\
\lesssim & \iint_{\left\{\underline{w}=\underline{w}_1\right\} \cap \mathbf{I}}\hspace{-0.2cm}\mathcal{T}_{\underline{w}}[\psi] \dee\nu \mathrm{d} u+\iint_{\mathcal{H}_{+} \cap\left\{\underline{w}_1 \leq \underline{w} \leq \underline{w}_2\right\}} \hspace{-0.3cm}\mathcal{T}_u[\psi] \dee\nu \mathrm{d} \underline{u}+\int_{\wbar_1}^{\wbar_2}\wbar^{-2\beta}\dee\wbar.
\end{align}
\noindent\textbf{Control of the boundary terms.} We first deal with the boundary terms on $\wbar=cst$ : 
\begin{align}
     \mathcal{T}_{\underline{w}}[\psi]&=\left(1-\mu/2\right)\mathbf{F}_{\underline{u}}[\psi]-\dfrac{1}{2} \mu \mathbf{F}_u[\psi]\nonumber\\
     &=2(r^2+a^2)f(r)|\partial_u\psi|^2-\dfrac{1}{2}\mu g(r)(|\partial_\theta\psi|^2+|\mcu\psi|^2)+a\sin\theta\mu \mathfrak{R}(\overline{X(\psi)}\mcu\psi)\nonumber\\
     &\quad\quad-\frac{\mu}{2}\Bigg[2(r^2+a^2)f(r)|\partial_u\psi|^2+2(r^2+a^2)g(r)|e_4\psi|^2-\dfrac{1}{2}\mu (f(r)+g(r))(|\partial_\theta\psi|^2+|\mcu\psi|^2)\Bigg]\nonumber\\
     &\sim (-\mu)\ener[\psi],\label{eqn:estimebordwbar}
\end{align}
where we absorbed the term $a\sin\theta\mu \mathfrak{R}(X(\psi)\mcu\psi)$ using 
$$|a\sin\theta \mu f(r) \mathfrak{R}(\overline{\partial_u\psi}\mcu\psi)|\leq a^2f(r)|\partial_u\psi|^2+\frac{1}{4}\mu^2f(r)|\mcu\psi|^2$$
and
$$|a\sin\theta\mu g(r) \mathfrak{R}(\overline{e_4\psi}\mcu\psi)|\leq -\frac{3}{4}\mu a^2g(r)|e_4\psi|^2-\frac{1}{3}\mu g(r)|\mcu\psi|^2.$$
Then, for the term on $\{r=\rb\}$, we have 
\begin{align}\label{eqn:estimbordrb}
    \mathcal{T}_r[\psi]=-\dfrac{1}{2} \mu\left(\mathbf{F}_{\underline{u}}[\psi]+\mathbf{F}_u[\psi]\right)\sim \ener[\psi],\text{   on   }\{r=\rb\}.
\end{align}
To control the term on the event horizon, we use\footnote{Notice that $e_4=-\mu e_3+O(1)T+O(1)\Phi$.}  {$|\mathcal{T}_u[\psi]|\lesssim |e_4\psi|^2+|\overline{\nabla}\psi|^2\lesssim |\overline{\nabla}\psi|^2$ on $\mch_+$. }Thus, using \eqref{eqn:hyppsiborne}  we get on $\mch_+\cap\{\ubar\geq 1\}$,
$$\int_{S(-\infty,\ubar)}\mathcal{T}_u[\psi]\dee\nu\lesssim\int_{S(-\infty,\ubar)}|T\psi|^2+|\Phi\psi|^2\dee\nu\lesssim\ubar^{-2\beta}.$$
This concludes the estimates of the boundary terms.  Together with \eqref{eqn:combi2}, this yields 
\begin{align}
 \iint&_{\left\{\underline{w}=\underline{w}_2\right\} \cap \mathbf{I}} \mathbf{e}[\psi](-\mu) \dee\nu \mathrm{d} u+\iiint_{\left\{\underline{w}_1 \leq \underline{w} \leq \underline{w}_2\right\} \cap \mathbf{I}} \mathbf{e}[\psi](-\mu) \dee\nu \mathrm{d} u \mathrm{d} \underline{u} \label{eqn:pourapres} \\
&+\iint_{\left\{\underline{w}_1 \leq \underline{w} \leq \underline{w}_2\right\} \cap \{r=\rb\}} \mathbf{e}[\psi]\dee\nu\mathrm{d} \underline{u} \lesssim \iint_{\left\{\underline{w}=\underline{w}_1\right\} \cap \mathbf{I}} \mathbf{e}[\psi](-\mu) \dee\nu \mathrm{d} u+\int_{\underline{w}_1}^{\underline{w}_2} \underline{w}^{-2\beta} \mathrm{d} \underline{w}.\nonumber
\end{align}
{Dropping the term on $\{r=\rb\}$ which is non-negative, and using the fact that for any $\wbar_1\geq 1$, the initial energy 
$$\iint_{\left\{\underline{w}=\underline{w}_1\right\} \cap \mathbf{I}} \mathbf{e}[\psi](-\mu) \dee\nu \mathrm{d} u\lesssim 1$$
is finite\footnote{In the notation of Lemma \ref{lem:decay}, this guarantees that $f(1)$ is finite. This is clear by standard existence results for linear wave equations with smooth initial data, and because $\left\{\underline{w}=\underline{w}_1\right\} \cap \mathbf{I}$ is a compact region inside a globally hyperbolic spacetime.}, we conclude the proof by applying Lemma \ref{lem:decay} with $p=2\beta$.}
\end{proof}
The following result will be useful to deduce decay in $L^2(S(u,\ubar))$ norm from energy decay.
\begin{lem}\label{lem:integ}
    Let $\psi$ be a $\pm 2$ spin-weighted scalar on $\mcm$. We define the scalar function $$f(\wbar,r):=\|\psi\|_{L^2(S(u(r,\wbar),\ubar(r,\wbar)))}.$$
    Then for $\wbar\geq 1$ and $r_2\leq r_1$,
    $$f(\wbar,r_2)\lesssim f(\wbar,r_1)+\left(\int_{r_2}^{r_1}\int_{S(\wbar,r')}\ener[\psi]\dee\nu\dee r'\right)^{1/2}.$$
\end{lem}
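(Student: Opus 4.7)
The plan is to use the fundamental theorem of calculus in $r$ along the spacelike hypersurface $\{\wbar = \text{const}\}$, together with the Poincaré inequality \eqref{eqn:poincare} for spin $\pm 2$ scalars to absorb an unwanted zero-order contribution.

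First, since the volume element $\dee\nu = \sin\theta\dee\theta\dee\phi_{+}$ on $S(u,\ubar)$ does not depend on $r$, a direct application of the Cauchy--Schwarz inequality gives the pointwise bound
\begin{align*}
    \left|\partial_r f(\wbar,r)\right| = \left|\partial_r \|\psi\|_{L^2(S(\wbar,r))}\right| \leq \|\partial_r|_{\wbar}\psi\|_{L^2(S(\wbar,r))},
\end{align*}
hence, by the fundamental theorem of calculus in $r$ at fixed $\wbar$,
\begin{align*}
    f(\wbar,r_2) \leq f(\wbar,r_1) + \int_{r_2}^{r_1}\|\partial_r|_{\wbar}\psi\|_{L^2(S(\wbar,r'))}\dee r'.
\end{align*}
Applying Cauchy--Schwarz in $r'$ and using that $r_1 - r_2 = O(1)$ in $\mcm$, we obtain
\begin{align*}
    f(\wbar,r_2) \leq f(\wbar,r_1) + \left(r_1-r_2\right)^{1/2}\left(\int_{r_2}^{r_1}\int_{S(\wbar,r')}|\partial_r|_{\wbar}\psi|^2\dee\nu\dee r'\right)^{1/2}.
\end{align*}

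Next, I would express $\partial_r|_{\wbar}$ in terms of the operators appearing in $\ener[\psi]$. In the ingoing Eddington-Finkelstein coordinates $(\ubar,r,\theta,\phi_+)$, using $\wbar=\ubar-r+r_+$ and the identities \eqref{eqn:coordEFin}, one has
\begin{align*}
    \partial_r|_{\wbar} = \partial_{\ubar_{in}} + \partial_{r_{in}} = T - 2e_3,
\end{align*}
and \eqref{eqn:expreT} gives $T = O(1)e_3 + O(1)e_4 + O(1)\mcu + O(1)$ as an operator on spin-weighted scalars. Consequently the pointwise estimate
\begin{align*}
    |\partial_r|_{\wbar}\psi|^2 \lesssim \ener[\psi] + |\psi|^2
\end{align*}
holds on $\mcm$.

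The main (and only) subtlety is the presence of the $|\psi|^2$ term, which comes from the zero-order part of $T$ when acting on a spin-weighted scalar; this term is not directly controlled by $\ener[\psi]$ pointwise. This is precisely where the Poincaré inequality \eqref{eqn:poincare} for spin $\pm 2$ scalars resolves the issue, since
\begin{align*}
    \|\psi\|^2_{L^2(S(\wbar,r'))} \lesssim \int_{S(\wbar,r')} \ener_{deg}[\psi]\,\dee\nu \leq \int_{S(\wbar,r')} \ener[\psi]\,\dee\nu.
\end{align*}
Integrating the pointwise bound over $S(\wbar,r')$ and combining with the Poincaré estimate yields
\begin{align*}
    \int_{S(\wbar,r')} |\partial_r|_{\wbar}\psi|^2\,\dee\nu \lesssim \int_{S(\wbar,r')}\ener[\psi]\,\dee\nu,
\end{align*}
which, plugged into the chain of inequalities above, gives the claimed estimate.
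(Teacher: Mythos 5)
Your proposal is correct and follows essentially the same route as the paper's proof: the fundamental theorem of calculus in $r$ at fixed $\wbar$ with $\partial_r|_{\wbar}=T-2e_3$, Cauchy--Schwarz on the spheres and in $r'$, and then the decomposition \eqref{eqn:expreT} of $T$ combined with the Poincar\'e inequality \eqref{eqn:poincare} to absorb the zero-order contribution. Your explicit isolation of the $|\psi|^2$ term and its treatment via Poincar\'e is exactly the step the paper invokes implicitly in its final sentence, so there is no gap.
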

\begin{proof}
    In ingoing Eddington-Finkelstein coordinates $(\ubar,r,\theta,\phi_+)$ we have $\partial_{r_\mathrm{in}}=-2e_3$, $\partial_{\ubar_{\mathrm{in}}}=T$. We deduce, changing of variables from $(\wbar,r)$ to $(\ubar,r)$,
    \begin{align}\label{eqn:intapres}
        \frac{\partial}{\partial r}(f(\wbar,r))=\left[\frac{\partial}{\partial r}(\wbar+r-r_+)\partial_{\ubar_\mathrm{in}}+\frac{\partial r}{\partial r}\partial_{r_\mathrm{in}}\right]f(\ubar,r)=(\partial_{\ubar_\mathrm{in}}+\partial_{r_\mathrm{in}})f(\ubar,r).
    \end{align}
    Moreover, by a Cauchy-Schwarz inequality\footnote{Notice that even in $\{r_-\leq r\leq \rb\}$, we can rewrite the $L^2(S(u,\ubar))$ norm as an integration with respect to $\phi_+$, doing the change of variables $\phi_+=\phi_-+2r_{mod}$, as shown in Section \ref{section:spinscalars}.},
    \begin{align*}
        |(\partial_{r_\mathrm{in}}+\partial_{\ubar})(f^2)|&=\left|(\partial_{r_\mathrm{in}}+\partial_{\ubar})\left(\int_0^\pi\int_0^{2\pi}|\psi|^2(\ubar,r,\theta,\phi_+)\sin\theta\dee\theta\dee\phi_+\right)\right|\\
        &=\left|2\mathfrak{R}\left(\int_0^\pi\int_0^{2\pi}(\overline{\psi} (\partial_{r_\mathrm{in}}+\partial_{\ubar})\psi)(\ubar,r,\theta,\phi_+)\sin\theta\dee\theta\dee\phi_+\right)\right|\\
        &\lesssim 2f\|(\partial_{r_\mathrm{in}}+\partial_{\ubar})\psi\|_{L^2(S(\wbar,r))}=2f\|(T-2e_3)\psi\|_{L^2(S(\wbar,r))},
    \end{align*}
    which yields
    \begin{align}\label{eqn:pourint1}
        |(\partial_{r_\mathrm{in}}+\partial_{\ubar})f|=\frac{1}{2f}|(\partial_{r_\mathrm{in}}+\partial_{\ubar})(f^2)|\lesssim\|(T-2e_3)\psi\|_{L^2(S(\wbar,r))}.
    \end{align}
    Integrating \eqref{eqn:intapres}, together with the bound \eqref{eqn:pourint1} gives
    $$f(\wbar,r_2)\lesssim f(\wbar,r_1)+\int_r^{r_+}\left(\int_{S(\wbar,r')}|T\psi|^2+|e_3\psi|^2\dee\nu\right)^{1/2}\dee r',$$
    and we conclude the proof of Lemma \ref{lem:integ} using the decomposition \eqref{eqn:expreT} that gives $T=O(1)e_3+O(1)e_4+O(1)\mcu+O(1)$, the Poincaré inequality \eqref{eqn:poincare}, and a Cauchy-Schwarz inequality.
\end{proof}
\begin{prop}\label{prop:decrT}
    Assume that $\psi$ is a spin $-2$ scalar that satisfies \eqref{eqn:hyppsiborne}, \eqref{eqn:hyppsiteuk} with $c>0$, $\beta>1$. Then we have in $\mathbf{I}$, for $0\leq j\leq N_j^-$ and $0\leq 2 k_1+k_2\leq N_k^--1$,
    $$\|T^j\carterm^{k_1}\Phi^{k_2}\psi\|_{L^2(S(u,\ubar))}\lesssim \ubar^{-\beta-j}.$$
\end{prop}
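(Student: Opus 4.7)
The plan is to combine the slab energy decay from Proposition \ref{prop:ener} with the transverse integration estimate of Lemma \ref{lem:integ}, using the pointwise horizon bound \eqref{eqn:hyppsiborne} as the initial data on $\mch_+$. Since the commuted energy bound is already built into Proposition \ref{prop:ener}, I can work directly with $\Psi := T^j\carterm^{k_1}\Phi^{k_2}\psi$, which is still a spin $-2$ scalar.

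First, I would convert the slab energy estimate
\[
\iint_{\{\wbar=\wbar_1\}\cap\un}\ener[\Psi](-\mu)\,\dee\nu\,\dee u \lesssim \wbar_1^{-2\beta-2j}
\]
into a $\dee r$-integration by using the identity $\mu\,\dee u = (2-\mu)\,\dee r$ along a constant-$\wbar$ slice (footnote \ref{footnote}). Since $(2-\mu)\sim 1$ on $\un$, this produces
\[
\int_{\rb}^{r_+}\int_{S(u(r,\wbar_1),\ubar(r,\wbar_1))}\ener[\Psi]\,\dee\nu\,\dee r \lesssim \wbar_1^{-2\beta-2j},
\]
which is exactly the bulk quantity appearing on the right-hand side of Lemma \ref{lem:integ}.

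Next, I would apply Lemma \ref{lem:integ} with $r_1 = r_+$ and $r_2 = r(u,\ubar)\in[\rb,r_+]$ along the slice $\wbar = \wbar_1 := \ubar - r + r_+$, noting that $\wbar_1\sim\ubar$ on $\un$ since $r_+-r\in[0,r_+-\rb]$ is bounded. This yields
\[
\|\Psi\|_{L^2(S(u,\ubar))} \lesssim \|\Psi\|_{L^2(S(-\infty,\wbar_1))} + \wbar_1^{-\beta-j}.
\]
The horizon term is then controlled by expanding $\carterm^{k_1}$ via \eqref{eqn:carter}: $\Psi$ becomes a bounded linear combination of terms of the form $T^{j}\nablabar^{k'}\psi$ with $|k'|\leq 2k_1+k_2\leq N_k^--1$, each of which is $O(\wbar_1^{-\beta-j})$ on $\mch_+$ by \eqref{eqn:hyppsiborne} (since $\ubar=\wbar_1$ on the horizon), so $\|\Psi\|_{L^2(S(-\infty,\wbar_1))}\lesssim \wbar_1^{-\beta-j}$. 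Combining the two estimates and using $\wbar_1\sim\ubar$ in $\un$ yields the announced $\ubar^{-\beta-j}$ bound.

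No step presents a substantial obstacle; the proof reduces to the careful assembly of already-established ingredients. The only mild technical care lies in the change of variables from $u$ to $r$ on the constant-$\wbar$ slice, and in verifying that the expansion of $\carterm^{k_1}$ stays within the horizon derivative budget prescribed by \eqref{eqn:hyppsiborne}, which follows directly from the constraint $2k_1+k_2\leq N_k^--1$.
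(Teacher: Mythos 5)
Your proposal is correct and follows essentially the same route as the paper's proof: Lemma \ref{lem:integ} applied with $r_1=r_+$, $r_2=r(u,\ubar)$ along the slice $\{\wbar=\wbar_1\}$, the bulk term converted via $\mu\,\dee u=(2-\mu)\,\dee r$ and bounded by the slab energy decay of Proposition \ref{prop:ener}, and the horizon term bounded by \eqref{eqn:hyppsiborne} together with $\wbar_1\sim\ubar$ in $\un$. The only cosmetic differences are that you work directly with the commuted scalar $T^j\carterm^{k_1}\Phi^{k_2}\psi$ rather than reducing to $j=k_1=k_2=0$ by commutation, and you make explicit the expansion of $\carterm^{k_1}$ into $\nablabar$-derivatives needed to invoke \eqref{eqn:hyppsiborne} on $\mch_+$, a step the paper leaves implicit in its reduction.
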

\begin{proof}
As $T$, $\Phi$, and $\carterm$ commute with $\teuk_{-2}^{(c,V)}$, it suffices to prove the case $j=k_1=k_2=0$. Let $(u,\ubar)\in\un$ and denote $\wbar_1=\wbar(u,\ubar)$, $r=r(u,\ubar)$. Using Lemma \ref{lem:integ} with $r_1=r_+$ and $r_2=r$ we get 
$$\|\psi\|_{L^2(S(u,\ubar))}\lesssim\|\psi\|_{L^2(S(-\infty,\wbar_1))}+\left(\int_{r}^{r_+}\int_{S(\wbar,r')}\ener[\psi]\dee\nu\dee r'\right)^{1/2}.$$
Changing variables from $\dee r$ to $\dee u$, taking into account Footnote \ref{footnote}, we get using Proposition \ref{prop:ener} :
$$\left(\int_{r}^{r_+}\int_{S(\wbar,r')}\ener[\psi]\dee\nu\dee r'\right)^{1/2}\lesssim \left(\iint_{\left\{\underline{w}=\underline{w}_1\right\} \cap \mathbf{I}} \mathbf{e}[\psi](-\mu) \dee\nu\:\dee u\right)^{1/2}\lesssim\wbar_1^{-\beta}.$$
Moreover, $\|\psi\|_{L^2(S(-\infty,\wbar_1))}\lesssim\wbar_1^{-\beta}$ by \eqref{eqn:hyppsiborne}, thus using $\wbar_1^{-\beta}\lesssim\ubar^{-\beta}$ we get
$$\|\psi\|_{L^2(S(u,\ubar))}\lesssim\wbar_1^{-\beta}\lesssim\ubar^{-\beta}.$$
\end{proof}
The previous energy and $L^2(S(u,\ubar))$ decay estimates, combined with the Sobolev embedding \eqref{eqn:sobolev} give the following polynomial bound propagation result.
\begin{prop}\label{prop:propagI}
    Assume that $\psi$ is a spin $-2$ scalar that satisfies \eqref{eqn:hyppsiborne}, \eqref{eqn:hyppsiteuk} with $c>0$, $\beta>1$. Then we have in $\un$, for $0\leq j\leq N_j^--2$ and $0\leq 2 k_1+k_2\leq N_k^--3$,
    $$|T^j\mcq_{-2}^{k_1}\Phi^{k_2}\psi|\lesssim\ubar^{-\beta-j}.$$
\end{prop}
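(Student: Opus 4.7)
The plan is to obtain the pointwise bound as a direct consequence of the spherical Sobolev embedding \eqref{eqn:sobolev} applied to the spin $-2$ scalar $T^j\carterm^{k_1}\Phi^{k_2}\psi$, combined with the $L^2(S(u,\ubar))$ decay already established in Proposition \ref{prop:decrT}. The key algebraic fact that allows this is \eqref{eqn:commutateurs} together with \eqref{eqn:recupc}, which shows that $T$, $\Phi$, and $\carterm$ all commute with $\teuk_{-2}^{(c,V)}$, and moreover that these three operators commute with each other. Consequently $T^j\carterm^{k_1}\Phi^{k_2}\psi$ is itself a spin $-2$ scalar satisfying assumptions \eqref{eqn:hyppsiborne}, \eqref{eqn:hyppsiteuk} with decay rate $\beta+j$ (after adjusting the derivative counts).

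Concretely, I would apply \eqref{eqn:sobolev} to $\varphi:=T^j\carterm^{k_1}\Phi^{k_2}\psi$, which yields pointwise in $(u,\ubar)$
\[
\|T^j\carterm^{k_1}\Phi^{k_2}\psi\|_{L^\infty(S(u,\ubar))}^2\lesssim\int_{S(u,\ubar)}\bigl(|T^{\leq j+2}\carterm^{k_1}\Phi^{k_2}\psi|^2+|T^j\carterm^{k_1+1}\Phi^{k_2}\psi|^2\bigr)\dee\nu,
\]
where the commutations of $T$, $\Phi$, $\carterm$ have been used to move the outer $T^{\leq 2}$ and $\carterm$ inside. Under the derivative constraints $j\leq N_j^--2$ and $2k_1+k_2\leq N_k^--3$, the two terms on the right-hand side satisfy the hypotheses of Proposition \ref{prop:decrT}, since $j+2\leq N_j^-$ and $2(k_1+1)+k_2\leq N_k^--1$. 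Applying Proposition \ref{prop:decrT} then yields $L^2(S(u,\ubar))$ bounds of order $\ubar^{-\beta-(j+2)}$ for the first term and $\ubar^{-\beta-j}$ for the second; the second is dominant and taking the square root gives the desired pointwise bound $\ubar^{-\beta-j}$.

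There is no substantial obstacle here; the proof is a clean combination of the previously established $L^2(S)$ decay with the spherical Sobolev embedding. The two-$T$-derivative and one-$\carterm$-derivative loss in going from $L^2(S)$ to $L^\infty(S)$ is exactly what produces the derivative counts $N_j^--2$ and $N_k^--3$ appearing in the statement, and it is this loss that is later traced by the author through the whole argument to determine the minimal values of $N_j^\pm, N_k^\pm$ noted in Remark \ref{rem:apresthm}.
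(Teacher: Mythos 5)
Your proof is correct and is essentially the paper's own argument: the spherical Sobolev embedding \eqref{eqn:sobolev} applied to $T^j\carterm^{k_1}\Phi^{k_2}\psi$, combined with the $L^2(S(u,\ubar))$ decay of Proposition \ref{prop:decrT}, with the commutation relations \eqref{eqn:commutateurs}, \eqref{eqn:recupc} justifying the derivative counts (the paper merely phrases this as a reduction to the case $j=k_1=k_2=0$). One harmless slip: the first term produced by the embedding is $T^{\leq 2}\bigl(T^j\carterm^{k_1}\Phi^{k_2}\psi\bigr)$, which contains $T^j\carterm^{k_1}\Phi^{k_2}\psi$ itself and hence decays like $\ubar^{-\beta-j}$ rather than $\ubar^{-\beta-(j+2)}$ — both terms are of the same order, which still yields the stated bound.
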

\begin{proof}
It suffices to treat the case $j=k=0$. Using the Sobolev embedding \eqref{eqn:sobolev} and Proposition \ref{prop:decrT},  we get
$$|\psi|^2\lesssim \int_{S(u,\ubar)}|T^{\leq 2}\psi|^2\dee\nu+\int_{S(u,\ubar)}|\mcq_{-2}\psi|^2\dee\nu\lesssim\ubar^{-2\beta},$$
which is the stated estimate for $j=k=0$.
\end{proof}

We conclude this section with the following result, that proves decay of $e_3\psi$ under additionnal assumptions on $\mch_+$. We also prove a precise energy bound on $\{r=\rb\}$. 
\begin{thm}\label{thm:psim} Assume that $\psi$ is a spin $-2$ scalar that satisfies, for $0\leq j\leq N_j^-$, $0\leq |k|\leq N_k^-$, and $0\leq 2k_1+k_2\leq N_k^-$,
\begin{align}
    \bullet\:&e_3^{\leq 1}T^j\nablabar^k\psi=O(\ubar^{-\beta-j})\text{   on   }\mch_+\cap\{\ubar\geq 1\},\label{eqn:hyppsiborne'}\\
    \bullet\:&e_3^{\leq 1}\teuk_{-2}T^j\mcq_{-2}^{k_1}\Phi^{k_2}\psi=O(\ubar^{-\beta-j})\text{   in   }\un,\label{eqn:hyppsiteuk'}
\end{align}
 where $\beta>1$. Then for $j\leq N_j^--2$ and $2k_1+k_2\leq N_k^--3$, we have in $\un$,
\begin{align}\label{eqn:provprov}
        |e_3^{\leq 1}T^j\carterm^{k_1}\Phi^{k_2}\psi|\lesssim\ubar^{-\beta-j},
    \end{align}
    and for $1\leq\wbar_1\leq\wbar_2$ we have the energy bound on $\{r=\rb\}$
    \begin{align}\label{eqn:enerprecise}
        \iint_{\left\{\underline{w}_1 \leq \underline{w} \leq \underline{w}_2\right\} \cap \{r=\rb\}} \mathbf{e}[T^j\carterm^{k_1}\Phi^{k_2}e_3^{\leq 1}\psi]\dee\nu\mathrm{d} \underline{u}\lesssim\wbar_1^{-2\beta-2j}+\int_{\underline{w}_1}^{\underline{w}_2} \underline{w}^{-2\beta-2j} \mathrm{d} \underline{w}.
    \end{align}
\end{thm}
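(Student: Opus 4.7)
The new ingredient compared to Proposition \ref{prop:propagI} is that we must now also control $e_3\psi$, and $e_3$ does not commute with $\teuk_{-2}$. The $e_3^0$ part of both \eqref{eqn:provprov} and \eqref{eqn:enerprecise} is a direct consequence of the work already done: \eqref{eqn:hyppsiborne'}--\eqref{eqn:hyppsiteuk'} with $l=0$ imply the hypotheses of Proposition \ref{prop:propagI} (taking $c=1$, $V=0$), and the energy bound on $\{r=\rb\}$ is already contained in \eqref{eqn:pourapres} together with Proposition \ref{prop:ener}. So the whole game is to reduce the $e_3^1$ statement to the same framework, by finding an appropriate modified Teukolsky operator $\teuk_{-2}^{(c,V)}$ that annihilates the bad commutator terms when applied to $e_3\psi$.

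The plan is to compute $\teuk_{-2}^{(c,V)}(e_3\psi)$ using Proposition \ref{prop:comme3} and the identity \eqref{eqn:recupc}. For $s=-2$ the commutator reads
\begin{align*}
[\teuk_{-2},e_3] = 4(r-M)e_3^2 - 4rTe_3 - 6e_3 + 5T,
\end{align*}
while the modification from $\teuk_{-2}$ to $\teuk_{-2}^{(c,V)}$ contributes $-8(c-1)(r-M)e_3^2\psi + 8(c-1)rTe_3\psi + Ve_3\psi$ when applied to $e_3\psi$. The second-order terms $e_3^2\psi$ and $Te_3\psi$, which we cannot control, cancel exactly for the choice $c=3/2$. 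Setting further $V=6$ to kill the $-6e_3\psi$ term yields the clean identity
\begin{align*}
\teuk_{-2}^{(3/2,6)}(e_3\psi) = e_3(\teuk_{-2}\psi) + 5\,T\psi.
\end{align*}

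Since $c=3/2>0$ and $T$, $\Phi$, $\mcq_{-2}$ commute with $\teuk_{-2}^{(3/2,6)}$, one computes
\begin{align*}
\teuk_{-2}^{(3/2,6)}\bigl(T^j\mcq_{-2}^{k_1}\Phi^{k_2} e_3\psi\bigr) = T^j\mcq_{-2}^{k_1}\Phi^{k_2}\bigl(e_3\teuk_{-2}\psi\bigr) + 5\,T^{j+1}\mcq_{-2}^{k_1}\Phi^{k_2}\psi.
\end{align*}
The first term on the right is $O(\ubar^{-\beta-j})$ directly by hypothesis \eqref{eqn:hyppsiteuk'}; the second is $O(\ubar^{-\beta-j-1})=O(\ubar^{-\beta-j})$ in $\un$ by the pointwise bound of Proposition \ref{prop:propagI} applied to $T^{j+1}\mcq_{-2}^{k_1}\Phi^{k_2}\psi$. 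The boundary hypothesis on $\mch_+$ for $e_3\psi$ is precisely the $l=1$ case of \eqref{eqn:hyppsiborne'}. Thus the hypotheses \eqref{eqn:hyppsiborne}--\eqref{eqn:hyppsiteuk} of Proposition \ref{prop:propagI} hold for $e_3\psi$ with $(c,V,\beta)=(3/2,6,\beta)$, and the proposition yields the pointwise estimate \eqref{eqn:provprov} for $e_3\psi$. For the precise energy bound \eqref{eqn:enerprecise}, one reruns the energy estimate of Proposition \ref{prop:ener} for $T^j\mcq_{-2}^{k_1}\Phi^{k_2}e_3^{\leq 1}\psi$ with the operator $\teuk_{-2}^{(3/2,6)}$, and reads the result off \eqref{eqn:pourapres} directly. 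The only real obstacle is the commutator calculation and locating the pair $(c,V)=(3/2,6)$; once this is found, the theorem reduces mechanically to results already in Section~\ref{subsection:energyI}.
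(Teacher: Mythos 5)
Your proposal is correct and takes essentially the same route as the paper's proof: compute the commutator via Proposition \ref{prop:comme3}, identify $(c,V)=(3/2,6)$ so that the uncontrollable $e_3^2\psi$ and $Te_3\psi$ terms cancel, apply Proposition \ref{prop:propagI} to $e_3\psi$, and read the energy bound on $\{r=\rb\}$ off \eqref{eqn:pourapres} combined with Proposition \ref{prop:ener}. The one cosmetic difference is derivative bookkeeping: your pointwise appeal to Proposition \ref{prop:propagI} for $T^{j+1}\mcq_{-2}^{k_1}\Phi^{k_2}\psi$ strictly covers only $j\leq N_j^--3$, whereas the paper reaches $j\leq N_j^--2$ by noting (in a footnote) that the energy estimate only needs the bound after integration on spheres, for which the $L^2(S(u,\ubar))$ estimate of Proposition \ref{prop:decrT} suffices.
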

\begin{rem}
    Assumptions \eqref{eqn:hyppsiborne'} and \eqref{eqn:hyppsiteuk'} are satisfied by $\errm$ with $\beta=7+\delta$, see the proof of Proposition \ref{prop:ansatzI}.
\end{rem}
\begin{proof}[Proof of Theorem \ref{thm:psim}]
    We already proved \eqref{eqn:provprov} in the case without the $e_3$ derivative in Proposition \ref{prop:propagI}. Thus it remains to prove the bound \eqref{eqn:provprov} with the $e_3$ derivative, as well as \eqref{eqn:enerprecise}. The proof is based Proposition \ref{prop:propagI}, and on a commutation of the Teukolsky operator with $e_3$. Using Proposition \ref{prop:comme3}, we find that $e_3\psi$ satisfies
\begin{align*}
    \Big(\teuk_{-2}-4[(r-M)e_3-rT]+6\Big)\Big[e_3\psi\Big]=5T\psi+e_3\teuk_{-2}\psi.
\end{align*}
Commuting with $T^j\carterm^{k_1}\Phi^{k_2}$ and using \eqref{eqn:recupc}, \eqref{eqn:provprov} without the $e_3$ derivative, and \eqref{eqn:hyppsiteuk'} yields\footnote{Note that the bound for the RHS holds for $j\leq N_j-3$ but after integration on the spheres (which is the only bound that we need in the energy estimates) it holds for $j\leq N_j-2$ by Proposition \ref{prop:decrT}.}
\begin{align}\label{eqn:inun}
    \teuk_{-2}^{(3/2,6)}\Big[T^j\carterm^{k_1}\Phi^{k_2}e_3\psi\Big]=O(\ubar^{-\beta-j}),\text{   in   }\un,
\end{align}
Thus using assumption \eqref{eqn:hyppsiborne'} on $\mch_+$ with the $e_3$ derivative, by Proposition \ref{prop:propagI} with the parameters $\beta>1$, $c=3/2>0$ and $V=6$, we get for $j\leq N_j^--2$ and $2k_1+k_2\leq N_k^--3$,
\begin{align*}
    |T^j\carterm^{k_1}\Phi^{k_2}e_3\psi|\lesssim\ubar^{-\beta-j},\text{   in   }\un,
\end{align*}
as stated. To get the energy bound \eqref{eqn:enerprecise}, notice that dropping the non-negative first and second terms on the LHS of \eqref{eqn:pourapres} gives in this context, for $\wbar_2\geq\wbar_1$,
\begin{align*}
 \iint_{\left\{\underline{w}_1 \leq \underline{w} \leq \underline{w}_2\right\} \cap \{r=\rb\}} &\mathbf{e}[T^j\carterm^{k_1}\Phi^{k_2}e_3^{\leq 1}\psi]\dee\nu\mathrm{d} \underline{u}\lesssim\\
 &\iint_{\left\{\underline{w}=\underline{w}_1\right\} \cap \mathbf{I}} \mathbf{e}[T^j\carterm^{k_1}\Phi^{k_2}e_3^{\leq 1}\psi](-\mu) \dee\nu \mathrm{d} u+\int_{\underline{w}_1}^{\underline{w}_2} \underline{w}^{-2\beta-2j} \mathrm{d} \underline{w}.\nonumber
\end{align*}
Using Proposition \ref{prop:ener} for $e_3^{\leq 1}\psi$ yields 
\begin{align*}
 \iint_{\left\{\underline{w}_1 \leq \underline{w} \leq \underline{w}_2\right\} \cap \{r=\rb\}} \mathbf{e}[T^j\carterm^{k_1}\Phi^{k_2}e_3^{\leq 1}\psi]\dee\nu\mathrm{d} \underline{u}\lesssim\wbar_1^{-2\beta-2j}+\int_{\underline{w}_1}^{\underline{w}_2} \underline{w}^{-2\beta-2j} \mathrm{d} \underline{w},
\end{align*}
as stated.
\end{proof}
\begin{rem}
    Actually, further commutations with $e_3$ only improve the redshift effect. More precisely, for $k\geq 0$, $e_3^k\psi$ satisfies
    $$\teuk_{-2}^{(c_k,V_k)}[e_3^k\psi]=e_3^k\teuk_{-2}\psi+T^{\leq 1}e_3^{\leq k-1}\psi,$$
    where $c_k=(k+2)/2$. Thus, assuming decay of $e_3^k\teuk_{-2}\psi$ in $\un$ and of $e_3^k\psi$ on $\mch_+$ allows one to use Proposition \ref{prop:propagI} to successively control all the derivatives $e_3^k\psi$, $k\geq 0$, in $\un$.
\end{rem}

\subsection{Precise asymptotics of $\psi_{-2}$ in region $\mathbf{I}$}\label{subsection:applyI}
We state the main result of this section.
\begin{prop}\label{prop:ansatzI}
    Assume that $\psim$ satisfies \eqref{eqn:hypm}. Then, we have in $\mathbf{I}$,
    \begin{align}\label{eqn:ansatzmI}
        \psim =\ansatzm+\errm,
    \end{align}
where for $j\leq N_j^--2$ and $2k_1+k_2\leq N_k^--3$,
$$|T^j\mcq_{-2}^{k_1}\Phi^{k_2}\errm|\lesssim\ubar^{-7-j-\delta},\:\:\text{ in }\un.$$
If, additionally, we assume that for $j\leq N_j^-$ and $|k|\leq N_k^-$,
$$\left|e_3T^j\nablabar^k\errm\right|\lesssim\ubar^{-7-j-\delta}\text{   on   }\mch_+\cap\{\ubar\geq 1\},$$
then for $j\leq N_j^--2$ and $2k_1+k_2\leq N_k^--3$,
\begin{align}\label{eqn:labeler}
    |e_3T^j\mcq_{-2}^{k_1}\Phi^{k_2}\errm|\lesssim\ubar^{-7-j-\delta},\:\:\text{ in }\un.
\end{align}
\end{prop}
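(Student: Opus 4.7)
The proof reduces to applying the redshift propagation results of Proposition \ref{prop:propagI} (for the first bound) and Theorem \ref{thm:psim} (for the $e_3$-refinement) to the spin $-2$ scalar $\errm$, with parameters $c=1$, $V=0$, and $\beta = 7+\delta$. The on-horizon hypothesis \eqref{eqn:hyppsiborne}, as well as its $e_3$-refinement required by Theorem \ref{thm:psim}, are exactly the assumption \eqref{eqn:hypm} and the additional assumption stated in the second half of the proposition. What remains is to verify the bulk hypothesis on $\teuk_{-2}$ applied to derivatives of $\errm$. Since $\teuk_{-2}\psim = 0$ and since $T$, $\Phi$, and $\carterm$ all commute with $\teuk_{-2}$ by \eqref{eqn:commutateurs}, and moreover commute with $e_3$ (which is routine on Kerr since $T$, $\Phi$ are Killing and the coefficients in $e_3$ depend on $r$ only, while the angular operators in $\carterm$ have coefficients depending on $\theta$ only), the bulk hypotheses reduce to
\begin{align*}
\teuk_{-2}T^j\mcq_{-2}^{k_1}\Phi^{k_2}\errm &= -T^j\mcq_{-2}^{k_1}\Phi^{k_2}\,\teuk_{-2}\ansatzm,\\
e_3\teuk_{-2}T^j\mcq_{-2}^{k_1}\Phi^{k_2}\errm &= -T^j\mcq_{-2}^{k_1}\Phi^{k_2}\,e_3\teuk_{-2}\ansatzm,
\end{align*}
so everything boils down to computing $\teuk_{-2}\ansatzm$ and $e_3\teuk_{-2}\ansatzm$ and showing that each is $O(\ubar^{-7-\delta})$.

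The key calculation is $\teuk_{-2}\ansatzm$. Write $\ansatzm = f(\ubar)\,h(\theta,\phi_+)$ with $f(\ubar)=\ubar^{-7}$ and $h(\theta,\phi_+)=\sum_{|m|\leq 2} Q_{m,2}\,Y_{m,2}^{-2}(\cos\theta)\,e^{im\phi_+}$. Two structural cancellations drive the computation. First, the angular piece $\drond_{s-1}\drond'_s$ appearing in \eqref{eqn:opteuk} annihilates $h$ by \eqref{eqn:annulemode}, since $\ell+s = 2-2 = 0$ for every term in the sum. Second, using $\partial_r = \mu^{-1}\partial_\ubar + (a/\Delta)\partial_{\phi_+}$ at fixed $(t,\theta,\phi)$—which encodes the regularity of the ingoing coordinates $(\ubar,\phi_+)$ across $\mch_+$—a direct computation shows that all the $1/\Delta$-singular terms of \eqref{eqn:opteuk} cancel against one another, leaving
\begin{align*}
    \teuk_{-2}\ansatzm = a^2\sin^2\theta\,f''(\ubar)\,h + (10r+4ia\cos\theta)\,f'(\ubar)\,h + 2a\,f'(\ubar)\,\partial_{\phi_+}h.
\end{align*}
Since $r$ is bounded in $\un$ and $f',f''=O(\ubar^{-8})$, this gives $\teuk_{-2}\ansatzm = O(\ubar^{-8})$. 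Each $T$-derivative gains an extra $\ubar^{-1}$ (as $T\ubar=1$), while $\carterm$ and $\Phi$ preserve $\ubar$-decay, so $T^j\mcq_{-2}^{k_1}\Phi^{k_2}\teuk_{-2}\ansatzm = O(\ubar^{-8-j}) \lesssim \ubar^{-7-j-\delta}$ (assuming WLOG $\delta\leq 1$), verifying \eqref{eqn:hyppsiteuk}. Proposition \ref{prop:propagI} then yields the first bound.

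For the $e_3$-refinement, one uses $e_3(\ubar)=0$, $e_3(\phi_+)=0$, $e_3(\theta)=0$, and $e_3(r)=-\tfrac{1}{2}$; applying $e_3$ to the explicit formula above therefore produces only terms of order $O(\ubar^{-8})$, and commuting $e_3$ through the mutually-commuting operators $T$, $\mcq_{-2}$, $\Phi$ gives $T^j\mcq_{-2}^{k_1}\Phi^{k_2}e_3\teuk_{-2}\ansatzm = O(\ubar^{-8-j}) \lesssim \ubar^{-7-j-\delta}$, which is exactly the bulk hypothesis of Theorem \ref{thm:psim}. Applying Theorem \ref{thm:psim} concludes the proof. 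The main—though not deep—obstacle is the direct algebraic verification that the $1/\Delta$-singular contributions of \eqref{eqn:opteuk} cancel when expressed via the regular-coordinate form of $\partial_r$; combined with $\drond\drond' Y_{m,2}^{-2}=0$, this is the structural reason why $\ansatzm$ is an approximate solution of the Teukolsky equation near $\mch_+$, and hence why the redshift estimates can be closed for the error.
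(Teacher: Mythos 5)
Your proposal is correct and follows essentially the same route as the paper: the paper likewise reduces, via the commutations \eqref{eqn:commutateurs}, to bounding $e_3^{\leq 1}T^j\mcq_{-2}^{k_1}\Phi^{k_2}\teuk_{-2}$ applied to the ansatz, kills the angular term by \eqref{eqn:annulemode} and the $e_3$-terms by $e_3(\ubar)=e_3(\theta)=e_3(\phi_+)=0$ (Proposition \ref{prop:teukavece3} is precisely your $\partial_r$-substitution performed at the operator level), obtains the $O(\ubar^{-8-j})$ bulk bound, and concludes with Proposition \ref{prop:propagI} with $(c,V,\beta)=(1,0,7+\delta)$ and Theorem \ref{thm:psim}. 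Two minor remarks: your coefficient $10r+4ia\cos\theta$ is the correct evaluation of $2r(1-2s)-2ias\cos\theta$ at $s=-2$ (the paper's display \eqref{eqn:nonum} carries a harmless slip there), and your identity for $\partial_r$ at fixed $(t,\theta,\phi)$ omits the $\partial_{r_{\mathrm{in}}}$ term --- legitimate on the $r_{\mathrm{in}}$-independent ansatz, but one must (as your final formula shows you did, via the $2rf'h$ contribution) retain it when the second $\partial_r$ in $\partial_r(\Delta\partial_r\cdot)$ hits $r$-dependent coefficients.
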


\begin{proof}
    Let 
    $$\errm=\Psim-\ansatzm.$$
    Using $\teuk_{-2}\psi_{-2}=0$, as well as $[T,\teuk_{-2}]=[\mcq_{-2},\teuk_{-2}]=[\Phi,\teuk_{-2}]=0$, we get in $\un$
\begin{align}\label{eq:premiereacit}
    e_3^{\leq 1}\teuk_{-\fraks}T^j\mcq_{-2}^{k_1}\Phi^{k_2}\errm=-e_3^{\leq 1}T^j\mcq_{-2}^{k_1}\Phi^{k_2}\teuk_{-2}\left(\ansatzm\right).
\end{align}
Notice that $e_3(\theta)=e_3(\ubar)=e_3(\phi_+)=0$ so that 
\begin{align}\label{eqn:e3ansatz}
    e_3\left(\ansatzm\right)=0.
\end{align}
Thus using the expression of the Teukolsky operator given by Proposition \ref{prop:teukavece3}, we get 
\begin{align}
    \teuk_{-2}&\left(\ansatzm\right)\nonumber\\
    &=\Big(a^2\sin^2\theta T^2+2aT\Phi-(10r+4ia\cos\theta)T\Big)\left[\ansatzm\right],\label{eqn:nonum}
\end{align}
since we also have $\drond\drond' Y_{m,2}^{-2}(\cos\theta)=0$ by \eqref{eqn:annulemode}. The explicit computation of \eqref{eqn:nonum} yields 
\begin{align}\label{eq:deuxiemeacit}
    e_3^{\leq 1}T^j\mcq_{-2}^{k_1}\Phi^{k_2}\teuk_{-2}\left(\ansatzm\right)=O(\ubar^{-8-j}).
\end{align}
Thus $\errm$ satisfies \eqref{eqn:hyppsiborne} and \eqref{eqn:hyppsiteuk} with $\beta=7+\delta>1$, $c=1$, and $V=0$ thanks to the assumption \eqref{eqn:hypm} on $\mch_+$. Using Proposition \ref{prop:propagI}, we get in $\un$
\begin{align*}
    |T^j\mcq_{-2}^{k_1}\Phi^{k_2}\errm|\lesssim\ubar^{-7-j-\delta},
\end{align*}
as stated. The result \eqref{eqn:labeler} with the $e_3$ derivative is a direct application of Theorem \ref{thm:psim}.
\end{proof}

\begin{rem}
Notice that \eqref{eqn:enerprecise} implies in particular the energy bound on $\{r=\rb\}\cap\{\ubar\geq 1\}$ :
    \begin{align}\label{eqn:enerpourapres}
    \iint_{\{r=\rb\}\cap\{\ubar\geq 1\}}\ener[T^j\mcq_{-2}^{k_1}\Phi^{k_2}e_3^{\leq 1}\psim]\dee\nu\dee\ubar\lesssim 1.
\end{align}
We will use the symmetric version of this bound for $\psihatp$, in the region $\{u\geq 1\}$. This is where we will use the initial assumption \eqref{eqn:hypp'} on $\mch'_+$. Recall that it will only be used to deduce the precise asymptotics of $\psihatp$ on the upper part of the Cauchy horizon, i.e. in $\quatre$, see Remark \ref{rem:cellela}. The symmetric argument in the region $\{u\geq 1\}$, together with assumption \eqref{eqn:hypp'} gives the following analog of \eqref{eqn:enerpourapres} : 
\begin{align}\label{eqn:rienvraiment}
    \iint_{\{r=\rb\}\cap\{u\geq 1\}}\ener[T^j\carterp^{k_1}\Phi^{k_2}\eqhat^{\leq 1}\psihatp]\dee\nu\dee u\lesssim 1.
\end{align}
\end{rem}
\subsection{Precise asymptotics of $\psi_{+2}$ in region $\mathbf{I}$}\label{subsection:usetsi}
We have the ansatz for $\psip$ on the event horizon, given by \eqref{eqn:hypp}. We show that this ansatz propagates to region $\un$, using the asymptotics for $\psim$ in $\un$ derived in Section \ref{subsection:applyI}, and the Teukolsky-Starobinsky identity (TSI) \eqref{eqn:tsi2}. The idea is that the TSI can be rewritten as a relation between $\errm$ and $\errp$, with error terms that can be bounded using the fact that $T$ derivatives of $\Psim$ gain powers of $\ubar^{-1}$. This will imply that the $O(\ubar^{-7-\delta})$ bound for $\errm$ in $\un$ also holds for $\errp$.
\begin{prop}\label{prop:ansatzIp}
    Assume that $\Psip$ satisfies \eqref{eqn:hypp}. Then we have in $\un$,
    \begin{align}\label{eqn:ansatzpI}
        \psip=\ansatzp+\errp,
    \end{align}
where for $j\leq \min(N_j^{-}-10,N_j^{+})$ and $2k_1+k_2\leq \min(N_k^{-}-9,N_k^+)$, 
\begin{align}\label{eqn:aprover}
    |e_3^{\leq 3}T^j\mcq_{+2}^{k_1}\Phi^{k_2}\errp|\lesssim\ubar^{-7-j-\delta}.
\end{align}
\end{prop}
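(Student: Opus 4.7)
The plan is to use the Teukolsky--Starobinsky identity \eqref{eqn:tsi2} to derive a fourth-order equation for $\errp := \psip - \ansatzp$ with source of size $O(\ubar^{-7-\delta})$ in $\un$, and then integrate four times in $r$ from $\mch_+$, where boundary data for $e_3^{\leq 3}\errp$ is provided by assumption \eqref{eqn:hypp}. Substituting $\psip = \ansatzp + \errp$ and $\psim = \ansatzm + \errm$ into \eqref{eqn:tsi2} and rearranging yields
\begin{align*}
    \drin^4 \errp = (\drond + ia\sin\theta T)^4(\ansatzm + \errm) + 12M\,\overline{T(\ansatzm+\errm)} - \drin^4 \ansatzp.
\end{align*}

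The crucial algebraic observation is that the function $A_m(r)$ has been engineered precisely so that, on the ansatzes,
\begin{align*}
    \drin^4 \ansatzp = \drond^4 \ansatzm + (\text{terms carrying at least one } T \text{ or } \overline{T} \text{ derivative of } \ansatzm),
\end{align*}
with each $T$ applied to the factor $\ubar^{-7}$ producing a gain of $\ubar^{-1}$. This identity, verified by direct computation on the explicit formula for $A_m(r)$ in Appendix \ref{appendix:am(r)}, is the heart of the proof. Given it, the remaining contributions to the source fall into two categories: (i) angular derivatives of $\errm$, which are reduced to combinations of $T^{\leq 4}\carterm^{\leq 2}\Phi^{\leq 4}\errm$ by repeated use of $\drond'\drond = \carterm - a^2\sin^2\theta T^2 + 2ias\cos\theta T$ and bounded by Proposition \ref{prop:ansatzI}; (ii) $T$-derivatives of $\ansatzm$ or $\errm$, each gaining a factor $\ubar^{-1}$. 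All contributions are therefore $O(\ubar^{-7-\delta})$ in $\un$, so $|\drin^4 \errp| \lesssim \ubar^{-7-\delta}$ pointwise.

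In the ingoing Eddington--Finkelstein chart $(\ubar, r, \theta, \phi_+)$ regular on $\mch_+$, $\drin = -2e_3$ is transverse to $\mch_+ = \{r = r_+\}$, so we integrate the bound four times from $r = r_+$ down to any $r \in [\rb, r_+]$ at fixed $(\ubar,\theta,\phi_+)$, using the boundary values of $\errp, e_3\errp, e_3^2\errp, e_3^3\errp$ on $\mch_+$ provided by \eqref{eqn:hypp}. This yields $|e_3^{\leq 3}\errp| \lesssim \ubar^{-7-\delta}$ in $\un$. For higher-order $T^j\carterp^{k_1}\Phi^{k_2}$ derivatives, we commute the entire argument with $T$, $\Phi$ (which commute with $\drond$, $\drin$, and $\teuk_{\pm 2}$), and with $\carterp$ (reduced to $T, \Phi$ and angular derivatives via \eqref{eqn:carter}), and repeat. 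The loss of $10$ in $j$ and $9$ in $2k_1+k_2$ accumulates from the Sobolev step in Proposition \ref{prop:propagI}, together with the four angular and $T$ derivatives carried by the TSI.

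The main obstacle is the algebraic identity for $\drin^4 \ansatzp$ described in the second paragraph: it depends on the specific form of $A_m(r)$ and is nontrivial to verify by hand. Once it is in place, all subsequent steps are standard radial integration and commutator bookkeeping.
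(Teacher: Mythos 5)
Your proposal follows essentially the same route as the paper: subtract the ansatz from both sides of the TSI \eqref{eqn:tsi2} via the identity $\partial_{r_\mathrm{in}}^4\left(\ansatzp\right)=(\drond)^4\left(\ansatzm\right)$, bound all leftover terms — each carrying at least one extra $T$ derivative and hence a gain of $\ubar^{-1}$ — using Proposition \ref{prop:ansatzI} with Carter/Sobolev bookkeeping, and integrate $\partial_{r_\mathrm{in}}^4=16\,e_3^4$ four times from $\mch_+$ using the boundary data \eqref{eqn:hypp}. The only cosmetic difference is that your ``main obstacle'' is in fact immediate and exact (no $T$-error terms needed): $A_m(r)$ from \eqref{eqn:am(r)} is a monic quartic in $r$, so $\partial_{r_\mathrm{in}}^4 A_m=A_m^{(4)}=24$, matching the factor $24$ produced by four applications of \eqref{eqn:drooo} to $Y_{m,2}^{-2}(\cos\theta)e^{im\phi_+}$.
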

\begin{proof}
Let 
$$\errp:=\Psip-\ansatzp.$$
To highlight the important points of the argument, we begin with the case $k_1=0$. We commute with $T^j\Phi^{k_2}$ and compute the LHS of TSI \eqref{eqn:tsi2} to get 
\begin{align}\label{eqn:substract}
    (\drond)^4T^j\Phi^{k_2}\psim+12MT^{j+1}\Phi^{k_2}\overline{\psim}+\sum_{1\leq p\leq 4,0\leq q\leq 3}O(1)(\drond)^{q}T^{p+j}\Phi^{k_2}\psim=\partial_{r_\mathrm{in}}^4T^j\Phi^{k_2}\psip.
\end{align}
Next, using \eqref{eqn:drooo} four times, and the expression \eqref{eqn:am(r)} of $A_m(r)$ to get $\partial_{r_\mathrm{in}}^4(A_m(r))=A_m^{(4)}(r)=24$, we obtain the identity
\begin{align*}
    (\drond)^4T^j\Phi^{k_2}\left(\ansatzm\right)&=24T^j\Phi^{k_2}\left(\sum_{|m|\leq 2}Q_{m,2}Y_{m,2}^{+2}(\cos\theta)e^{im\phi_{+}}\right)\\
    &=\partial_{r_\mathrm{in}}^4T^j\Phi^{k_2}\left(\ansatzp\right).
\end{align*}
Subtracting this identity from both sides of \eqref{eqn:substract} yields 
\begin{align}\label{eqn:ineq3}
    (\drond)^4T^j\Phi^{k_2}\errm+12MT^{j+1}\Phi^{k_2}\overline{\psim}+\sum_{1\leq p\leq 4,0\leq q\leq 3}O(1)(\drond)^{q}T^{p+j}\Phi^{k_2}\psim=\partial_{r_\mathrm{in}}^4T^j\Phi^{k_2}\errp.
\end{align}
The crucial point is that the second and third terms on the LHS of \eqref{eqn:ineq3} contains at least one extra $T$ derivative\footnote{Also notice that we lose a lot of derivatives to control the third term.} compared to the other terms in \eqref{eqn:ineq3}. In order to estimate this term, we use the Sobolev embedding \eqref{eqn:sobolevbase} and {\eqref{eqn:drooo}, \eqref{eq:vpspin} three times to obtain\footnote{{Note that to obtain the second inequality, denoting $\psi=T^{p+j}\Phi^{k_2}\psim$, we use \eqref{eqn:drooo}, \eqref{eq:vpspin} three times, and the fact that $(Y_{m,2}^s(\cos\theta)e^{im\phi_+})_{\ell\geq 2,|m|\leq \ell}$ is an orthonormal basis of spin $s$ scalars on $S(u,\ubar)$, to get the elliptic estimate
$$\|(\drond'\drond)^{\leq 1}(\drond)^{\leq 3}\psi\|_{L^2(S(u,\ubar))}^2\lesssim \sum_{\ell\geq 2,|m|\leq\ell} |\psi_{m,\ell}|^2\ell^{10}\lesssim\sum_{\ell\geq 2,|m|\leq\ell} |\psi_{m,\ell}|^2((\ell+2)(\ell-1))^6=\|(\drond\drond')^3\psi\|_{L^2(S(u,\ubar))}^2,$$
where $\psi_{m,\ell}$ is the $(m,\ell)$ component of $\psi$ on the basis $(Y_{m,2}^{-2}(\cos\theta)e^{im\phi_+})_{\ell\geq 2,|m|\leq \ell}$.}} in $\un$}
\begin{align*}
    \Bigg|12M&T^{j+1}\Phi^{k_2}\overline{\psim}+\sum_{1\leq p\leq 4,0\leq q\leq 3}O(1)(\drond)^{q}T^{p+j}\Phi^{k_2}\psim\Bigg|\\
    &\lesssim\sum_{1\leq p\leq 4,0\leq q\leq 3}\|(\drond'\drond)^{\leq1}(\drond)^{q}T^{p+j}\Phi^{k_2}\psim\|_{L^2(S(u,\ubar))}\lesssim \sum_{p=1}^4\|(\drond'\drond)^{\leq 3}T^{j+p}\Phi^{k_2}\psim\|_{L^2(S(u,\ubar))}.
\end{align*}

Using the definition \eqref{eqn:carter} of the Carter operator we compute 
$$(\drond'\drond)^{\leq 3}T^{j+p}\Phi^{k_2}\psim=\sum_{0\leq n\leq 3,\: 0\leq m\leq 6 }O(1)\mcq_{-2}^nT^mT^{j+p}\Phi^{k_2}\psim.$$
Thus using Proposition \ref{prop:ansatzI} we get\footnote{Notice that the bound $|(\drond'\drond)^{\leq 3}T^{j+p}\Phi^{k_2}\psim|\lesssim \ubar^{-8-j}$ holds only for $j\leq N_j-12$ by Proposition \ref{prop:propagI}, but using instead Proposition \ref{prop:decrT} we get $\|(\drond'\drond)^{\leq 3}T^{j+p}\Phi^{k_2}\psim\|_{L^2(S(u,\ubar))}\lesssim \ubar^{-8-j}$ for $j\leq N_j-10$.} for $p\geq 1$, $|(\drond'\drond)^{\leq 3}T^{j+p}\Phi^{k_2}\psim|\lesssim \ubar^{-8-j}$, and hence 
\begin{align}\label{eqn:ineq1}
    \left|\sum_{1\leq p\leq 4,0\leq q\leq 3}O(1)(\drond)^{q}T^{p+j}\Phi^{k_2}\psim\right|\lesssim\ubar^{-8-j}.
\end{align}
Next, we use the Sobolev embedding again to get $$|(\drond)^4T^j\Phi^{k_2}\errm|\lesssim \|(\drond'\drond)^{\leq3}T^j\Phi^{k_2}\errm\|_{L^2(S(u,\ubar))} $$ and we can compute again the RHS to get 
$$(\drond'\drond)^{\leq3}T^j\Phi^{k_2}\errm=\sum_{0\leq n\leq 3,\:0\leq m\leq 6}O(1)\mcq_{-2}^nT^mT^j\Phi^{k_2}\errm $$
which gives, using Proposition \ref{prop:ansatzI}, 
\begin{align}\label{eqn:ineq2}
    |(\drond)^4T^j\Phi^{k_2}\errm|\lesssim\ubar^{-7-j-\delta}.
\end{align}
Combining \eqref{eqn:ineq3}, \eqref{eqn:ineq1} and \eqref{eqn:ineq2} yields 
\begin{align}\label{eqn:aint}
    |\partial_{r_\mathrm{in}}^4T^j\Phi^{k_2}\errp|\lesssim\ubar^{-7-j-\delta}\text{   in    }\un.
\end{align}
Now, using ingoing Eddington-Finkelstein coordinates $(\ubar,r,\theta,\phi_{+})$, and using the initial condition \eqref{eqn:hypp} on $\mch_+$, we easily integrate \eqref{eqn:aint} four times from $r_+$ to $r$ on $\ubar=cst$ and use $\drin=-2e_3$ to get $|e_3^{\leq 3}T^j\Phi^{k_2}\errp|\lesssim\ubar^{-7-j-\delta}$ in $\un$, which concludes the proof of \eqref{eqn:aprover} in the case $k_1=0$.

Finally, to treat the case $k_1\neq 0$, first notice that we can compute again
$$T^j \mcq_{+2}^{k_1}\Phi^{k_2}\errp=\sum_{0\leq n\leq k,0\leq m\leq 2k}O(1)(\drond'\drond)^nT^{m+j}\Phi^{k_2}\errp$$
so we only need to show the $O(\ubar^{-7-j-\delta})$ decay of $(\drond'\drond)^nT^{j}\Phi^{k_2}\errp$ for any $j,n$. Differentiating \eqref{eqn:ineq3} by $(\drond'\drond)^n$ gives
\begin{align*}
    (\drond'\drond)^n(\drond)^4T^j\Phi^{k_2}\errm+\sum_{1\leq p\leq 4,0\leq q\leq 3}O(1)(\drond'\drond)^n(\drond)^{q}T^{p+j}\Phi^{k_2}\psim=\partial_{r_\mathrm{in}}^4(\drond'\drond)^nT^j\Phi^{k_2}\errp,
\end{align*}
and applying the exact same techniques as in the case $k_1=0$, controlling any angular derivative $\drond,\drond'\drond$ using the Carter operator, gives 
\begin{align}\label{eqn:bienbien}
    |\partial_{r_\mathrm{in}}^4T^j \mcq_{+2}^{k_1}\Phi^{k_2}\errp|\lesssim\ubar^{-7-j-\delta},
\end{align}
and thus \eqref{eqn:aprover} by integrating from $\mch_+$ on 
$\ubar=cst$ as in the case $k_1=0$.
\end{proof}
\begin{rem}\label{rem:cellela}
Using the fact that the integral is taken on $\{r=\rb\}$, the control \eqref{eqn:ansatzpI} of $T^j\carterp^{k_1}\Phi^{k_2}\psip$ in $\un$ as well as the relation $\eqhat=e_3+O(1)T+O(1)\Phi$ on $\{r=\rb\}$, we can rewrite the energy bound \eqref{eqn:rienvraiment} as
\begin{align}\label{eqn:vraienerpourapres}
    \iint_{\{r=\rb\}\cap\{u\geq 1\}}\ener[T^j\carterp^ke_3^{\leq 1}\psip]\dee\nu\dee u\lesssim 1.
\end{align}
\end{rem}
We continue this section by proving an energy bound for $\errp$ on $\{r=\rb\}\cap\{\ubar\geq 1\}$. This will be useful information on the initial data when doing the energy estimates for the spin $+2$ Teukolsky equation in region $\deux$. The following result is a corollary of Proposition \ref{prop:ansatzIp}.
 \begin{cor}\label{cor:hypener}
     Assume that $\Psip$ satisfies \eqref{eqn:hypp}. Then we have, for $j\leq \min(N_j^--12,N_j^+-2)$ and $2k_1+k_2\leq \min(N_k^--11,N_k^+-2)$, 
     $$\int_{{S(u,\ubar)}}\ener[T^j\mcq_{+2}^{k_1}\Phi^{k_2}e_3^{\leq 1}\errp]\dee\nu=O(\ubar^{-14-2\delta-2j})\text{   on   }\{r=\rb\}\cap\{\ubar\geq 1\}.$$ 
 \end{cor}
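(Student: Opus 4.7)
The plan is to reduce the energy integral to pointwise quantities already controlled by Proposition \ref{prop:ansatzIp}, exploiting that on $\{r=\rb\}$ the null frame is uniformly non-degenerate. Set $\phi := T^j\mcq_{+2}^{k_1}\Phi^{k_2}e_3^{\leq 1}\errp$, so that $\ener[\phi]=|e_3\phi|^2+|e_4\phi|^2+|\mcu\phi|^2+|\partial_\theta\phi|^2$. Since $-\mu$ is bounded away from $0$ and $+\infty$ on $\{r=\rb\}$, the identity $e_4=-\mu e_3+T+\frac{a}{r^2+a^2}\Phi$ (which follows from \eqref{eqn:coordEFin}) yields $|e_4\phi|^2\lesssim|e_3\phi|^2+|T\phi|^2+|\Phi\phi|^2$ pointwise on this hypersurface.

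For the angular part, standard spin-weighted integration by parts on the sphere (Cauchy--Schwarz applied to $\int\overline{\phi}\,\drond'\drond\phi\,\dee\nu$), combined with the identity \eqref{eqn:exprephi} rewritten as $\frac{1}{\sin\theta}\Phi+is\cot\theta=\frac{r^2+a^2}{q^2}(\mcu-a\sin\theta\,e_4-a\sin\theta\,\mu e_3)$, gives
$$\int_{S(u,\ubar)}\Big(|\partial_\theta\phi|^2+|\mcu\phi|^2\Big)\dee\nu\lesssim\int_{S(u,\ubar)}\Big(|\phi|^2+|\drond'\drond\phi|^2+|e_3\phi|^2+|e_4\phi|^2\Big)\dee\nu.$$
Using \eqref{eqn:carter} to write $\drond'\drond\phi=\mcq_{+2}\phi-a^2\sin^2\theta\,T^2\phi+2ias\cos\theta\,T\phi$ yields $|\drond'\drond\phi|^2\lesssim|\mcq_{+2}\phi|^2+|T^{\leq 2}\phi|^2$. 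Combining everything,
$$\int_{S(u,\ubar)}\ener[\phi]\dee\nu\lesssim\int_{S(u,\ubar)}\Big(|\phi|^2+|T^{\leq 2}\phi|^2+|\Phi\phi|^2+|\mcq_{+2}\phi|^2+|e_3\phi|^2\Big)\dee\nu.$$

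Each integrand on the right-hand side is of the form $|T^{j'}\mcq_{+2}^{k_1'}\Phi^{k_2'}e_3^{\leq 2}\errp|^2$ with $j'\leq j+2$ and $2k_1'+k_2'\leq 2k_1+k_2+2$. Since Proposition \ref{prop:ansatzIp} provides pointwise bounds for $e_3^{\leq 3}$ derivatives, applying it yields the estimate $\lesssim\ubar^{-14-2\delta-2j}$ (using the worst exponent $7+\delta+j$) provided $j+2\leq\min(N_j^--10,N_j^+)$ and $2k_1+k_2+2\leq\min(N_k^--9,N_k^+)$, i.e. $j\leq\min(N_j^--12,N_j^+-2)$ and $2k_1+k_2\leq\min(N_k^--11,N_k^+-2)$, matching the stated ranges. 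Integration over the bounded-area sphere $S(u,\ubar)$ then concludes. No genuine obstacle is anticipated: this is a bookkeeping corollary of Proposition \ref{prop:ansatzIp}, the only non-trivial ingredients being the non-degeneracy of the null frame on $\{r=\rb\}$ and the Carter-operator reformulation of angular derivatives; the main care required is the loss of two $T$-derivatives and one $\mcq_{+2}$-derivative incurred in passing from pointwise control to the energy density.
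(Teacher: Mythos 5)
Your proof is correct and follows essentially the same route as the paper's: the frame identity $e_4=-\mu e_3+T+\frac{a}{r^2+a^2}\Phi$ on $\{r=\rb\}$, spherical integration by parts reducing the angular part of the energy to $\|\phi\|_{L^2(S(u,\ubar))}$ and $\|\drond'\drond\phi\|_{L^2(S(u,\ubar))}$, the Carter-operator substitution $\drond'\drond=\mcq_{+2}-a^2\sin^2\theta\,T^2+2ias\cos\theta\,T$, and then Proposition \ref{prop:ansatzIp}, with the same derivative count yielding the stated ranges. The only cosmetic difference is that you route $|\mcu\phi|$ through \eqref{eqn:exprephi} (picking up harmless $e_3$, $e_4$ remainders already controlled), whereas the paper first bounds $|\mcu\phi|^2\lesssim\frac{1}{\sin^2\theta}|\Phi\phi+is\cos\theta\,\phi|^2+|\phi|^2$ and then applies the same integration-by-parts identity.
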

 \begin{proof}
Using \eqref{eqn:aprover} we get 
\begin{align}\label{eqn:bienbienbien}
    |e_3^{\leq 2}T^j \mcq_{+2}^{k_1}\Phi^{k_2}\errp|\lesssim\ubar^{-7-j-\delta}, \text{   on   }\{r=\rb\}.
\end{align}
Next, we write 
$$e_4=-\mu e_3+T+\frac{a}{r^2+a^2}\Phi.$$ 
Using Proposition \ref{prop:ansatzIp}, we can bound
\begin{align*}
    \|\Phi T^j \mcq_{+2}^{k_1}\Phi^{k_2}e_3^{\leq 1}\errp\|_{L^2(S(u,\ubar))}\lesssim\ubar^{-7-j-\delta}
\end{align*}
on $\{r=\rb\}$, as well as $|TT^j \mcq_{+2}^{k_1}\Phi^{k_2}e_3^{\leq 1}\errp|\lesssim\ubar^{-8-j}$. Thus we get 
$$\intS|e_4T^j \mcq_{+2}^{k_1}\Phi^{k_2}e_3^{\leq 1}\errp|\dee\nu\lesssim\ubar^{-14-2\delta-2j}, \text{   on   }\{r=\rb\}.$$
Finally, denoting 
$$\psi:=T^j \mcq_{+2}^{k_1}\Phi^{k_2}e_3^{\leq 1}\errp,$$
to bound $|\partial_\theta\psi|^2+|\mathcal{U}\psi|^2$ we first write
$$|\partial_\theta\psi|^2+|\mathcal{U}\psi|^2\lesssim |\partial_\theta\psi|^2+\frac{1}{\sin^2\theta}|\Phi\psi+is\cos\theta\psi|^2+|\psi|^2.$$
Then we use an integration by parts formula (see for example \cite[Eq. (32)]{dhrteuk}) to get
\begin{align}
    \int_{{S(u,\ubar)}}|\partial_\theta\psi|^2+\frac{1}{\sin^2\theta}|\Phi\psi+2i\cos\theta\psi|^2\dee\nu&=\int_{S(u,\ubar)}(\drond'\drond+2)\psi\cdot\overline{\psi}\dee\nu\nonumber\\
    &\lesssim \|\psi\|^2_{L^2({S(u,\ubar)})}+\|(\drond'\drond)\psi\|^2_{L^2({S(u,\ubar)})}.\label{eqn:ipppppp}
\end{align}
This gives, reinjecting the definition of the Carter operator \eqref{eqn:carter}, and using Proposition \ref{prop:ansatzIp} ,
\begin{align*}
    \int_{{S(u,\ubar)}}|\partial_\theta T^j \mcq_{+2}^{k_1}\Phi^{k_2}e_3^{\leq 1}\errp|^2+|\mathcal{U}T^j &\mcq_{+2}^{k_1}\Phi^{k_2}e_3^{\leq 1}\errp|^2\dee\nu\\
    &\lesssim \|T^{\leq 2}\mcq_{+2}^{\leq k+1}T^je_3^{\leq 1}\errp\|^2_{L^2({S(u,\ubar)})}\lesssim\ubar^{-14-2\delta-2j},
\end{align*}
which concludes the proof of Corollary \ref{cor:hypener}.
 \end{proof}
 We continue with an energy boundedness result on $\{r=\rb\}$ for $\Psip$ that will be used only at the end of the paper to get the precise asymptotics in region $\quatre$.
\begin{cor}\label{cor:symmetric}
    Assume that $\Psip$ satisfies \eqref{eqn:hypp}. Then we have for $j\leq \min(N_j^--12,N_j^+-2)$ and $2k_1+k_2\leq \min(N_k^--11,N_k^+-2)$,
    $$\iint_{\{r=\rb\}}\ener[T^j\carterp^{k_1}\Phi^{k_2}e_3^{\leq 1}\psip]\lesssim 1.$$
\end{cor}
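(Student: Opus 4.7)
The hypersurface $\{r=\rb\}$ has the convenient property that $u+\ubar=2\rb^*$ is constant on it, so the map $\ubar\mapsto u=2\rb^*-\ubar$ is affine. Choosing $\rb$ sufficiently close to $r_-$ ensures $\rb^*>1$, and the two sets $\{r=\rb\}\cap\{\ubar\geq 1\}$ and $\{r=\rb\}\cap\{u\geq 1\}$ then cover all of $\{r=\rb\}\cap\mcd^+(\Sigma_0)$. My plan is to bound the integrated energy on each piece separately, and note that on $\{r=\rb\}$ we have $\dee u=-\dee\ubar$ so an integrable bound in $\ubar$ translates directly to the same integral in $u$.

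On $\{r=\rb\}\cap\{u\geq 1\}$, the desired bound is exactly \eqref{eqn:vraienerpourapres}, which was obtained in Remark~\ref{rem:cellela} from the symmetric version of the entire preceding analysis, using the hypothesis \eqref{eqn:hypp'} on $\mch'_+$. So nothing new is needed there. On $\{r=\rb\}\cap\{\ubar\geq 1\}$, I would apply the decomposition $\psip=\ansatzp+\errp$ of Proposition~\ref{prop:ansatzIp}. For the error, Corollary~\ref{cor:hypener} gives
\[
\int_{S(u,\ubar)}\ener[T^j\carterp^{k_1}\Phi^{k_2}e_3^{\leq 1}\errp]\dee\nu \;=\; O(\ubar^{-14-2\delta-2j}),
\]
which is integrable in $\ubar$ on $[1,\infty)$. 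For the ansatz, the explicit form combined with the identities $T\ubar=1$, $e_3\ubar=0$, $e_4\ubar=1$, the smoothness of $A_m(r)$ on $\{r=\rb\}$, and the boundedness (with all derivatives) of the spin-weighted spherical harmonics $Y_{m,2}^{+2}(\cos\theta)e^{im\phi_+}$, shows that each of the commuted operators $T^j$, $\carterp^{k_1}$, $\Phi^{k_2}$, $e_3^{\leq 1}$ and each of the energy derivatives $e_3,e_4,\mcu,\partial_\theta$ preserves or improves the $\ubar^{-7}$ decay. This yields
\[
\ener\bigl[T^j\carterp^{k_1}\Phi^{k_2}e_3^{\leq 1}\ansatzp\bigr]\;=\;O(\ubar^{-14-2j})\quad\text{on }\{r=\rb\}\cap\{\ubar\geq 1\},
\]
which integrates to $O(1)$ in $\ubar$ since $14+2j>1$.

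I do not expect any serious obstacle: the heavy lifting has all been done already, in Proposition~\ref{prop:ansatzIp}, Corollary~\ref{cor:hypener}, and in the symmetric analysis summarized in Remark~\ref{rem:cellela}. The only minor technical point is checking that no derivative on the ansatz worsens the $\ubar^{-7-j}$ rate, which is direct from the explicit formula and the action of $T,\Phi,\carterp,e_3,e_4,\mcu,\partial_\theta$ on $\ubar^{-7}A_m(r)Y_{m,2}^{+2}(\cos\theta)e^{im\phi_+}$ when $r$ is held at the fixed value $\rb$.
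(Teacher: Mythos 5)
Your proposal is correct and is essentially the paper's own proof: the paper disposes of Corollary \ref{cor:symmetric} in one line by combining Corollary \ref{cor:hypener} on $\{r=\rb\}\cap\{\ubar\geq 1\}$ with the bound \eqref{eqn:vraienerpourapres} on $\{r=\rb\}\cap\{u\geq 1\}$, exactly your two-piece splitting. The only detail you spell out that the paper leaves implicit here is the trivial energy bound $\ener[T^j\carterp^{k_1}\Phi^{k_2}e_3^{\leq 1}(\text{ansatz})]=O(\ubar^{-14-2j})$ needed to pass from $\errp$ to $\psip$ on the $\{\ubar\geq 1\}$ piece, and this is the same elementary estimate the paper invokes explicitly in the proof of Proposition \ref{prop:enerborne}.
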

\begin{proof}
    The proof is made by combining Corollary \ref{cor:hypener} on $\{r=\rb\}\cap\{\ubar\geq 1\}$ and the energy bound \eqref{eqn:vraienerpourapres} on $\{r=\rb\}\cap\{u\geq 1\}$.
\end{proof}

\section{Precise asymptotics in blueshift intermediate region $\mathbf{II}$}
\label{section:blueshiftII}
\subsection{Energy method for the spin $+2$ Teukolsky equation in $\deux$}\label{subsection:energyII}
In this section, we consider $\psi$ a spin $+2$ scalar such that there are constants
$$V\in\mathbb{R},\quad c>1/4,\quad\beta>1,$$
such that for $0\leq j\leq N_j'$, and $0\leq 2k_1+k_2\leq N_k'$,
\begin{align}
    \bullet\:&T^j\mcq_{+2}^{k_1}\Phi^{k_2}\psi=O(\ubar^{-\beta-j})\text{   on   }\{r=r_\mathfrak{b}\}\cap\{\ubar\geq 1\}
    ,\label{eqn:hypborneII}\\
    \bullet\:&\intS\ener[T^j\mcq_{+2}^{k_1}\Phi^{k_2}\psi]\dee\nu=O(\ubar^{-2\beta-2j})\text{   on   }\{r=r_\mathfrak{b}\}\cap\{\ubar\geq 1\},\label{eqn:hypenerII}\\
    \bullet\:&\teuk_{+2}^{(c,V)}T^j\mcq_{+2}^{k_1}\Phi^{k_2}\psi=O(\ubar^{-\beta-j})\text{   in   }\deux.\label{eqn:hypteukII}
\end{align}
Recall that we have
\begin{align}
    \widehat{e_3}=-\mu e_3,\quad \widehat{e_4}=(-\mu)^{-1}e_4.
\end{align}
The goal of this section is to propagate the polynomial decay \eqref{eqn:hypborneII} on the spacelike hypersurface $\{r=\rb\}$ to region $\deux$, see Proposition \ref{prop:propagII}. In this region, we consider the positive $+2$ spin because it provides a positive bulk term in the energy estimate for the Teukolsky equation. We begin by proving the following energy estimate, that holds only for $c>1/4$ (more precisely, see \eqref{eqn:use1/4}).
\begin{prop}\label{prop:enerdeuxx}
    Assume that $\psi$ is a spin $+2$ scalar satisfying \eqref{eqn:hypenerII} and \eqref{eqn:hypteukII} with $c>1/4$, $\beta>1$. Then for $\gamma>0$ small enough, we have, for $j\leq N_j'$, $2k_1+k_2\leq N_k'$, and $\wbar_1\geq 1$,
    $$\iint_{\left\{\wbar={\wbar}_1\right\} \cap \mathbf{II}} \ener[T^j\mcq_{+2}^{k_1}\Phi^{k_2}\psi](-\mu) \dee\nu \mathrm{~d} u\lesssim \wbar_1^{-2\beta-2j}.$$
\end{prop}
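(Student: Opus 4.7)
The plan is to mimic the redshift energy method of Section \ref{subsection:energyI}, now working in the outgoing double null-like coordinates $(u,\ubar,\theta,\phi_{-})$ with the positive spin $s=+2$. Since $T$, $\Phi$ and $\carterp$ commute with $\teuk_{+2}^{(c,V)}$ by \eqref{eqn:commutateurs}, it is enough to prove the estimate for $j=k_1=k_2=0$; the general case then follows by applying the basic estimate to $T^j\carterp^{k_1}\Phi^{k_2}\psi$, whose data on $\{r=\rb\}$ is controlled by \eqref{eqn:hypenerII} and whose inhomogeneity is controlled by \eqref{eqn:hypteukII}.

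The main step is a multiplier estimate. I would multiply the equation $\mu\teuk_{+2}^{(c,V)}\psi=\mu O(\ubar^{-\beta})$ by the conjugate of $X(\psi):=f(r)\partial_u\psi+g(r)e_4\psi$, with $f(r)=(-\mu)^{-1}(r^2+a^2)^p$ and $g(r)=(r^2+a^2)^p$ for a large parameter $p=p(a,M,c,V)\gg 1$, integrate over $S(u,\ubar)$ against $\dee\nu$, and take the real part. The computation, formally identical to the one of Appendix \ref{appendix:computationsI} but with the signs adapted to the outgoing coordinates $\partial_u=-\mu e_3+\tfrac{a}{r^2+a^2}\Phi$, $\partial_\ubar=e_4$, produces a conservation identity of the form
\begin{align*}
\partial_\ubar\left(\int_{S(u,\ubar)}\mathbf{F}_\ubar[\psi]\dee\nu\right)+\partial_u\left(\int_{S(u,\ubar)}\mathbf{F}_u[\psi]\dee\nu\right)+\int_{S(u,\ubar)}\mathbf{B}[\psi]\dee\nu=\int_{S(u,\ubar)}\mu\Real(\overline{X(\psi)}\,O(\ubar^{-\beta}))\dee\nu,
\end{align*}
with fluxes satisfying $\mathbf{F}_\ubar[\psi],\mathbf{F}_u[\psi]\sim(-\mu)\ener[\psi]$, as in \eqref{eqn:estimebordwbar}. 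The critical mechanism is the \emph{effective blueshift}: the term $-4cs(r-M)f(r)|\partial_u\psi|^2$ in the bulk, which was positive in $\un$ because $r>M$ and $s=-2$, is now positive because $r\leq\rb<M$ (once $\rb$ is chosen close enough to $r_-$) and $s=+2$. Completing the square against the principal cross-term $\Real(\overline{e_4\psi}\partial_u\psi)$ via Young's inequality is exactly what produces the condition $c>1/4$; taking $p$ sufficiently large then absorbs the remaining lower-order contributions, giving the pointwise coercivity $\int_{S(u,\ubar)}\mathbf{B}[\psi]\dee\nu\gtrsim(-\mu)\int_{S(u,\ubar)}\ener[\psi]\dee\nu$.

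Next, I would integrate the conservation identity over $\{\wbar_1\leq\wbar\leq\wbar_2\}\cap\deux$ against $\dee u\,\dee\ubar$. The domain has three relevant boundaries: the spacelike hypersurfaces $\{\wbar=\wbar_i\}$, the hypersurface $\{r=\rb\}$, and the hypersurface $\Gamma=\{2r^*=\ubar^\gamma\}$. The $\{\wbar=\wbar_i\}$ fluxes are comparable to $\int\ener[\psi](-\mu)$, the $\{r=\rb\}$ contribution is directly controlled by \eqref{eqn:hypenerII}, producing at most $\int_{\wbar_1}^{\wbar_2}\wbar^{-2\beta}\dee\wbar$ after the $r$-weight is absorbed, and the source term is handled by Young's inequality exactly as in \eqref{eqn:abso}. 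The principal obstacle is the $\Gamma$-flux: along $\Gamma$ one has $-\mu\sim\exp(-2|\kappa_-|r^*)\sim\exp(-|\kappa_-|\ubar^\gamma)$, which is super-polynomially small, while the multiplier carries a factor $f(r)=(-\mu)^{-1}(r^2+a^2)^p$. For $\gamma>0$ small enough the exponential decay of $(-\mu)$ on $\Gamma$ overwhelms the negative power $(-\mu)^{1-p}$, so that the $\Gamma$-flux is either of favorable sign (because $\Gamma$ is oriented so that it separates $\deux$ from $\trois$) or bounded uniformly by $\wbar_1^{-2\beta}$. Combining these ingredients and applying Lemma \ref{lem:decay} with exponent $2\beta>1$ yields the announced decay $\wbar_1^{-2\beta-2j}$ for the constant-$\wbar$ energy.
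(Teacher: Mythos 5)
Your overall strategy coincides with the paper's: the same multiplier class $f(r)=(-\mu)^{-1}(r^2+a^2)^p$, $g(r)=(r^2+a^2)^p$ with $p$ large, the same integration region with boundaries $\{\wbar=\wbar_i\}$, $\{r=\rb\}$ and $\Gamma$, the same use of \eqref{eqn:hypenerII} on $\{r=\rb\}$, absorption of the inhomogeneity by a weighted Cauchy--Schwarz as in \eqref{eqn:eqnabs}, commutation with $T$, $\Phi$, $\carterp$ to reduce to $j=k_1=k_2=0$, and Lemma \ref{lem:decay} to conclude. One cosmetic deviation: in the outgoing coordinates $\partial_u=\ethat+\frac{a}{r^2+a^2}\Phi$, and the paper takes $X(\psi)=f(r)\ethat\psi+g(r)\partial_\ubar\psi$; your $X(\psi)=f\partial_u\psi+ge_4\psi$ smuggles in a $\Phi$-multiplier piece that generates extra cross terms --- handleable via \eqref{eqn:exprephi}, but avoidable by using $\ethat$ as the paper does.

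Two of your justifications, however, misidentify the mechanism. First, the constraint $c>1/4$ does \emph{not} come from completing the square on the cross term $\Real(\overline{e_4\psi}\,\ethat\psi)$: that term has a bounded coefficient once the $\mu$ from $\ethat=-\mu e_3$ is extracted, and is absorbed for \emph{any} $c$ by a weighted Cauchy--Schwarz, playing the small $\varepsilon$ against the $p$-enhanced $|e_4\psi|^2$ coefficient. The constraint lives in the coefficient of $|\ethat\psi|^2$ itself: when $e_4$ hits the weight $(r^2+a^2)f$ it produces the contribution $(-\mu)^{-1}(r-M)(r^2+a^2)^p$, which is negative for $r$ near $r_-$ and \emph{cannot} be rescued by taking $p$ large, since the $p$-dependent part $-\mu(p+1)r$ degenerates as $\mu\to 0$; only after adding the effective blueshift term $-4cs(r-M)f|\ethat\psi|^2$ does one get the net coefficient proportional to $(r-M)(1-2cs)-\mu(p+1)r$, whose uniform positivity up to $r_-$ forces $1-4c<0$, cf.\ \eqref{eqn:use1/4} and Lemma \ref{lem:posbulkII}. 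Second, your treatment of the $\Gamma$-flux would fail as stated: no factor $(-\mu)^{1-p}$ ever appears (the single $(-\mu)^{-1}$ in $f$ is always paired with $\mu$-weighted quantities such as $|\ethat\psi|^2=\mu^2|e_3\psi|^2$ or $\mu g$, so the fluxes contain no negative powers of $\mu$), and in any case $(-\mu)\cdot(-\mu)^{1-p}=(-\mu)^{2-p}$ blows up for $p>2$, so ``exponential decay of $(-\mu)$ overwhelming the weight'' is not a viable mechanism. The correct (and the paper's) argument is purely structural: one checks the pointwise sign $\mathcal{T}_\Gamma[\psi]=\widehat{\mathbf{F}}_u[\psi]+(1-\gamma\ubar^{\gamma-1})\widehat{\mathbf{F}}_\ubar[\psi]\geq 0$ for $\gamma>0$ small and simply drops this term; the exponential smallness of $-\mu$ beyond $\Gamma$ plays no role in this energy estimate and is only used later, in region $\trois$.
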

\begin{rem}
    From now on, we assume that $\gamma>0$ is small enough such that Proposition \ref{prop:enerdeuxx} holds.
\end{rem}
\begin{proof}[Proof of Proposition \ref{prop:enerdeuxx}]
In what follows, we denote $s=+2$. Similarly as in $\un$, the computations work for a general spin $s$, but the bulk term will only be positive for $s>0$. Once again, it suffices to treat the case $j=k_1=k_2=0$. Recall that in the coordinates $(u,\ubar,\theta,\phi_{-})$, we have $\partial_u=-\mu e_3+a/(r^2+a^2)\Phi$, $\partial_\ubar=e_4$. Thus we compute, in region $\mathbf{II}$ :
\begin{align}
    \mu\teuk_s^{(c,V)}&=4(r^2+a^2)\widehat{e_3}\partial_\ubar+ \mu \mathcal{U}^2+\frac{\mu}{\sin\theta}\partial_\theta(\sin\theta\partial_\theta)-4 ias\mu\cos\theta T+2is\mu \frac{a^2\sin\theta\cos\theta}{(r^2+a^2)}\mathcal{U}\nonumber\\
    &\quad\quad+2 r\mu (\widehat{e_3}+\partial_\ubar)-4 s[(r-M)\widehat{e_3}+ r\mu T]+\frac{2ar\mu  }{r^2+a^2}\Phi+ s\mu -s^2\mu\frac{a^4\sin^2\theta\cos^2\theta}{(r^2+a^2)^2},\label{eqn:muteukII1}\\
    &=4(r^2+a^2)\partial_\ubar \widehat{e_3}+ \mu \mathcal{U}^2+\frac{\mu}{\sin\theta}\partial_\theta(\sin\theta\partial_\theta)-4 ias\mu\cos\theta T+2is\mu \frac{a^2\sin\theta\cos\theta}{(r^2+a^2)}\mathcal{U}\nonumber\\
    &\quad\quad+2 r\mu (\widehat{e_3}+\partial_\ubar)-4 s[(r-M)\widehat{e_3}+ r\mu T]-\frac{2ar\mu  }{r^2+a^2}\Phi+ s\mu -s^2\mu\frac{a^4\sin^2\theta\cos^2\theta}{(r^2+a^2)^2},\label{eqn:muteukII2}
\end{align}
where we used 
$$4(r^2+a^2)[\partial_\ubar,\ethat]=\frac{4ar\mu}{r^2+a^2}\Phi.$$
Next, similarly as in $\un$, we multiply \eqref{eqn:hypteukII} by $\mu$ and by the complex conjugate of 
$$X(\psi):=f(r)\widehat{e_3}\psi+g(r)\partial_\ubar\psi$$
where $$f(r):=(r^2+a^2)^p(-\mu)^{-1},\quad g(r)=(r^2+a^2)^p,$$ we take the real part, and we integrate on ${S(u,\ubar)}$ with respect to $\dee\nu$. We get 
$$\intS \Real(f(r)\overline{\widehat{e_3}\psi}\mu\teuk_s^{(c,V)}\psi)\dee\nu+\intS\Real(g(r)\overline{\partial_\ubar\psi}\mu\teuk_s^{(c,V)}\psi)\dee\nu=\intS\mu\Real({\overline{X(\psi)}O(\ubar^{-\beta})})\dee\nu.$$
Making the substitutions\footnote{Recall that our convention for the derivatives $\partial_u$, $\partial_\ubar$ is that we use the ingoing double null like coordinates in $\{r\geq \rb\}$ and the outgoing double null like coordinates in $\{r\leq \rb\}$.} 
$$\partial_u\rightarrow \ethat,\quad e_4\rightarrow \partial_\ubar,$$
the same computation as in the proof of Lemma \ref{lem:integS} in Appendix \ref{appendix:computationsI} for the energy method in the redshift region gives
\begin{align}\label{eqn:dudubarII}
    \partial_\ubar\left(\int_{{S(u,\ubar)}}\widehat{\mathbf{F}}_\ubar[\psi]\dee\nu\right)+\partial_u\left(\int_{{S(u,\ubar)}}\widehat{\mathbf{F}}_u[\psi]\dee\nu\right)+\int_{{S(u,\ubar)}}\widehat{\mathbf{B}}[\psi]\dee\nu=\intS\mu\Real({\overline{X(\psi)}O(\ubar^{-\beta})})\dee\nu,
\end{align}
where 
$$\widehat{\mathbf{F}}_\ubar[\psi]=2(r^2+a^2)f(r)|\widehat{e_3}\psi|^2-\frac{1}{2}\mu g(r)(|\partial_\theta\psi|^2+|\mathcal{U}\psi|^2)+a\sin\theta \mu \mathfrak{R}(\overline{X(\psi)}\mathcal{U}\psi),$$
$$\widehat{\mathbf{F}}_u[\psi]=2(r^2+a^2)g(r)|\partial_\ubar\psi|^2-\frac{1}{2}\mu f(r)(|\partial_\theta\psi|^2+|\mathcal{U}\psi|^2)-a\sin\theta \mu \mathfrak{R}(\overline{X(\psi)}\mathcal{U}\psi)),$$
and
\begin{align*}
    \widehat{\mathbf{B}}[\psi]&=2(r\mu f(r)-\partial_\ubar((r^2+a^2)f(r)))|\widehat{e_3}\psi|^2+2(r\mu g(r)-\partial_u((r^2+a^2)g(r)))|\partial_\ubar\psi|^2\\
    &\quad +\frac{1}{2}(\partial_u(\mu f(r))+\partial_\ubar(\mu g(r)))(|\partial_\theta\psi|^2+|\mathcal{U}\psi|^2)\\
    &\quad+\mu g(r)\frac{sra^2\mu\cos\theta\sin\theta}{(r^2+a^2)^2}\mathfrak{I}(\overline{\psi} \mathcal{U}\psi)\\
    &\quad+4\mu g(r)as\cos\theta\mathfrak{I}(\overline{\partial_\ubar\psi}T\psi)-4srg(r)\mu\mathfrak{R}(\overline{\partial_\ubar\psi}T\psi)\\
    &\quad+\frac{2g(r)ar\mu}{r^2+a^2}\mathfrak{R}(\overline{\partial_\ubar\psi}\Phi\psi)-g(r)\left(\mu s+s^2\mu\frac{a^4\sin^2\theta\cos^2\theta}{(r^2+a^2)^2}-\mu V\right)\mathfrak{R}(\overline{\partial_\ubar\psi}\psi)\\
    &\quad-2sg(r)\mu\frac{a^2\sin\theta\cos\theta}{(r^2+a^2)}\mathfrak{I}(\overline{\partial_\ubar\psi}\mathcal{U}\psi)+g(r)(2r\mu-4cs(r-M))\Real(\overline{\partial_\ubar\psi} \widehat{e_3}\psi)\\
    &\quad+\mu f(r)\frac{sra^2\mu\cos\theta\sin\theta}{(r^2+a^2)^2}\mathfrak{I}(\overline{\psi} \mathcal{U}\psi)\\
    &\quad+4\mu f(r)as\cos\theta\mathfrak{I}(\overline{\widehat{e_3}\psi}T\psi)-4sr\mu\mathfrak{R}(\overline{\widehat{e_3}\psi}T\psi)+2 r\mu f(r)\mathfrak{R}(\overline{\widehat{e_3}\psi}\partial_\ubar\psi)\\
    &\quad-\frac{2f(r)ar\mu}{r^2+a^2}\mathfrak{R}(\overline{ \widehat{e_3} \psi}\Phi\psi)-f(r)\left(\mu s+s^2\mu\frac{a^4\sin^2\theta\cos^2\theta}{(r^2+a^2)^2}-\mu V\right)\mathfrak{R}(\overline{ \widehat{e_3} \psi}\psi)\\
    &\quad-2sf(r)\mu\frac{a^2\sin\theta\cos\theta}{(r^2+a^2)}\mathfrak{I}(\overline{ \widehat{e_3} \psi}\mathcal{U}\psi)-4cs(r-M)f(r)|\widehat{e_3}\psi|^2.
\end{align*}

Integrating \eqref{eqn:dudubarII} on $\{\wbar_1\leq \wbar\leq \wbar_2\}\cap\deux$ (see Figure \ref{fig:II}) with respect to $\dee u\dee\ubar$ gives 
\begin{align}
& \iint_{\left\{\wbar=\wbar_2\right\} \cap \mathbf{II}} \mathcal{T}_{\wbar}[\psi] \dee\nu \mathrm{~d} u+\iint_{\Gamma\cap\left\{\wbar_1 \leq \wbar\leq \wbar_2\right\}} \mathcal{T}_\Gamma[\psi] \dee\nu \mathrm{~d}  u+\iiint_{\left\{\wbar_1 \leq \wbar\leq \wbar_2\right\} \cap \mathbf{II}} \widehat{\mathbf{B}}[\psi] \dee\nu\:\dee u\dee\ubar \nonumber\\
= & \iint_{\left\{\wbar=\wbar_1\right\} \cap \mathbf{II}} \hspace{-0.5cm}\mathcal{T}_{\wbar}[\psi] \dee\nu \mathrm{~d} u+\iint_{\{r=r_{\mathfrak{b}}\}  \cap\left\{\wbar_1 \leq \wbar \leq \wbar_2\right\}} \hspace{-0.7cm}\mathcal{T}_r[\psi] \dee\nu \mathrm{~d}  u+\iiint_{\left\{\underline{w}_1 \leq \underline{w} \leq \underline{w}_2\right\} \cap \mathbf{I}} \hspace{-1cm}\mu\Real(\overline{X(\psi)}O(\ubar^{-\beta})) \dee\nu \mathrm{d} u \mathrm{d} \underline{u},\label{eqn:conserenerII}
\end{align}
where  $$\mathcal{T}_{\wbar}[\psi]=\left(1-\frac{1}{2} \mu\right) \widehat{\mathbf{F}}_{\ubar}[\psi]-\frac{1}{2} \mu \widehat{\mathbf{F}}_{{u}}[\psi],\quad\mathcal{T}_r[\psi]=-\frac{1}{2} \mu\left(\widehat{\mathbf{F}}_{\underline{u}}[\psi]+\widehat{\mathbf{F}}_u[\psi]\right),$$
and $$\mathcal{T}_\Gamma[\psi]=\widehat{\mathbf{F}}_u[\psi]+(1-\gamma\ubar^{\gamma-1})\widehat{\mathbf{F}}_{{\ubar}}[\psi].$$
\begin{figure}[h!]
    \centering
    \includegraphics[scale=0.47]{wbarII.pdf}
    \caption{The blueshift energy estimate. The grey region is $\deux\cap\{\wbar_1\leq\wbar\leq\wbar_2\}$ where we integrate \eqref{eqn:dudubarII}.}
    \label{fig:II}
\end{figure}

Now we estimate the different quantities involved, with the previous choice of $f$ and $g$.

\noindent\textbf{Control of the bulk terms.} Using Lemma \ref{lem:posbulkII}, which holds thanks to an effective blueshift effect, we get that for $s=+2$, $c>1/4$, $p=p(a,M,c,V)\gg 1$ large enough, and $\rb=\rb(a,M,c,V)$ sufficiently close to $r_-$, we have in $\{r_-\leq r\leq\rb\}$,
    \begin{align}\label{eqn:bulkIIpos2}
    \intS\widehat{\mathbf{B}}[\psi]\dee\nu\gtrsim(-\mu)\intS{\ener}[\psi]\dee\nu.
\end{align}
To control the other bulk term on the RHS of \eqref{eqn:conserenerII}, we write, for $\varepsilon>0$, similarly as in $\un$,
\begin{align}
    \Bigg|\iiint_{\left\{\wbar_1 \leq \wbar\leq \wbar_2\right\} \cap \mathbf{II}}\mu\Real(\overline{X(\psi)}&O(\ubar^{-\beta}))\dee\nu\:\dee u\dee\ubar\Bigg|\lesssim\nonumber\\
    &\varepsilon \iiint_{\left\{\wbar_1 \leq \wbar\leq \wbar_2\right\} \cap \mathbf{II}}\ener[\psi](-\mu)\dee\nu\:\dee u\dee\ubar+\varepsilon^{-1}\int_{\wbar_1}^{\wbar_2}\wbar^{-2\beta}\dee\wbar.\label{eqn:eqnabs}
\end{align}
Thus, choosing $\varepsilon>0$ small enough such that the first term on the RHS of \eqref{eqn:eqnabs} is absorbed in the LHS of \eqref{eqn:conserenerIIb}, we get 
\begin{align}
& \iint_{\left\{\wbar=\wbar_2\right\} \cap \mathbf{II}} \mathcal{T}_{\wbar}[\psi] \dee\nu \mathrm{~d} u+\iint_{\Gamma\cap\left\{\wbar_1 \leq \wbar\leq \wbar_2\right\}} \mathcal{T}_\Gamma[\psi] \dee\nu \mathrm{~d}  u+\iiint_{\left\{\wbar_1 \leq \wbar\leq \wbar_2\right\} \cap \mathbf{II}} \ener[\psi] \dee\nu\:\dee u\dee\ubar \nonumber\\
= & \iint_{\left\{\wbar=\wbar_1\right\} \cap \mathbf{II}} \mathcal{T}_{\wbar}[\psi] \dee\nu \mathrm{~d} u+\iint_{\{r=r_{\mathfrak{b}}\}  \cap\left\{\wbar_1 \leq \wbar \leq \wbar_2\right\}} \mathcal{T}_r[\psi] \dee\nu \mathrm{~d}  u+\int_{\wbar_1}^{\wbar_2}\wbar^{-2\beta}\dee\wbar.\label{eqn:conserenerIIb}
\end{align}

\noindent\textbf{Control of the boundary terms.} As in region $\un$, we have
$$\mathcal{T}_{\wbar}[\psi]\sim(-\mu)\ener[\psi], \text{   in   }\deux,$$
and
$$\mathcal{T}_{r}[\psi]\sim \ener[\psi],\text{   on   }\{r=\rb\}.$$
We also have, as in \cite{scalarMZ}
$$\mathcal{T}_\Gamma[\psi]=\widehat{\mathbf{F}}_u[\psi]+(1-\gamma\ubar^{\gamma-1})\widehat{\mathbf{F}}_{{\ubar}}[\psi]\geq f(r)|\ethat\psi|^2+g(r)|e_4\psi|^2-\mu(f(r)+g(r))(\modnab)\geq 0$$
for $\gamma>0$ small enough. Thus combining these boundary terms estimates with \eqref{eqn:conserenerIIb} yields
\begin{align}
 &\iint_{\left\{\wbar=\wbar_2\right\} \cap \mathbf{II}} \ener[\psi]  (-\mu)\dee\nu\mathrm{~d} u+\iiint_{\left\{\wbar_1 \leq \wbar\leq \wbar_2\right\} \cap \mathbf{II}} \ener[\psi] \dee\nu\:\dee u\dee\ubar\nonumber \\
 &\lesssim \iint_{\left\{\wbar=\wbar_1\right\} \cap \mathbf{II}} \ener[\psi] (-\mu)\dee\nu \mathrm{~d} u+\iint_{\{r=r_{\mathfrak{b}}\}  \cap\left\{\wbar_1 \leq \wbar \leq \wbar_2\right\}} \ener[\psi] \dee\nu \mathrm{~d} \underline{u}+\int_{\wbar_1}^{\wbar_2}\wbar^{-2\beta}\dee\wbar.\label{eqn:avantlemII}
 \end{align}
 Using the energy assumption \eqref{eqn:hypenerII} on $\{r=\rb\}$, we infer 
\begin{align}
 &\iint_{\left\{\wbar=\wbar_2\right\} \cap \mathbf{II}} \ener[\psi]  (-\mu)\dee\nu\mathrm{~d} u+\iiint_{\left\{\wbar_1 \leq \wbar\leq \wbar_2\right\} \cap \mathbf{II}} \ener[\psi] \dee\nu\:\dee u\dee\ubar\nonumber \\
 &\lesssim \iint_{\left\{\wbar=\wbar_1\right\} \cap \mathbf{II}} \ener[\psi] (-\mu)\dee\nu \mathrm{~d} u+\int_{\wbar_1}^{\wbar_2}\wbar^{-2\beta}\dee\wbar.\label{eqn:avantlemIIprim}
 \end{align}
{As in the end of the proof of Proposition \ref{prop:ener}, using the fact that for any $\wbar_1\geq 1$, the initial energy 
$$\iint_{\left\{\underline{w}=\underline{w}_1\right\} \cap \mathbf{II}} \mathbf{e}[\psi](-\mu) \dee\nu \mathrm{d} u\lesssim 1$$
is finite, we conclude the proof by applying Lemma \ref{lem:decay} with $p=2\beta$.}
\end{proof}

\begin{prop}\label{prop:decrTII}
    Assume that $\psi$ is a spin $+2$ scalar satisfying \eqref{eqn:hypborneII}, \eqref{eqn:hypenerII}, and \eqref{eqn:hypteukII} with $c>1/4$, $\beta>1$. Then we have in $\deux$, for $j\leq N_j'$ and $2k_1+k_2\leq N_k'$,
    $$\|T^j\mcq_{+2}^{k_1}\Phi^{k_2}\psi\|_{L^2({S(u,\ubar)})}\lesssim\ubar^{-\beta-j}.$$
\end{prop}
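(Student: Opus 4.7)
The plan is to adapt the argument of Proposition \ref{prop:decrT} (used in the redshift region $\un$) to the intermediate blueshift region $\deux$, replacing the redshift energy bound of Proposition \ref{prop:ener} by its blueshift analogue Proposition \ref{prop:enerdeuxx}, and replacing the event horizon boundary by the spacelike hypersurface $\{r=\rb\}$, on which we are given uniform polynomial boundedness by hypothesis \eqref{eqn:hypborneII} and uniform energy control by \eqref{eqn:hypenerII}.

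Since $T$, $\Phi$ and $\carterp$ all commute with $\teuk_{+2}^{(c,V)}$, and since all three hypotheses \eqref{eqn:hypborneII}--\eqref{eqn:hypteukII} are stated uniformly for every commuted scalar $T^j\carterp^{k_1}\Phi^{k_2}\psi$ with rate $\ubar^{-\beta-j}$, I would first reduce to the case $j=k_1=k_2=0$ by applying the final estimate to $T^j\carterp^{k_1}\Phi^{k_2}\psi$ in place of $\psi$.

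Next, fix $(u,\ubar)\in\deux$ and let $\wbar_1:=\wbar(u,\ubar)$; by the definition $\wbar=\ubar-r+r_+$ with $r\in[r_-,\rb]$ one has $\wbar_1\sim\ubar$. Since $r(u,\ubar)\leq\rb$ and since the footnote in Section \ref{section:spinscalars} guarantees that the $L^2(S)$ norm is well behaved in $\{r\leq\rb\}$, I would apply Lemma \ref{lem:integ} along the slice $\{\wbar=\wbar_1\}$ with $r_1=\rb$ and $r_2=r(u,\ubar)$, obtaining
\begin{align*}
\|\psi\|_{L^2(S(u,\ubar))}\lesssim\|\psi\|_{L^2(S)}\Big|_{\{r=\rb\}\cap\{\wbar=\wbar_1\}}+\left(\int_{r(u,\ubar)}^{\rb}\int_{S(\wbar_1,r')}\ener[\psi]\dee\nu\dee r'\right)^{1/2}.
\end{align*}
The first term is controlled by $\wbar_1^{-\beta}\sim\ubar^{-\beta}$ via the pointwise hypothesis \eqref{eqn:hypborneII}, since the $\ubar$-coordinate on $\{r=\rb\}\cap\{\wbar=\wbar_1\}$ equals $\wbar_1+\rb-r_+\sim\wbar_1$. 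For the second term, I change variables from $r$ to $u$ along $\{\wbar=\wbar_1\}$ using $\mu\dee u=(2-\mu)\dee r$ (cf.\ Footnote \ref{footnote}), so that, since $(-\mu)/(2-\mu)\lesssim(-\mu)$, the bulk integral is bounded by
\begin{align*}
\iint_{\{\wbar=\wbar_1\}\cap\deux}\ener[\psi](-\mu)\dee\nu\dee u\lesssim\wbar_1^{-2\beta},
\end{align*}
where the final estimate uses Proposition \ref{prop:enerdeuxx} (whose hypotheses $c>1/4$ and $\beta>1$ are inherited). Taking a square root yields the announced $\ubar^{-\beta}$ decay.

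I do not expect any real obstacle: all the geometric and analytic machinery specific to $\deux$ (the spacelike character of $\{\wbar=\cst\}$, the bounded Jacobian $(-\mu)/(2-\mu)$, and the reinterpretation of the $L^2(S)$ norm in outgoing coordinates) has already been set up in the preliminaries and in Proposition \ref{prop:enerdeuxx}. The only novelty compared with the proof in $\un$ is that the role of the event horizon as an initial data surface is now played by $\{r=\rb\}$, on which pointwise data of size $\ubar^{-\beta-j}$ is supplied by hypothesis \eqref{eqn:hypborneII}. The bound $\wbar\sim\ubar$ throughout $\deux$ ensures that the $\wbar$-decay obtained from the energy estimate translates directly into the stated $\ubar$-decay.
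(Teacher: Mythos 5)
Your proposal is correct and follows essentially the same route as the paper, whose proof of Proposition \ref{prop:decrTII} simply invokes the argument of Proposition \ref{prop:decrT} with Lemma \ref{lem:integ} applied from $\{r=\rb\}$ (where \eqref{eqn:hypborneII} supplies the boundary data) and with the energy decay of Proposition \ref{prop:enerdeuxx} controlling the integrated term. Your additional verifications (the identification $\ubar=\wbar_1+\rb-r_+$ on the boundary sphere, the Jacobian $\mu\dee u=(2-\mu)\dee r$, and $\wbar_1\sim\ubar$ in $\deux$) are exactly the implicit details the paper leaves to the reader.
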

\begin{proof}
The proof is exactly the same as the one of Proposition \ref{prop:decrT}, i.e. we use Lemma \ref{lem:integ} and use the energy decay given by Proposition \ref{prop:enerdeuxx} to bound the integrated term.
\end{proof}
\begin{prop}\label{prop:propagII}
    Assume that $\psi$ is a spin $+2$ scalar satisfying \eqref{eqn:hypborneII}, \eqref{eqn:hypenerII}, and \eqref{eqn:hypteukII} with $c>1/4$, $\beta>1$. Then we have in $\deux$, for $j\leq N_j'-2$ and $2k_1+k_2\leq N_k'-2$,
    $$|T^j\mcq_{+2}^{k_1}\Phi^{k_2}\psi|\lesssim\ubar^{-\beta-j}.$$
\end{prop}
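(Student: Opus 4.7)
The plan is to deduce the pointwise bound directly from the $L^2(S(u,\ubar))$ decay established in Proposition \ref{prop:decrTII}, via the spin-weighted Sobolev embedding of Proposition \ref{prop:sobolev}. This mirrors exactly the relationship between Propositions \ref{prop:decrT} and \ref{prop:propagI} in the redshift region $\un$.

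First, I would reduce to the baseline case $j=k_1=k_2=0$. Since the identities \eqref{eqn:commutateurs} yield
\[
[\teuk_{+2}^{(c,V)},T]=[\teuk_{+2}^{(c,V)},\Phi]=[\teuk_{+2}^{(c,V)},\mcq_{+2}]=0,
\]
the scalar $T^j\mcq_{+2}^{k_1}\Phi^{k_2}\psi$ inherits the source bound \eqref{eqn:hypteukII}, as well as the boundary bounds \eqref{eqn:hypborneII} and \eqref{eqn:hypenerII} on $\{r=\rb\}$, provided we stay within the regularity budget $N_j'$, $N_k'$. Hence it suffices to prove $|\psi|\lesssim\ubar^{-\beta}$ in $\deux$ under the stated hypotheses.

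Next, by the spin-weighted Sobolev embedding \eqref{eqn:sobolev} applied on each topological sphere $S(u,\ubar)$ inside $\deux$,
\[
|\psi|^2(u,\ubar,\theta,\phi_-)\lesssim\int_{S(u,\ubar)}|T^{\leq 2}\psi|^2\dee\nu+\int_{S(u,\ubar)}|\mcq_{+2}\psi|^2\dee\nu.
\]
Applying Proposition \ref{prop:decrTII} to the right-hand side, with $(j,k_1,k_2)\in\{(0,0,0),(1,0,0),(2,0,0),(0,1,0)\}$, each term is bounded by $\ubar^{-2\beta}$. This yields $|\psi|\lesssim\ubar^{-\beta}$ pointwise in $\deux$, and therefore the general statement $|T^j\mcq_{+2}^{k_1}\Phi^{k_2}\psi|\lesssim\ubar^{-\beta-j}$ after reinstating the commuted derivatives. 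The two-derivative loss in $T$ and the one-derivative loss in $\mcq_{+2}$ arising from Sobolev embedding (each insertion of $\mcq_{+2}$ consumes up to two units of $T$-budget through the $a^2\sin^2\theta\,T^2$ term in \eqref{eqn:carter}) precisely account for the restrictions $j\leq N_j'-2$ and $2k_1+k_2\leq N_k'-2$.

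I do not expect any serious obstacle: all the analytic work has been absorbed into the energy estimate of Proposition \ref{prop:enerdeuxx} and its $L^2(S(u,\ubar))$ consequence Proposition \ref{prop:decrTII}, so the present statement is a one-line application of the spherical Sobolev inequality, in complete parallel with the redshift proof of Proposition \ref{prop:propagI}.
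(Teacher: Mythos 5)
Your proposal is correct and follows essentially the same route as the paper: the paper's proof likewise reduces to $j=k_1=k_2=0$ (using that $T$, $\Phi$, $\mcq_{+2}$ commute with $\teuk_{+2}^{(c,V)}$) and then applies the Sobolev embedding \eqref{eqn:sobolev} together with the $L^2(S(u,\ubar))$ decay of Proposition \ref{prop:decrTII}, with the same derivative accounting producing the losses $j\leq N_j'-2$ and $2k_1+k_2\leq N_k'-2$. No gaps to report.
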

\begin{proof}
It suffices to treat the case $j=k_1=k_2=0$. Using the Sobolev embedding \eqref{eqn:sobolev} and Proposition \ref{prop:decrTII},  we get
$$|\psi|^2\lesssim \int_{S(u,\ubar)}|T^{\leq 2}\psi|^2\dee\nu+\int_{S(u,\ubar)}|\mcq_{-2}\psi|^2\dee\nu\lesssim\ubar^{-2\beta},$$
which concludes the case $j=k_1=k_2=0$.
\end{proof}

\subsection{Precise asymptotics of $\psi_{+2}$ in region $\mathbf{II}$}
We use the energy method of Section \ref{subsection:energyII} to get the following results.
\begin{prop}\label{prop:ansatzpII}
    Assume that $\Psip$ satisfies \eqref{eqn:hypp}. Then we have in $\deux$, 
    $$\psi_{+2}=\ansatzp+\errp,$$
    where for $j\leq \min(N_j^--14,N_j^+-4)$ and $2k_1+k_2\leq \min(N_k^--13,N_k^+-4)$, 
    $$|T^j\mcq_{+2}^{k_1}\Phi^{k_2}\errp|\lesssim\ubar^{-7-\delta-j}.$$
\end{prop}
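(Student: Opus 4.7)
The strategy is identical in spirit to that of Proposition \ref{prop:ansatzI}, but now using the effective blueshift energy estimates from Section \ref{subsection:energyII} instead of the redshift ones. Define
\[
\errp := \psi_{+2} - \ansatzp
\]
in $\deux$. The goal is to verify the three hypotheses \eqref{eqn:hypborneII}, \eqref{eqn:hypenerII}, \eqref{eqn:hypteukII} of Proposition \ref{prop:propagII} with parameters $\beta = 7+\delta$, $c = 1$, $V = 0$ for $\errp$, from which the pointwise bound on $T^j\carterp^{k_1}\Phi^{k_2}\errp$ in $\deux$ follows directly.

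\textbf{Boundary data on $\{r=\rb\}$.} Since the spacelike hypersurface $\{r=\rb\}\cap\{\ubar\geq 1\}$ lies inside region $\un$, Proposition \ref{prop:ansatzIp} gives the pointwise bound $|T^j\carterp^{k_1}\Phi^{k_2}\errp| \lesssim \ubar^{-7-\delta-j}$ on $\{r=\rb\}$, which is exactly \eqref{eqn:hypborneII}, while Corollary \ref{cor:hypener} supplies the energy bound \eqref{eqn:hypenerII}. The only mild loss of derivatives comes from Corollary \ref{cor:hypener}, accounting for the ranges $j\leq\min(N_j^--12,N_j^+-2)$ and $2k_1+k_2\leq\min(N_k^--11,N_k^+-2)$.

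\textbf{Bound on $\teuk_{+2}(\ansatzp)$.} Since $\teuk_{+2}\psi_{+2}=0$ and $T,\Phi,\carterp$ all commute with $\teuk_{+2}$ by \eqref{eqn:commutateurs},
\[
\teuk_{+2}\!\left(T^j\carterp^{k_1}\Phi^{k_2}\errp\right) = -T^j\carterp^{k_1}\Phi^{k_2}\,\teuk_{+2}(\ansatzp),
\]
so hypothesis \eqref{eqn:hypteukII} reduces to showing $|T^j\carterp^{k_1}\Phi^{k_2}\teuk_{+2}(\ansatzp)| \lesssim \ubar^{-8-j}$ in $\deux$. This is the heart of the argument. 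Using the expression of $\teuk_{+2}$ from Proposition \ref{prop:teukavece3} (in terms of $e_3$, $T$, $\Phi$ and the angular eth operators), and noting that $e_3(\ubar)=e_3(\theta)=e_3(\phi_+)=0$ while $e_3(r)=-1/2$, the action of $\teuk_{+2}$ on $\ansatzp$ splits into (i) an $r$-dependent piece involving $A_m(r)$, $A_m'(r)$, $A_m''(r)$ together with the angular Laplacian $\drond_{+1}\drond'_{+2}$ acting on $Y_{m,2}^{+2}e^{im\phi_+}$, all of size $O(\ubar^{-7})$, and (ii) pieces containing at least one $T$ derivative, hence of size $O(\ubar^{-8})$. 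The polynomial $A_m(r)$ computed in Appendix \ref{appendix:am(r)} is precisely the one chosen in \cite{pricelaw} so that piece (i) vanishes identically for every $|m|\leq 2$, leaving only piece (ii). Commuting with $T^j\carterp^{k_1}\Phi^{k_2}$ gains one additional $\ubar^{-1}$ per $T$ and yields the claim.

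\textbf{Conclusion and main obstacle.} With all three hypotheses verified and $c=1>1/4$, Proposition \ref{prop:propagII} immediately yields the bound $|T^j\carterp^{k_1}\Phi^{k_2}\errp| \lesssim \ubar^{-7-\delta-j}$, with the final loss of two additional $T$ and Carter derivatives coming from the Sobolev embedding used inside Proposition \ref{prop:propagII}, thus producing the stated ranges $j\leq\min(N_j^--14,N_j^+-4)$ and $2k_1+k_2\leq\min(N_k^--13,N_k^+-4)$. The main (and only) nontrivial step is the algebraic verification that the leading $O(\ubar^{-7})$ contribution to $\teuk_{+2}(\ansatzp)$ vanishes thanks to the specific quartic polynomial form of $A_m(r)$; this is essentially an unpacking of the construction of the ansatz in \cite{pricelaw} and involves no new analytic ideas beyond a careful bookkeeping of the eth operators on spin-weighted spherical harmonics via \eqref{eqn:annulemode}--\eqref{eqn:drooo}.
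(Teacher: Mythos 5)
Your proposal is correct and follows essentially the same route as the paper: verify hypotheses \eqref{eqn:hypborneII}, \eqref{eqn:hypenerII}, \eqref{eqn:hypteukII} for $\errp$ via Proposition \ref{prop:ansatzIp}, Corollary \ref{cor:hypener}, and the vanishing of the leading $O(\ubar^{-7})$ term in $\teuk_{+2}$ applied to the ansatz, then invoke Proposition \ref{prop:propagII} with $\beta=7+\delta$, $c=1>1/4$, $V=0$, with the same derivative bookkeeping. The only step you assert rather than carry out is the algebraic identity $\Delta A_m''+2(iam-(r-M))A_m'-4A_m=0$, which the paper isolates as Lemma \ref{prop:teukansatz} and checks by explicit computation in Appendix \ref{appendix:am(r)}; your attribution of it to the construction of $A_m(r)$ in \cite{pricelaw} is accurate.
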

The proof of Proposition \ref{prop:ansatzpII} requires the following lemma. 
\begin{lem}\label{prop:teukansatz}
    In $\deux$, for $j,k\geq 0$ we have
$$e_3^{\leq 1}\teuk_{+\fraks}T^j\mcq_{+2}^{k_1}\Phi^{k_2}\left(\frac{1}{\ubar^{7}}\sum_{|m|\leq2}A_m(r)Q_{m,\sfrak}Y_{m,\sfrak}^{+\sfrak}(\cos\theta)e^{im\phi_{+}}\right)=O(\ubar^{-8-j}). $$
\end{lem}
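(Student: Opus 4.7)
Since $T$, $\Phi$ and $\carterp$ all commute with $\teuk_{+2}$ by \eqref{eqn:commutateurs}, and $e_3$ commutes with each of $T$, $\Phi$, $\carterp$ (their B--L coefficients are $t$- and $\phi$-independent, and the angular operator $\drond'\drond$ only involves $\partial_\theta,\partial_\phi$), we may commute all ``outer'' operators past $\teuk_{+2}$:
\[
e_3^{\leq 1}\teuk_{+2} T^j\carterp^{k_1}\Phi^{k_2}(\ansatzp) = T^j\carterp^{k_1}\Phi^{k_2}\,e_3^{\leq 1}\teuk_{+2}(\ansatzp).
\]
Each factor of $T$ acting on the ansatz brings exactly one extra power $\ubar^{-1}$, since $T$ annihilates the stationary factor $A_m(r)Y_{m,2}^{+2}(\cos\theta)e^{im\phi_+}$ and $T\ubar^{-7}=-7\ubar^{-8}$. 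Moreover $\Phi^{k_2}$ merely multiplies each $m$-mode by $(im)^{k_2}$, and $\carterp^{k_1}$ introduces only extra $T$-derivatives (whose contribution preserves or improves $\ubar$-decay), because the purely angular part $\drond'_3\drond_2$ of $\carterp$ annihilates $Y^{+2}_{m,2}e^{im\phi_+}$ — this is \eqref{eqn:drooo} at $\ell=s=2$. It therefore suffices to show $e_3^{\leq 1}\teuk_{+2}(\ansatzp)=O(\ubar^{-8})$ in $\deux$, uniformly in $r$, $\theta$, $\phi$.

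Set $\Psi_m := A_m(r)Y_{m,2}^{+2}(\cos\theta)e^{im\phi_+}$, so $\ansatzp = \ubar^{-7}\sum_{|m|\leq 2} Q_{m,2}\Psi_m$. The relevant algebraic identities are: $T\Psi_m = 0$; $e_3\ubar = 0$, $e_3 Y_{m,2}^{+2}(\cos\theta)=0$, and $e_3(e^{im\phi_+})=0$ (the latter encoding the defining choice $r_{mod}'=a/\Delta$), hence $e_3\Psi_m = -\tfrac12 A_m'(r)Y_{m,2}^{+2}e^{im\phi_+}$; and $\drond_1\drond'_2(Y_{m,2}^{+2}e^{im\phi_+}) = -4\,Y_{m,2}^{+2}e^{im\phi_+}$ by \eqref{eqn:annulemode}. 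Plugging $\ubar^{-7}\Psi_m$ into the expression \eqref{eqn:opteuk} of $\teuk_{+2}$, the only potentially singular $\Delta^{-1}$ contributions come from the terms $-(r^2+a^2)^2\Delta^{-1}T^2$, $-4aMr\Delta^{-1}T\Phi$, $-a^2\Delta^{-1}\Phi^2$ and from $\partial_r(\Delta\partial_r)$ when its $r$-derivatives hit $\ubar^{-7}$ or the phase $e^{imr_{mod}}$ hidden inside $e^{im\phi_+}$. Direct computation shows these contributions cancel pairwise, thanks to the identities $\partial_r\ubar = (r^2+a^2)/\Delta$, $\partial_r(e^{imr_{mod}}) = (ima/\Delta)e^{imr_{mod}}$ and $2Mr-(r^2+a^2)=-\Delta$. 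Collecting the surviving, non-singular contributions yields
\[
\teuk_{+2}(\ubar^{-7}\Psi_m) = \ubar^{-7}\bigl[\Delta A_m'' + 2(ima-(r-M))A_m' - 4A_m\bigr]Y_{m,2}^{+2}e^{im\phi_+} + O(\ubar^{-8}),
\]
with bounded-in-$r$ coefficients in $\deux$. The bracket is the stationary $(m,\ell=2)$-mode radial ODE for $\teuk_{+2}$, and is satisfied identically by the explicit polynomial $A_m(r)$ given in the introduction; this identity is the defining property of the Ma--Zhang coefficient $\mathfrak{f}_{+2,m}$ in \cite[Lemma 5.7]{pricelaw}, and is readily verified from the explicit formula using $\Delta'=2(r-M)$ and $\Delta''=2$. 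Hence the first term vanishes and $\teuk_{+2}(\ansatzp) = O(\ubar^{-8})$.

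For the $e_3^{\leq 1}$ prefactor: since $e_3$ annihilates $\ubar^{-k}$, $Y_{m,2}^{+2}(\cos\theta)$ and $e^{im\phi_+}$, it only differentiates the polynomial $A_m(r)$ and the bounded-in-$r$ coefficients of the $O(\ubar^{-8})$ error, preserving the $\ubar$-decay. Restoring $T^j\carterp^{k_1}\Phi^{k_2}$ as in the first paragraph then gives the announced bound $O(\ubar^{-8-j})$. The main obstacle is the two-part algebraic verification: (a) that the radial ODE $\Delta A_m''+2(ima-(r-M))A_m' - 4A_m=0$ holds identically for the explicit polynomial $A_m(r)$, and (b) that every $\Delta^{-1}$ singularity arising in $\teuk_{+2}(\ubar^{-7}\Psi_m)$ cancels. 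Both checks are direct but tedious, and together encode the fact that $A_m(r)$ is built precisely so that $\Psi_m$ is a time-independent solution of the spin $+2$ Teukolsky equation on the relevant mode.
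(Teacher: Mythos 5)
Your proposal is correct, and its skeleton is the same as the paper's: commute $T,\Phi,\carterp$ (and $e_3$, which commutes with all three) to reduce to $j=k_1=k_2=0$, observe that every $T$ hitting the ansatz only sees $\ubar^{-7}$ and produces the $O(\ubar^{-8})$ error, and reduce the leading term to the radial ODE identity $\Delta A_m''+2(iam-(r-M))A_m'-4A_m=0$, which the paper verifies coefficient by coefficient in Appendix \ref{appendix:am(r)}. The one genuine difference is the representation of $\teuk_{+2}$ you compute with. The paper uses Proposition \ref{prop:teukavece3}, the form of $\teuk_s$ in terms of $4\Delta e_3^2$, $T$, $\Phi e_3$, $Te_3$ and $\drond_{s-1}\drond'_s$, in which \emph{no} $\Delta^{-1}$ coefficient appears: since $e_3\ubar=e_3\phi_+=e_3\theta=0$, the operator applied to the ansatz immediately yields $\ubar^{-7}\bigl[\Delta A_m''+2(iam-(r-M))A_m'-4A_m\bigr]Y^{+2}_{m,2}e^{im\phi_+}$ plus terms carrying at least one $T$, with manifestly regular coefficients. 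You instead work from the Boyer--Lindquist form \eqref{eqn:opteuk}, where $\partial_r(\Delta\partial_r)$ acting on $\ubar^{-7}e^{im\phi_+}$ and the $\Delta^{-1}T^2$, $\Delta^{-1}T\Phi$, $\Delta^{-1}\Phi^2$ terms all generate singular contributions that must cancel \emph{exactly} — and this is not optional bookkeeping: in $\deux$ one has $\Delta^{-1}\sim e^{|\kappa_-|\ubar^{\gamma}}$ near $\Gamma$, so a leftover $\Delta^{-1}\ubar^{-8}$ or $\Delta^{-1}\ubar^{-9}$ term would destroy the bound. Your claimed cancellations do hold (the $\Phi^2$ term kills the $\Delta^{-1}m^2a^2$ piece, the $T^2$ term kills the $\Delta^{-1}\ubar^{-9}$ piece, and the $T\Phi$ term combines with the mixed $\partial_r$-terms via $2Mr-(r^2+a^2)=-\Delta$), so your route is sound; the paper's choice of Proposition \ref{prop:teukavece3} simply pre-packages exactly these cancellations into the operator identity, which is why its proof is a half-page. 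Note also that after the first-paragraph commutation, what $T^j\carterp^{k_1}\Phi^{k_2}$ acts on is the computed error (a finite sum of smooth stationary factors times $\ubar^{-7-p}$, $p\geq 1$), not the pure $\ell=2$ ansatz, so the annihilation $\drond'_3\drond_2 Y^{+2}_{m,2}e^{im\phi_+}=0$ is not what is needed there — only boundedness of the angular operators on these smooth factors, which holds; this is a harmless imprecision, not a gap. The two algebraic verifications you defer as "direct but tedious" are precisely the content of the paper's Appendix \ref{appendix:am(r)} (the $c_0=\cdots=c_4=0$ check) and of the derivation of Proposition \ref{prop:teukavece3}, so nothing essential is missing.
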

\begin{proof}[Proof of Lemma \ref{prop:teukansatz}]
    The computation requires the precise expression of $A_m(r)$, and the expression \eqref{prop:teukavece3} of the Teukolsky operator $\teuk_{+2}$, see Appendix \ref{appendix:am(r)} for the complete proof.
\end{proof}
\begin{proof}[Proof of Proposition \ref{prop:ansatzpII}]
By Proposition \ref{prop:ansatzIp}, we get that $\errp$ satisfies \eqref{eqn:hypborneII} with $\beta=7+\delta$. Moreover, by Corollary \ref{cor:hypener}, $\errp$ satisfies \eqref{eqn:hypenerII}. In order to use Proposition \ref{prop:propagII}, it remains to check that $\errp$ satisfies \eqref{eqn:hypteukII}. Using Lemma \ref{prop:teukansatz} and $\teuk_{+2}\Psip=0$, we get, in $\deux$, 
$$\teuk_{+\fraks}T^j\mcq_{+2}^{k_1}\Phi^{k_2}\errp=O(\ubar^{-8-j}).$$
{As $\teuk_{+2}=\teuk_{+2}^{(1,0)}$}, this proves that $\errp$ satisfies \eqref{eqn:hypteukII} with $\beta=7+\delta>1$, $c=1>1/4$ and $V=0$, thus by Proposition \ref{prop:propagII} we get in $\deux$
$$|T^j\mcq_{+2}^{k_1}\Phi^{k_2}\errp|\lesssim\ubar^{-7-\delta-j},$$
as stated.
\end{proof}
The rest of this section is devoted to obtaining $O(\ubar^{-7-\delta-j})$ pointwise decay for $e_3 T^j\mcq_{+2}^{k_1}\Phi^{k_2}\errp$ and $e_4\carterp^{k_1}\Phi^{k_2}T^j\errp$ in $\deux$. This will be used in Section \ref{section:blueshiftIII} as initial data on $\Gamma$ to integrate a $1+1$ wave equation that will eventually lead to the blow-up of $\psihatp$ on $\ch$. Unlike in region $\un$ where we already controled $\errm$ and could use TSI to get decay of $e_3\psip$, the proof in $\deux$ is done by commuting the Teukolsky equation with $e_3$ and applying the energy method of Section \ref{subsection:energyII}.

\begin{prop}\label{prop:decaynulldere3}
    Assume that $\Psip$ satisfies \eqref{eqn:hypp}. Then we have, in $\deux$, for $j\leq \min(N_j^--14,N_j^+-4)$ and $2k_1+k_2\leq \min(N_k^--13,N_k^+-4)$,
$$|T^j\mcq_{+2}^{k_1}\Phi^{k_2}e_3\errp|\lesssim\ubar^{-7-j-\delta}.$$
\end{prop}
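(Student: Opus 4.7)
The plan is to follow the strategy that produced the $e_3$ bound for $\psim$ in $\un$ in Theorem \ref{thm:psim}: commute $\teuk_{+2}$ with $e_3$, rewrite the equation for $e_3\errp$ as a modified Teukolsky equation $\teuk_{+2}^{(c,V)}(e_3\errp)=\text{RHS}$, verify that $c>1/4$ so that the effective blueshift estimate of Proposition \ref{prop:propagII} applies, and then check its three hypotheses on $\{r=\rb\}\cap\{\ubar\geq 1\}$ and in $\deux$.

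Using Proposition \ref{prop:comme3} with $s=+2$, the commutator is
\begin{align*}
    [\teuk_{+2},e_3]=4(r-M)e_3^2-4rTe_3+2e_3-3T,
\end{align*}
so rearranging $\teuk_{+2}(e_3\errp)=e_3\teuk_{+2}\errp+[\teuk_{+2},e_3]\errp$ and identifying via \eqref{eqn:recupc} gives
\begin{align*}
    \teuk_{+2}^{(1/2,-2)}(e_3\errp)=e_3\teuk_{+2}\errp-3T\errp.
\end{align*}
Since $T$, $\Phi$, $\carterp$ commute with both $\teuk_{+2}^{(c,V)}$ and with $e_3$, commuting $T^j\carterp^{k_1}\Phi^{k_2}$ through preserves the structure. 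The RHS is then controlled using $\teuk_{+2}\errp=-\teuk_{+2}(\ansatzp)$ together with Lemma \ref{prop:teukansatz} for the first term (giving $O(\ubar^{-8-j})$) and Proposition \ref{prop:ansatzpII} for the $T\errp$ term (giving $O(\ubar^{-8-\delta-j})$), yielding \eqref{eqn:hypteukII} with $\beta=7+\delta$.

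The remaining two hypotheses on $\{r=\rb\}\cap\{\ubar\geq 1\}$ would be verified from Proposition \ref{prop:ansatzIp} (which already controls up to $e_3^{\leq 3}$ derivatives of $\errp$) and Corollary \ref{cor:hypener} respectively, and Proposition \ref{prop:propagII} applied with $c=1/2$, $V=-2$, $\beta=7+\delta>1$ would yield the stated pointwise bound, modulo a derivative bookkeeping that fits the prescribed ranges. The main obstacle is purely algebraic: the commutator must produce a coefficient $c$ strictly greater than $1/4$, which is the threshold below which the bulk term of the effective blueshift estimate (Lemma \ref{lem:posbulkII}) loses positivity. For spin $+2$ this is comfortably satisfied with $c=1/2$, but a further commutation with $e_3$ would degrade $c$ below the threshold, which is presumably why the argument in this region handles only a single $e_3$ derivative (higher $e_3^k$ bounds in $\un$ instead went through the Teukolsky--Starobinsky identity).
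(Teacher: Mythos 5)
Your proposal is correct and follows essentially the same route as the paper's proof: commuting with $e_3$ via Proposition \ref{prop:comme3}, identifying $\teuk_{+2}^{(1/2,-2)}(e_3\errp)=e_3\teuk_{+2}\errp-3T\errp$ through \eqref{eqn:recupc}, verifying \eqref{eqn:hypborneII}, \eqref{eqn:hypenerII}, \eqref{eqn:hypteukII} via Proposition \ref{prop:ansatzIp}, Corollary \ref{cor:hypener}, and Lemma \ref{prop:teukansatz} with Proposition \ref{prop:ansatzpII}, then applying Proposition \ref{prop:propagII} with $c=1/2>1/4$, $V=-2$, $\beta=7+\delta$. Your closing observation (that a second $e_3$ commutation would give $c=1-2/s=0<1/4$ for $s=+2$, destroying the bulk positivity, which is why only one $e_3$ derivative is propagated in $\deux$) is also accurate, and your sign $+4rT$ in $\teuk_{+2}^{(1/2,-2)}$ is the correct one.
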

\begin{proof}
By Proposition \ref{prop:ansatzIp}, we get that $e_3\errp$ satisfies \eqref{eqn:hypborneII} with $\beta=7+\delta$. Moreover, by Corollary \ref{cor:hypener}, $e_3\errp$ satisfies \eqref{eqn:hypenerII}. In order to use Proposition \ref{prop:propagII}, it remains to check that $e_3\errp$ satisfies \eqref{eqn:hypteukII}. Using the commutator between $\teuk_{+2}$ and $e_3$ given by Proposition \ref{prop:comme3}, and Lemma \ref{prop:teukansatz} to write
$$T^j\carterp^{k_1}\Phi^{k_2}e_3\teuk_{+2}\left(\ansatzp\right)=O(\ubar^{-8-j}),$$
we get
\begin{align*}
    \Big(\teuk_{+2}-4[(r-M)e_3&+rT]-2\Big)[T^j\carterp^{k_1}\Phi^{k_2}e_3\errp]\\
    &=T^j\carterp^{k_1}\Phi^{k_2}\left(-3T\errp-e_3\teuk_{+2}\left(\ansatzp\right)\right)\\
    &=O(\ubar^{-8-j}),
\end{align*}
where we also used\footnote{Once again, the reader interested in the count of the loss of derivatives will notice that the bound for the RHS holds only for $j\leq N_j-15$, but also for $j\leq N_j-14$ after integration on the sphere, which is the only bound that we use in the energy estimates.} Proposition \ref{prop:ansatzpII}. Using \eqref{eqn:recupc} gives
\begin{align*}
    \teuk_{+2}^{(1/2,-2)}T^j\carterp^{k_1}\Phi^{k_2}e_3\errp=O(\ubar^{-8-j}).
\end{align*}
This proves that $e_3\errp$ satisfies \eqref{eqn:hypteukII} with $\beta=7+\delta>1$, $c=1/2>1/4$ and $V=-2$, thus by Proposition \ref{prop:propagII} we get in $\deux$
$$|T^j\mcq_{+2}^{k_1}\Phi^{k_2}e_3\errp|\lesssim\ubar^{-7-j-\delta},$$
as stated.
\end{proof}
\begin{cor}\label{prop:decaynullder}
    Assume that $\Psip$ satisfies \eqref{eqn:hypp}. Then we have, in $\deux$, for $j\leq \min(N_j^--15,N_j^+-5)$ and $2k_1+k_2\leq \min(N_k^--14,N_k^+-5)$,
$$|e_4 T^j\mcq_{+2}^{k_1}\Phi^{k_2}\errp|\lesssim\ubar^{-7-\delta-j}.$$
\end{cor}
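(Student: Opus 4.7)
\medskip

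\noindent\textbf{Proof plan for Corollary \ref{prop:decaynullder}.} The plan is to reduce the $e_4$ derivative to a combination of derivatives that are already controlled in $\deux$, via a purely algebraic identity between the frame vectors and the Killing vector fields. Writing $e_3$ and $e_4$ in Boyer-Lindquist coordinates,
\begin{align*}
    \mu e_3 &= \frac{1}{2}\left(-\frac{\Delta}{r^2+a^2}\partial_r + T + \frac{a}{r^2+a^2}\Phi\right),\\
    e_4 &= \frac{1}{2}\left(\frac{\Delta}{r^2+a^2}\partial_r + T + \frac{a}{r^2+a^2}\Phi\right),
\end{align*}
the $\partial_r$ terms cancel when adding, yielding the identity
\begin{align*}
    e_4 = T + \frac{a}{r^2+a^2}\Phi - \mu e_3,
\end{align*}
which is the core of the argument.

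I then apply $T^j\mcq_{+2}^{k_1}\Phi^{k_2}$ to this identity evaluated on $\errp$. Since $T$, $\Phi$, and $\mcq_{+2}$ involve only $\partial_t$, $\partial_\phi$, $\partial_\theta$ with coefficients depending on $r$ and $\theta$ (but not $\phi$ nor $t$), they all commute with $T$, with $\Phi$, with each other, and with any multiplication operator depending only on $r$ such as $\mu(r)$ and $a/(r^2+a^2)$. Hence
\begin{align*}
    T^j\mcq_{+2}^{k_1}\Phi^{k_2}e_4\errp = T^{j+1}\mcq_{+2}^{k_1}\Phi^{k_2}\errp + \frac{a}{r^2+a^2}T^j\mcq_{+2}^{k_1}\Phi^{k_2+1}\errp - \mu\, T^j\mcq_{+2}^{k_1}\Phi^{k_2}e_3\errp.
\end{align*}

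The three terms on the right are all controlled. The first, having one extra $T$ derivative, is bounded by $\ubar^{-8-j-\delta}$ via Proposition \ref{prop:ansatzpII}, which requires $j+1 \leq \min(N_j^--14, N_j^+-4)$. The second, having one extra $\Phi$ derivative, is bounded by $\ubar^{-7-j-\delta}$ via the same proposition, which requires $2k_1+(k_2+1) \leq \min(N_k^--13, N_k^+-4)$. The third is bounded by $\ubar^{-7-j-\delta}$ via Proposition \ref{prop:decaynulldere3}, using $|\mu|\lesssim 1$ in $\deux$. Summing these bounds gives the statement, and the extra derivative costs explain precisely the loss of one unit passing from the indices $\min(N_j^--14, N_j^+-4)$, $\min(N_k^--13, N_k^+-4)$ of Proposition \ref{prop:decaynulldere3} to the indices $\min(N_j^--15, N_j^+-5)$, $\min(N_k^--14, N_k^+-5)$ of the corollary.

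There is no real obstacle here, as everything is algebraic and all the required pointwise bounds have already been established. The only thing to verify carefully is the commutation of $\mcq_{+2}$, $T$, and $\Phi$ with the $r$-dependent coefficients $\mu(r)$ and $a/(r^2+a^2)$, and with each other, which follows directly from the coordinate expressions $\mcq_s = a^2\sin^2\theta\, T^2 - 2ias\cos\theta\, T + \drond'_{s+1}\drond_s$, noting that none of these operators involves $\partial_r$.
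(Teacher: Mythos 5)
Your proof is correct and takes essentially the same route as the paper, which disposes of the corollary in one line via the identical identity $e_4=-\mu e_3+O(1)T+O(1)\Phi$ combined with Propositions \ref{prop:ansatzpII} and \ref{prop:decaynulldere3}. Your explicit coefficients, commutation checks, and the accounting of the one-unit loss in $j$ and in $2k_1+k_2$ merely spell out what the paper leaves implicit.
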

\begin{proof}
    We have $e_4=-\mu e_3+O(1)T+O(1)\Phi$, which gives the stated bound by Propositions \ref{prop:ansatzpII} and \ref{prop:decaynulldere3}. 
\end{proof}
\section{Precise asymptotics in blueshift region $\mathbf{III}\cup\quatre$ near $\ch$}\label{section:blueshiftIII}
In all this section, we assume that $\psi_{\pm 2}$ satisfy the assumptions of Theorem \ref{thm:main}. Note that we use the \emph{degenerate} Teukolsky field $\psi_{+ 2}=\Delta^2\Psihatp$ here.
\subsection{Energy estimate and pointwise bounds for $\psip$ in $\trois$}\label{subsection:boundedener}
\begin{prop}\label{prop:enerborne}
We have, for $(w_1,\wbar_1)\in\trois$, and for $j\leq \min(N_j^--12,N_j^+-2)$ and $2k_1+k_2\leq \min(N_k^--11,N_k^+-2)$, 
\begin{align*}
        \iint_{\{\wbar=\wbar_1, w\leq w_1\}\cap(\deux\cup\trois)}\ener[T^j\mcq_{+2}^{k_1}\Phi^{k_2}\psip]\dee\nu\dee r\lesssim 1.
\end{align*}
\end{prop}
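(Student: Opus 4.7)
The plan is to apply the multiplier energy method of Proposition \ref{prop:enerdeuxx} on the region
$$R := \{r_- < r \leq \rb\} \cap \{w \leq w_1\} \cap \{\wbar \leq \wbar_1\} \subset \deux \cup \trois,$$
with the same multiplier $X(\psi) = f(r)\widehat{e_3}\psi + g(r)\partial_\ubar\psi$, applied to $\psi := T^j \mcq_{+2}^{k_1} \Phi^{k_2} \psip$. Because $T$, $\mcq_{+2}$, $\Phi$ all commute with $\teuk_{+2}$ by \eqref{eqn:commutateurs} and $\teuk_{+2}\psip = 0$, the source term on the RHS of \eqref{eqn:dudubarII} vanishes identically; note that here I do not aim for decay in $\wbar$, only a uniform bound, so the $O(\ubar^{-\beta})$ source analysis is irrelevant.

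The next step is to integrate the divergence identity \eqref{eqn:dudubarII} over $R$ via Stokes' theorem. The boundary of $R$ splits into the past piece $\{r = \rb\} \cap R$ and the two future spacelike pieces $\{\wbar = \wbar_1\} \cap R$ and $\{w = w_1\} \cap R$. The bulk term is non-negative throughout $\{r_- \leq r \leq \rb\}$ by Lemma \ref{lem:posbulkII}---this is precisely why the argument extends uniformly into $\trois$ and not only $\deux$. Since the multiplier $X$ is causal and future-directed in $\deux \cup \trois$, its flux through each of the spacelike hypersurfaces $\{\wbar = \wbar_1\}$ and $\{w = w_1\}$ has the correct sign: the flux through $\{\wbar = \wbar_1\}$ is coercive and controls $\ener[\psi]$ as in \eqref{eqn:estimebordwbar}, and by the symmetric role of $w$ and $\wbar$ an analogous Cauchy--Schwarz absorption shows that the flux through $\{w = w_1\}$ is non-negative. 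Dropping the $\{w = w_1\}$ flux and the bulk on the LHS, and converting $(-\mu)\dee u$ to $\dee r$ on $\{\wbar = \wbar_1\}$ (equivalent up to bounded factors), I obtain
$$\iint_{\{\wbar=\wbar_1, w \leq w_1\}\cap(\deux \cup \trois)} \ener[\psi] \dee\nu \dee r \lesssim \iint_{\{r = \rb\} \cap R} \ener[\psi] \dee\nu \dee u,$$
whose right-hand side is bounded uniformly by Corollary \ref{cor:symmetric}, giving the claim.

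The main subtlety I anticipate is carefully verifying the non-negativity of the boundary flux on $\{w = w_1\}$, since in our outgoing chart this hypersurface is neither $\{u = \text{const}\}$ nor $\{\ubar = \text{const}\}$. One needs to write the flux explicitly from $\widehat{\mathbf{F}}_u$ and $\widehat{\mathbf{F}}_\ubar$, using $dw = (1 - \mu/2)du - (\mu/2)d\ubar$ restricted to $\{w = w_1\}$, and then absorb the $\mathcal{U}$-cross-terms as in \eqref{eqn:estimebordwbar}; this is purely mechanical since $f$ and $g$ play symmetric roles in $X$. The derivative count $j \leq \min(N_j^- - 12, N_j^+ - 2)$, $2k_1 + k_2 \leq \min(N_k^- - 11, N_k^+ - 2)$ is dictated entirely by the derivative loss already present in Corollary \ref{cor:symmetric}, so no further loss is incurred in this step.
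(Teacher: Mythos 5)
Your proposal is correct and follows essentially the same route as the paper: the paper likewise integrates the divergence identity \eqref{eqn:dudubarII} (with vanishing source, since $T$, $\mcq_{+2}$, $\Phi$ commute with $\teuk_{+2}$ and $\teuk_{+2}\psip=0$) over the triangle $\{w\leq w_1\}\cap\{\wbar\leq\wbar_1\}\cap(\deux\cup\trois)$, invokes Lemma \ref{lem:posbulkII} for bulk positivity, and computes the $\{w=w_1\}$ flux exactly as you anticipated, via $\mathcal{T}_w=(1-\mu/2)\widehat{\mathbf{F}}_u-(\mu/2)\widehat{\mathbf{F}}_\ubar\sim(-\mu)\ener[\psi]$ with the $\mcu$-cross-terms absorbed by weighted Cauchy--Schwarz, so your ``main subtlety'' is precisely the computation the paper carries out. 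The one substantive deviation is how you bound the data flux on $\{r=\rb\}$: you cite Corollary \ref{cor:symmetric}, whereas the paper uses Corollary \ref{cor:hypener} together with the trivial bound $\ener\bigl[\ansatzp\bigr]\lesssim\ubar^{-14}$, giving $\int\wbar^{-14}\dee\wbar\lesssim 1$. Both are valid under the hypotheses of Theorem \ref{thm:main}, and the derivative ranges match, but the paper's choice is deliberate: Corollary \ref{cor:symmetric} rests on the energy bound \eqref{eqn:vraienerpourapres} on $\{r=\rb\}\cap\{u\geq 1\}$, which requires the assumption \eqref{eqn:hypp'} on the left event horizon $\mch'_+$ --- an assumption the paper only needs for region $\quatre$, and which Remark \ref{rem:apresthm} explicitly notes is \emph{not} needed for the asymptotics in $\deux\cup\trois$. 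Moreover, for $(w_1,\wbar_1)\in\trois$ the relevant flux lives on $\{r=\rb\}\cap\{\wbar_{\rb}(w_1)\leq\wbar\leq\wbar_1\}$, i.e.\ entirely in the $\{\ubar\geq 1\}$ portion of $\{r=\rb\}$, so Corollary \ref{cor:hypener} suffices; your shortcut through Corollary \ref{cor:symmetric} silently imports \eqref{eqn:hypp'} and thereby weakens the scope of the proposition, though it does not affect its correctness as stated.
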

\begin{proof}
As $T$ and $\mcq_{+2}$ commute with the Teukolsky equation, it suffices to treat the case $j=k=0$. We implement the same energy method as in region $\deux$ (see Section \ref{subsection:energyII}), noticing that the computations of the energy method work also in $\trois$, and we integrate \eqref{eqn:dudubarII} with\footnote{In this case, $\teuk_{+2}^{(c,V)}\psi=\teuk_{+2}\Psip=0$, so the RHS in \eqref{eqn:dudubarII} is exactly zero.} $\psi=\psip$ on $\{w\leq w_1,\:\wbar\leq\wbar_1\}\cap(\deux\cup\trois)$, see Figure \ref{fig:III}.
\begin{figure}[h!]
    \centering
    \includegraphics[scale=0.5]{wbarIII.pdf}
    \caption{Region $\{w\leq w_1,\:\wbar\leq\wbar_1\}\cap(\deux\cup\trois)$ in grey.}
    \label{fig:III}
\end{figure}
We get
\begin{align}
& \iint_{\{w\leq w_1,\:\wbar=\wbar_1\}\cap(\deux\cup\trois)} \mathcal{T}_{\wbar}[\Psip] \dee\nu \mathrm{~d} u+\iint_{\{w= w_1,\:\wbar\leq\wbar_1\}\cap(\deux\cup\trois)} \mathcal{T}_{w}[\Psip] \dee\nu \mathrm{~d} u\nonumber\\
&+\iiint_{\{w\leq w_1,\:\wbar\leq\wbar_1\}\cap(\deux\cup\trois)} \widehat{\mathbf{B}}[\Psip] \dee\nu\:\dee u\dee\ubar= \iint_{\{r=\rb\}\cap\{\wbar_{\rb}(w_1)\leq\wbar\leq\wbar_1\}}\mathcal{T}_{r}[\Psip] \dee\nu \mathrm{~d} u,\label{eqn:conserenerIII}
\end{align}
where the boundary term on $\{w=w_1\}$ is
\begin{align*}
     {\mathcal{T}}_{w}[\Psip]&:=(1-\mu/2)\widehat{\mathbf{F}}_u[\Psip]-(\mu/2)\widehat{\mathbf{F}}_\ubar[\Psip]\\
     &=2(r^2+a^2)g(r)|\partial_\ubar\psi|^2-\dfrac{1}{2}\mu f(r)(|\partial_\theta\psi|^2+|\mcu\psi|^2)-a\sin\theta\mu \mathfrak{R}(\overline{X(\psi)}\mcu\psi)\\
     &\quad\quad-\frac{\mu}{2}\Bigg[2(r^2+a^2)f(r)|\ethat\psi|^2+2(r^2+a^2)g(r)|\partial_\ubar\psi|^2-\dfrac{1}{2}\mu (f(r)+g(r))(|\partial_\theta\psi|^2+|\mcu\psi|^2)\Bigg]\\
     &\sim (-\mu)\ener[\psi],
\end{align*}
where we absorbed the term $a\sin\theta\mu \mathfrak{R}(X(\psi)\mcu\psi)$ using 
$$|a\sin\theta \mu g(r) \mathfrak{R}(\overline{\partial_\ubar\psi}\mcu\psi)|\leq a^2g(r)|\partial_\ubar\psi|^2+\frac{1}{4}\mu^2g(r)|\mcu\psi|^2$$
and
$$|a\sin\theta\mu f(r) \mathfrak{R}(\overline{\ethat\psi}\mcu\psi)|\leq -\frac{3}{4}\mu a^2f(r)|\ethat\psi|^2-\frac{1}{3}\mu f(r)|\mcu\psi|^2.$$
Using \eqref{eqn:estimebordwbar} and \eqref{eqn:estimbordrb}, as well as Lemma \ref{lem:posbulkII}, this yields
\begin{align}
        &\iint_{\{w\leq w_1,\:\wbar=\wbar_1\}\cap(\deux\cup\trois)}{\ener}[\psip](-\mu)\dee\nu\dee u+\iint_{\{\wbar\leq \wbar_1,\:w=w_1\}\cap(\deux\cup\trois)}\ener[\psip](-\mu)\dee\nu\dee u\nonumber\\
        &+\iiint_{\{w\leq w_1,\:\wbar\leq\wbar_1\}\cap(\deux\cup\trois)}\ener[\psip](-\mu)\dee\nu\dee u\dee\ubar\lesssim \iint_{\{r=\rb\}\cap\{\wbar_{\rb}(w_1)\leq\wbar\leq\wbar_1\}}\ener[\psip]\dee\nu\dee\ubar.\label{eqn:similarly}
\end{align}

Using Corollary \ref{cor:hypener}, and the trivial bound
$$\ener\left[\ansatzp\right]\lesssim\ubar^{-14},$$
we get 
$$\iint_{\{r=\rb\}\cap\{\wbar_{\rb}(w_1)\leq\wbar\leq\wbar_1\}}\ener[\psip]\dee\nu\dee\ubar\lesssim \int_{\{\wbar_{\rb}(w_1)\leq\wbar\leq\wbar_1\}}\wbar^{-14}\dee\wbar\lesssim 1.$$
Since we have by changing variable :
\begin{align}\label{eqn:jrm}
    \iint_{\{w\leq w_1,\:\wbar=\wbar_1\}\cap(\deux\cup\trois)}{\ener}[\psip](-\mu)\dee\nu\dee u\sim\iint_{\{w\leq w_1,\:\wbar=\wbar_1\}\cap(\deux\cup\trois)}{\ener}[\psip]\dee\nu\dee r,
\end{align}
the conclusion of the proof of Proposition \ref{prop:enerborne} follows from \eqref{eqn:similarly} and \eqref{eqn:jrm}.
\end{proof}
\begin{prop}\label{prop:Linf}
We have in $\trois$, for $j\leq \min(N_j^--14,N_j^+-4)$ and $2k_1+k_2\leq \min(N_k^--13,N_k^+-4)$, 
$$|T^j\mcq_{+2}^{k_1}\Phi^{k_2}\psip|\lesssim 1.$$
\end{prop}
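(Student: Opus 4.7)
The plan is to promote the uniform energy bound of Proposition \ref{prop:enerborne} to a pointwise bound by combining the spin-weighted Sobolev inequality on spheres with integration along the spacelike slices $\{\wbar=\mathrm{const}\}$, starting from the boundary $\{r=\rb\}\cap\deux$ (where $\psip$ is already controlled by Propositions \ref{prop:ansatzpII}, \ref{prop:decrTII}) and moving inward into $\trois$. This mirrors the reduction from energy bound to pointwise bound already used in Propositions \ref{prop:propagI} and \ref{prop:propagII}, with the bulk energy input now supplied by Proposition \ref{prop:enerborne}.

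Setting $\psi := T^j\mcq_{+2}^{k_1}\Phi^{k_2}\psip$, the Sobolev embedding \eqref{eqn:sobolev} on $S(u,\ubar)$ gives
$$|\psi|^2 \lesssim \|T^{\leq 2}\psi\|^2_{L^2(S(u,\ubar))} + \|\mcq_{+2}\psi\|^2_{L^2(S(u,\ubar))},$$
so it suffices to bound $\varphi := T^{j'}\mcq_{+2}^{k_1'}\Phi^{k_2}\psip$ in $L^2(S(u,\ubar))$ throughout $\trois$ for all $j'\leq j+2$ and $k_1'\leq k_1+1$. Fixing $(u,\ubar)\in\trois$ and setting $\wbar_1 := \wbar(u,\ubar)$, $w_1 := w(u,\ubar)$, the slice $\{\wbar=\wbar_1\}$ meets $\{r=\rb\}$ at the sphere with $\ubar = \wbar_1+\rb-r_+\geq\ubar_{\rb,\gamma}$, which lies in $\deux$. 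Applying Lemma \ref{lem:integ} to $\varphi$ with $r_1=\rb$ and $r_2=r$ gives
$$\|\varphi\|_{L^2(S(u,\ubar))} \lesssim \|\varphi\|_{L^2(S|_{r=\rb,\wbar=\wbar_1})} + \left(\int_r^{\rb}\int_{S(\wbar_1,r')}\ener[\varphi]\dee\nu\,\dee r'\right)^{1/2}.$$

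For the boundary term I would decompose $\psip = \mathrm{ansatz} + \errp$ as in \eqref{eqn:ansatzpI}: the ansatz satisfies $|T^{j'}\mcq_{+2}^{k_1'}\Phi^{k_2}\mathrm{ansatz}|\lesssim \ubar^{-7-j'}$ by direct inspection, and Proposition \ref{prop:decrTII} applied to $\errp$ (whose hypotheses were already verified inside the proof of Proposition \ref{prop:ansatzpII}) gives $\|T^{j'}\mcq_{+2}^{k_1'}\Phi^{k_2}\errp\|_{L^2(S)}\lesssim \ubar^{-7-\delta-j'}$ on $\deux$; together these yield $\|\varphi\|_{L^2(S|_{r=\rb})}\lesssim 1$. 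For the bulk integral, enlarging the $r$-domain to $\{\wbar=\wbar_1,w\leq w_1\}\cap(\deux\cup\trois)$ reduces it to a direct consequence of Proposition \ref{prop:enerborne}, giving a bound $\lesssim 1$. Combining these two pieces with the Sobolev step yields the pointwise estimate.

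The main subtlety I anticipate is bookkeeping the derivative loss: the Sobolev step requires controlling the enlarged indices $j'\leq j+2$ and $k_1'\leq k_1+1$, so the hypotheses $j'\leq\min(N_j^--12,N_j^+-2)$ and $2k_1'+k_2\leq\min(N_k^--11,N_k^+-2)$ needed to invoke Propositions \ref{prop:enerborne} and \ref{prop:decrTII} translate precisely to the stated ranges $j\leq\min(N_j^--14,N_j^+-4)$ and $2k_1+k_2\leq\min(N_k^--13,N_k^+-4)$. Beyond this accounting, I do not expect any substantive obstacle: this proposition is essentially the Sobolev-trace consequence of the bulk energy bound already established in Proposition \ref{prop:enerborne}, patched against the boundary control from region $\deux$.
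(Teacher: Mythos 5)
Your proposal is correct and takes essentially the same route as the paper's proof: the paper likewise applies Lemma \ref{lem:integ} along the constant-$\wbar$ slice, bounds the integrated bulk term by the energy estimate of Proposition \ref{prop:enerborne} to get $\|T^{j}\mcq_{+2}^{k_1}\Phi^{k_2}\psip\|_{L^2(S(u,\ubar))}\lesssim 1$ in $\trois$, and then concludes with the Sobolev embedding \eqref{eqn:sobolev}, with precisely the derivative bookkeeping ($j+2$ in $T$, $k_1+1$ in $\mcq_{+2}$) you describe. Your explicit treatment of the boundary term at $\{r=\rb\}$, which the paper leaves implicit by analogy with Proposition \ref{prop:decrT}, is a harmless elaboration rather than a different method.
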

\begin{proof}
As in the proof of Proposition \ref{prop:decrT}, we use Lemma \ref{lem:integ} and use the energy decay given by Proposition \ref{prop:enerborne} to bound the integrated term, which yields
$$\|T^j\mcq_{+2}^{k_1}\Phi^{k_2}\psip\|_{L^2(S(u,\ubar))}\lesssim 1,\text{   in   }\trois.$$
Using the Sobolev embedding \eqref{eqn:sobolev}, we infer 
$$|T^j\carterp^{k_1}\Phi^{k_2}\psip|^2\lesssim \int_{S(u,\ubar)}|T^{\leq j+2}\carterp^{k_1}\Phi^{k_2}\psip|^2\dee\nu+\int_{S(u,\ubar)}|T^j\carterp^{k+1}\psip|^2\dee\nu\lesssim 1,$$
as stated.
\end{proof}
\subsection{The coupling of $\psihat_s$ to $\psi_s$ through a $1+1$ wave equation}
The goal of this section is to reformulate the Teukolsky equation $\teukhat_{+2}\psihatp=0$ as a $1+1$ wave equation for $\psihatp$ with a right-hand-side that we can control, so that we can solve explicitely for $\psihatp$ by integrating twice the equation.
\begin{prop}\label{prop:1+1}
    We have, for $j,k_1,k_2\geq 0$ and $s=\pm 2$,
\begin{align}\label{eqn:1+1}
    \ethat((r^2+a^2)\Delta^s e_4T^j\mcq_s^{k_1}\Phi^{k_2}\psihat_s)=\frac{1}{4}(-\mu)\left(\mcq_{s}+2aT\Phi-2(1+2s)rT\right)[T^j\mcq_s^{k_1}\Phi^{k_2}\psi_s].
\end{align}
\end{prop}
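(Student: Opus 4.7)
The plan is to reduce to the base case and then manipulate the Teukolsky equation algebraically. Since $T$ and $\Phi$ are Killing vector fields and $\mcq_s$ commutes with $\teukhat_s$ (see \eqref{eqn:commutateurs}), and since all three operators involve only $t, \theta, \phi$-derivatives with coefficients independent of $r$, they commute with $\Delta^s$ and with $e_3, e_4$. Therefore $T^j\mcq_s^{k_1}\Phi^{k_2}$ passes through every operator on both sides, and it suffices to establish the case $j=k_1=k_2=0$:
\begin{align*}
\ethat((r^2+a^2)\Delta^s e_4\psihat_s) = \tfrac{1}{4}(-\mu)\bigl(\mcq_s+2aT\Phi-2(1+2s)rT\bigr)\psi_s.
\end{align*}

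Writing $\ethat=-\mu e_3$ and applying Leibniz, with $e_3(r)=-\tfrac{1}{2}$ giving $e_3(r^2+a^2)=-r$ and $e_3(\Delta)=-(r-M)$, I compute
\begin{align*}
\ethat((r^2+a^2)\Delta^s e_4\psihat_s)=\Delta^s\bigl\{(r\mu+s(r-M))e_4\psihat_s-\mu(r^2+a^2)e_3e_4\psihat_s\bigr\}.
\end{align*}
To eliminate the $e_3e_4$ factor I invoke $\teukhat_s\psihat_s=0$ in the form \eqref{teukhat}, which expresses $4(r^2+a^2)e_3e_4\psihat_s$ as a sum of angular operators, first-order $e_3,e_4,T,\Phi,\mcu$ terms, and zero-order potentials. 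The first-order $e_4$ contribution combines cleanly with the explicit $(r\mu+s(r-M))e_4\psihat_s$ prefactor.

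The algebraic heart of the proof is the pair of identities already noted in the excerpt: first, the quadratic completion
\begin{align*}
\mcu^2+2is\tfrac{a^2\sin\theta\cos\theta}{r^2+a^2}\mcu-s^2\tfrac{a^4\sin^2\theta\cos^2\theta}{(r^2+a^2)^2}=\widetilde{\mcu}^2,\quad \widetilde{\mcu}:=\tfrac{1}{\sin\theta}\Phi+a\sin\theta T+is\cot\theta,
\end{align*}
used in the proof of Proposition 2.10, and second, the direct expansion of $\widetilde{\mcu}^2$ which, combined with $\tfrac{1}{\sin\theta}\partial_\theta(\sin\theta\partial_\theta)$ and the spin-weighted Laplacian formulas, yields
\begin{align*}
\widetilde{\mcu}^2+\tfrac{1}{\sin\theta}\partial_\theta(\sin\theta\partial_\theta)=\drond'_{s+1}\drond_s+a^2\sin^2\theta T^2+2aT\Phi+2ias\cos\theta T+(\text{const}),
\end{align*}
so that the angular and $T^2$ terms reassemble into $\mcq_s+2aT\Phi$ plus lower-order $T$ corrections. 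Finally, the residual $e_3,e_4$ combination is handled by the identity $e_4+\mu e_3=T+\tfrac{a}{r^2+a^2}\Phi$ (a direct consequence of the definitions in Section \ref{section:geometry}), which both converts the remaining $(e_4,\mu e_3)$ pair into a $T$-multiple and cancels the stray $\tfrac{2ar}{r^2+a^2}\Phi$ terms. Collecting everything and using $\Delta^s T^j\mcq_s^{k_1}\Phi^{k_2}\psihat_s = T^j\mcq_s^{k_1}\Phi^{k_2}\psi_s$ produces the claimed identity.

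The proof is conceptually straightforward, but the main obstacle is purely bookkeeping: one must keep track of many small cross-terms (the $\mcu$, $\cos\theta\, T$, $\sin\theta\cos\theta$, and $\tfrac{ar}{r^2+a^2}\Phi$ contributions) and verify their exact cancellation against the corresponding pieces produced by $\widetilde{\mcu}^2$ and by $e_4+\mu e_3$. No new analytic ideas are required — every step is an elementary identity — but the cancellations are tight, so the calculation must be organized carefully (for instance, keeping the $e_4$ terms together until they can be absorbed into the $e_4+\mu e_3$ combination).
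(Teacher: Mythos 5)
Your proposal is correct and takes essentially the same route as the paper's proof: the same Leibniz computation giving $\ethat((r^2+a^2)\Delta^s e_4\psihat_s)=\Delta^s\{(r\mu+s(r-M))e_4-\mu(r^2+a^2)e_3e_4\}\psihat_s$, elimination of $e_3e_4$ via $\teukhat_s\psihat_s=0$ in the form \eqref{teukhat}, reassembly of the angular and $T$ terms into $\mcq_s+2aT\Phi$ through the $\widetilde{\mcu}$ completion and the Carter operator, the splitting $4re_4=2r(e_4-\mu e_3)+2rT+\tfrac{2ar}{r^2+a^2}\Phi$ equivalent to your identity $e_4+\mu e_3=T+\tfrac{a}{r^2+a^2}\Phi$, and commutation with $T^j\mcq_s^{k_1}\Phi^{k_2}$ at the end. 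The one step you left as ``(const)'' does close as claimed: rewriting $\widetilde{\mcu}^2+\tfrac{1}{\sin\theta}\partial_\theta(\sin\theta\partial_\theta)-4ias\cos\theta T$ as $\mcq_s+2aT\Phi$ produces exactly $-s$, cancelling the $+s$ in \eqref{teukhat}, so no zero-order term survives, consistent with \eqref{eqn:1+1}.
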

\begin{rem}
    Notice that for $s=+2$ we control the right-hand-side of \eqref{eqn:1+1} thanks to the $L^\infty$ bounds of Proposition \ref{prop:Linf}. The blow-up of $\psihatp$ on $\ch$ will come from the integration of \eqref{eqn:1+1}, as the inverse of the factor $\Delta^\fraks$ blows up exponentially on $\ch$.
\end{rem}
\begin{proof}[Proof of Proposition \ref{prop:1+1}]
    We have 
    \begin{align*}
        \ethat((r^2+a^2)\Delta^se_4\psihat_s)&=r\mu\Delta^s e_4\psihat_s+s\Delta^s(r-M)e_4\psihat_s+(r^2+a^2)\Delta^s \ethat e_4\psihat_s\\
        &=\frac{1}{4}\mu\Delta^s(-4(r^2+a^2)e_3e_4+4s(r-M)\mu^{-1}e_4+4re_4)[\psihat_s].
    \end{align*}
Using $e_4-\mu e_3=\mu\partial_r=2e_4-T-a/(r^2+a^2)\Phi$ we get
    \begin{align*}
        \ethat((r^2+a^2)&\Delta^s e_4\psihat_s)=\\
        &\frac{1}{4}\mu\Delta^s\Bigg(-4(r^2+a^2)e_3e_4+4s(r-M)\mu^{-1}e_4+2r(e_4-\mu e_3)+2r T+\frac{2ra}{r^2+a^2}\Phi\Bigg)[\psihat_s].
    \end{align*}
Next, using the expression \eqref{teukhat} we get
    \begin{align*}
    \teukhat_{s}=&-4(r^2+a^2)e_3e_4+ \widetilde{\mathcal{U}}^2+\frac{1}{\sin\theta}\partial_\theta(\sin\theta\partial_\theta)-4ias\cos\theta T\\
    &+2 r (e_4-\mu e_3)+4s[(r-M)\mu^{-1}e_4- rT]+\frac{2ar }{r^2+a^2}\Phi+ s,
\end{align*}
where 
$$\widetilde{\mathcal{U}}^2=\frac{1}{\sin^2\theta}\Phi^2+a^2\sin^2 T^2+2a T\Phi+2ais\cos\theta T+\frac{2is\cos\theta}{\sin^2\theta}\Phi-s^2\cot^2\theta.$$
We infer, using the definition of the Carter operator \eqref{eqn:carter},
    \begin{align*}
    \teukhat_{s}=&-4(r^2+a^2)e_3e_4+2 r (e_4-\mu e_3)+4s(r-M)\mu^{-1}e_4+\frac{2ar }{r^2+a^2}\Phi\\
    &\quad\quad+\drond'\drond+a^2\sin^2 T^2+2aT\Phi-2ais\cos\theta T-4srT\\
    &=-4(r^2+a^2)e_3e_4+2 r (e_4-\mu e_3)+4s(r-M)\mu^{-1}e_4+\frac{2ar }{r^2+a^2}\Phi\\
    &\quad\quad+\mcq_{s}+2aT\Phi-4srT.
\end{align*}
Thus combining this with the Teukolsky equation $\teukhat_{s}\psihat_s=0$ gives \eqref{eqn:1+1} with $j=k_1=k_2=0$, where we use the fact that $\Delta^s$ commutes with $\mcq_{s}$, $T$ and $\Phi$. We then extend this to general non zero $j,k_1,k_2$ by commuting with $T^j\mcq_s^{k_1}\Phi^{k_2}$.
\end{proof}

\subsection{Precise asymptotics of $\psihatp$ in $\trois$ and blow-up at $\trois\cap\ch$}
In this section we will use the crucial fact that $2r^*\geq\ubar^\gamma$ in $\trois$, thus by \eqref{eqn:rstar},
$$-\Delta\sim\exp\left(-2|\kappa_-|r^*\right)\leq\exp(-|\kappa_-|\ubar^\gamma)\quad\text{ in }\trois.$$
\begin{thm}\label{thm:psip}
    We have in $\trois$,
    \begin{align*}
    \psihatp(u,\ubar,\theta,\phi_{-})=\frac{\Delta^{-2}(u,\ubar)}{\ubar^7}\sum_{|m|\leq2}A_m(r_-)e^{2imr_{mod}(u,\ubar)}Q_{m,\sfrak}Y_{m,\sfrak}^{+\sfrak}(\cos\theta)e^{im\phi_{-}}
    +\mathrm{Err}[\psihatp],
\end{align*}
where for $j\leq \min(N_j^--15,N_j^+-5)$ and $2k_1+k_2\leq \min(N_k^--14,N_k^+-5)$,
$$|T^j\carterp^{k_1}\Phi^{k_2}\mathrm{Err}[\psihatp]|\lesssim\Delta^{-2}\ubar^{-7-j-\delta}\text{   in   }\trois.$$
\end{thm}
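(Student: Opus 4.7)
The starting point is the $1+1$ wave equation of Proposition \ref{prop:1+1} for $s=+2$, commuted with $T^j\carterp^{k_1}\Phi^{k_2}$ (which commute with $\teukhat_{+2}$ by \eqref{eqn:commutateurs}):
\begin{align*}
\ethat\bigl((r^2+a^2)\Delta^2 e_4\, T^j\carterp^{k_1}\Phi^{k_2}\psihatp\bigr)=\tfrac14(-\mu)\bigl(\carterp+2aT\Phi-10rT\bigr)T^j\carterp^{k_1}\Phi^{k_2}\psip.
\end{align*}
Combining the $L^\infty$ bounds of Proposition \ref{prop:Linf} with the defining inequality $2r^\ast\geq\ubar^\gamma$ in $\trois$ and the asymptotics \eqref{eqn:rstar}, the right-hand side is bounded by $O(-\Delta)=O(\exp(-|\kappa_-|\ubar^\gamma))$, hence smaller than any polynomial power of $\ubar^{-1}$.

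The plan is to integrate this equation twice, first along $\ethat$ and then along $e_4=\partial_\ubar$, using $\Gamma=\{2r^\ast=\ubar^\gamma\}$ as the initial hypersurface. In the outgoing double null-like coordinates \eqref{eqn:coordDNLout}, $\ethat=\partial_u-\tfrac{a}{r^2+a^2}\Phi$, so integration along $\ethat$ is integration in $u$ modulo a $\Phi$-correction which commutes with the equation. For $p\in\trois$ the $\ethat$-integral curve through $p$ reaches $\Gamma$ after a uniformly bounded $u$-displacement since $r\in[r_-,\rb]$ throughout, so integrating the exponentially small source gives
\begin{align*}
(r^2+a^2)\Delta^2 e_4\,T^j\carterp^{k_1}\Phi^{k_2}\psihatp\big|_p=\bigl[(r^2+a^2)\Delta^2 e_4\,T^j\carterp^{k_1}\Phi^{k_2}\psihatp\bigr]_{\Gamma}+O(\exp(-|\kappa_-|\ubar^\gamma)).
\end{align*}
The boundary value on $\Gamma$ is evaluated using $e_4(\Delta^{-2})=-2(r-M)\Delta^{-2}/(r^2+a^2)$, which allows rewriting $(r^2+a^2)\Delta^2 e_4\psihatp=(r^2+a^2)e_4\psip-2(r-M)\psip$, and then substituting $\psip=\ansatzp+\errp$ from Proposition \ref{prop:ansatzpII} with the derivative bounds of Proposition \ref{prop:decaynulldere3} and Corollary \ref{prop:decaynullder}.

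A second integration in $\ubar$ from $\Gamma$ then recovers $\psihatp$: the leading part of the integrand, obtained by explicit differentiation of the announced ansatz using $\partial_\ubar r=\mu/2$ and $\partial_\ubar r_{mod}=a/(2(r^2+a^2))$, admits an explicit antiderivative which, combined with the initial value $\psihatp|_\Gamma$ given by Proposition \ref{prop:ansatzpII}, reproduces the expression $\Delta^{-2}\ubar^{-7}\sum_{|m|\leq 2} A_m(r_-)Q_{m,2}e^{2imr_{mod}}Y_{m,2}^{+2}(\cos\theta)e^{im\phi_-}$. A key observation is that throughout $\trois$ one has $r-r_-=O(-\Delta)=O(\exp(-|\kappa_-|\ubar^\gamma))$, so that $A_m(r)$ may be replaced by $A_m(r_-)$ with an error absorbed into the $O(\Delta^{-2}\ubar^{-7-\delta})$ remainder, and the accumulated contributions from $\errp$, $e_4\errp$ and the source integration all fit within the same remainder.

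The principal technical obstacle is the careful matching on $\Gamma$: one must verify that the boundary value of $(r^2+a^2)\Delta^2 e_4\psihatp$ agrees, modulo acceptable errors, with the $\partial_\ubar$-derivative of the target ansatz multiplied by $(r^2+a^2)\Delta^2$, so that the two antiderivatives coincide once the initial data $\psihatp|_\Gamma$ is imposed. This rests on the algebraic identity $A_m^{(4)}(r)=24$ already used in the proof of Proposition \ref{prop:ansatzIp}, together with the fact that the $\ubar$-derivative of the oscillatory factor $e^{2imr_{mod}}$ produces the prefactor $iam/(r^2+a^2)$ which combines with the term $-2(r-M)\psip$ to yield precisely the coefficient $A_m(r_-)$ at $r=r_-$. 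The commuted derivative estimates $T^j\carterp^{k_1}\Phi^{k_2}$ are handled by the same scheme, at the cost of a bounded number of derivative losses relative to the estimates already established in $\deux$.
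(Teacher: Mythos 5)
Your plan follows the paper's route step for step: the same $1+1$ equation from Proposition \ref{prop:1+1} with the correct coefficient $-10rT$ for $s=+2$, the same exponential smallness of the source in $\trois$, the same first integration along $\ethat$ from $\Gamma$ with boundary data rewritten via $(r^2+a^2)\Delta^2 e_4\psihatp=(r^2+a^2)e_4\psip-2(r-M)\psip$ and Propositions \ref{prop:ansatzpII}, \ref{prop:decaynulldere3} and Corollary \ref{prop:decaynullder}, the same second integration in $\ubar$ using the exact-derivative identity $\partial_\ubar\bigl(\Delta^{-2}e^{2imr_{mod}}\bigr)=\frac{ima-2(r-M)}{r^2+a^2}\Delta^{-2}e^{2imr_{mod}}$, and the same replacement of $A_m(r)$ by $A_m(r_-)$ up to exponentially small errors. (One misattribution: the identity $A_m^{(4)}(r)=24$ plays no role here — it belongs to the TSI argument in region $\un$; the algebra you actually need is the exact-derivative identity just quoted, which you do state.) A smaller inaccuracy: your claim that the $\ethat$-integral curve reaches $\Gamma$ after a \emph{uniformly bounded} $u$-displacement is false. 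Since $r^*\to+\infty$ at $\ch$, having $r\in[r_-,\rb]$ does not bound $u-u_\Gamma(\ubar)=u+\ubar-\ubar^\gamma$; the paper only uses $u\lesssim 1$ in $\trois$ to get $u-u_\Gamma(\ubar)\lesssim\ubar$, and the conclusion survives because the factor $e^{-|\kappa_-|\ubar^\gamma}$ beats this linear growth. As written, though, your justification of the first integration step is wrong.

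The genuine gap is your closing assertion that ``the accumulated contributions from $\errp$, $e_4\errp$ and the source integration all fit within the same remainder.'' This is precisely the delicate point of the proof, and it is not automatic. Both integrations generate boundary contributions at $\Gamma$ of size $\Delta^{-2}(r_\Gamma(u))\,\ubar_\Gamma(u)^{-7-j-\delta}$, weighted at the foot point $(u,\ubar_\Gamma(u))$ rather than at $(u,\ubar)$. While $\Delta^{-2}(r_\Gamma(u))\leq\Delta^{-2}(u,\ubar)$, one has $\ubar_\Gamma(u)\leq\ubar$, and deep in $\trois$ (fixed very negative $u$, large $\ubar$) in fact $\ubar_\Gamma(u)\sim -u\ll\ubar$, so $\ubar_\Gamma(u)^{-7-j-\delta}\gg\ubar^{-7-j-\delta}$: the claimed bound $|T^j\carterp^{k_1}\Phi^{k_2}\mathrm{Err}[\psihatp]|\lesssim\Delta^{-2}(u,\ubar)\ubar^{-7-j-\delta}$ does not follow by monotonicity. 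The paper closes this by introducing an auxiliary exponent $\gamma'\in(\gamma,1)$ and splitting $\trois=\mathbf{A}\cup\mathbf{B}$ with $\mathbf{A}=\{2r^*\geq\ubar^{\gamma'}\}\cap\trois$, $\mathbf{B}=\{2r^*\leq\ubar^{\gamma'}\}\cap\trois$: in $\mathbf{A}$ the exponential gain $\Delta^{-2}(u,\ubar)\gtrsim\exp(2|\kappa_-|\ubar^{\gamma'})$ against $\Delta^{-2}(r_\Gamma(u))\lesssim\exp(2|\kappa_-|\ubar^{\gamma})$ absorbs any polynomial loss in $\ubar$, while in $\mathbf{B}$ one shows $-u/\ubar\geq 1-2^{\gamma'-1}>0$ (restricting to $\ubar\geq 2$), whence $\ubar_\Gamma(u)\sim -u\gtrsim\ubar$ and there is no polynomial loss at all. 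Without this (or an equivalent) uniformity-in-$u$ argument, your proof establishes the asymptotics only with an error weighted by $\ubar_\Gamma(u)^{-7-j-\delta}$, which is strictly weaker than the theorem.
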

\begin{proof}
The proof is basically done by integrating the $1+1$ wave equation \eqref{eqn:1+1}. A bit of work is necessary at the end of the proof to get rid of the dependance in $u$ that comes from the boundary terms on $\Gamma$, i.e. to prove that the upper bound for $\mathrm{Err}[\psihatp]$ is uniform in $u$. Recall that we have shown in Proposition \ref{prop:ansatzpII}:
$$\psip=\frac{1}{\ubar^{7}}\sum_{|m|\leq2}A_m(r)Q_{m,\sfrak}Y_{m,\sfrak}^{+\sfrak}(\cos\theta)e^{im\phi_{+}}+\errp,\quad\text{   on   }\Gamma$$
where $$|e_4^{\leq 1}T^j\carterp^{k_1}\Phi^{k_2}\errp|\lesssim\ubar^{-7-j-\delta}\quad\text{  on  }\Gamma\subseteq\deux,$$
using Proposition \ref{prop:ansatzpII} and Corollary \ref{prop:decaynullder}. This implies 
\begin{align}
    e_4\psihatp=&\frac{\mu\partial_r(\Delta^{-2})}{2\ubar^7}\sum_{|m|\leq2}A_m(r)Q_{m,\sfrak}Y_{m,\sfrak}^{+\sfrak}(\cos\theta)e^{im\phi_{+}}+\frac{\Delta^{-2}}{2\ubar^7}\sum_{|m|\leq2}\mu A_m'(r)Q_{m,\sfrak}Y_{m,\sfrak}^{+\sfrak}(\cos\theta)e^{im\phi_{+}}\nonumber\\
    &+\frac{a\Delta^{-2}}{(r^2+a^2)\ubar^7}\sum_{|m|\leq2}imA_m(r)Q_{m,\sfrak}Y_{m,\sfrak}^{+\sfrak}(\cos\theta)e^{im\phi_{+}}+{\mathrm{Err}[e_4\psihatp]},\label{eqn:cpf}
\end{align}
with 
\begin{align}\label{eqn:erre4}
    T^j\carterp^{k_1}\Phi^{k_2}{\mathrm{Err}[e_4\psihatp]}=O(\Delta^{-2}\ubar^{-7-j-\delta})\quad\text{   on   }\Gamma.
\end{align}
Also, notice that the term 
$$S:=\frac{\Delta^{-2}\mu}{2\ubar^7}\sum_{|m|\leq2} A_m'(r)Q_{m,\sfrak}Y_{m,\sfrak}^{+\sfrak}(\cos\theta)e^{im\phi_{+}}$$
has an extra factor $\mu$ that decays exponentially on $\Gamma$, such that $$|\carterp^{k_1}\Phi^{k_2}T^jS|\lesssim-\Delta^{-1}\ubar^{-7-j}\lesssim \Delta^{-2}\ubar^{-7-j-\delta}.$$
We can thus add $S$ to the error term while still satisfying \eqref{eqn:erre4}, to get on $\Gamma$,
$$\Delta^2e_4\psihatp=\frac{1}{\ubar^7}\sum_{|m|\leq2}A_m(r)\left(\frac{ima-2(r-M)}{r^2+a^2}\right)Q_{m,\sfrak}Y_{m,\sfrak}^{+\sfrak}(\cos\theta)e^{im\phi_{+}}+\Delta^{2}{\mathrm{Err}[e_4\psihatp]}.$$
Next, we define 
$$Z(r,\theta,\phi_{+}):=\sum_{|m|\leq2}A_m(r)\left(\frac{ima-2(r-M)}{r^2+a^2}\right)Q_{m,\sfrak}Y_{m,\sfrak}^{+\sfrak}(\cos\theta)e^{im\phi_{+}}.$$
Then we integrate the $1+1$ wave equation \eqref{eqn:1+1} on a curve from $\Gamma$ to $(u,\ubar,\theta,\phi_{+})$ with  constant $\ubar,\theta,\phi_{+}$, using $\ethat=\partial_u$ in these coordinates. Using Proposition \ref{prop:Linf} to bound the RHS, and denoting $u_\Gamma(\ubar)=\ubar^\gamma-\ubar$, we get in $\trois$
\begin{align}
    \Bigg|(r^2+a^2)\Delta^2T^j\carterp^{k_1}\Phi^{k_2}e_4\psihatp\nonumber-T^j\carterp^{k_1}\Phi^{k_2}\frac{(r_\Gamma(\ubar)^2+a^2)}{\ubar^7}&Z(r_\Gamma(\ubar),\theta,\phi_{+})+O(\ubar^{-7-\delta-j})\Bigg|\\
    &\lesssim\int_{u_\Gamma(\ubar)}^u(-\mu)(u',\ubar)\dee u'\nonumber\\
    &\lesssim(u-u_\Gamma(\ubar))e^{-|\kappa_-|\ubar^\gamma}\lesssim\ubar^{-7-\delta-j},\label{eqn:aintegr0}
\end{align}
where we used the definition of $\trois$ to write the exponential decay of $\mu$ with $\ubar$ in $\trois$, and the fact that $u-u_\Gamma(\ubar)=u+\ubar-\ubar^\gamma\lesssim\ubar$ in $\trois$, as $u\lesssim 1$ in $\trois$. Moreover, 
\begin{align}
    &\left|\carterp^{k_1}\Phi^{k_2}\left(\frac{r_\Gamma(\ubar)^2+a^2}{r^2+a^2}\right)Z(r_\Gamma(\ubar),\theta,\phi_{+})-\carterp^{k_1}\Phi^{k_2}\sum_{|m|\leq2}A_m(r_-)\frac{ima-2(r-M)}{r^2+a^2}Q_{m,\sfrak}Y_{m,\sfrak}^{+\sfrak}(\cos\theta)e^{im\phi_{+}}\right|\nonumber\\
    &\quad\quad\quad\lesssim\sum_{|m|\leq2}|A_m(r_\Gamma(\ubar))(ima-2(r_\Gamma(\ubar)-M))-A_m(r_-)(ima-2(r-M))|\nonumber\\
    &\quad\quad\quad\lesssim r_\Gamma(\ubar)-r_-\lesssim|\mu(r_\Gamma(\ubar))|=O(\exp(-|\kappa_-|\ubar^\gamma)),\nonumber
\end{align}
in $\trois$, which together with \eqref{eqn:aintegr0} yields
\begin{align}\label{eqn:aintegr}
    T^j\carterp^{k_1}\Phi^{k_2}e_4\psihatp(u,\ubar,\theta,\phi_{+})=T^j\carterp^{k_1}\Phi^{k_2}&\frac{\Delta^{-2}(u,\ubar)}{\ubar^7}\sum_{|m|\leq2}A_m(r_-)\frac{ima-2(r-M)}{r^2+a^2}Q_{m,\sfrak}Y_{m,\sfrak}^{+\sfrak}(\cos\theta)e^{im\phi_{+}}\nonumber\\
    &+O(\Delta^{-2}\ubar^{-7-j-\delta}),\text{   in   }\trois.
\end{align}
We finally integrate \eqref{eqn:aintegr} on a curve from $\Gamma$ to $(u,\ubar,\theta,\phi_{-})$ with  constant $u,\theta,\phi_{-}$, using $e_4=\partial_\ubar$ in these coordinates, and we get using Proposition \ref{prop:ansatzpII},
\begin{align}
    &T^j\carterp^{k_1}\Phi^{k_2}\psihatp(u,\ubar,\theta,\phi_{-})\nonumber\\
    &=T^j\carterp^{k_1}\Phi^{k_2}\frac{\Delta^{-2}(r_\Gamma(u))}{\ubar_\Gamma(u)^7}\sum_{|m|\leq2}A_m(r_\Gamma(u))Q_{m,\sfrak}Y_{m,\sfrak}^{+\sfrak}(\cos\theta)e^{im\phi_{-}+2imr_{mod}(r_\Gamma(u))}+O\Bigg(\frac{\Delta^{-2}(r_\Gamma(u))}{\ubar_\Gamma(u)^{7+j+\delta}}\Bigg)\nonumber\\
    &+\int_{\ubar_\Gamma(u)}^\ubar T^j\carterp^{k_1}\Phi^{k_2}\frac{\Delta^{-2}(u,\ubar')}{(\ubar')^7}\sum_{|m|\leq2}\Big[A_m(r_-)\frac{ima-2(r'-M)}{(r')^2+a^2}Q_{m,\sfrak}Y_{m,\sfrak}^{+\sfrak}(\cos\theta)e^{im\phi_{-}+2imr_{mod}(u,\ubar')}\Big]\dee\ubar'\nonumber\\
    &+ \int_{\ubar_\Gamma(u)}^\ubar O(\Delta^{-2}(u,\ubar')(\ubar')^{-7-j-\delta})\dee\ubar',\label{eqn:yyyy}
\end{align}
{where $\ubar_{\Gamma}(u)$ is the solution to $\ubar_{\Gamma}(u)^\gamma=\ubar_{\Gamma}(u)+u$, and $r_\Gamma(u)^*=u+\ubar_{\Gamma}(u)$}. We now simplify the terms on the RHS of \eqref{eqn:yyyy}. First, {using $\Delta^{-2}=-\frac{r^2+a^2}{2(r-M)}\frac{\mu}{2}\partial_r\Delta^{-2}$, we obtain }
\begin{align*}
    &\int_{\ubar_\Gamma(u)}^\ubar\Delta^{-2}(u,\ubar')(\ubar')^{-7-j-\delta}\dee\ubar'=\int_{\ubar_\Gamma(u)}^\ubar-\frac{r^2+a^2}{2(r-M)}\frac{\dee}{\dee\ubar}(\Delta^{-2}(u,\ubar'))(\ubar')^{-7-j-\delta}\dee\ubar'\\
    &\sim\int_{\ubar_\Gamma(u)}^\ubar\frac{\dee}{\dee\ubar}(\Delta^{-2}(u,\ubar'))(\ubar')^{-7-j-\delta}\dee\ubar'\sim \Delta^{-2}(u,\ubar)\ubar^{-7-j-\delta}-\Delta^{-2}(u,\ubar_\Gamma(u))\ubar_\Gamma(u)^{-7-j-\delta}\\
    &\quad\quad\quad\quad\quad\quad\quad\quad\quad\quad\quad\quad\quad\quad\quad\quad\quad\quad\quad\quad+(7+j+\delta)\int_{\ubar_\Gamma(u)}^\ubar\Delta^{-2}(u,\ubar')(\ubar')^{-8-j-\delta}\dee\ubar',
\end{align*}
where we used the fact that $\partial_\ubar\Delta^{-2}(u,\ubar)=e_4\Delta^{-2}=(\mu/2)\partial_r\Delta^{-2}(r)$. This implies 
$$\int_{\ubar_\Gamma(u)}^\ubar\Delta^{-2}(u,\ubar')(\ubar')^{-7-j-\delta}\dee\ubar'\sim\Delta^{-2}(u,\ubar)\ubar^{-7-j-\delta}-\Delta^{-2}(u,\ubar_\Gamma(u))\ubar_\Gamma(u)^{-7-j-\delta}$$
and thus
\begin{align}\label{eqn:delta1}
    \int_{\ubar_\Gamma(u)}^\ubar O(\Delta^{-2}(u,\ubar')(\ubar')^{-7-j-\delta})\dee\ubar'=O(\Delta^{-2}(u,\ubar)\ubar^{-7-j-\delta})+O(\Delta^{-2}(u,\ubar_\Gamma(u))\ubar_\Gamma(u)^{-7-j-\delta}).
\end{align}
We now compute the second line on the RHS of \eqref{eqn:yyyy}. We have 
\begin{align}
    \int_{\ubar_\Gamma(u)}^\ubar &T^j\carterp^{k_1}\Phi^{k_2}\frac{\Delta^{-2}(u,\ubar')}{(\ubar')^7}\frac{ima-2(r-M)}{r^2+a^2}e^{2imr_{mod}(u,\ubar')}Y_{m,\sfrak}^{+\sfrak}(\cos\theta)e^{im\phi_{-}}\dee\ubar'\nonumber\\
    &=\int_{\ubar_\Gamma(u)}^\ubar\frac{\dee}{\dee\ubar}\left(\Delta^{-2}(u,\ubar')e^{2imr_{mod}(u,\ubar')}\right)T^j\carterp^{k_1}\Phi^{k_2}\frac{1}{(\ubar')^7}Y_{m,\sfrak}^{+\sfrak}(\cos\theta)e^{im\phi_{-}}\dee\ubar'\nonumber\\
    &=T^j\Bigg(\frac{\Delta^{-2}(u,\ubar)}{\ubar^7}e^{2imr_{mod}(u,\ubar)}-\frac{\Delta^{-2}(u,\ubar_\Gamma(u))}{\ubar_\Gamma(u)^{7}}e^{2imr_{mod}(u,\ubar_\Gamma(u))}\Bigg)\carterp^{k_1}\Phi^{k_2}Y_{m,\sfrak}^{+\sfrak}(\cos\theta)e^{im\phi_{-}}\nonumber\\
    &\quad\quad\quad+7\int_{\ubar_\Gamma(u)}^\ubar T^j\carterp^{k_1}\Phi^{k_2}\frac{\Delta^{-2}(u,\ubar')}{(\ubar')^8}e^{2imr_{mod}(u,\ubar')}Y_{m,\sfrak}^{+\sfrak}(\cos\theta)e^{im\phi_{-}}\dee\ubar'\nonumber\\
    &=T^j\carterp^{k_1}\Phi^{k_2}\frac{\Delta^{-2}(u,\ubar)}{\ubar^7}e^{2imr_{mod}(u,\ubar)}Y_{m,\sfrak}^{+\sfrak}(\cos\theta)e^{im\phi_{-}}\nonumber\\
    &\quad\quad\quad-T^j\carterp^{k_1}\Phi^{k_2}\frac{\Delta^{-2}(u,\ubar_\Gamma(u))}{\ubar_\Gamma(u)^{7}}e^{2imr_{mod}(r_\Gamma(u))}Y_{m,\sfrak}^{+\sfrak}(\cos\theta)e^{im\phi_{-}}\nonumber\\
    &\quad\quad\quad+O(\Delta^{-2}(u,\ubar)\ubar^{-7-j-\delta})+O(\Delta^{-2}(u,\ubar_\Gamma(u))\ubar_\Gamma(u)^{-7-j-\delta}),\label{eqn:utilehein2}
\end{align}
using the previous computation \eqref{eqn:delta1} with $\delta=1$. Next, notice that the first term on the RHS of \eqref{eqn:yyyy} can be rewritten
\begin{align}
    &T^j\carterp^{k_1}\Phi^{k_2}\frac{\Delta^{-2}(r_\Gamma(u))}{\ubar_\Gamma(u)^7}\sum_{|m|\leq2}A_m(r_\Gamma(u))Q_{m,\sfrak}Y_{m,\sfrak}^{+\sfrak}(\cos\theta)e^{im\phi_{-}+2imr_{mod}(r_\Gamma(u))}\nonumber\\
    &=T^j\carterp^{k_1}\Phi^{k_2}\frac{\Delta^{-2}(r_\Gamma(u))}{\ubar_\Gamma(u)^7}\sum_{|m|\leq2}A_m(r_-)Q_{m,\sfrak}Y_{m,\sfrak}^{+\sfrak}(\cos\theta)e^{im\phi_{-}+2imr_{mod}(r_\Gamma(u))}+O(\Delta^{-1}(r_\Gamma(u))\ubar_\Gamma(u)^{-7-j})\nonumber\\
    &=T^j\carterp^{k_1}\Phi^{k_2}\frac{\Delta^{-2}(r_\Gamma(u))}{\ubar_\Gamma(u)^7}\sum_{|m|\leq2}A_m(r_-)Q_{m,\sfrak}Y_{m,\sfrak}^{+\sfrak}(\cos\theta)e^{im\phi_{-}+2imr_{mod}(r_\Gamma(u))}+O(\Delta^{-2}(r_\Gamma(u))\ubar_\Gamma(u)^{-7-j-\delta}).\label{eqn:utilehein1}
\end{align}
Thus \eqref{eqn:yyyy} rewrites, using \eqref{eqn:delta1}, \eqref{eqn:utilehein2} and \eqref{eqn:utilehein1},
\begin{align*}
    T^j\carterp^{k_1}\Phi^{k_2}\psihat(u,\ubar,\theta,\phi_{-})&=T^j\carterp^{k_1}\Phi^{k_2}\frac{\Delta^{-2}(u,\ubar)}{\ubar^7}\sum_{|m|\leq 2}A_m(r_-)Q_{m,2}e^{2imr_{mod}(u,\ubar)}Y_{m,\sfrak}^{+\sfrak}(\cos\theta)e^{im\phi_{-}}\\
    &\quad\quad\quad+O(\Delta^{-2}(u,\ubar)\ubar^{-7-j-\delta})+O(\Delta^{-2}(u,\ubar_\Gamma(u))\ubar_\Gamma(u)^{-7-j-\delta}).
\end{align*}
We now finish the proof of Theorem \ref{thm:psip} by showing 
$$\Delta^{-2}(u,\ubar_\Gamma(u))\ubar_\Gamma(u)^{-7-j-\delta}=O(\Delta^{-2}(u,\ubar)\ubar^{-7-j-\delta})$$
in $\trois$. This requires some care. Note that in a region $\{-1\leq u\lesssim 1\}$ where $u$ is bounded from below and from above, the term that we are trying to bound is $O(1)$ which is controlled by $\Delta^{-2}(u,\ubar)\ubar^{-7-j-\delta}$. Thus we restrict the remaining part of the analysis to $\{u\leq -1\}\cap\trois$, where we want to prove
$$\frac{\Delta^{-2}(r_\Gamma(u))}{\ubar_\Gamma(u)^{7+\delta+j}}\lesssim\frac{\Delta^{-2}(u,\ubar)}{\ubar^{7+\delta+j}}.$$
Notice that $\Delta^{-2}(r_\Gamma(u))\leq\Delta^{-2}(u,\ubar)$, but that we cannot directly control $\ubar_\Gamma(u)^{-7-\delta-j}$ by $\ubar^{-7-\delta-j}$ in $\trois$, as a priori we only have $\ubar_\Gamma(u)\leq\ubar$. Let $\gamma'\in(\gamma,1)$. We write $\trois=\mathbf{A}\cup\mathbf{B}$ where 
$$\mathbf{A}:=\{2r^*\geq\ubar^{\gamma'}\}\cap\trois,\quad\mathbf{B}:=\{2r^*\leq\ubar^{\gamma'}\}\cap\trois.$$ 
\begin{itemize}
    \item In $\mathbf{A}$, we have $\Delta^{-2}(r_\Gamma(u))\sim\exp(2|\kappa_-|\ubar_\Gamma(u)^\gamma)\leq \exp(2|\kappa_-|\ubar^\gamma)$. Moreover by the definition of $\mathbf{A}$, $\Delta^{-2}(u,\ubar)\gtrsim\exp(2|\kappa_-|\ubar^{\gamma'}),$
    thus 
    $$\frac{\Delta^{-2}(r_\Gamma(u))\ubar^{7+\delta+j}}{\Delta^{-2}(u,\ubar)\ubar_\Gamma(u)^{7+\delta+j}}\lesssim\exp(2|\kappa_-|(\ubar^\gamma-\ubar^{\gamma'}))\ubar^{7+j+\delta}\lesssim 1.$$
    \item In $\mathbf{B}$, we use
    \begin{align}\label{eqn:derrr}
        \frac{\Delta^{-2}(r_\Gamma(u))}{\ubar_\Gamma(u)^{7+\delta+j}}\lesssim\frac{\Delta^{-2}(u,\ubar)}{\ubar_\Gamma(u)^{7+\delta+j}},
    \end{align}
    thus we only need to show that we can control $\ubar_\Gamma(u)^{-7-\delta-j}$ by $\ubar^{-7-\delta-j}$ there. We recall $u+\ubar_\Gamma(u)=\ubar_\Gamma(u)^\gamma$ thus $\ubar_\Gamma(u)\sim -u$ as $u\to-\infty$. Moreover in $\mathbf{B}$ we have $u+\ubar\leq\ubar^{\gamma'}$ thus $\ubar\leq\ubar^{\gamma'}-u$ and hence 
    $$\frac{-u}{\ubar}\geq 1-\ubar^{\gamma'-1}.$$
    As we are interested in the asymptotics on $\ch$, we can restrict the analysis to $\{\ubar\geq 2\}$, and we get in $\mathbf{B}$ : 
    $$\frac{-u}{\ubar}\geq 1-2^{\gamma'-1}>0.$$
    We finally reinject this back into \eqref{eqn:derrr} to get in $\mathbf{B}$ :
    \begin{align*}
        \frac{\Delta^{-2}(r_\Gamma(u))}{\ubar_\Gamma(u)^{7+\delta+j}}\lesssim\frac{\Delta^{-2}(u,\ubar)}{\ubar_\Gamma(u)^{7+\delta+j}}\lesssim\frac{\Delta^{-2}(u,\ubar)}{(-u)^{7+\delta+j}}\lesssim\frac{\Delta^{-2}(u,\ubar)}{\ubar^{7+\delta+j}}.
    \end{align*}
\end{itemize}
This concludes the proof of Theorem \ref{thm:psip}.
\end{proof}

\subsection{End of the proof of Theorem \ref{thm:main}.}
It remains to prove the asymptotic behavior \eqref{eqn:holdssss} in region $\quatre=\{\ubar\geq\ubar_{\rb,\gamma}\}\cap\{w\geq w_{\rb,\gamma}\}$, where $w_{\rb,\gamma}=2\rb^*-(2\rb^*)^{1/\gamma}-\rb+r_-$ and $\ubar_{\rb,\gamma}=(2\rb^*)^{1/\gamma}$. The first step is to notice that we can extend the bounded energy method of Section \ref{subsection:boundedener} to get non-sharp $L^\infty$ bounds for $\psip$ in $\quatre$.
\begin{lem}\label{lem:ansatzmIV}
    We have in $\quatre$, for $j\leq \min(N_j^--12,N_j^+-2)$ and $2k_1+k_2\leq \min(N_k^--11,N_k^+-2)$,
    $$|T^j\carterp^{k_1}\Phi^{k_2}\psip|\lesssim 1.$$
\end{lem}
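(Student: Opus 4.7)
The plan is to adapt the blueshift energy method of Section \ref{subsection:boundedener} (used to prove Proposition \ref{prop:enerborne}) to region $\quatre$. Since $\quatre\subseteq\{r_-\leq r\leq\rb\}$, the positive-bulk Lemma \ref{lem:posbulkII} is available, and I would apply the energy identity \eqref{eqn:dudubarII} to $\psi:=T^j\carterp^{k_1}\Phi^{k_2}\psip$ (with $c=1$, $V=0$, so that $\teuk_{+2}^{(1,0)}\psi=0$) over the sub-region $\{w\leq w_1,\wbar\leq\wbar_1\}\cap\quatre$, for any fixed $(u_1,\ubar_1)\in\quatre$. The past boundary of this region has two pieces: (i) a portion of $\{w=w_{\rb,\gamma}\}$ that is shared with $\trois$, and (ii) a portion of $\{r=\rb\}$ lying in the symmetric region $\{u\geq 1\}$. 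Dropping the non-negative future boundary terms and the positive bulk then reduces the proof to controlling the initial fluxes on (i) and (ii).

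For (ii), Corollary \ref{cor:symmetric} provides the required integrated energy bound directly, the needed information on $\mch'_+$ being encoded in hypothesis \eqref{eqn:hypp'}. For (i), I would pass from $\psihatp$ to $\psip=\Delta^2\psihatp$: Theorem \ref{thm:psip} gives the precise asymptotics of $\psihatp$ in $\trois$, with errors bounded by $\Delta^{-2}\ubar^{-7-\delta}$ in the relevant $T$, $\carterp$, $\Phi$ derivatives. Multiplying by $\Delta^2$ converts the $\Delta^{-2}$ factor into an $\ubar^{-7}$ pointwise decay, so that $|T^j\carterp^{k_1}\Phi^{k_2}\psip|\lesssim\ubar^{-7}$ on $\{w=w_{\rb,\gamma}\}$. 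The null derivatives $e_4\psip$, $\ethat\psip$, $\mcu\psip$, $\partial_\theta\psip$ entering the energy density are recovered by differentiating the asymptotic expansion: $e_4$-derivatives follow from the formula for $e_4\psihatp$ derived in the proof of Theorem \ref{thm:psip}, while the $\ethat=-\mu e_3$-direction is reconstructed from the $1+1$ wave equation \eqref{eqn:1+1}, using the $L^\infty$ bound of Proposition \ref{prop:Linf} to control its source. Integration in $\ubar$ over $\{w=w_{\rb,\gamma}\}$ (with the integrable weight $\ubar^{-14}$) yields a finite energy flux.

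Combining these two boundary estimates with the (now conservative) energy identity gives a uniform bound on $\iint_{\{\wbar=\wbar_1\}\cap\quatre}\ener[T^j\carterp^{k_1}\Phi^{k_2}\psip](-\mu)\dee\nu\,\dee u$, independent of $\wbar_1$. Lemma \ref{lem:integ}, started from a slice inside $\trois$ where $L^2(S(u,\ubar))$ control is already available via Theorem \ref{thm:psip}, then promotes this energy bound to a uniform $L^2(S(u,\ubar))$ bound on $T^j\carterp^{k_1}\Phi^{k_2}\psip$ throughout $\quatre$; applying a few more commutations with $T$ and $\carterp$ and the Sobolev embedding \eqref{eqn:sobolev} finally yields the claimed pointwise bound $|T^j\carterp^{k_1}\Phi^{k_2}\psip|\lesssim 1$ in $\quatre$. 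The main obstacle, as in the original proof of Proposition \ref{prop:enerborne}, is the simultaneous use of data coming from the two causally disconnected past boundaries (the $\trois$-side and the $\mch'_+$-side); bookkeeping the derivative losses incurred when converting the Theorem \ref{thm:psip} asymptotic into $L^2$ fluxes for all four null directions of the energy density is the delicate point.
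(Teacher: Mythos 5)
Your overall architecture — a blueshift energy estimate with positive bulk from Lemma \ref{lem:posbulkII}, initial flux on $\{r=\rb\}$ from Corollary \ref{cor:symmetric}, then Lemma \ref{lem:integ} and the Sobolev embedding \eqref{eqn:sobolev} — is the paper's, but your treatment of the past boundary piece on $\{w=w_{\rb,\gamma}\}$ has a genuine gap. The flux through a constant-$w$ hypersurface is $\mathcal{T}_w[\psi]\sim(-\mu)\ener[\psi]$, which contains the transversal term $(-\mu)|e_3\psi|^2=(-\mu)^{-1}|\ethat\psi|^2$, and nothing you invoke controls it. Theorem \ref{thm:psip} bounds only $T$, $\carterp$, $\Phi$ derivatives of $\mathrm{Err}[\psihatp]$, so the expansion cannot be differentiated in the $e_3$ direction (Proposition \ref{prop:decaynulldere3} gives $e_3\errp$ decay in $\deux$ only, not $\trois$). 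The $1+1$ equation \eqref{eqn:1+1} does not repair this: it determines $\ethat\bigl((r^2+a^2)\Delta^2e_4\psihatp\bigr)$ — which, after integration along $\ethat=\partial_u$, is precisely how \eqref{eqn:aintegr} produces the $e_4$-derivative — but it never yields $\ethat\psihatp$ or $e_3\psip$ itself. That this derivative is genuinely delicate is visible later in the paper: to control $e_3\psip$ in $\quatre$ for Theorem \ref{thm:psipIV}, the paper commutes $\teuk_{+2}$ with $e_3$ and runs a fresh energy estimate, obtaining only the growing bound $|e_3\psip|\lesssim u+\ubar$ — and that argument takes the present lemma as input, so it is not available to you here.

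The paper sidesteps the issue entirely: it integrates \eqref{eqn:dudubarII} over the full triangle $\mathcal{R}=\{r_-<r<\rb\}\cap\{w\leq w_1\}\cap\{\wbar\leq\wbar_1\}$, whose only past boundary is $\{r=\rb\}$ — the hypersurface $\{w=w_{\rb,\gamma}\}$ lies in the interior of $\mathcal{R}$ and no flux on it is ever needed — so Corollary \ref{cor:symmetric} alone controls the data, and Lemma \ref{lem:integ} is then run from $\{r=\rb\}$ using the $\ubar^{-7-j}$ bound of Proposition \ref{prop:ansatzIp}. If you insist on your decomposition, the correct source for the $\{w=w_{\rb,\gamma}\}$ flux is not Theorem \ref{thm:psip} but the energy inequality \eqref{eqn:similarly} from the proof of Proposition \ref{prop:enerborne}, whose left-hand side already contains the positive term $\iint_{\{\wbar\leq\wbar_1,\:w=w_{\rb,\gamma}\}}\ener[T^j\carterp^{k_1}\Phi^{k_2}\psip](-\mu)\dee\nu\dee u\lesssim 1$; with that substitution your argument closes. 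A further bookkeeping slip: if you literally truncate to $\{w\leq w_1,\wbar\leq\wbar_1\}\cap\quatre$, the second past boundary is the timelike hypersurface $\{\ubar=\ubar_{\rb,\gamma}\}$, not a piece of $\{r=\rb\}$; to make $\{r=\rb\}$ appear you must enlarge the region to $\{w_{\rb,\gamma}\leq w\leq w_1\}\cap\{\wbar\leq\wbar_1\}\cap\{r\leq\rb\}$, and the relevant portion of $\{r=\rb\}$ is not contained in $\{u\geq1\}$ — harmless only because Corollary \ref{cor:symmetric} covers all of $\{r=\rb\}$.
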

\begin{proof}
    The proof is an extension of the argument in Section \ref{subsection:boundedener} taking into account the symmetric bounds on $\{u\geq 1\}\cap\{r=\rb\}$. Integrating \eqref{eqn:dudubarII} with\footnote{Notice that in this case, the RHS is exactly zero.} $\psi=T^j\carterp^{k_1}\Phi^{k_2}\psip$ on a triangle-shaped region $\mathcal{R}:=\{r_-<r<\rb\}\cap\{w\leq w_1\}\cap\{\wbar\leq\wbar_1\}$, with $(w_1,\wbar_1)\in\quatre$, gives similarly as in \eqref{eqn:similarly},
    \begin{align}\label{eqn:a1T}
        \iint_{\{w\leq w_1,\:\wbar=\wbar_1\}\cap\mathcal{R}}{\ener}[T^j\carterp^{k_1}\Phi^{k_2}\psip](-\mu)\dee\nu\dee u\lesssim \iint_{\partial\mathcal{R}\cap\{r=\rb\}}\ener[T^j\carterp^{k_1}\Phi^{k_2}\psip]\dee\nu\dee\ubar\lesssim 1,
    \end{align}
where we bounded the energy term on $\{r=\rb\}$ using Corollary \ref{cor:symmetric}. As before, using \eqref{eqn:a1T} together with Lemma \ref{lem:integ}, and using the initial $\ubar^{-7-j}$ bound on $\{r=\rb\}$ given by Proposition \ref{prop:ansatzIp}, we obtain $\|T^j\carterp^{k_1}\Phi^{k_2}\Psip\|_{L^2(S(u,\ubar))}\lesssim 1$ in $\quatre$. We conclude using the Sobolev embedding \eqref{eqn:sobolev}.
\end{proof}
The following result, together with Theorem \ref{thm:psip}, concludes the proof of Theorem \ref{thm:main}.
\begin{thm}\label{thm:psipIV}
    We have in $\quatre$,   
        \begin{align*}
    \psihatp(u,\ubar,\theta,\phi_{-})=\frac{\Delta^{-2}(u,\ubar)}{\ubar^7}\sum_{|m|\leq2}A_m(r_-)e^{2imr_{mod}(u,\ubar)}Q_{m,\sfrak}Y_{m,\sfrak}^{+\sfrak}(\cos\theta)e^{im\phi_{-}}
    +\mathrm{Err}[\psihatp],
\end{align*}
where for $j\leq \min(N_j^--15,N_j^+-5)$ and $2k_1+k_2\leq \min(N_k^--14,N_k^+-5)$, $$|T^j\carterp^{k_1}\Phi^{k_2}\mathrm{Err}[\psihatp]|\lesssim\Delta^{-2}\ubar^{-7-j-\delta}.
$$
\end{thm}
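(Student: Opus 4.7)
The strategy is to adapt the proof of Theorem \ref{thm:psip} to region $\quatre$, replacing the initial surface $\Gamma$ — which curves of constant $u$ from inside $\quatre$ fail to reach — by the common boundary $\{w = w_{\rb,\gamma}\}$ between $\trois$ and $\quatre$, on which Theorem \ref{thm:psip} already supplies the precise asymptotic expansion of $\psihatp$, along with its $T, \carterp, \Phi$ derivatives and, via the computation \eqref{eqn:cpf}, the expansion of $e_4\psihatp$, all with error $O(\Delta^{-2}\ubar^{-7-j-\delta})$.

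As a preliminary step, we need a non-sharp pointwise bound $|T^j\carterp^{k_1}\Phi^{k_2}e_3\psip|\lesssim 1$ throughout $\quatre$. Commuting $\teuk_{+2}\psip = 0$ with $e_3$ via Proposition \ref{prop:comme3} yields $\teuk_{+2}^{(1/2,-2)}[T^j\carterp^{k_1}\Phi^{k_2}e_3\psip] = -3\,T^{j+1}\carterp^{k_1}\Phi^{k_2}\psip$, whose right-hand side is $O(1)$ by Lemma \ref{lem:ansatzmIV}. The triangle-shaped bounded-energy argument of the proof of Lemma \ref{lem:ansatzmIV} applies verbatim to $e_3\psip$, using the symmetric energy bound \eqref{eqn:vraienerpourapres} of Corollary \ref{cor:symmetric} on $\{r=\rb\}\cap\{u\geq 1\}$ as initial data.

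With these bounds, we integrate the $1+1$ wave equation \eqref{eqn:1+1} along $\{\ubar = \ubar_0, \theta, \phi_- = \text{const}\}$ from a boundary point on $\{w = w_{\rb,\gamma}\}$ into $\quatre$. Since $\ubar_0 \geq \ubar_{\rb,\gamma}$ throughout $\quatre$ and the source $O(-\mu)$ decays like $\exp(-|\kappa_-|(u+\ubar_0))$, the integrated source is dominated by $\exp(-|\kappa_-|\ubar_0)$ uniformly in $u_0$, which is smaller than any polynomial in $\ubar_0$. Combined with the known asymptotic of $e_4\psihatp$ on the boundary, this yields an expansion of $e_4\psihatp$ in $\quatre$ analogous to \eqref{eqn:aintegr}. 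A second integration along $e_4 = \partial_\ubar$, starting from $\{w = w_{\rb,\gamma}\}$ when the curve of constant $u$ reaches it and otherwise from the symmetric boundary $\{\ubar = \ubar_{\rb,\gamma}\}$ using the data provided by assumption \eqref{eqn:hypp'} on $\mch'_+$, then gives the claimed asymptotic for $\psihatp$ through algebra parallel to \eqref{eqn:yyyy}--\eqref{eqn:utilehein1}.

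The main technical obstacle is to verify that the boundary errors and integrated-source contributions combine uniformly to $O(\Delta^{-2}(u_0,\ubar_0)\ubar_0^{-7-j-\delta})$. The exponential decay of $-\mu$ absorbs the source contributions, and the monotonicity of $\Delta^{-2}$ in $r^*$ (hence in both $u$ and $\ubar$) controls the boundary-data factor. The remaining polynomial shift is handled, as in the end of the proof of Theorem \ref{thm:psip}, by a case analysis of $\quatre$ into sub-regions where either the exponential growth of $\Delta^{-2}$ dominates the polynomial loss from the shift in $\ubar$, or the ratio between the boundary and interior values of $\ubar$ is bounded away from zero.
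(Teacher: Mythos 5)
Your overall plan --- take $\{w=w_{\rb,\gamma}\}$ as the new data surface, where Theorem \ref{thm:psip} is available, and rerun the two-fold integration of the $1+1$ equation \eqref{eqn:1+1} with the non-sharp bound of Lemma \ref{lem:ansatzmIV} controlling the source --- is a genuinely different route from the paper's. The paper never re-integrates \eqref{eqn:1+1} in $\quatre$: it commutes the Teukolsky equation with $e_3$, runs the blueshift energy estimate on triangles $\mcr$ with the $O(1)$ inhomogeneity, obtains the \emph{growing} bound $\iint\ener[T^j\carterp^{k_1}\Phi^{k_2}e_3\psip](-\mu)\,\dee\nu\,\dee u\lesssim u+\ubar$, hence $|\ethat T^j\carterp^{k_1}\Phi^{k_2}\psip|\lesssim(-\Delta)(u+\ubar)$, and then integrates this single transverse estimate in $u$ from $\{w=w_{\rb,\gamma}\}$. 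Since every constant-$\ubar$ segment in $\quatre$ reaches that surface, the paper never needs data on the other boundary $\{\ubar=\ubar_{\rb,\gamma}\}$. Your double integration does need such data (for constant-$u$ curves at large $u$), and that is where the two genuine problems in your write-up sit; your first integration in $u$, by contrast, is sound (the boundary value of $e_4\psihatp$ should be cited from \eqref{eqn:aintegr}, which holds throughout $\trois$, rather than from \eqref{eqn:cpf}, which is an identity on $\Gamma$; and the integrated source is indeed $\lesssim e^{-|\kappa_-|\ubar}$ uniformly in $u$ because $u$ is bounded on $\{w=w_{\rb,\gamma}\}$).

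First, your preliminary claim $|T^j\carterp^{k_1}\Phi^{k_2}e_3\psip|\lesssim1$ in $\quatre$ is not obtained by the argument you cite. After commuting, the equation is inhomogeneous, $\teuk_{+2}^{(1/2,-2)}[T^j\carterp^{k_1}\Phi^{k_2}e_3\psip]=O(1)$, and in the energy identity the source contributes $\iiint_{\mcr}(-\mu)\,\dee\nu\,\dee u\,\dee\ubar\sim\iint\dee r\,\dee\ubar\lesssim u+\ubar$: the exponential decay of $-\mu$ is exactly absorbed by the change of variables $(-\mu)\,\dee u\sim\dee r$, while the $\ubar$-extent of $\mcr$ is unbounded, so the bounded-energy argument does \emph{not} apply verbatim --- it yields only linearly growing energy, whence pointwise $\lesssim\sqrt{u+\ubar}$, which is precisely why the paper's error carries a factor $(u+\ubar)$ later killed by $-\Delta\sim e^{-|\kappa_-|(u+\ubar)}$. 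Fortunately this claimed bound is never used in your integration scheme (the RHS of \eqref{eqn:1+1} involves only $T$, $\carterp$, $\Phi$ derivatives), so the step should simply be deleted. Second, assumption \eqref{eqn:hypp'} lives on $\mch'_+$ and provides no data on the interior hypersurface $\{\ubar=\ubar_{\rb,\gamma}\}\cap\{w\geq w_{\rb,\gamma}\}\subset\{r<\rb\}$; the only transport mechanism in the paper is \eqref{eqn:hypp'} $\Rightarrow$ Corollary \ref{cor:symmetric} $\Rightarrow$ Lemma \ref{lem:ansatzmIV}, whose output is merely $|\psip|\lesssim1$, i.e.\ $\psihatp=O(\Delta^{-2}(u,\ubar_{\rb,\gamma}))$ there, with no expansion and no decay in $u$. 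Your argument can survive this only because $\ubar^{-7-j-\delta}\sim1$ on that boundary, so the $O(1)$ bound suffices \emph{provided} you prove the quantitative comparison $\Delta^{-2}(u,\ubar_{\rb,\gamma})\lesssim\Delta^{-2}(u,\ubar)\,\ubar^{-7-j-\delta}$ in $\quatre$; this follows from $\Delta^{-2}(u,\ubar)/\Delta^{-2}(u,\ubar_{\rb,\gamma})\sim e^{2|\kappa_-|(\ubar-\ubar_{\rb,\gamma})}$ (note $\quatre\subset\{r\leq\rb\}$, so \eqref{eqn:rstar} applies) together with compactness for bounded $\ubar$, but the bare ``monotonicity of $\Delta^{-2}$'' you invoke is not enough. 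With the spurious $e_3\psip$ step removed, the boundary data correctly re-sourced to Lemma \ref{lem:ansatzmIV}, and this exponential-versus-polynomial comparison written out, your alternative proof closes.
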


\begin{proof}
    Note that Theorem \ref{thm:psip} shows that this result holds on $\{w=  w_{\rb,\gamma}\}\cap\{\ubar\geq\ubar_{\rb,\gamma}\}\subset\trois$. {We will prove strong decay of the derivative $\partial_u\psip$ (see \eqref{eqn:goall}) of the \textit{degenerate} field $\psip=\Delta^2\Psihatp$. As $\partial_u$ is transverse to the hypersurfaces $\{w=cst\}$, this will infer the result in $\quatre=\{w\geq  w_{\rb,\gamma}\}\cap\{\ubar\geq\ubar_{\rb,\gamma}\}$ from the result on $\{w=  w_{\rb,\gamma}\}\cap\{\ubar\geq\ubar_{\rb,\gamma}\}$ by integration.} Using Proposition \ref{prop:comme3} 
    we find that $e_3\psip$ satisfies the PDE 
\begin{align}\label{eqn:inundd}
    \Big(\teuk_{+2}-4[(r-M)e_3-rT]-2\Big)\Big[e_3\psip\Big]=-3T\psip.
\end{align}
Thus using Lemma \ref{lem:ansatzmIV} and commuting with $T^j\carterp^{k_1}\Phi^{k_2}$ we obtain 
\begin{align}\label{eqn:inun2}
    \teuk_{+2}^{(1/2,-2)}T^j\carterp^{k_1}\Phi^{k_2}e_3\psip=O(1),\quad\text{  in  }\quatre.
\end{align}
This together with the computations of the energy method in $\deux\cup\trois$, that holds also in $\quatre$, shows that \eqref{eqn:dudubarII} holds in $\{\ubar\geq\ubar_{\rb,\gamma}\}\cap\{r_-<r\leq\rb\}$ for $\psi=T^j\carterp^{k_1}\Phi^{k_2}e_3\psip$ and $\beta=0$. Let $(u,\ubar)\in\quatre$ and denote the corresponding values $(w_1,\wbar_1)=(w_1(u,\ubar),\wbar_1(u,\ubar))$. Then, integrating \eqref{eqn:dudubarII} on $\mcr:=\{r_-<r<\rb\}\cap \{w\leq w_1\}\cap\{\wbar\leq\wbar_1\}$, with $\psi=T^j\carterp^{k_1}\Phi^{k_2}e_3\psip$, and with $\beta=0$ for the RHS gives :
\begin{align}
        &\iint_{\{w\leq w_1,\:\wbar=\wbar_1\}\cap\mcr}{\ener}[\psi](-\mu)\dee\nu\dee u+\iint_{\{\wbar\leq \wbar_1,\:w=w_1\}\cap\mcr}\ener[\psi](-\mu)\dee\nu\dee u\nonumber\\
        &+\iiint_{\{w\leq w_1,\:\wbar\leq\wbar_1\}\cap\mcr}\ener[\psi](-\mu)\dee\nu\dee u\dee\ubar\lesssim\nonumber\\
        &\iint_{\{r=\rb\}\cap\{\wbar_{\rb}(w_1)\leq\wbar\leq\wbar_1\}}\ener[\psi]\dee\nu\dee\ubar+\iiint_{\{w\leq w_1,\:\wbar\leq\wbar_1\}\cap\mcr}|X(\psi)|O(-\mu)\dee\nu\dee u\dee\ubar,\label{eqn:derlabel}
\end{align}
where we also used \eqref{eqn:estimebordwbar} and \eqref{eqn:estimbordrb}, as well as Lemma \ref{lem:posbulkII}.
Recall that, $|X(\psi)|^2\lesssim\ener[\psi]$ thus using a Cauchy-Schwarz inequality we can bound the last term on the RHS of \eqref{eqn:derlabel} by
\begin{align}\label{eqn:firstterm}
    \varepsilon\iiint_{\{w\leq w_1,\:\wbar\leq\wbar_1\}\cap\mcr}\ener[\psi](-\mu)\dee\nu\dee u\dee\ubar+\varepsilon^{-1}\iiint_{\{w\leq w_1,\:\wbar\leq\wbar_1\}\cap\mcr}O(-\mu)\dee\nu\dee u\dee\ubar.
\end{align}
Choosing $\varepsilon>0$ small enough so that the first term of \eqref{eqn:firstterm} gets absorbed on the LHS of \eqref{eqn:derlabel}. Moreover, 
\begin{align*}
    \iiint_{\{w\leq w_1,\:\wbar\leq\wbar_1\}\cap\mcr}(-\mu)\dee\nu\dee u\dee\ubar&=4\pi\int_{\ubar_{\rb}(w_1)}^{\ubar_{\rb}(\wbar_1)}\int_{u_{\rb}(\ubar)}^{u(\ubar,w_1)}(-\mu)\dee u\dee \ubar\\
    &\sim4\pi\int_{\ubar_{\rb}(w_1)}^{\ubar_{\rb}(\wbar_1)}\int_{\rb}^{r(\ubar,w_1)}\dee r\dee \ubar\\
    &\lesssim \ubar_{\rb}(\wbar_1)-\ubar_{\rb}(w_1)\\
    &\lesssim w_1+\wbar_1+K,
\end{align*}
where $K(a,M)>0$ is a constant, and where we used 
\begin{align*}
    \ubar_{\rb}(\wbar_1)&=\wbar_1+\rb-r_+,\\
    \ubar_{\rb}(w_1)&=2\rb^*-(w_1+\rb-r_-)=-w_1+cst.
\end{align*}
Using Corollary \ref{cor:symmetric} we also have the bound 
$$\iint_{\{r=\rb\}\cap\{\wbar_{\rb}(w_1)\leq\wbar\leq\wbar_1\}}\ener[T^j\carterp^{k_1}\Phi^{k_2}e_3\psip]\dee\nu\dee\ubar\lesssim 1.$$
Thus we have shown, { using $w=u+O(1)$, $\wbar=\ubar+O(1)$, and $K\lesssim 1\lesssim 2r^*=u+\ubar$,}
\begin{align}\label{eqn:toitoi}
    \iint_{\{w\leq w_1,\:\wbar=\wbar_1\}\cap\mcr}{\ener}[T^j\carterp^{k_1}\Phi^{k_2}e_3\psip](-\mu)\dee\nu\dee u\lesssim u+\ubar.
\end{align}
Using Lemma \ref{lem:integ} yields\footnote{Notice that $u+\ubar=2r^*>0$ in $\{r_-\leq r\leq\rb\}$.}
\begin{align*}
    \|T^j\carterp^{k_1}\Phi^{k_2}e_3\psip\|_{L^2(S^2(u,\ubar))}\lesssim & \|T^j\carterp^{k_1}\Phi^{k_2}e_3\psip\|_{L^2(S^2(u(\rb,\wbar_1),\ubar(\rb,\wbar_1)))}\\
    &+\left(\iint_{\{w\leq w_1,\:\wbar=\wbar_1\}\cap\mcr}{\ener}[T^j\carterp^{k_1}\Phi^{k_2}e_3\psip](-\mu)\dee\nu\dee u\right)^{1/2}\\
    &\lesssim \sqrt{u+\ubar}\quad\text{   in   }\quatre,
\end{align*}
where we used \eqref{eqn:toitoi} and Proposition \ref{prop:ansatzIp} to bound the term on $\{r=\rb\}$. Using the Sobolev embedding \eqref{eqn:sobolev} finally gives
$$|T^j\carterp^{k_1}\Phi^{k_2}e_3\psip|(u,\wbar_1,\theta,\phi_{-})\lesssim \sqrt{u+\ubar}\lesssim u+\ubar,\quad (\quatre).$$
We will now use the fact that $\ethat=\partial_u$ in coordinates $(u,\ubar,\theta,\phi_{+})$, and integrate the previous estimate on $\ubar=cst$ from $\{w= w_{\rb,\gamma}\}$ to $w(u,\ubar)$ to get information in $\quatre$ from the lower bound that we have on $\{w= w_{\rb,\gamma}\}$ from the Theorem \ref{thm:psip}. We have the estimate 
\begin{align}\label{eqn:goall}
    |\ethat T^j\carterp^{k_1}\Phi^{k_2}\psip|(u,\ubar,\theta,\phi_{-})\lesssim -(u+\ubar)\Delta,\quad (\quatre).
\end{align}
Thus integrating on $\ubar=cst,\theta=cst,\phi_{+}=cst$, we get 
\begin{align*}
    T^j\carterp^{k_1}\Phi^{k_2}\psip(u,\ubar,\theta,\phi_{-})=T^j\carterp^{k_1}\Phi^{k_2}\psip(u(w_{\rb,\gamma},&\ubar),\ubar,\theta,\phi_{-}|_{w=w_{\rb,\gamma}})\\
    &+\int_{u(w_{\rb,\gamma},\ubar)}^uO(-\Delta(u',\ubar)(u'+\ubar))\dee u',
\end{align*}
where $\phi_{-}|_{w=w_{\rb,\gamma}}=\phi_{-}+2r_{mod}(\ubar,u(w_{\rb,\gamma},\ubar))-2r_{mod}(\ubar,u)$. Using Theorem \ref{thm:psip} on $\{w=  w_{\rb,\gamma}\}\cap\{\ubar\geq\ubar_{\rb,\gamma}\}$ we obtain 
\begin{align*}
    T^j\carterp^{k_1}\Phi^{k_2}\psihatp(u,\ubar,\theta,\phi_{-})&=T^j\carterp^{k_1}\Phi^{k_2}\frac{\Delta^{-2}(u,\ubar)}{\ubar^7}\sum_{|m|\leq2}A_m(r_-)e^{2imr_{mod}(u,\ubar)}Q_{m,\sfrak}Y_{m,\sfrak}^{+\sfrak}(\cos\theta)e^{im\phi_{-}}\\
    &+O(\Delta^{-2}(u,\ubar)\ubar^{-7-j-\delta})+\Delta^{-2}(u,\ubar)\int_{u(w_{\rb,\gamma},\ubar)}^uO(-\Delta(u',\ubar)(u'+\ubar))\dee u'.
\end{align*}
We conclude by proving that in $\quatre$, 
$$\int_{u(w_{\rb,\gamma},\ubar)}^uO(-\Delta(u',\ubar)(u'+\ubar))\dee u'=O(\ubar^{-7-j-\delta}).$$
We have in $\quatre$, $-\Delta(u,\ubar)\sim\exp(-|\kappa_-|(u+\ubar))$. Thus 
\begin{align*}
    \Bigg|\int_{u(w_{\rb,\gamma},\ubar)}^u&O(-\Delta(u',\ubar)(u'+\ubar))\dee u'\Bigg|\\
    &\lesssim \exp(-|\kappa_-|\ubar)\left[\int_{u(w_{\rb,\gamma},\ubar)}^u |u'|\exp(-|\kappa_-|u')\dee u'+\ubar\int_{u(w_{\rb,\gamma},\ubar)}^u \exp(-|\kappa_-|u')\dee u'\right]\\
    &\lesssim \exp(-|\kappa_-|\ubar)(C_1+\ubar C_2)\\
    &\lesssim\ubar^{-7-j-\delta}
\end{align*}
in $\quatre$, where we defined
$$C_1=\int_{w_{\rb,\gamma}-r_++r_-}^{+\infty} |u'|\exp(-|\kappa_-|u')\dee u', \quad C_2=\int_{w_{\rb,\gamma}-r_++r_-}^{+\infty} \exp(-|\kappa_-|u')\dee u'.$$
This concludes the proof of Theorem \ref{thm:psipIV}.
\end{proof}

\appendix

\section{Computations for the energy method}
\subsection{Integration by parts on the sphere}\label{appendix:computationsI}

\begin{lem}\label{lem:integS}
The computation of the left-hand side of \eqref{eqn:teukmult} gives : 
\begin{align}\label{eqn:conservationener}
    \partial_\ubar\left(\int_{S(u,\ubar)}\mathbf{F}_\ubar[\psi]\dee\nu\right)+\partial_u\left(\int_{S(u,\ubar)}\mathbf{F}_u[\psi]\dee\nu\right)+\int_{S(u,\ubar)}\mathbf{B}[\psi]\dee\nu=\int_{S(u,\ubar)}\mu\Real(\overline{X(\psi)}O(\ubar^{-\beta}))\dee\nu,
\end{align}
where $$\mathbf{F}_\ubar[\psi]:=2(r^2+a^2)f(r)|\partial_u\psi|^2-\dfrac{1}{2}\mu g(r)(|\partial_\theta\psi|^2+|\mathcal{U}\psi|^2)+a\sin\theta f(r)\mu \mathfrak{R}(\overline{\partial_u\psi}\mathcal{U}\psi)+a\sin\theta g(r)\mu \mathfrak{R}(\overline{e_4\psi}\mathcal{U}\psi),$$
$$\mathbf{F}_u[\psi]:=2(r^2+a^2)g(r)|e_4\psi|^2-\dfrac{1}{2}\mu f(r)(|\partial_\theta\psi|^2+|\mathcal{U}\psi|^2)-a\sin\theta f(r)\mu \mathfrak{R}(\overline{\partial_u\psi}\mathcal{U}\psi)-a\sin\theta g(r)\mu \mathfrak{R}(\overline{e_4\psi}\mathcal{U}\psi),$$
and the bulk term is
\begin{align*}
    \mathbf{B}[\psi]:&=2(r\mu g(r)-\partial_u((r^2+a^2)g(r)))|e_4\psi|^2+2(r\mu f(r)-e_4((r^2+a^2)f(r)))|\partial_u\psi|^2\\
    &\quad +\dfrac{1}{2}(\partial_u(\mu f(r))+e_4(\mu g(r)))(|\partial_\theta\psi|^2+|\mathcal{U}\psi|^2)\\
    &\quad+\mu g(r)\frac{sra^2\mu\cos\theta\sin\theta}{(r^2+a^2)^2}\mathfrak{I}(\overline{\psi} \mathcal{U}\psi)-4cs(r-M)f(r)|\partial_u\psi|^2\\
    &\quad+4\mu g(r)as\cos\theta\mathfrak{I}(\overline{e_4\psi}T\psi)-4srg(r)\mu\mathfrak{R}(\overline{e_4\psi}T\psi)+g(r)[2r\mu-4cs(r-M)]\mathfrak{R}(\overline{e_4\psi}\partial_u\psi)\\
    &\quad+\dfrac{2g(r)ar\mu}{r^2+a^2}\mathfrak{R}(\overline{e_4\psi}\Phi\psi)-g(r)\left(\mu s+s^2\mu\dfrac{a^4\sin^2\theta\cos^2\theta}{(r^2+a^2)^2}-\mu V\right)\mathfrak{R}(\overline{e_4\psi}\psi)\\
    &\quad-2sg(r)\mu\dfrac{a^2\sin\theta\cos\theta}{(r^2+a^2)}\mathfrak{I}(\overline{e_4\psi}\mathcal{U}\psi)+\mu f(r)\frac{sra^2\mu\cos\theta\sin\theta}{(r^2+a^2)^2}\mathfrak{I}(\overline{\psi} \mathcal{U}\psi)\\
    &\quad+4\mu f(r)as\cos\theta\mathfrak{I}(\overline{\partial_u\psi}T\psi)-4srf(r)\mu\mathfrak{R}(\overline{\partial_u\psi}T\psi)+2r\mu f(r)\mathfrak{R}(\overline{\partial_u\psi}e_4\psi)\\
    &\quad-\dfrac{2f(r)ar\mu}{r^2+a^2}\mathfrak{R}(\overline{\partial_u\psi}\Phi\psi)-f(r)\left(\mu s+s^2\mu\dfrac{a^4\sin^2\theta\cos^2\theta}{(r^2+a^2)^2}-\mu V\right)\mathfrak{R}(\overline{\partial_u\psi}\psi)\\
    &\quad-2sf(r)\mu\dfrac{a^2\sin\theta\cos\theta}{(r^2+a^2)}\mathfrak{I}(\overline{\partial_u\psi}\mathcal{U}\psi).
\end{align*}
\end{lem}
\begin{rem}
    Setting $s=0$ in the LHS of \eqref{eqn:conservationener}, we find the same expression as in \cite[(3.5c)]{scalarMZ} for the scalar wave.
\end{rem}
\begin{proof}[Proof of Lemma \ref{lem:integS}]
In $\un$, we compute 
$$\int_{{S(u,\ubar)}}\Real(f(r)\overline{\partial_u\psi}\mu\teuk_s^{(c,V)}{\psi})+\Real(g(r)\overline{e_4\psi}\mu\teuk_s^{(c,V)}\psi)\dee\nu$$
using integration by parts on the spheres. We have using \eqref{eqn:muteuk1} :
\begin{align*}
    \int_{{S(u,\ubar)}}\Real(g(r)\overline{e_4\psi}&\mu\teuk_s^{(c,V)}\psi)\:\dee\nu\\
    =&4(r^2+a^2)g(r)\int_{{S(u,\ubar)}}\mathfrak{R}(\overline{e_4\psi}\partial_u e_4\psi)\:\dee\nu+\mu g(r)\int_{{S(u,\ubar)}}\mathfrak{R}(\overline{e_4\psi} \mathcal{U}^2\psi)\:\dee\nu\\
    &+\mu g(r)\int_{{S(u,\ubar)}}\mathfrak{R}\left(\overline{e_4\psi}\dfrac{1}{\sin\theta}\partial_\theta(\sin\theta\partial_\theta\psi)\right)\dee\nu\\
    &+4\mu g(r) as \int_{{S(u,\ubar)}}\cos\theta\mathfrak{I}(\overline{e_4\psi} T\psi)\dee\nu\\
    &+2r\mu g(r)\int_{{S(u,\ubar)}}|e_4\psi|^2\dee\nu+ g(r) [2r\mu -4cs(r-M)]\int_{{S(u,\ubar)}}\mathfrak{R}(\overline{e_4\psi}\partial_u\psi)\dee\nu\\
    &-4sr\mu g(r)\int_{{S(u,\ubar)}}\mathfrak{R}(\overline{e_4\psi}T\psi)\dee\nu+\dfrac{2g(r)ar\mu }{r^2+a^2}\int_{{S(u,\ubar)}}\mathfrak{R}(\overline{e_4\psi}\Phi\psi)\dee\nu\\
    &-g(r)\left(\mu s+s^2\mu\dfrac{a^4\sin^2\theta\cos^2\theta}{(r^2+a^2)^2}-\mu V\right)\int_{{S(u,\ubar)}}\mathfrak{R}(\overline{e_4\psi}\psi)\dee\nu\\
    &-2sg(r)\mu\dfrac{a^2\sin\theta\cos\theta}{(r^2+a^2)}\int_{{S(u,\ubar)}}\mathfrak{I}(\overline{e_4\psi}\mathcal{U}\psi)\dee\nu.
\end{align*}
We begin with the term 
\begin{align*}
    4(r^2+a^2)g(r)&\int_{{S(u,\ubar)}}\mathfrak{R}(\overline{e_4\psi}\partial_u e_4\psi)\:\dee\nu=2(r^2+a^2)g(r)\partial_u\left(\int_{{S(u,\ubar)}} |e_4\psi|^2\:\dee\nu\right)\\
    &=\partial_u\left(2(r^2+a^2)g(r)\int_{{S(u,\ubar)}} |e_4\psi|^2\:\dee\nu\right)-2\partial_u((r^2+a^2)g(r))\int_{{S(u,\ubar)}} |e_4\psi|^2\:\dee\nu.
\end{align*}
Using Lemma \ref{lem:intpartU} and
$$[e_4,\mathcal{U}]=e_4\left(is\cot\theta\frac{\Sigma}{r^2+a^2}\right)=\frac{isra^2\mu\cos\theta\sin\theta}{(r^2+a^2)^2},$$
we get, in view of $T=\partial_\ubar-\partial_u$,
\begin{align*}
    \mu g(r)\int_{{S(u,\ubar)}}\mathfrak{R}(&\overline{e_4\psi} \mathcal{U}^2\psi)\:\dee\nu\\
    &=T\left(\mu g(r)\int_{{S(u,\ubar)}}a\sin\theta\mathfrak{R}(\overline{e_4\psi}\mathcal{U}\psi)\:\dee\nu\right)-\mu g(r)\int_{{S(u,\ubar)}}\mathfrak{R}(\overline{\mathcal{U}e_4\psi} \mathcal{U}\psi)\:\dee\nu\\
    &=(\partial_\ubar-\partial_u)\left(\mu g(r)\int_{{S(u,\ubar)}}a\sin\theta\mathfrak{R}(\overline{e_4\psi}\mathcal{U}\psi)\:\dee\nu\right)-\dfrac{1}{2}\partial_\ubar\left(\mu g(r)\int_{{S(u,\ubar)}}|\mathcal{U}\psi|^2\:\dee\nu\right)\\
    &\quad\quad+\dfrac{1}{2}e_4(\mu g(r))\int_{{S(u,\ubar)}}|\mathcal{U}\psi|^2\:\dee\nu+\mu g(r)\frac{sra^2\mu\cos\theta\sin\theta}{(r^2+a^2)^2}\int_{{S(u,\ubar)}}\mathfrak{I}(\overline{\psi} \mathcal{U}\psi)\:\dee\nu.
\end{align*}
We also have, using $[e_4,\partial_\theta]=0$,
\begin{align*}
    \mu g(r)\int_{{S(u,\ubar)}}\mathfrak{R}\Bigg(\overline{e_4\psi}\dfrac{1}{\sin\theta}\partial_\theta&(\sin\theta\partial_\theta\psi)\Bigg)\dee\nu\\
    &=-\mu g(r)\int_{{S(u,\ubar)}}\mathfrak{R}\left(\partial_\theta\overline{e_4\psi}\partial_\theta\psi\right)\dee\nu\\
    &=-\dfrac{1}{2}\mu g(r)\int_{{S(u,\ubar)}}e_4|\partial_\theta\psi|^2\dee\nu\\
    &=\partial_\ubar\left(-\dfrac{1}{2}\mu g(r)\int_{{S(u,\ubar)}}|\partial_\theta\psi|^2\dee\nu\right)+\dfrac{1}{2}e_4(\mu g(r))\int_{{S(u,\ubar)}}|\partial_\theta\psi|^2\dee\nu.
\end{align*}
All the remaing terms will be put in the bulk term $\mathbf{B}[\psi]$. Now we do the same computations for 
\begin{align*}
    \int_{{S(u,\ubar)}}\mathfrak{R}(f(r)\overline{\partial_u\psi}&\mu\teuk_s^{(c,V)}\psi)\:\dee\nu\\
    =&4(r^2+a^2)f(r)\int_{{S(u,\ubar)}}\mathfrak{R}(\overline{\partial_u\psi} e_4\partial_u\psi)\:\dee\nu+\mu f(r)\int_{{S(u,\ubar)}} \mathfrak{R}(\overline{\partial_u\psi} \mathcal{U}^2\psi)\:\dee\nu\\
    &+\mu f(r)\int_{{S(u,\ubar)}}\mathfrak{R}\left(\overline{\partial_u\psi}\dfrac{1}{\sin\theta}\partial_\theta(\sin\theta\partial_\theta\psi)\right)\dee\nu\\
    &+4\mu f(r) as \int_{{S(u,\ubar)}}\cos\theta\mathfrak{I}(\overline{\partial_u\psi} T\psi)\dee\nu\\
    &+f(r) [2r\mu -4cs(r-M)]\int_{{S(u,\ubar)}}|\partial_u\psi|^2\dee\nu+ 2r\mu f(r)\int_{{S(u,\ubar)}}\mathfrak{R}(\overline{\partial_u\psi}e_4\psi)\dee\nu\\
    &-4sr\mu f(r)\int_{{S(u,\ubar)}}\mathfrak{R}(\overline{\partial_u\psi}T\psi)\dee\nu-\dfrac{2f(r)ar\mu }{r^2+a^2}\int_{{S(u,\ubar)}}\mathfrak{R}(\overline{\partial_u\psi}\Phi\psi)\dee\nu\\
    &-f(r)\left(\mu s+s^2\mu\dfrac{a^4\sin^2\theta\cos^2\theta}{(r^2+a^2)^2}-\mu V\right)\int_{{S(u,\ubar)}}\mathfrak{R}(\overline{\partial_u\psi}\psi)\dee\nu\\
    &-2sf(r)\mu\dfrac{a^2\sin\theta\cos\theta}{(r^2+a^2)}\int_{{S(u,\ubar)}}\mathfrak{I}(\overline{\partial_u\psi}\mathcal{U}\psi)\dee\nu.
\end{align*}
We begin with the term 
\begin{align*}
    4(r^2+a^2)f(r)&\int_{{S(u,\ubar)}}\mathfrak{R}(\overline{\partial_u\psi} e_4\partial_u\psi)\:\dee\nu=2(r^2+a^2)f(r)e_4\left(\int_{{S(u,\ubar)}} |\partial_u\psi|^2\:\dee\nu\right)\\
    &=\partial_\ubar\left(2(r^2+a^2)f(r)\int_{{S(u,\ubar)}} |\partial_u\psi|^2\:\dee\nu\right)-2e_4((r^2+a^2)f(r))\int_{{S(u,\ubar)}} |\partial_u\psi|^2\:\dee\nu.
\end{align*}
Next we have
\begin{align*}
    \mu f(r)\int_{{S(u,\ubar)}} \mathfrak{R}&(\overline{\partial_u\psi} \mathcal{U}^2\psi)\:\dee\nu\\
    &=T\left(\mu f(r)\int_{{S(u,\ubar)}}a\sin\theta\mathfrak{R}(\overline{\partial_u\psi}\mathcal{U}\psi)\:\dee\nu\right)-\mu f(r)\int_{{S(u,\ubar)}}\mathfrak{R}(\overline{\mathcal{U}\partial_u\psi} \mathcal{U}\psi)\:\dee\nu\\
    &=(\partial_\ubar-\partial_u)\left(\mu f(r)\int_{{S(u,\ubar)}}a\sin\theta\mathfrak{R}(\overline{\partial_u\psi}\mathcal{U}\psi)\:\dee\nu\right)-\dfrac{1}{2}\partial_u\left(\mu f(r)\int_{{S(u,\ubar)}}|\mathcal{U}\psi|^2\:\dee\nu\right)\\
    &\quad\quad+\dfrac{1}{2}\partial_u(\mu f(r))\int_{{S(u,\ubar)}}|\mathcal{U}\psi|^2\:\dee\nu+\mu f(r)\frac{sra^2\mu\cos\theta\sin\theta}{(r^2+a^2)^2}\int_{{S(u,\ubar)}}\mathfrak{I}(\overline{\psi} \mathcal{U}\psi)\:\dee\nu,
\end{align*}
\begin{align*}
    \mu f(r)\int_{{S(u,\ubar)}}\mathfrak{R}\Bigg(\overline{\partial_u\psi}\dfrac{1}{\sin\theta}\partial_\theta&(\sin\theta\partial_\theta\psi)\Bigg)\dee\nu\\
    &=-\mu f(r)\int_{{S(u,\ubar)}}\mathfrak{R}\left(\partial_\theta\overline{\partial_u\psi}\partial_\theta\psi\right)\dee\nu\\
    &=-\dfrac{1}{2}\mu f(r)\int_{{S(u,\ubar)}}\partial_u|\partial_\theta\psi|^2\dee\nu\\
    &=\partial_u\left(-\dfrac{1}{2}\mu f(r)\int_{{S(u,\ubar)}}|\partial_\theta\psi|^2\dee\nu\right)+\dfrac{1}{2}\partial_u(\mu f(r))\int_{{S(u,\ubar)}}|\partial_\theta\psi|^2\dee\nu.
\end{align*}
Combining everything, \eqref{eqn:teukmult} gives 
$$\partial_\ubar\left(\int_{{S(u,\ubar)}}\mathbf{F}_\ubar[\psi]\dee\nu\right)+\partial_u\left(\int_{{S(u,\ubar)}}\mathbf{F}_u[\psi]\dee\nu\right)+\int_{{S(u,\ubar)}}\mathbf{B}[\psi]\dee\nu=\int_{S(u,\ubar)}\mu\Real(\overline{X(\psi)}O(\ubar^{-\beta}))\dee\nu,$$
where 
$$\mathbf{F}_\ubar[\psi]=2(r^2+a^2)f(r)|\partial_u\psi|^2-\dfrac{1}{2}\mu g(r)(|\partial_\theta\psi|^2+|\mathcal{U}\psi|^2)+a\sin\theta f(r)\mu \mathfrak{R}(\overline{\partial_u\psi}\mathcal{U}\psi)+a\sin\theta g(r)\mu \mathfrak{R}(\overline{e_4\psi}\mathcal{U}\psi),$$
$$\mathbf{F}_u[\psi]=2(r^2+a^2)g(r)|e_4\psi|^2-\dfrac{1}{2}\mu f(r)(|\partial_\theta\psi|^2+|\mathcal{U}\psi|^2)-a\sin\theta f(r)\mu \mathfrak{R}(\overline{\partial_u\psi}\mathcal{U}\psi)-a\sin\theta g(r)\mu \mathfrak{R}(\overline{e_4\psi}\mathcal{U}\psi),$$
and 
\begin{align*}
    \mathbf{B}[\psi]&=2(r\mu g(r)-\partial_u((r^2+a^2)g(r)))|e_4\psi|^2+2(r\mu f(r)-e_4((r^2+a^2)f(r)))|\partial_u\psi|^2\\
    &\quad +\dfrac{1}{2}(\partial_u(\mu f(r))+e_4(\mu g(r)))(|\partial_\theta\psi|^2+|\mathcal{U}\psi|^2)\\
    &\quad+\mu g(r)\frac{sra^2\mu\cos\theta\sin\theta}{(r^2+a^2)^2}\mathfrak{I}(\overline{\psi} \mathcal{U}\psi)-4cs(r-M)f(r)|\partial_u\psi|^2\\
    &\quad+4\mu g(r)as\cos\theta\mathfrak{I}(\overline{e_4\psi}T\psi)-4srg(r)\mu\mathfrak{R}(\overline{e_4\psi}T\psi)+g(r)[2r\mu-4cs(r-M)]\mathfrak{R}(\overline{e_4\psi}\partial_u\psi)\\
    &\quad+\dfrac{2g(r)ar\mu}{r^2+a^2}\mathfrak{R}(\overline{e_4\psi}\Phi\psi)-g(r)\left(\mu s+s^2\mu\dfrac{a^4\sin^2\theta\cos^2\theta}{(r^2+a^2)^2}-\mu V\right)\mathfrak{R}(\overline{e_4\psi}\psi)\\
    &\quad-2sg(r)\mu\dfrac{a^2\sin\theta\cos\theta}{(r^2+a^2)}\mathfrak{I}(\overline{e_4\psi}\mathcal{U}\psi)+\mu f(r)\frac{sra^2\mu\cos\theta\sin\theta}{(r^2+a^2)^2}\mathfrak{I}(\overline{\psi} \mathcal{U}\psi)\\
    &\quad+4\mu f(r)as\cos\theta\mathfrak{I}(\overline{\partial_u\psi}T\psi)-4srf(r)\mu\mathfrak{R}(\overline{\partial_u\psi}T\psi)+2r\mu f(r)\mathfrak{R}(\overline{\partial_u\psi}e_4\psi)\\
    &\quad-\dfrac{2f(r)ar\mu}{r^2+a^2}\mathfrak{R}(\overline{\partial_u\psi}\Phi\psi)-f(r)\left(\mu s+s^2\mu\dfrac{a^4\sin^2\theta\cos^2\theta}{(r^2+a^2)^2}-\mu V\right)\mathfrak{R}(\overline{\partial_u\psi}\psi)\\
    &\quad-2sf(r)\mu\dfrac{a^2\sin\theta\cos\theta}{(r^2+a^2)}\mathfrak{I}(\overline{\partial_u\psi}\mathcal{U}\psi),
\end{align*}
which concludes the proof of Lemma \ref{lem:integS}.
\end{proof}
\subsection{Lower bound for the bulk term in $\un$ for $s=-2$}\label{appendix:bulkI}
\begin{lem}\label{lem:bulkpos}
    For $c>0$, $s=-2$, and for $p=p(a,M,c,V)$ chosen large enough, we have in $\un$
    \begin{align}\label{eq:bulkposI}
    \int_{S(u,\ubar)}\mathbf{B}[\psi]\dee\nu\gtrsim (-\mu)\int_{S(u,\ubar)}\ener[\psi]\dee\nu.
\end{align}
\end{lem}
\begin{proof}
We have 
$$\begin{aligned}
&r \mu g(r)-\partial_u\left(\left(r^2+a^2\right) g(r)\right)  =-\mu p r\left(r^2+a^2\right)^p, \\
&r \mu f(r)-e_4\left(\left(r^2+a^2\right) f(r)\right)  =(-\mu)^{-1}[(r-M)-\mu(p+1) r]\left(r^2+a^2\right)^p\\
&\quad\quad\quad\quad\quad\quad\quad\quad\quad\quad\quad\quad\quad \gtrsim (-\mu)^{-1}(1-\mu p)(r^2+a^2)^p,
\end{aligned}$$
in $\un$ for $p$ large, as $r_+-M>0$. We also have
$$\partial_u(\mu f(r))+e_4(\mu g(r))  =-\mu\left(r^2+a^2\right)^{p-1}\left(\dfrac{2 M p r^2}{r^2+a^2}+r \mu-(r-M)\right)\gtrsim -\mu p(r^2+a^2)^p.
$$ 
Denoting the principal bulk term
\begin{align*}
    \mathbf{B}_{pr}[\psi]:=2(r\mu g(&r)-\partial_u((r^2+a^2)g(r)))|e_4\psi|^2+2(r\mu f(r)-e_4((r^2+a^2)f(r)))|\partial_u\psi|^2\\
    &\quad +\dfrac{1}{2}(\partial_u(\mu f(r))+e_4(\mu g(r)))(|\partial_\theta\psi|^2+|\mathcal{U}\psi|^2),
\end{align*}
we have shown 
\begin{align}\label{eqn:bpr}
    \mathbf{B}_{pr}[\psi]\gtrsim (-\mu)(r^2+a^2)^p[p|e_4\psi|^2+(1-\mu p)|e_3\psi|^2+p(|\partial_\theta\psi|^2+|\mathcal{U}\psi|^2)].
\end{align}
The only thing left to prove is that we can take $p$ large enough so that $\mathbf{B}[\psi]-\mathbf{B}_{pr}[\psi]$ can be absorbed in $\mathbf{B}_{pr}[\psi]$ after integrating on ${S(u,\ubar)}$. This is due to the following mix between weighted Cauchy-Schwarz of the type 
$$|ab|\leq\frac{\varepsilon a^2+\varepsilon^{-1}b^2}{2}$$
and the Poincaré inequality \eqref{eqn:poincare}. We have the following bounds : 
\begin{align*}
    \left|\intS \mu g(r)\frac{sra^2\mu\cos\theta\sin\theta}{(r^2+a^2)^2}\mathfrak{I}(\overline{\psi} \mathcal{U}\psi)\dee\nu\right|&\lesssim (-\mu)(r^2+a^2)^p\left(\intS\ener_{deg}[\psi]\dee\nu+\intS|\mathcal{U}\psi|^2\dee\nu\right)\\
    &\lesssim(-\mu)(r^2+a^2)^p\intS\ener_{deg}[\psi]\dee\nu,
\end{align*}
$$\left|\intS4\mu g(r)as\cos\theta\mathfrak{I}(\overline{e_4\psi}T\psi)-4srg(r)\mu\mathfrak{R}(\overline{e_4\psi}T\psi)\dee\nu\right|\lesssim(-\mu)(r^2+a^2)^p\intS\ener_{deg}[\psi]\dee\nu,$$
where we used \eqref{eqn:expreT} to write
$$T=O(\mu)e_3+O(1)e_4+O(1)\mathcal{U}+O(1).$$
We continue with 
$$|2r\mu g(r)\mathfrak{R}(\overline{e_4\psi}\partial_u\psi)|\lesssim(-\mu)(r^2+a^2)^p\ener_{deg}[\psi],$$
$$\left|\intS\dfrac{2g(r)ar\mu}{r^2+a^2}\mathfrak{R}(\overline{e_4\psi}\Phi\psi)\dee\nu\right|\lesssim(-\mu)(r^2+a^2)^p\intS\ener_{deg}[\psi]\dee\nu,$$
where we used \eqref{eqn:exprephi} to get $\Phi=O(\mu)e_3+O(1)e_4+O(1)\mathcal{U}+O(1)$. Next, 
$$\left|\intS-g(r)\left(\mu s+s^2\mu\dfrac{a^4\sin^2\theta\cos^2\theta}{(r^2+a^2)^2}-\mu V\right)\mathfrak{R}(\overline{e_4\psi}\psi)\dee\nu\right|\lesssim(-\mu)(r^2+a^2)^p\intS\ener_{deg}[\psi]\dee\nu,$$
$$\left|2sg(r)\mu\dfrac{a^2\sin\theta\cos\theta}{(r^2+a^2)}\mathfrak{I}(\overline{e_4\psi}\mathcal{U}\psi)\right|\lesssim(-\mu)(r^2+a^2)^p\ener_{deg}[\psi],$$
$$\left|\intS\mu f(r)\frac{sra^2\mu\cos\theta\sin\theta}{(r^2+a^2)^2}\mathfrak{I}(\overline{\psi} \mathcal{U}\psi)\dee\nu\right|\lesssim(-\mu)(r^2+a^2)^p\intS\ener_{deg}[\psi]\dee\nu,$$
\begin{align*}
    \Bigg|\intS4\mu f(r)&as\cos\theta\mathfrak{I}(\overline{\partial_u\psi}T\psi)-4srf(r)\mu\mathfrak{R}(\overline{\partial_u\psi}T\psi)\dee\nu\Bigg|\\
    &\lesssim(-\mu)\varepsilon^{-1}(r^2+a^2)^p\intS|T\psi|^2\dee\nu+(-\mu)(r^2+a^2)^p\varepsilon\intS|e_3\psi|^2\dee\nu\\
    &\lesssim(-\mu)\varepsilon^{-1}(r^2+a^2)^p\intS\ener_{deg}[\psi]\dee\nu+(-\mu)(r^2+a^2)^p\varepsilon\intS|e_3\psi|^2\dee\nu,
\end{align*}
$$|2r\mu f(r)\mathfrak{R}(\overline{\partial_u\psi}e_4\psi)|\lesssim(-\mu)\varepsilon^{-1}(r^2+a^2)^p\ener_{deg}[\psi]+(-\mu)\varepsilon (r^2+a^2)^p|e_3\psi|^2,$$
\begin{align*}
    \Bigg|\intS-\dfrac{2f(r)ar\mu}{r^2+a^2}&\mathfrak{R}(\overline{\partial_u\psi}\Phi\psi)\dee\nu\Bigg|\lesssim\\
    &(-\mu)\varepsilon^{-1}(r^2+a^2)^p\intS\ener_{deg}[\psi]\dee\nu+(-\mu)(r^2+a^2)^p\varepsilon\intS|e_3\psi|^2\dee\nu,
\end{align*}
\begin{align*}
    \Bigg|\intS-f(r)\Bigg(\mu s+&s^2\mu\frac{a^4\sin^2\theta\cos^2\theta}{(r^2+a^2)^2}-\mu V\Bigg)\mathfrak{R}(\overline{\partial_u\psi}\psi)\dee\nu\Bigg|\lesssim\\
    &(-\mu)\varepsilon^{-1}(r^2+a^2)^p\intS\ener_{deg}[\psi]\dee\nu+(-\mu)(r^2+a^2)^p\varepsilon\intS|e_3\psi|^2\dee\nu,
\end{align*}

$$\left|2sf(r)\mu\dfrac{a^2\sin\theta\cos\theta}{(r^2+a^2)}\mathfrak{I}(\overline{\partial_u\psi}\mathcal{U}\psi)\right|\lesssim(-\mu)\varepsilon^{-1}(r^2+a^2)^p\ener_{deg}[\psi]+(-\mu)\varepsilon(r^2+a^2)^p|e_3\psi|^2,$$
$$|-4cs(r-M)g(r)\mathfrak{R}(\overline{e_4\psi}\partial_u\psi)|\lesssim (-\mu)(r^2+a^2)^p\varepsilon^{-1}\ener_{deg}[\psi]+(-\mu)(r^2+a^2)^p\varepsilon|e_3\psi|^2.$$
Notice that thanks to the $\mu$ in front of $|e_3\psi|^2$ in the definition \eqref{eqn:defenerdeg} of $\ener_{deg}[\psi]$, the integral on ${S(u,\ubar)}$ of $\ener_{deg}[\psi]$ can be absorbed in the one of \eqref{eqn:bpr}, for $p$ large enough. Moreover, we chose the value of $\varepsilon$ in the weighted Cauchy-Schwarz inequalities above so that all the terms bounded by a constant times $(-\mu)(r^2+a^2)^p\varepsilon|e_3\psi|^2$ can be absorbed in the term $(1-\mu p)|e_3\psi|^2\geq|e_3\psi|^2$ of \eqref{eqn:bpr}, for $\varepsilon>0$ small enough, after integration on the sphere. The only remaining term in the bulk that we want to absorb is
$$-4cs(r-M)f(r)|\partial_u\psi|^2=-4cs(r-M)(-\mu)(r^2+a^2)^p|e_3\psi|^2.$$ 
As it lacks a factor $\mu$, we cannot bound it in a pointwise manner (nor its integral on ${S(u,\ubar)}$) by $\ener_{deg}[\psi]$. But as $r_+-M>0$, we have $r-M\gtrsim 1$ for $r$ close to $r_+$ in $\un$, say for $r\in[r_+-\varepsilon_0,r_+]$. As we chose $s=-2<0$ and $c>0$, we have
$$-4cs(r-M)f(r)|\partial_u\psi|^2\geq 0,\quad r\in[r_+-\varepsilon_0,r_+].$$
And for $r\in [\rb,r_+-\varepsilon_0]$, we can absorb $-4cs(r-M)f(r)|\partial_u\psi|^2$ in the term 
$$-\mu p(r^2+a^2)^p|e_3\psi|^2\gtrsim -\mu(r_+-\varepsilon_0)p(r^2+a^2)^p|e_3\psi|^2$$
that appears in $\mathbf{B}_{pr}[\psi]$, for $p$ large enough. This concludes the proof of Lemma \ref{lem:bulkpos}.
\end{proof}

\subsection{Lower bound for the bulk term in $\{r_-<r\leq\rb\}$ for $s=+2$}\label{appendix:bulkII}
\begin{lem}\label{lem:posbulkII}
    For $c>1/4$, $s=+2$, $p=p(a,M,c,V)\gg 1$ large enough, and $\rb=\rb(a,M,c,V)$ sufficiently close to $r_-$, we have in $\{r_-\leq r\leq\rb\}$,
    \begin{align}\label{eqn:bulkIIpos}
    \intS\widehat{\mathbf{B}}[\psi]\dee\nu\gtrsim(-\mu)\intS{\ener}[\psi]\dee\nu.
\end{align}
\end{lem}
\begin{proof}
We have 
\begin{align*}
    r \mu g(r)-\partial_u\left(\left(r^2+a^2\right) g(r)\right)  &=-\mu p r\left(r^2+a^2\right)^p, \\
r \mu f(r)-e_4\left(\left(r^2+a^2\right) f(r)\right)  &=(-\mu)^{-1}[(r-M)-\mu(p+1) r]\left(r^2+a^2\right)^p.
\end{align*}
We also have
$$\partial_u(\mu f(r))+e_4(\mu g(r))  =-\mu\left(r^2+a^2\right)^{p-1}\left(\dfrac{2 M p r^2}{r^2+a^2}+r \mu-(r-M)\right)\gtrsim -\mu p(r^2+a^2)^p.
$$ 
Define the principal bulk
\begin{align*}
    \widehat{\mathbf{B}}_{pr}[\psi]:=2(r\mu g(&r)-\partial_u((r^2+a^2)g(r)))|e_4\psi|^2+2(r\mu f(r)-e_4((r^2+a^2)f(r)))|\ethat\psi|^2\\
    &\quad +\dfrac{1}{2}(\partial_u(\mu f(r))+e_4(\mu g(r)))(|\partial_\theta\psi|^2+|\mathcal{U}\psi|^2)-4cs(r-M)f(r)|\ethat\psi|^2.
\end{align*}
Note that unlike in the redshift region, we add a term $-4cs(r-M)f(r)|\ethat\psi|^2$ in the principal bulk. The positive spin will help us get a positive simple bulk term, without the need of replacing $f$ and $g$ by more complicated $\log$ multipliers, as is needed for the scalar wave equation in \cite[p. 22]{scalarMZ}. We have
\begin{align*}
    \widehat{\mathbf{B}}_{pr}[\psi]\gtrsim (-\mu)p(r^2+a^2)^p&|e_4\psi|^2+2(-\mu)^{-1}[(r-M)(1-2cs)-\mu(p+1) r]\left(r^2+a^2\right)^p|\ethat\psi|^2\\
    &\quad +(-\mu)p(r^2+a^2)^p(|\partial_\theta\psi|^2+|\mathcal{U}\psi|^2).
\end{align*}
Notice that in $\deux$, we have $r-M\lesssim -1$ thus for $s=+2$ and $c>1/4$,
\begin{align}\label{eqn:use1/4}
    (r-M)(1-2cs)-\mu(p+1) r=(r-M)(1-4c)-\mu(p+1)\gtrsim 1-\mu p,\quad r\in(r_-,\rb].
\end{align}
This is where we use the positivity of the spin, to get an effective blueshift effect. We have shown 
\begin{align}\label{eqn:bprhat}
    \widehat{\mathbf{B}}_{pr}[\psi]\gtrsim (-\mu)(r^2+a^2)^p[p|e_4\psi|^2+(1-\mu p)|e_3\psi|^2+p(|\partial_\theta\psi|^2+|\mathcal{U}\psi|^2)].
\end{align}
The only thing left to prove is that we can take $p$ large enough so that $\widehat{\mathbf{B}}[\psi]-\widehat{\mathbf{B}}_{pr}[\psi]$ can be absorbed in $\widehat{\mathbf{B}}_{pr}[\psi]$ after integrating on ${S(u,\ubar)}$. This is due to the following mix between weighted Cauchy-Schwarz of the type 
$$|ab|\leq\frac{\varepsilon a^2+\varepsilon^{-1}b^2}{2}$$
and the Poincaré inequality \eqref{eqn:poincare} :
\begin{align*}
    \Bigg|\intS \mu g(r)\frac{sra^2\mu\cos\theta\sin\theta}{(r^2+a^2)^2}\mathfrak{I}(\overline{\psi} \mathcal{U}\psi)\dee\nu\Bigg|&\lesssim (-\mu)(r^2+a^2)^p\left(\intS\ener_{deg}[\psi]\dee\nu+\intS|\mathcal{U}\psi|^2\dee\nu\right)\\
    &\lesssim(-\mu)(r^2+a^2)^p\intS\ener_{deg}[\psi]\dee\nu,
\end{align*}
$$\Bigg|\intS4\mu g(r)as\cos\theta\mathfrak{I}(\overline{e_4\psi}T\psi)-4srg(r)\mu\mathfrak{R}(\overline{e_4\psi}T\psi)\dee\nu\Bigg|\lesssim(-\mu)(r^2+a^2)^p\intS\ener_{deg}[\psi]\dee\nu,$$
where we used again \eqref{eqn:expreT} to get
$$T=O(\mu)e_3+O(1)e_4+O(1)\mathcal{U}+O(1).$$
We continue with 
$$|2r\mu g(r)\mathfrak{R}(\overline{e_4\psi}\ethat\psi)|\lesssim(-\mu)(r^2+a^2)^p\ener_{deg}[\psi],$$
$$|-4sc(r-M)g(r)\Real(\overline{\partial_\ubar\psi}\ethat\psi)|\lesssim (-\mu)(r^2+a^2)^p(\varepsilon|e_3\psi|^2+\varepsilon^{-1}\ener_{deg}[\psi]),$$
$$\Bigg|\intS\dfrac{2g(r)ar\mu}{r^2+a^2}\mathfrak{R}(\overline{e_4\psi}\Phi\psi)\dee\nu\Bigg|\lesssim(-\mu)(r^2+a^2)^p\intS\ener_{deg}[\psi]\dee\nu,$$
where we used \eqref{eqn:exprephi} to get $\Phi=O(\mu)e_3+O(1)e_4+O(1)\mathcal{U}+O(1)$. Next, 
$$\left|\intS-g(r)\left(\mu s+s^2\mu\dfrac{a^4\sin^2\theta\cos^2\theta}{(r^2+a^2)^2}-\mu V\right)\mathfrak{R}(\overline{e_4\psi}\psi)\dee\nu\right|\lesssim(-\mu)(r^2+a^2)^p\intS\ener_{deg}[\psi]\dee\nu,$$
$$\Bigg|2sg(r)\mu\dfrac{a^2\sin\theta\cos\theta}{(r^2+a^2)}\mathfrak{I}(\overline{e_4\psi}\mathcal{U}\psi)\Bigg|\lesssim(-\mu)(r^2+a^2)^p\ener_{deg}[\psi],$$
$$\Bigg|\intS\mu f(r)\frac{sra^2\mu\cos\theta\sin\theta}{(r^2+a^2)^2}\mathfrak{I}(\overline{\psi} \mathcal{U}\psi)\dee\nu\Bigg|\lesssim(-\mu)(r^2+a^2)^p\intS\ener_{deg}[\psi]\dee\nu,$$
\begin{align*}
    \Bigg|\intS4\mu f(r)&as\cos\theta\mathfrak{I}(\overline{\ethat\psi}T\psi)-4srf(r)\mu\mathfrak{R}(\overline{\ethat\psi}T\psi)\dee\nu\Bigg|\\
    &\lesssim(-\mu)\varepsilon^{-1}(r^2+a^2)^p\intS|T\psi|^2\dee\nu+(-\mu)(r^2+a^2)^p\varepsilon\intS|e_3\psi|^2\dee\nu\\
    &\lesssim(-\mu)\varepsilon^{-1}(r^2+a^2)^p\intS\ener_{deg}[\psi]\dee\nu+(-\mu)(r^2+a^2)^p\varepsilon\intS|e_3\psi|^2\dee\nu,
\end{align*}
$$|2r\mu f(r)\mathfrak{R}(\overline{\ethat\psi}e_4\psi)|\lesssim(-\mu)\varepsilon^{-1}(r^2+a^2)^p\ener_{deg}[\psi]+(-\mu)\varepsilon (r^2+a^2)^p|e_3\psi|^2,$$
\begin{align*}
    \Bigg|\intS-\dfrac{2f(r)ar\mu}{r^2+a^2}&\mathfrak{R}(\overline{\ethat\psi}\Phi\psi)\dee\nu\Bigg|\\
    &\lesssim(-\mu)\varepsilon^{-1}(r^2+a^2)^p\intS\ener_{deg}[\psi]\dee\nu+(-\mu)(r^2+a^2)^p\varepsilon\intS|e_3\psi|^2\dee\nu,
\end{align*}
\begin{align*}
    \Bigg|\intS-f(r)\Bigg(\mu s+&s^2\mu\dfrac{a^4\sin^2\theta\cos^2\theta}{(r^2+a^2)^2}-\mu V\Bigg)\mathfrak{R}(\overline{\ethat\psi}\psi)\dee\nu\Bigg|\\
    &\lesssim(-\mu)\varepsilon^{-1}(r^2+a^2)^p\intS\ener_{deg}[\psi]\dee\nu+(-\mu)(r^2+a^2)^p\varepsilon\intS|e_3\psi|^2\dee\nu,
\end{align*}

$$\Bigg|2sf(r)\mu\dfrac{a^2\sin\theta\cos\theta}{(r^2+a^2)}\mathfrak{I}(\overline{\ethat\psi}\mathcal{U}\psi)\Bigg|\lesssim(-\mu)\varepsilon^{-1}(r^2+a^2)^p\ener_{deg}[\psi]+(-\mu)\varepsilon(r^2+a^2)^p|e_3\psi|^2.$$
Notice that thanks to the $\mu$ in front of $|e_3\psi|^2$ in the definition of $\ener_{deg}[\psi]$, the integral on ${S(u,\ubar)}$ of $\ener_{deg}[\psi]$ can be absorbed in the one of \eqref{eqn:bprhat}, for $p$ large enough. Moreover, we choose the value of $\varepsilon$ in the weighted Cauchy-Schwarz inequalities above so that all the terms bounded by a constant times (the integral of) $(-\mu)(r^2+a^2)^p\varepsilon|e_3\psi|^2$ can be absorbed in the term $(1-\mu p)|e_3\psi|^2\geq|e_3\psi|^2$ in \eqref{eqn:bprhat}, for $\varepsilon>0$ small enough. This concludes the proof of Lemma \ref{lem:posbulkII}.

\end{proof}

\section{Computation of $A_m(r)$ and proof of Lemma \ref{prop:teukansatz}}\label{appendix:am(r)}
The polynomial $A_m(r)=(r^2+a^2)^2\mathfrak{f}_{m,2}(r)$ is defined in \cite[Eq. (5.82c)]{pricelaw} by plugging the ansatz 
$$\ansatzm$$
for $\psim$ into the TSI \eqref{eqn:tsi1} and requiring the compatibility
\begin{align}\label{eqn:holds}
    \frac{1}{24}\Delta^2\partial_{r_\mathrm{out}}^4\left(\Delta^2\ansatzm\right)=\frac{1}{\ubar^7}\sum_{|m|\leq 2}A_m(r)Q_{m,2}Y_{m,2}^{-2}(\cos\theta)e^{im\phi_{+}}+O(\ubar^{-8}),
\end{align}
see also \cite[Eq. (5.88), (5.89)]{pricelaw}, where the factor $1/24$ corresponds to $1/(2\mathfrak{s})!$ with $\mathfrak{s}=2$. We recall
$$\partial_{r_\mathrm{out}}=2\mu^{-1}e_4=\partial_r+\mu^{-1}\partial_t+\frac{a}{\Delta}\Phi.$$
More precisally, the computation done in \cite[p. 68,  eq. (5.88)]{pricelaw} gives
\begin{align}\label{eqn:am(r)0}
    A_m(r)=\frac{1}{24}e^{-im\phi_{+}}\Delta^2\partial_{r_\mathrm{out}}^4(\Delta^2e^{im\phi_{+}}),
\end{align}
and then \eqref{eqn:holds} holds, where the $O(\ubar^{-8})$ term is given by the terms where a $\partial_{r_\mathrm{out}}$ falls on an inverse power of $\ubar$. Let us now compute \eqref{eqn:am(r)0}. We have $\partial_{r_\mathrm{out}}(e^{im\phi_{+}})=\frac{2aim}{\Delta}e^{im\phi_{+}}$ thus 
$$\partial_{r_\mathrm{out}}(\Delta^2e^{im\phi_{+}})=(\partial_r\Delta^2+2aim\Delta)\exptruc=2\Delta(2(r-M)+aim)\exptruc.$$
We compute successively
\begin{align*}
    \partial_{r_\mathrm{out}}^2(\Delta^2e^{im\phi_{+}})&=\left(\partial_r+\frac{2iam}{\Delta}\right)(2\Delta(2(r-M)+aim))\exptruc\\
    &=4[2(r-M)^2+3(r-M)aim+\Delta-a^2m^2]\exptruc,
\end{align*}
\begin{align*}
    \partial_{r_\mathrm{out}}^3(\Delta^2e^{im\phi_{+}})&=\left(\partial_r+\frac{2iam}{\Delta}\right)(4[2(r-M)^2+3(r-M)aim+\Delta-a^2m^2])\exptruc\\
    &=4\left[\frac{4(r-M)^2aim}{\Delta}+6(r-M)-\frac{6(r-M)a^2m^2}{\Delta}-\frac{2ia^3m^3}{\Delta}+5aim\right]\exptruc,
\end{align*}
\begin{align*}
    \partial_{r_\mathrm{out}}^4&(\Delta^2e^{im\phi_{+}})\\
    &=\left(\partial_r+\frac{2iam}{\Delta}\right)\left(4\left[\frac{4(r-M)^2aim}{\Delta}+6(r-M)-\frac{6(r-M)a^2m^2}{\Delta}-\frac{2ia^3m^3}{\Delta}+5aim\right]\right)\exptruc\\
    &=4\Bigg[\frac{8(r-M)(a^2-M^2)aim}{\Delta}+6+\frac{6a^2m^2}{\Delta}-\frac{12a^2m^2(a^2-M^2)}{\Delta^2}+\frac{4ia^3m^3(r-M)}{\Delta^2}\\
    &\quad\quad+\frac{2aim}{\Delta}\left(\frac{4(r-M)^2aim}{\Delta}+6(r-M)-\frac{6(r-M)a^2m^2}{\Delta}-\frac{2ia^3m^3}{\Delta}+5aim\right)\Bigg]\exptruc.
\end{align*}
Thus we get 
\begin{align*}
    \Delta^2\partial_{r_\mathrm{out}}^4(\Delta^2e^{im\phi_{+}})=8\Bigg[3\Delta^2&+iam(4(a^2-M^2)(r-M)+6\Delta(r-M))\\
    &+a^2m^2(3\Delta-6(a^2-M^2)-4(r-M)^2-5\Delta)\\
    &+ia^3m^3(2(r-M)-6(r-M))+2a^4m^4\Bigg]\exptruc,
\end{align*}
which finally gives
\begin{align}
    \label{eqn:am(r)}A_m(r)=\frac{1}{3}\Big[3\Delta^2&+(r-M)(4(a^2-M^2)+6\Delta)iam-(2\Delta+6(a^2-M^2)+4(r-M)^2)a^2m^2\nonumber\\
    &-4(r-M)ia^3m^3+2a^4m^4\Big].
\end{align}
\begin{proof}[Proof of Lemma \ref{prop:teukansatz}]
We use Proposition \ref{prop:teukavece3} to get 
\begin{align*}
    \teuk_{+2}\Bigg(&\ansatzp\Bigg)=\\
    &\frac{1}{\ubar^7}\sum_{|m|\leq 2}\Big[\Delta A_m''(r)+2(iam-(r-M))A_m'(r)-4 A_m(r)\Big] Q_{m,\fraks}Y_{m,2}^{+2}(\cos\theta)e^{im\phi_{+}}+\mathrm{Err},
\end{align*}
where we used \eqref{eqn:annulemode} to get $\drond\drond'(Y_{m,2}^{+2}(\cos\theta)e^{im\phi_{+}})=-4Y_{m,2}^{+2}(\cos\theta)e^{im\phi_{+}}$, and where the error term $\mathrm{Err}$ is explicit and defined by
$$\mathrm{Err}:=\Big(a^2\sin^2\theta T^2-4(r^2+a^2)Te_3+2aT\Phi-(6r+4ia\cos\theta)T\Big)\left[\ansatzp\right].$$
It satisfies $e_3^{\leq 1}T^j\carterp^k\mathrm{Err}=O(\ubar^{-8-j})$. It remains only to prove
$$\Delta A_m''(r)+2(iam-(r-M))A_m'(r)-4 A_m(r)=0.$$
We have  
\begin{align*}
    &3A_m'(r)=12(r-M)\Delta+iam(4(a^2-M^2)+6\Delta+12(r-M)^2)-12(r-M)a^2m^2-4ia^3m^3,\\
    &3A_m''(r)=12\Delta+24(r-M)^2+36(r-M)iam-12a^2m^2.
\end{align*}
Thus we can compute 
$$\Delta A_m''(r)+2(iam-(r-M))A_m'(r)-4 A_m(r)=\frac{1}{3}[c_0+c_1iam+c_2a^2m^2+c_3ia^3m^3+c_4a^4m^4],$$
where the coefficients are :
\begin{align*}
    &c_0=12\Delta^2+24(r-M)^2\Delta-24(r-M)^2\Delta-12\Delta^2=0,\\
    &c_1=36\Delta(r-M)+24(r-M)\Delta-2(r-M)(4(a^2-M^2)+6\Delta+12(r-M)^2)\\
    &\quad\quad-4(r-M)(4(a^2-M^2)+6\Delta)\\
    &\quad=24(r-M)\Delta-24(r-M)(a^2-M^2)-24(r-M)^3=24(r-M)(\Delta-\Delta)=0\\
    &c_2=-12\Delta-8(a^2-M^2)-12\Delta-24(r-M)^2+24(r-M)^2+8\Delta+24(a^2-M^2)+16(r-M)^2\\
    &\quad=-16\Delta+16(a^2-M^2)+16(r-M)^2=16(\Delta-\Delta)=0\\
    & c_3=-24(r-M)+8(r-M)+16(r-M)=0\\
    & c_4=8-8=0,
\end{align*}
which concludes the proof of Lemma \ref{prop:teukansatz}.
\end{proof}

\section*{Declarations}
\noindent\textbf{Conflicts of interest.} The author has no relevant financial or non-financial interests to disclose.

\noindent\textbf{Data availability statement.} Data sharing not applicable to this article as no datasets were generated or
analysed during the current study.

\printbibliography

@article{scalarMZ,
    author = "S. Ma and L. Zhang.",
    title = "{Precise late-time asymptotics of scalar field in the interior of a subextreme Kerr black hole and its application in Strong Cosmic Censorship conjecture}",
    journal = "Trans. Am. Math. Soc.",
    volume = "\textbf{376}",
    number = "\:Art. 11",
    pages = "7815--7856",
    year = "2023"
}

@article{pricelaw,
    author =       "S. Ma and L. Zhang.",
    title =        "Sharp decay for Teukolsky Equation in Kerr Spacetimes.",
    journal = "Comm. Math. Phys.",
    volume= "\textbf{401}",
    pages="433-434",
    year =         "2023"
}

@article{dhrteuk,
    author =       "M. Dafermos and G. Holzegel and I. Rodnianski",
    title =        "Boundedness and Decay for the Teukolsky Equation on Kerr Spacetimes I : The Case |a|<<M",
    journal = "Ann.PDE.",
    volume= "\textbf{5}",
    number="\:Art. 2",
    year =         "2019"
}

@article{mablueandersson,
    author =       "L. Andersson and T. Bäckdahl and P. Blue and S. Ma ",
    title =        "Stability for linearized gravity on the Kerr spacetime",
      eprint={1903.03859},
      archivePrefix={arXiv},
    year =         "2019"
}

@article{tsicovariant,
  title={New identities for linearized gravity on the Kerr spacetime},
  author={S. Aksteiner and L. Andersson and T. Bäckdahl},
  journal={Phys. Rev. D},
  volume="\textbf{99}",
  year={2019},
}

@article{tsi1,
    author = "A. A. Starobinsky. and S. M. Churilov",
    title = "{Amplification of electromagnetic and gravitational waves scattered by a rotating ''black hole''}",
    journal = "Sov. Phys. JETP",
    volume = "\textbf{65}",
    number = "\:Art. 1",
    pages = "1--5",
    year = "1974"
}

@article{tsi2,
    author = "S. A. Teukolsky and W. H. Press",
    title = "{Perturbations of a rotating black hole. III - Interaction of the hole with gravitational and electromagnetic radiation}",
    journal = "Astrophys. J.",
    volume = "\textbf{193}",
    pages = "443--461",
    year = "1974"
}

@article{ori,
  title = {Evolution of linear gravitational and electromagnetic perturbations inside a Kerr black hole},
  author = {A. Ori},
  journal = {Phys. Rev. D},
  volume = {\textbf{61}},
  pages = {024001},
  year = {1999},
}

@article{sbierski,
  title = {Instability of the Kerr Cauchy Horizon under linearised gravitational perturbations},
  author = {J. Sbierski},
  journal = {Ann. PDE},
 Volume = {\textbf{9}},
 Number = {\:Art. 7},
  year = {2023},
}

@article{blueshift,
  title = {Internal instability in a Reissner-Nordström black hole},
  author = {M. Simpson and R. Penrose},
  journal = {Int. J. Theor. Phys.},
  volume = {\textbf{7}},
  year = {1973},
  pages = {183–197}
}

@article{stabC0,
    author = "M. Dafermos, and J. Luk",
    title = "{The interior of dynamical vacuum black holes I: The $C^0$-stability of the Kerr Cauchy horizon}",
    eprint={1710.01722},
    archivePrefix={arXiv},
    year="2017",
    addendum="Accepted for publication in Annals of Math"
}

@article{simunum,
  title = "{Cauchy-horizon singularity inside perturbed Kerr black holes}",
  author = {L. M. Burko and G. Khanna and A. Zenginoğlu},
  journal = {Phys. Rev. D},
  volume = {\textbf{93}},
  pages = {041501},
  numpages = {6},
  year = {2016},
  month = {Feb},
  publisher = {American Physical Society},
}

@article{teukolsky,
       author = {Teukolsky, S. A.},
        title = "{Perturbations of a Rotating Black Hole. I. Fundamental Equations for Gravitational, Electromagnetic, and Neutrino-Field Perturbations}",
      journal = {Astrophys. J.},
         year = 1973,
        month = oct,
       volume = {\textbf{185}},
        pages = {635-648},

}

@article{millet,
       author = {Millet, P.},
        title = "{Optimal decay for solutions of the Teukolsky equation on the Kerr metric for the full subextremal range |a| < M}",
    eprint={2302.06946},
    archivePrefix={arXiv},
         year = 2023
}

@book{christo,
  title={The formation of black holes in general relativity},
  author={Christodolou, D.},
  year = {2009},
  publisher ={European Mathematical Society},
}

@book{chruscc,
  title={On Uniqueness in the Large of Solutions of Einstein's Equations},
  author={Chrusciel, P.},
  year = {1991},
  publisher ={Proceedings of the CMA 27},
}

@incollection{sccpenrose,
  author      = "Penrose, R. ",
  title       = "Singularities and time-asymmetry",
  booktitle   = "General Relativity: An Einstein Centenary Survey",
  publisher   = "Cambridge University Press",
  year        = "1979",
  pages       = "581-638",
  chapter     = 12,
}

@article{RNscalar,
title = {Proof of linear instability of the Reissner-Nordström Cauchy horizon under scalar perturbation},
journal = {Duke Math. J. \textbf{166}},
year = {2017},
Pages = {437--493},
author = {Luk, J. and Oh, S.-J},
}

@article{RNNLI,
title = {Strong cosmic censorship in spherical symmetry for two-ended asymptotically flat initial data I. The interior of the black hole region.},
journal = {Ann. Math.},
year = {2019},
volume="\textbf{190}",
pages="1-111",
author = {Luk, J. and Oh, S.-J},
}

@article{RNNLII,
title = {Strong cosmic censorship in spherical symmetry for two-ended asymptotically flat initial data II. The exterior of the black hole region.},
journal = {Ann.PDE.},
year = {2019},
volume={\textbf{5}},
number={\:Art. 6},
author = {Luk, J. and Oh, S.-J},
}

@article{dafscc,
title = {Stability and instability of the Cauchy
horizon for the spherically symmetric
Einstein-Maxwell-scalar field equations},
journal = {Ann. Math.},
year = {2003},
volume="\textbf{158}",
author = {Dafermos, M.},
pages="875-928",
}

@article{kerrwave,
title = {Instability results for the wave equation in the interior of Kerr black holes},
journal = {J. Funct. Anal.},
year = {2016},
 Volume = {\textbf{271}},
 Number = {\:Art. 7},
author = {Luk, J. and Sbierski, J.},
pages="1948-1995",
}

@article{hintzkerrwave,
title = {Boundedness and decay of scalar waves at the Cauchy horizon of the Kerr spacetime},
journal = {Commentarii Mathematici Helvetici 92},
year = {2017},
author = {Hintz, P.},
pages="801-837",
}

@article{vdminsta,
title = {Stability and instability of the sub-extremal Reissner-Nordström black hole interior for the Einstein-Maxwell-Klein-Gordon equations in spherical symmetry},
journal = {Comm. Math. Phys.},
year = {2018},
volume="\textbf{360}",
author = {Van de Moortel, M.},
pages="103-168",
}

@article{weakluk,
title = {Weak null singularities in general relativity},
journal = {J. Amer. Math. Soc.},
year = {2018},
volume="\textbf{31}",
author = {Luk, J.},
pages="1-63",
}

@article{weakvdm,
    author = "Van de Moortel, M.",
    title = "{The breakdown of weak null singularities inside black holes}",
    journal = "Duke Math. J.",
    volume = "\textbf{172}",
    pages = "2957--3012",
    year = "2023"
}

@article{barackori,
    author = "Barack, L. and Ori, A.",
    title = "{Late-time decay of gravitational and electromagnetic perturbations along the event horizon}",
    journal = "Phys. Rev. D",
    volume = "\textbf{60}",
    pages = "124005",
    year = "1999"
}

@article{mac,
  title = {Instability of Black Hole Inner Horizons},
  author = {McNamara, J.},
  journal = {Proc. R. Soc. Lond.},
  volume = {\textbf{358}},
  pages = {499-517},
  year = {1978},
}

@article{spin,
    author = "Geroch, R. and Held, A. and Penrose, R.",
    title = "{A space-time calculus based on pairs of null directions}",
    journal = "J. Math. Phys.",
    volume = "\textbf{14}",
    pages = "874--881",
    year = "1973"
}

@article{TDCSR1,
    author = "R. Teixeira da Costa and Y. Shlapentokh-Rothman",
    title = "{Boundedness and decay for the Teukolsky equation on Kerr in the full subextremal range |a|<M: physical space analysis}",
    eprint={2302.08916},
    archivePrefix={arXiv},
    year = "2023"
}

@article{TDCSR2,
    author = "R. Teixeira da Costa and Y. Shlapentokh-Rothman",
    title = "{Boundedness and decay for the Teukolsky equation on Kerr in the full subextremal range |a|<M: frequency space analysis}",
    eprint={2007.07211},
    archivePrefix={arXiv},
    year = "2020"
}

@article{millet2,
    author = "P. Millet",
    title = "{Geometric background for the Teukolsky equation revisited}",
    journal = "Rev. Math. Phys.",
    pages="2430003",
    year = "2024"
}

@article{daf3,
    author = "M. Dafermos and Y. Shlapentokh-Rothman",
    title = "Time-Translation Invariance of Scattering Maps and Blue-Shift Instabilities on Kerr Black Hole Spacetimes.",
    journal = "Commun. Math. Phys.",
    pages="985–1016 ",
    year = "2017",
    volume = "\textbf{350}"
}

@article{franzen,
    author = "A. Franzen",
    title = "Boundedness of Massless Scalar Waves on Reissner-Nordström Interior Backgrounds",
    journal = "Comm. Math. Phys.",
    pages="601–650 ",
    year = "2016",
    volume = "\textbf{343}"
}

@article{franzen2,
    author = "A. Franzen",
    title = "Boundedness of Massless Scalar Waves on Kerr Interior Backgrounds.",
    journal = "Ann. Henri Poincaré",
    pages="1045–1111 ",
    year = "2020",
    volume = "\textbf{21}"
}

@article{price1,
    author = "R. Gleiser and R. H. Price and J. Pullin",
    title = "Late-time tails in the Kerr spacetime",
    journal = "Classical. Quant. Grav.",
    pages="072001 ",
    year = "2008",
    volume = "\textbf{25}"
}

@article{price2,
    author = "R. H. Price",
    title = "Nonspherical perturbations of relativistic gravitational collapse. I. Scalar and
gravitational perturbations",
    journal = "Phys. Rev. D",
    pages="2419 ",
    year = "1972",
    volume = "\textbf{5}"
}

@article{price3,
    author = "R. H. Price",
    title = "Nonspherical perturbations of relativistic gravitational collapse. II. Integer-spin,
zero-rest-mass fields",
    journal = "Phys. Rev. D",
    pages="2439 ",
    year = "1972",
    volume = "\textbf{5}"
}

@article{price4,
    author = "R. H. Price and L. Burko",
    title = "Late time tails from momentarily stationary, compact initial
data in Schwarzschild spacetimes",
    journal = "Phys. Rev. D",
    pages="084039 ",
    year = "2004",
    volume = "\textbf{70}"
}

@article{lukk,
    author = "Luk, J. and Oh, S.-J. and Shlapentokh-Rothman, Y",
    title = "A Scattering Theory Approach to Cauchy Horizon Instability and Applications to Mass Inflation",
    journal = "Ann. Henri Poincaré",
    pages="363–411",
    year = "2023",
    volume = "\textbf{24}"
}

@article{mamorawetz,
    author = "S. Ma",
    title = "Uniform energy bound and Morawetz estimate for extreme components of spin fields
in the exterior of a slowly rotating Kerr black hole II: Linearized gravity",
    journal = "Comm. Math. Phys.",    year = "2020",
    volume = "\textbf{377}",
    pages="2489–2551"
}

@book{oneill,
  title={The geometry of Kerr black holes},
  author={B. O'Neill},
  year = {1995},
  publisher ={Dover Publications},
}

@article{oriscalar,
  title = {Evolution of scalar-field perturbations inside a Kerr black hole},
  author = {A. Ori},
  journal = {Phys. Rev. D},
  volume = {\textbf{58}},
  pages = {084016},
  year = {1998},
}

@article{sbierskiholo,
 Author = {Sbierski, Jan},
 Title = {On holonomy singularities in general relativity and the {{\(C_{\mathrm{loc}}^{0,1}\)}}-inextendibility of space-times},
 Journal = {Duke Math. J.},
 Volume = {\textbf{171}},
 Number = {\:Art. 14},
 Pages = {2881--2942},
 Year = {2022},
}

@article{vdmmmm,
  title = {Mass Inflation and the -inextendibility of Spherically Symmetric Charged Scalar Field Dynamical Black Holes},
  author = {Van de Moortel, M.},
  journal = {Comm. Math. Phys.},
  year = {2021},
  volume= {\textbf{382}},
  pages={1263–1341}
}

@article{KSwaveeq,
    author =       "E. Giorgi and S. Klainerman and J. Szeftel",
    title =        "Wave equations estimates and the nonlinear stability of slowly rotating Kerr black holes",
    journal = "Pure Appl. Math. Q.",
    year={2024},
  volume= {\textbf{20}},
  number={\:Art. 7},
  pages={2865-3849}
}

@article{KS21,
    author =       "S. Klainerman and J. Szeftel",
    title =        "Kerr stability for small angular momentum",
  journal = {Pure Appl. Math. Q.},
  year = {2023},
  volume= {\textbf{19}},
  number={\:Art. 3},
  pages={791-1678}
}

@article{schlinDHR,
    author = "M. Dafermos and G. Holzegel and I. Rodnianski",
    title = "{The linear stability of the Schwarzschild solution to gravitational perturbations}",
    journal = "Acta Mat.",
    volume = "\textbf{222}",
    number = " Art. 1",
    pages = "1--214",
    year = "2019"
}

@book{schpolarized,
    author = "S. Klainerman and J. Szeftel",
    title = "Global Non-Linear Stability of Schwarzschild Spacetime under Polarized Perturbations",
    publisher = "Annals of Math Studies, 210. Princeton University Press, Princeton NJ",
    year ="2020" 
}

@article{schNLDHRT,
    author = "M. Dafermos and G. Holzegel and I. Rodnianski and M. Taylor",
    title = "{The non-linear stability of the Schwarzschild family of black holes}",
    eprint = "2104.08222",
    archivePrefix = "arXiv",
    year = "2021"
}

\end{document}